\it\color{Green!40!White},
\scriptsize\color{DimGray},
\sffamily\color{Teal},
\sffamily\color{DimGray!20!Black},
\tikzset{
  off chain/.code={\def\tikz@lib@on@chain{}}%
}
\DeclareSymbolFont{symbols2}{LS1}{stixfrak}{m}{n}
\DeclareMathSymbol{\typecolon}{\mathbin}{symbols2}{"25}
\newcommand\sem[1]{\llbracket#1\rrbracket}
\newcommand\set[1]{\{#1\}}
\newcommand{\dom}[1]{\mathsf{dom}( {#1} )}
\newcommand\defeq{:=}
\renewcommand\vv[1]{\overline{ {#1} }} 
\newcommand\OWRes{\ruleName{OW$_\text{R}$}} 
\newcommand\OWRef{\ruleName{OW$_\top$}} 
\newcommand\tto{\twoheadrightarrow} 
\newcommand\univRel{\mathsf{Univ}}
\newcommand\existSort[1]{\mathsf{sorts}_\exists(#1)}
\newcommand\existGoals[1]{\mathsf{goals}_\exists(#1)}
\newcommand\Dhochc{D_{\mathit{Th}}}
\newcommand\Ghochc{G_{\mathit{Th}}}
\newcommand{\trSem}[1]{\langle\!|#1|\!\rangle} 
\newcommand{\rtSem}[1]{\trSem{#1}^{-1}}
\newcommand\HO{{\mathsf{HO}}}
\newcommand\FO{{\mathsf{FO}}}
\newcommand\var{{\mathsf{var}}}
\newcommand\yfo{\yy_\FO}
\newcommand\xho{\xx_\HO}
\newcommand\gphi{G_\varphi}
\newcommand\gnphi{G_{\neg \varphi}}
\newcommand\ga{G_A}
\newcommand\gv{G_\var}
\newcommand{\depth}[1]{\mathsf{d}({#1})}
\newcommand\lang[1]{\mathcal{L}( {#1} )}
\newcommand\tree{\mathcal{T}}
\newcommand{\abs}[2]{\lambda #1.\,#2}
\newcommand{\types}{\vdash}
\newcommand{\ruleName}[1]{\textsf{\small{(#1)}}}
\newcommand{\fv}{\mathsf{FV}}
\renewcommand{\implies}{\Rightarrow}
\newcommand{\truetm}{\mathsf{true}}
\newcommand{\apprx}{\mathcal{V}}
\newcommand{\toClause}[2]{\mathsf{toCl}({#1})({#2})}
\newcommand{\fromClause}[1]{\mathsf{fromCl}({#1})}
\newcommand\basetypes[0]{Q}
\newcommand\hasType[2]{{#1} :: {#2}}
\newcommand\type{\mathsf{type}}
\newcommand\sigmaInt{{\sigma_{\mathsf{t}}}}
\newcommand\preorder{\Theta}
\newcommand\prfAb{\Delta}
\newcommand{\leqa}{\mathrel{\leq_{\mathsf{a}}}}
\newcommand\boolType{\top_o}
\newcommand\typeAppr[1]{\Gamma^\infty(#1)}
\newcommand\upclos[1]{{\uparrow_{\mathsf{a}}}(#1)}
\newcommand\consig{\Sigma_\mathsf{con}}
\newcommand\boundvars{\mathsf{Vars}}
\newcommand\zero{\mathsf{zero}} 
\newcommand\err{\mathsf{err}}
\newcommand\divi{\mathsf{div}}
\DeclarePairedDelimiter{\size}{\lvert}{\rvert}
\newcommand{\order}{\mathsf{ord}}
\newcommand{\HOne}{$\mathcal{H}1$}
\newcommand{\HOMSL}{\textsf{HOMSL}}
\newcommand{\MSL}{\textsf{MSL}}
\newcommand\rew[0]{\triangleright}
\newcommand\res[2]{#1_{|{#2}}}
\newcommand{\uu}{\vv{u}}
\newcommand{\xx}{\vv{x}}
\newcommand{\yy}{\vv{y}}
\newcommand{\zz}{\vv{z}}
\renewcommand{\ss}{\vv{s}}
\newcommand{\ww}{\vv{w}}
\newcommand{\proves}{\vdash}
\renewcommand{\:}{\mathord{:}}
\newcommand{\Cex}{\mathsf{V}}
\newcommand{\IsClosed}{\mathsf{Closed}}
\newcommand{\IsOpen}{\mathsf{Open}}
\newcommand{\openS}{\mathsf{open}}
\newcommand{\closeS}{\mathsf{close}}
\newcommand{\readS}{\mathsf{read}}
\newcommand{\putStrS}{\mathsf{putStrLn}}
\newcommand{\withFileS}{\mathsf{withFile}}
\newcommand{\hGetContentsS}{\mathsf{hGetContents}}
\newcommand{\openhdl}{\mathsf{o}}
\newcommand{\closedhdl}{\mathsf{c}}
\newcommand{\ReadModeS}{\mathsf{ReadMode}}
\newcommand{\Exists}{\mathsf{Ex}}
\newcommand{\PredS}{\mathsf{Pred}}
\newcommand{\withFileSA}{\mathsf{withFile}_1}
\newcommand{\withFileSB}{\mathsf{withFile}_2}
\newcommand{\withFileSC}{\mathsf{withFile}_3}
\newcommand{\actSB}{\mathsf{act}_2}
\newcommand{\putContS}{\mathsf{putCont}}
\newcommand{\idS}{\mathsf{id}}
\theoremstyle{acmdefinition}
\newtheorem*{remark}{Remark}
\newtheorem*{notation}{Notation}
\begin{document}

\title{Higher-Order MSL Horn Constraints}         


\author{Jerome Jochems}
\affiliation{
  \department{Department of Computer Science}              
  \institution{University of Bristol}            
  \city{Bristol}
  \country{UK}                    
}
\email{jerome.jochems@bristol.ac.uk}          

\author{Eddie Jones}
\affiliation{
  \department{Department of Computer Science}              
  \institution{University of Bristol}            
  \city{Bristol}
  \country{UK}                    
}
\email{ej16147@bristol.ac.uk}         

\author{Steven Ramsay}
\affiliation{
  \department{Department of Computer Science}              
  \institution{University of Bristol}            
  \city{Bristol}
  \country{UK}                    
}
\email{steven.ramsay@bristol.ac.uk}         

\begin{abstract}
The monadic shallow linear (MSL) class is a decidable fragment of first-order Horn clauses that was discovered and rediscovered around the turn of the century, with applications in static analysis and verification.
We propose a new class of higher-order Horn constraints which extend MSL to higher-order logic and develop a resolution-based decision procedure.
Higher-order MSL Horn constraints can quite naturally capture the complex patterns of call and return that are possible in higher-order programs, which make them well suited to higher-order program verification.
In fact, we show that the higher-order MSL satisfiability problem and the HORS model checking problem are interreducible, so that higher-order MSL can be seen as a constraint-based approach to higher-order model checking.
Finally, we describe an implementation of our decision procedure and its application to verified socket programming.
\end{abstract}

\begin{CCSXML}
<ccs2012>
<concept>
<concept_id>10011007.10011006.10011008</concept_id>
<concept_desc>Software and its engineering~General programming languages</concept_desc>
<concept_significance>500</concept_significance>
</concept>
<concept>
<concept_id>10003456.10003457.10003521.10003525</concept_id>
<concept_desc>Social and professional topics~History of programming languages</concept_desc>
<concept_significance>300</concept_significance>
</concept>
</ccs2012>
\end{CCSXML}

\ccsdesc[500]{Software and its engineering~General programming languages}
\ccsdesc[300]{Social and professional topics~History of programming languages}

\keywords{higher-order program verification, constraint-based program analysis}  

\maketitle



\section{Introduction}

Constraints of various kinds form the basis of many program analyses and type inference algorithms.  
Specifying an analysis as a combination of generating and resolving constraints is very appealing: 
as \citet{aiken1999sets} remarks in his overview paper, \emph{constraints help to separate specification from implementation}, they can \emph{yield natural specifications} and their often rich theory (typically independent from the problem at hand) can \emph{enable sophisticated optimisations} that may not be apparent if stating the analysis algorithm directly.

To realise these advantages, we want classes of constraints that can naturally express important features of the problem domain, that draw upon a well-understood theory and yet have a decidable satisfiability problem.

In this work we propose a new class of constraints that are designed to capture the complex, higher-order behaviours of programs with first-class procedures.
We develop a part of the theory of these constraints and situate them in relation to other  higher-order program analyses.
Finally, we obtain an efficient decision procedure and we describe an application of the constraints to automatic verification of socket programming in a functional programming language.

\subsection{MSL Horn constraints}

Our constraints can be framed as an extension of the well-known \emph{Monadic Shallow Linear} (MSL) Horn constraints to higher-order logic.
Like many natural ideas, MSL constraints were discovered independently in two different communities. 
At CADE'99, \citet{Weidenbach1999} proposed MSL Horn constraints as a natural setting in which to ``combine the benefits of the finite state analysis and the inductive method'', he showed that satisfiability was decidable and described how this class of constraints could be used in an automatic analysis of security protocols.  
Independently, at SAS'02, \citet*{nielson2002normalizable} proposed the \HOne{} class of Horn constraints, and it was later observed by \cite{goubault-larrecq2005} to be an equivalent reformulation of MSL.
The \HOne{} class was originally used for the control flow analysis of the Spi language, but has also been applied to e.g. the verification of cryptosystems written in C.

MSL is the fragment of first-order Horn clauses obtained by restricting predicates to be monadic and restricting the subject of positive literals to be shallow and linear -- that is, the single argument of a predicate in the head of a clause must be either a variable $x$ or a function symbol applied to distinct variables $f(x_1,\ldots,x_n)$.
In practice, because we can view the function symbol $f$ as constructing a tuple $(x_1,\ldots,x_n)$, it is convenient to also allow non-monadic predicates, so long as they are only applied to variables when used positively: $P(x_1,\ldots,x_n)$.
With this concession, all of the following are MSL Horn clauses (we omit the prefix of universal quantifiers in each case):
\[
\begin{array}{c}
\mathsf{Zero}(x) \implies \mathsf{Leq}(x,z) \qquad\!\! \mathsf{Leq}(\mathsf{s}(x),\mathsf{s}(y)) \implies \mathsf{Leq}(x,y) \qquad\!\! \mathsf{Leq}(x,y) \wedge \mathsf{Leq}(y,z) \implies \mathsf{Leq}(x,z) \\[1.5mm]
\mathsf{Black}(\mathsf{leaf}) \qquad \mathsf{Black}(\mathsf{branch}(x,d,z)) \qquad  \mathsf{Black}(x) \wedge \mathsf{Black}(z) \implies \mathsf{Red}(\mathsf{branch}(x,d,z)) \\[1.5mm]
\mathsf{M}(\mathsf{sent}(y,\mathsf{b},\mathsf{pr}(\mathsf{encr}(\mathsf{tr}(y,x,\mathsf{tb}(z)),\mathsf{bt}),\mathsf{encr}(\mathsf{nb}(z),x)))) \wedge \mathsf{Sb}(\mathsf{pr}(y,z)) \implies \mathsf{Bk}(\mathsf{key}(x,y))
\end{array}
\]
Note: there are no syntactical restrictions on the body of clauses.  
As can be seen in the last example, which is taken from Weidenbach's security protocol analysis, atoms in the body may contain terms with arbitrary nesting.

Sets of MSL clauses were shown by Weidenbach to have decidable satisfiability.  
The procedure is an instance of ordered resolution, with a carefully crafted ordering that guarantees terminating saturation.
Essentially the same procedure was rediscovered independently by Goubault-Larrecq, as he sought to construct a more standard procedure than Nielsen, Nielsen and Seidl's original, which was a bespoke kind of constraint normalisation. 

In each case, the authors identify a key, solved form for constraints.  Clauses in this solved form have shape:
$
  Q_1(y_1) \wedge \cdots{} \wedge Q_k(y_k) \implies P(f(x_1,\ldots{},x_m))
$
with $\{y_1,\ldots,y_k\} \subseteq \{x_1,\ldots,x_m\}$.
A set of clauses of this form can straightforwardly be viewed as an alternating tree automaton
and so such clauses are called \emph{automaton clauses}.  
Given as input a set of MSL Horn constraints $\mathcal{C}$, each of the above decision procedures can be viewed as constructing a set of automaton clauses $\mathcal{A}$ equisatisfiable with $\mathcal{C}$, and since $\mathcal{A}$ is essentially a tree automaton, its satisfiability can be effectively determined.

\subsection{Contributions of this paper}

In this work, we propose three different higher-order extensions of MSL constraints: (i) the class \HOMSL($1$) obtained by allowing predicates of higher types but maintaining the limitation of first-order function symbols, (ii) the class \MSL($\omega$) obtained by allowing for function symbols of higher-types but maintaining the limitation of first-order (monadic) predicates and (iii) the class \HOMSL($\omega$) obtained by allowing for both predicates and function symbols of higher type.


\paragraph{I. Reduction to existential-free \MSL($\omega$)}
Our first contribution is to show that the satisfiability problem for all of the above classes reduces to the satisfiability problem for a fragment of $\MSL(\omega)$ -- i.e. first-order predicates only -- in which formulas contain no existential quantification.  
We show that existential quantifiers are, in a sense, already definable using higher-order predicates and that higher-order predicates in general can be represented as higher-order functions, whose truth is internalised as a new first-order predicate.

\paragraph{II. Decidability of \MSL($\omega$)}
Our second contribution is to give a resolution-based algorithm for deciding the satisfiability problem of $\MSL(\omega)$.  A key difficulty in generalising the resolution-based decision procedure for the first-order fragment is what to do about negative literals whose subject is headed by a variable $P(x\ s_1\ \cdots{} s_n)$.  Literals of this form simply cannot occur in the first-order case, and allowing (higher-order) resolution on such literals creates clauses that violate one of the cornerstones of the decidability result at first-order, namely that clause heads are shallow.

We introduce a novel kind of higher-order resolution which avoids this phenomenon, but it necessitates a
significantly different notion of automaton formula (solved form), which nevertheless specialises to the existing definition at first-order.
A simple type system for automaton clauses ensures that the level of nesting and the binding structure of variables is in tight correspondence with the type theoretic order of the function symbols involved.
Consequently, as in the first-order case, there can be only finitely many automaton clauses associated with a given 
problem instance, and this forms the backbone of the decidability proof.

\paragraph{III. Interreducibility of \MSL($\omega$), HORS model checking, and intersection (refinement) typeability.}
Although they look superficially complex, it is easy to see that our higher-order automaton clauses, viewed as constraints, are in 1-1 correspondence with a simple kind of intersection types used in higher-order model checking.
Since it is known that this class of intersection types define regular tree languages \cite{broadbent2013saturation}, the name \emph{automaton clauses} is still justified.
Our third contribution is to use this correspondence to moreover give two problem reductions: (i) from \MSL($\omega$) satisfiability to HORS model checking and (ii) from intersection typeability to \MSL($\omega$) satisfiability.
The reduction from HORS model checking to intersection typeability is already well known \cite{kobayashi2013model}, and thus completes the cycle.
We obtain from these reductions that \MSL($\omega$) satisfiability is $(n-1)$-EXPTIME hard for $n \geq 2$ (here $n$ refers to the type theoretic order of the function symbols).  This is the class of problems that can be solved in time bounded by a tower of exponentials of height-$n$.

\paragraph{IV. Application to verified socket programming.}

As proof of concept, we have implemented our prototype decision procedure in Haskell and applied it to the higher-order verification problem of socket usage in Haskell programs.
Interestingly, our extraction of clauses from a Haskell program does not analyse the source code directly.
Instead, the domain-specific language is represented as a typeclass that can be instantiated with a specific ``analysis'' instance that extracts a representation of the program to be passed to our decision procedure in addition to the usual \lstinline{IO} implementation. 
The principle advantage of this approach is that the source code need not be present, making way for library functions to appear freely.
Furthermore, it is easier to implement and maintain as orthogonal concerns in the source code need not be explicitly handled that have no natural encoding with clauses.
As far as we are aware, this technique is novel and could be fruitfully applied to other domains.

\changed[ch]{
\subsection{Wider significance}

In first-order automated program verification there is a consensus around first-order logic as foundation, to the benefit of the field.  On the one hand, ideas from first-order logic, such as interpolants, the Horn fragment, abduction, resolution and so on have found an important place in automated verification.  On the other, first-order logic provides a common vocabulary with which to understand the automated verification landscape \emph{conceptually}, and locate different technologies.  

However, in higher-order automated program verification there are only a disparate collection of formalisms: refinement types \cite{rondon-et-al-pldi2008,zhu-jagannathan-vmcai2013,vazou-et-al-esop2013,vazou-et-al-icfp2015,terauchi-popl2010,unno-kobayashi-PPDP2009}, higher-order grammars/automata \cite{kobayashi2013model,hague-et-al-lics2008,kobayashi-ong-lics2009,ramsay-et-al-popl2014,sal-wal-mscs2016}, fixpoint logic \cite{visvis-concur2004,kobayashi2021overview,bll-fi2021}, and many others.  Moreover, the procedures involved are often bespoke and difficult to relate to techniques that are standard in first-order verification.

This paper is part of a larger effort to establish an analogous foundation for higher-order automated program verification based on higher-order logic \cite{burn2018hochc,OngWagner2019,JochemsThesis,burn2021datalog}.  We have shown that a standard technique from first-order automated reasoning, namely saturation under resolution, is effective at higher-order, and even forms a decision procedure for the higher-order extension of MSL.  Furthermore, in combination with our correspondence between intersection types and higher-order automaton clauses, this sets up a pattern for understanding type-based approaches to verification more generally.  

Our interreducibility results allow us to situate higher-order model checking (also known as HORS model checking) conceptually, within the HOL landscape.  This is beneficial because HORS model checking, although influential, is a set of techniques for solving a somewhat exotic problem.  The safety version of the problem asks to decide if the value tree determined by a certain kind of higher-order grammar is accepted by a B\"uchi tree automaton with a trivial acceptance condition \cite{kobayashi2013model}.  Thanks to the results of this paper, this form of higher-order model checking can be located more simply as ``a group of decision procedures for the MSL fragment''.  We hope this will make this topic more accessible to the wider verification community, since monadic, shallow, and linear restrictions are well understood even by non-experts on higher-order program verification.

Our results also make an interesting connection with work on constructive logic and logic programming.  Our higher-order automaton formulas are a special form of higher-order hereditary Harrop clauses (HOHH), lying in the intersection of HOHH goal and definite formulas.  The fact that we have shown them to play an essential role in characterising satisfiability for this class of higher-order \emph{Horn} clauses sheds a novel light on the relationship between these two classes of formulas, which have been instrumental in the work of Miller and his collaborators \cite{miller_nadathur_2012,miller-etal1991uniform}.  Furthermore, our intersection type and higher-order automaton clause correspondence suggests an alternative to the ``Horn Clauses as Types'' interpretation pioneered by Fu, Komendantskaya, and coauthors \cite{fu-Komendantskaya-lopstr2015,fu-etal-flops2016,fu-Komendantskaya-facs2017,farkaThesis}.  Instead of identifying Horn clauses and types, resolution and proof term construction, our work casts types as monadic predicates, represented as HOHH clauses with a single free variable, and saturation-under-resolution as type inference.
}

\subsection{Outline}
The paper is structured as follows.  In Section~\ref{sec:lang} we introduce higher-order MSL Horn constraints and their proof system, and we give an example of their use in verifying lazy IO computations.  In Section~\ref{sec:reductions} we reduce the provability problem for \HOMSL($\omega$) to the same problem for existential-free \MSL($\omega$).  In Section~\ref{sec:rewriting} we introduce higher-order automaton clauses and use them in deciding satisfiability (via goal-formula entailment).  In Section~\ref{sec:automaton_formulas}, we show that the \MSL($\omega$) satisfiability and HORS model checking are interreducible.  In Section~\ref{sec:application} we describe our implementation and its application.  Finally, in Section~\ref{sec:related} we conclude with a description of related and future work.
\changed[jj]{All proofs are available in the appendix of this paper.}

\section{Higher-Order MSL Horn Formulas}\label{sec:lang}

The logics we consider will make a type-level distinction between predicates and the subjects that they classify.  In the first-order case, the subjects will be terms built from a certain signature of function symbols and constants.

\begin{definition}[Syntax of types]
We consider a subset of simply typed applicative terms.
Starting from the atomic \emph{type of individuals} $\iota$ and the atomic \emph{type of propositions} $o$, the types are given by:
\[
\begin{array}{rrcl}
	\ruleName{Constructor Types} & \gamma &\Coloneqq &\iota \mid \kappa \to \gamma \\
	\ruleName{Constructor Arg Types} & \kappa &\Coloneqq &\gamma \\
\end{array}\quad
\begin{array}{rrcl}
	\ruleName{Predicate Types} & \rho &\Coloneqq &o \mid \sigma \to \rho \\
	\ruleName{Predicate Arg Types} & \sigma &\Coloneqq &\kappa \mid \rho \\
\end{array}
\]
We use $\tau$ as a metavariable for types in general (i.e. that may belong to any of the above classes).
Thus, the constructor types are just the simple types built over a single base type $\iota$, and the predicate types are those simple types with an $o$ in tail position (i.e. that are ultimately constructing a proposition) and whose arguments are either other predicates or constructors.
The introduction of the metavariable $\kappa$ seems unmotivated, but we will later place restrictions on $\kappa$ that differ from those on $\gamma$ more generally.
We define the \emph{order of a type} as follows:
\[
	\order(\iota) = \order(o) = 0 \qquad \order(\tau_1 \to \tau_2) = \max(\order(\tau_1) + 1,\, \order(\tau_2))
\]
After introducing terms below, we will say that the \emph{order of a typed term} is the order of its type.
\end{definition}

\begin{figure}
  \[
		\begin{array}{c}
    	\prftree[r,l]{$x:\tau \in \Delta$}{\ruleName{Var}}{\Delta \types x : \tau}
		\qquad
		\prftree[r,l]{$c:\gamma \in \Sigma$}{\ruleName{Cst}}{
			\Delta \types c : \gamma
		}
		\qquad
		\prftree[r,l]{$P:\rho \in \Pi$}{\ruleName{Pred}}{
			\Delta \types P : \rho
		}
		\\[4mm]
		\prftree[l]{\ruleName{True}}{
			\Delta \types \truetm : o
		}
		\qquad
			\prftree[l]{\ruleName{App}}{
				\Delta \types s : \tau_1 \to \tau_2
			}
			{
					\Delta \types t : \tau_1
			}
			{
			\Delta \types s \, t : \tau_2
			}
		\qquad
		\prftree[l]{\ruleName{And}}{
			\Delta \types G : o
		}
		{
			\Delta \types H : o
		}
		{
			\Delta \types G \wedge H : o
		}
		\\[4mm]
		\prftree[l,r]{$x \notin \dom{\Delta}$}{\ruleName{Ex}}{
			\Delta,\, x:\sigma \types G : o
		}
		{
			\Delta \types \exists x\mathord{:}\sigma.\, G : o
		}
		\qquad
		\prftree[l]{\ruleName{Cl}}{
			\Delta,\,\vv{y\mathord{:}\sigma} \types G : o
		}
		{
			\Delta,\,\vv{y\mathord{:}\sigma} \types A : o
		}
		{
			\Delta \types \forall \vv{y\mathord{:}\sigma}.\, G \implies A
		}
		\end{array}
	\]
	\caption{Typing of terms, goals and clauses}\label{fig:syntax}
	\end{figure}

\begin{definition}[Terms, Clauses and Formulas]\label{def:syntax}
	In all that follows, we assume a finite signature $\Sigma$ of typed function symbols and a finite signature $\Pi$ of typed predicate symbols.  We use $a,b,c$ and other lowercase letters from the beginning of the Roman alphabet to stand for function symbols and $P,Q,R$ and so on to stand for predicates.  Function symbols have types of shape $\gamma$ and predicate symbols have type of shape $\rho$.

	\paragraph{Terms}
	We assume a countably infinite set of variables. We use lowercase letters $x,y,z$ and so on to stand for variables.
  A \emph{type context}, typically $\Delta$, is a finite, partial function from variables to types.  
	Then \emph{terms}, typically $s,t,u$, are given by the following grammar:
  \[
		\begin{array}{rrcl}
		\ruleName{Term} & s,\,t,\,u &\Coloneqq& x \mid c \mid P \mid s\,t
		\end{array}
	\]
	We will only consider those terms that are well typed according to the system specified in Figure~\ref{fig:syntax}, \changed[jj]{where \ruleName{Ex} instantiates one variable at a time for convenience.}
	
	Since we work in higher-order logic, the syntactic category of terms includes both objects that we think of as predicates (which have type $\rho$) and those that we think of as strictly the subjects of predicates (which have type $\gamma$).  
	To make the discussion easier, we will typically refer to terms of the former type as predicates and terms of the latter type $\gamma$ as trees, or tree constructors.

  The \emph{depth} of a symbol $x$, $c$ or $P$ is $0$ and the depth of an application $s\,t$ is the maximum of the depth of $s$ and the depth of $t$ with 1 added.

	\paragraph{Formulas}
  The \emph{atomic formulas}, typically $A$ and $B$, are just those terms of type $o$.
	We define the \emph{goal formulas} and \emph{definite formulas} by mutual induction using the following grammar:
  \[
		\begin{array}{rrcl}
		\ruleName{Goal Formula} & G,\,H &\Coloneqq& A \mid G \wedge H \mid \exists x\mathord{:}\sigma.\,G \mid \truetm\\
		\ruleName{Definite Formula} & C,\,D &\Coloneqq& \forall \vv{y\mathord{:}\sigma}.\,G \implies P\,\vv{y} \mid \forall \vv{y\mathord{:}\sigma}.\,G \implies P\,(c\,\vv{y}) \mid C \wedge D \mid \truetm
    \end{array}
  \]
  Wherever possible we will omit the explicit type annotation on binders and we write $\fv(X)$ to denote the typed free variables of some term, clause or formula $X$.  We identify formulas up to renaming of bound variables.

  The first two alternatives of the syntactic class of definite formulas are the two kinds of \emph{definite clause} that we consider in this work.  They differ in the shape of the head of the clause ($P\,\vv{y}$ vs $P\,(c\,\vv{y})$), but in both cases the variables $\yy$ are required to be distinct from each other (i.e.~the arguments are \emph{shallow} and \emph{linear}).  This is how we express the MSL restriction in our higher-order setting.

  We will often think of definite formulas rather as sets of definite \emph{clauses} (the conjuncts of the formula), thus legitimising notation such as $(\forall \vv{y}.\,G \implies P\,\vv{y}) \in D$, and this is greatly smoothed by adopting the following conventions: we identify formulas up to the commutativity and associativity of conjunction and the use of $\truetm$ as unit.  

	In the following, we will only consider those goal formulas and clauses that are well typed according to the judgement defined inductively in Figure \ref{fig:syntax}.

\end{definition}

\begin{figure}
  \[
    \begin{array}{c}
			\prftree[l]{\ruleName{T}}{
				D \types \truetm
			}
		\hspace{5mm}
      \prftree[l]{\ruleName{And}}{
        D \types G
      }
      {
        D \types H
      }
      {
        D \types G \wedge H
      }
		\hspace{5mm}
      \prftree[l]{\ruleName{Ex}}{
        D \types G[t/x]
      }
      {
        D \types \exists x.\,G
      }
		\hspace{5mm}
			\prftree[l,r]{$(\forall \vv{y}.\, G \!\Rightarrow\! A) \!\in\! D$}{\ruleName{Res}}{
        D \types G[\vv{s}/\vv{y}]
      }
      {
        D \types A[\vv{s}/\vv{y}]
      }
    \end{array}
  \]
  \caption{Proof system for goal formulas}\label{fig:prf-goal}
\end{figure}

\subsection{Proof system and decision problems}

In this paper we will mostly work with respect to a certain proof system for judgements of the form $D \types G$, that is: from a given set of definite clauses $D$ a given goal formula $G$ follows.

\begin{definition}[Proof System]
  The rules defining the system are given in Figure~\ref{fig:prf-goal}, \changed[jj]{with $[\vv{s}/\vv{y}]$ denoting the substitution of $s_i$ for each $y_i$ in $\yy$, and $[t/x]$ of $t$ for $x$.
  Substitution terms may be higher-order and are assumed to be well typed}.
  Note that the substitution in \ruleName{Res} may be vacuous.
\end{definition}

Most of the proof rules are standard but note that \ruleName{Res} is given its name because it simulates (part of) the role of the resolution rule in clausal presentations of higher-order logic and we will sometimes refer to this rule as performing a ``resolution step''.  We note that the mechanism by which a given definite clause and a corresponding atomic formula interact in the \ruleName{Res} rule is by matching rather than unification, and the reason that this is possible is because we have an explicit existential introduction rule \ruleName{Ex}.

The above remarks are substantiated by the following result which, leveraging results on higher-order constrained Horn clauses due to \citet{OngWagner2019}, shows that this system characterises truth for this fragment of higher-order logic under the standard semantics.  
The result requires that we assume (a) the predicate signature contains a universal relation $\univRel_\rho$ for each type $\rho$ occurring in $D$ or $G$, which is axiomatised by a corresponding universal clause $(\forall \yy.\, \truetm \Rightarrow \univRel_\rho\,\yy)$ in $D$, and (b) that every tree constructor type $\gamma$ that occurs in $D$ or $G$ is inhabited by some closed term.  Clearly, both of these can be satisfied by suitable extensions of the signatures if they are not satisfied already.  We shall call signatures that satisfy these requirements \emph{adequate}.

\begin{restatable}{theorem}{proofSystemCompleteness}
  \label{thm:proof_system}
  Assuming adequate signatures, the proof system is sound and complete.
\end{restatable}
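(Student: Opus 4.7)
The plan is to handle the two directions separately: soundness by a direct induction on derivations, and completeness by reduction to the known completeness results for higher-order constrained Horn clauses (HoCHC) of \citet{OngWagner2019}.

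For soundness, induction on the height of $D \types G$ suffices. The cases \ruleName{T} and \ruleName{And} are immediate; the case \ruleName{Ex} follows because a derivation of $G[t/x]$ exhibits the denotation of $t$ as a witness for $\exists x{:}\sigma.G$ in every model. For \ruleName{Res}, every model of $D$ satisfies the clause $\forall \vv{y}.\,G \Rightarrow A$, so by universal instantiation combined with the induction hypothesis $D \models G[\vv{s}/\vv{y}]$, modus ponens yields $D \models A[\vv{s}/\vv{y}]$.

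For completeness, I would observe that our definite and goal formulas form a syntactic subclass of HoCHC with the same standard-model interpretation, so $D \models G$ in our sense agrees with HoCHC entailment. Ong and Wagner's completeness result then supplies a proof in their calculus, which I would translate stepwise into a derivation in our system, maintaining as an invariant that each intermediate goal formula remains within our restricted grammar.

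The main obstacle is this translation. Ong and Wagner's resolution step uses higher-order unification between a goal atom and a clause head, whereas our \ruleName{Res} rule uses an explicit substitution $\vv{s}$ that must be supplied by the proof, with the \ruleName{Ex} rule providing the only means of introducing such witnesses. Both adequacy hypotheses are essential here: the universal relation clauses $\univRel_\rho$ allow body atoms over variables left unconstrained by the unifier to be discharged trivially, and the inhabitation of every constructor type occurring in $D$ or $G$ guarantees that concrete witness terms exist to feed to \ruleName{Ex} even when the Ong--Wagner unifier leaves variables free. Care must be taken to ensure that the translated derivation preserves well-typedness and stays within our restricted goal-formula syntax.
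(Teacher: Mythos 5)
Your soundness argument is fine and matches the paper's (brief) treatment, and your high-level plan for completeness --- invoke Ong and Wagner's completeness for HoCHC and replay their proof in the \ruleName{Res}/\ruleName{Ex} system --- is indeed the strategy the paper follows. The gap is in the step you treat as an observation: that the definite and goal formulas ``form a syntactic subclass of HoCHC with the same standard-model interpretation.'' This is false as stated, and repairing it is where essentially all of the work lies. HoCHC is parameterised by a \emph{first-order} background constraint theory with a fixed structure for the individuals: its term language has no higher-order constructors (so $\MSL(\omega)$/$\HOMSL(\omega)$ trees built from symbols of type $\gamma$ with functional arguments are not HoCHC terms), its clause heads must be a predicate applied to variables (so pattern heads $P\,(c\,\vv{y})$ are not admissible), and its entailment is relative to the fixed background structure rather than ranging over all $\Sigma$-structures as $D \models G$ does here. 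The paper bridges all three mismatches with an explicit translation: constructors are defunctionalised to first-order trees over a binary application symbol, pattern heads $\forall \vv{y}.\,G \Rightarrow P\,(c\,\vv{y})$ become variable heads guarded by an equational constraint $z = c\,\vv{y}$, and a family of typing predicates $T_\gamma$ is axiomatised so that every first-order existential is guarded by the $T_\gamma$ of its original type. Only the easy semantic direction ($D \models G$ implies entailment of the translated instance over the tree theory) is needed, which is why the quantification mismatch is harmless --- but that, too, has to be argued via the translation, not by identification of the two logics.

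Consequently the ``translation obstacle'' you flag is not the real one. The Ong--Wagner rules, restricted to this image, use matching/substitution (their heads are variable-applied), and the delicate point in the replay is instead: given the satisfying assignment $\alpha$ produced at their axiom step (over the first-order theory of trees), why do the assigned trees decode to well-typed closed terms of the original higher-order signature that can be fed to \ruleName{Ex}? That is exactly what the $T_\gamma$ guards buy --- the paper proves a ``type-respecting'' lemma by induction on the resolution steps against the $T_\gamma$ clauses --- and it is not a consequence of mere inhabitation of the constructor types, which is only a side condition. Your use of the universal relations $\univRel_\rho$ to dispose of higher-order existentials is in the right spirit (the paper instantiates the higher-order existentials with $\univRel_\rho$ in its replay lemma), but without the defunctionalising translation, the equational encoding of pattern heads, and the type-respecting lemma, the stepwise translation you propose cannot get off the ground, and maintaining ``each intermediate goal stays in our grammar'' is not the invariant that makes it work.
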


For this reason, we will develop most of our results with respect to the proof system and 
implicitly adhere to the usual definition of the semantics of formulas.

As a corollary of the correspondence between the systems, we also obtain that higher-type existentials do not add any power to the system, and so can be ignored:
\begin{corollary}\label{prop:rho_existential_elim}
	$D \vdash \exists x\mathord{:}\rho.\, G$ if, and only if, $D \vdash G[\univRel_\rho/x]$
\end{corollary}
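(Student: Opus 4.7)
The plan is to derive both directions from Theorem~\ref{thm:proof_system} via a semantic monotonicity argument, which avoids having to manipulate derivations directly.

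The backward direction is immediate: by adequacy, $\univRel_\rho$ is a well-typed term of type $\rho$, so a single application of \ruleName{Ex} with witness $\univRel_\rho$ converts a proof of $D \types G[\univRel_\rho/x]$ into one of $D \types \exists x\mathord{:}\rho.\, G$.

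For the forward direction, I would first invoke soundness to pass to the standard semantics, obtaining $D \models \exists x\mathord{:}\rho.\, G$. The critical observation is that the syntactic class of goal formulas contains no negation, universal quantifier, or implication, so the interpretation of any goal formula is \emph{monotone} in the interpretation of each of its free predicate-typed variables, relative to the pointwise order on the appropriate predicate lattice. This is the usual Horn-clause monotonicity, established by induction on the structure of goal formulas together with a mutual induction over the definite clauses in $D$ (whose bodies are themselves goal formulas). Moreover, in any model $\mathcal{M}$ of $D$, the defining axiom $\forall \yy.\, \truetm \Rightarrow \univRel_\rho\,\yy$ forces $\univRel_\rho^{\mathcal{M}}$ to be the top element of the $\rho$-predicate lattice. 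Hence if $v$ is the semantic witness of the existential in $\mathcal{M}$, then $v \leq \univRel_\rho^{\mathcal{M}}$ and monotonicity yields $\mathcal{M} \models G[\univRel_\rho/x]$. Since $\mathcal{M}$ was arbitrary, $D \models G[\univRel_\rho/x]$, and completeness (Theorem~\ref{thm:proof_system} again) delivers the required proof.

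The main obstacle is verifying the monotonicity claim in full generality when $x$ has higher predicate type and therefore occurs applied to arguments inside atoms of $G$; this reduces to the standard fact that semantic application preserves the pointwise order in its function argument, combined with the mutual induction over predicate definitions in $D$. A fully proof-theoretic alternative, which I would fall back on should the semantic route prove awkward in our higher-order setting, is to induct on the derivation that witnesses $D \types G[t/x]$ for the concrete term $t$ extracted from the final \ruleName{Ex} step: every atom whose head is (an instance of) $t$ after substitution is rewritten into an atom headed by $\univRel_\rho$ and discharged in a single \ruleName{Res} step against the universal clause, while all other cases follow directly from the inductive hypothesis modulo routine alpha-renaming so substitutions commute.
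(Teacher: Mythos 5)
Your backward direction is fine and is exactly the paper's one-line \ruleName{Ex} step. The problem is your primary, semantic argument for the forward direction: the monotonicity lemma it rests on is false under the semantics this paper actually uses. Goal formulas are indeed negation-free, but the paper interprets predicate types by \emph{full} function spaces (standard HOL semantics, Appendix~A), so a model of $D$ may interpret a higher-order predicate $P\in\Pi$ by a non-monotone function: $D$ constrains $P$ only from below, and nothing forces $P$ to remain true when its predicate argument is enlarged. Since an existential variable $x\:\rho$ may occur in \emph{argument} position of such a $P$ (this is the typical situation in \HOMSL($\omega$) --- e.g.\ the atom $\Exists\,p$ in Section~2.4), the step ``$v \le \univRel_\rho^{\mathcal{M}}$ and monotonicity yields $\mathcal{M}\models G[\univRel_\rho/x]$'' does not go through: take $D$ containing only the universal clauses, $G = P\,x$ with $P:(\iota\to o)\to o$, and a model in which $P$ holds exactly of the constantly-false predicate; then $G$ holds at a witness but fails at $\univRel_{\iota\to o}$. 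Your appeal to ``mutual induction over the definite clauses in $D$'' implicitly assumes the interpretations of the $\Pi$-predicates are the ones generated by $D$ (a least-model picture), but entailment here quantifies over \emph{all} standard models, so this does not repair the argument. The monotone-only reading would require exactly the standard-vs-monotone transfer results of \citet{OngWagner2019}, which is in effect what the paper uses: its proof extracts the corollary from the completeness machinery of Theorem~\ref{thm:proof_system}, observing that in the HOCHC refutation the rule \OWRef{} implicitly instantiates every predicate-type existential with the maximal element, i.e.\ with $\univRel_\rho$, and then replays that proof.

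Your fallback proof-theoretic route is much closer to a self-contained correct argument (and genuinely different from the paper's): since the rules are syntax-directed, the root of a proof of $D\types\exists x\:\rho.\,G$ is \ruleName{Ex} with some witness $t$, and one can show by induction on the derivation that $D\types H[u/z]$ implies $D\types H[\univRel_{\rho'}/z]$ for any predicate-type $z$ and term $u$. But as sketched it is underspecified at precisely the point where the semantic argument breaks: occurrences of $x$ that are \emph{not} at the head of an atom, i.e.\ atoms of the form $P\,(x\,\vv{s})$ or $P\,x$. These are not ``atoms headed by an instance of $t$'' and cannot be discharged against the universal clause, nor do they ``follow directly'' by copying: you must peel the final \ruleName{Res} step, note that the matching clause head is of the variable form $P\,\vv{y}$ (a predicate-typed term cannot sit inside a constructor pattern $c\,\vv{y}$), rewrite the premise as $G'[\vv{w}/\vv{y}][t/x]$, and apply the induction hypothesis there before re-applying \ruleName{Res}. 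With that case spelled out the fallback works; without it, the proposal as written has a gap in both of its routes.
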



The proof system is very straightforward to use since, apart from choosing which clause to apply in a \ruleName{Res} step, the rules are syntax directed according to the shape of the goal.  
An example of the use of the proof system is given following the next subsection, in \ref{sec:file-example}.

\changed[jj]{As usual in higher-order logic, we write $D \models G$ to mean ``$G$ holds for every model of $D$''.} 

\paragraph{Decision problems}
Although we have been discussing satisfiability in the introduction, a conjunction of Horn constraints $D \wedge (G \implies \mathsf{false})$ is satisfiable iff $D \not\models G$.  Hence, we can equivalently consider the problem of deciding the entailment problem $D \models G$, and this is more natural in our setup.
Moreover, due to completeness, we can equivalently consider the provability of $D \proves G$.
We state all kinds of problems because we will prefer one or the other in reductions according to the shape of yes-instances and the form of the input.

\begin{definition}[Entailment, Satisfiability and Provability Problems]
	Given a definite formula $D$ and a goal formula $G$ over signatures $\Pi$ and $\Sigma$, the \emph{entailment problem} is to determine $D \models G$, the \emph{satisfiability problem} is to determine $D \not\models G$ and the \emph{provability problem} is to determine $D \proves G$.
\end{definition}

\subsection{Stratification by type-theoretic order}
\label{sec:fragments}

In the forgoing subsection we have given a very liberal account of what it means to extend MSL to higher types, but one can imagine at least two other possible definitions which are just as natural.  First, we may consider a higher-order extension in which tree constructors are allowed to be of higher-type, but predicates are not -- in other words, as in first-order MSL, the subject of a predicate is a term of type $\iota$, but now this term may be constructed internally using constants of higher type.  Second, we may consider a higher-order extension in which predicates are allowed to be higher-order, but tree constructors are not -- in other words, as in first-order MSL, the terms of type $\iota$ are essentially first-order, but now predicates may take other predicates as arguments.

These two fragments and the unrestricted logic of the previous subsection arise naturally from the following stratification according to type-theoretic order.
\begin{definition}[Higher-order fragments]
	The family of fragments \HOMSL($n$), for $n$ drawn from $\mathbb{N} \cup \{\omega\}$, denotes the restriction of the logic of the previous subsection to tree constructor argument types $\kappa$ of order at most $n-1$.  The family of fragments $\MSL($n$)$ additionally restricts predicate argument types $\sigma$ to $\iota$ only (i.e. all predicates are of type $\iota \to o$).
\end{definition}

Here we regard $\omega-1$ as $\omega$ and an index of $0$ as a prohibition on arguments (i.e. one may not construct function types).
Under this stratification, we can recognise the following fragments:

\begin{itemize}
  \item \MSL(0) is Datalog: predicates are first-order and their subjects are (nullary) constants.
  \item \MSL(1) is the first-order MSL fragment.
  \item \MSL($\omega$) is the second of the two higher-order extensions described above: we have first-order predicates whose subjects are trees constructed from an arbitrary higher-order signature.
  \item \HOMSL(0) is higher-order Datalog, as studied by, e.g. \citet{charalambidis2019hodatalog}. 
  \item \HOMSL(1) is the first of the two higher-order extensions described above: we have higher-order predicates over trees defined using only first-order tree constructors.
  \item \HOMSL($\omega$) is the full language described in the previous subsection.
\end{itemize}

\changed[jj]{The \MSL(0) and \MSL(1) fragments live within first-order Horn clauses and follow \citet{miller_nadathur_2012}'s presentation thereof (as \emph{fohc}), while our larger fragments fall within higher-order Horn clauses (as \emph{hohc}):}

\begin{center}
\newcommand{\rotsubseteq}{\rotatebox[origin=c]{180}{\ensuremath\subseteq}}
\tikzset{
  background/.style={rectangle,rounded corners,inner sep=3pt,rounded corners=2mm,opacity=0.7,fill=#1!30},
  inner/.style={rectangle,rounded corners,inner sep=3pt,rounded corners=2mm,opacity=0.7,fill=#1!30},
  outer/.style={rectangle,rounded corners,inner sep=3pt,rounded corners=2mm,opacity=0.7,fill=#1!30}
}
\begin{tikzpicture}[
	every node/.style={inner sep=1pt}
]
	\node(m0) at (0,2) {\MSL(0)};
	\node(m1) at (2,2) {\MSL(1)};
	\node(mdots) at (4,2) {$\dots$};
	\node(momega) at (6,2) {\MSL($\omega$)};
	\node(h0) at (0,1) {\HOMSL(0)};
	\node(h1) at (2,1) {\HOMSL(1)};
	\node(hdots) at (4,1) {$\dots$};
	\node(homega) at (6,1) {\HOMSL($\omega$)};
	\draw (0,2) edge[draw=none] node [sloped, auto=false] {$\subseteq$} (2,2);
	\draw (2,2) edge[draw=none] node [sloped, auto=false] {$\subseteq$} (4,2);
	\draw (4,2) edge[draw=none] node [sloped, auto=false] {$\subseteq$} (6,2);
	\draw (0,1) edge[draw=none] node [sloped, auto=false] {$\subseteq$} (2,1);
	\draw (2,1) edge[draw=none] node [sloped, auto=false] {$\subseteq$} (4,1);
	\draw (4,1) edge[draw=none] node [sloped, auto=false] {$\subseteq$} (6,1);
	\draw (0,1) edge[draw=none] node [sloped, auto=false] {$\rotsubseteq$} (0,2);
	\draw (2,1) edge[draw=none] node [sloped, auto=false] {$\rotsubseteq$} (2,2);
	\draw (6,1) edge[draw=none] node [sloped, auto=false] {$\rotsubseteq$} (6,2);

	\begin{pgfonlayer}{background}
		\node [background={blue},
							fit=(m0) (homega) (h0),
							label=below right:{\emph{hohc}}, 
							draw=blue] {};
		\node [background={red},
							fit=(m0) (m1),
							label=above:{\emph{fohc}}, 
							draw=red] {};
	\end{pgfonlayer}
\end{tikzpicture}
\end{center}


\subsection{Example: constraints for Lazy IO}\label{sec:file-example}

Our motivation is to use higher-order constraints to specify certain higher-order program verification problems, and especially the verification of safety properties for functional programs.
We describe a general approach to using higher-order MSL constraints for verified socket programming in Section~\ref{sec:application}, but let us here consider a different example: verifying the correctness of a lazy IO computation.
Consider the following Haskell expression, which is featured on the Haskell.org wiki as a prototypical example of a mistake due to improper use of lazy IO for any input \cite{haskellwiki}. 
The expression throws a runtime exception for attempting to read from a closed file handle. 
\begin{lstlisting}[label={ex:file-wrong}]
    do  contents <- withFile "test.txt" ReadMode hGetContents
        putStrLn contents
\end{lstlisting}
This code reads the file named ``test.txt'' (line 1) and prints the contents to stdout (line 2).

The problem comes from the interaction between the lines.
The reading of the file is done using the primitive $\mathsf{hGetContents}$, which returns the list of characters read from a handle lazily\footnote{In fact the handle is put into an intermediate \emph{semi-closed} state, but it is not important to this example so, in the interests of simplicity, we will not model it in what follows.}.  
The $\mathsf{hGetContents}$ action is wrapped in the $\withFileS$ combinator, which brackets the execution of $\mathsf{hGetContents}$ between a call to open the handle and a call to close it again.  
Hence, at the point at which the contents of the file are demanded, in line 2, the file handle has already been closed as a result of leaving $\withFileS$, and forcing the lazy list of characters results in attempting to read from this closed handle.

\begin{figure}
	\begin{minipage}{.37\linewidth}
  \begin{align}
      \setcounter{equation}{0}
      p\ [] &\implies \Exists\ p\\
      \exists y.\ p(y \mathord{:} \readS) &\implies \Exists\,p \\
			\Cex(x) &\implies \Cex(\idS\,x) \\
      \Cex(k\,x\,h) &\implies \PredS\,h\,k\,x \\
			\Cex(k\,\openhdl)  &\implies \Cex(\openS\,h\,k) \\
			\Cex(k\,\closedhdl)  &\implies \Cex(\closeS\,h\,k)
  \end{align}
	\vspace{2.7cm}
\end{minipage}
\begin{minipage}{.6\linewidth}
	\begin{align}
		\IsClosed(h)   &\implies \Cex(\readS\ h\ k) \\
		\IsOpen(h) \wedge \Exists\,(\PredS\,h\,k)  &\implies \Cex(\readS\ h\ k) \\
		\Cex(k\ ())) &\implies \Cex(\putContS\ k\ y\ h) \\
		\Cex(x\ h\ (\putContS\ k))  &\implies \Cex(\putStrS\ x\ h\ k) \\
		\Cex(\openS\ h\ (\withFileSA\ f\ k))  &\implies \Cex(\withFileS\, x\, m\, f\, h\, k) \\
		\Cex(f\ h_0\ (\withFileSB\ k)) &\implies \Cex(\withFileSA\ f\ k\ h_0) \\
		\Cex(\closeS\ h_1 (\withFileSC\ y\ k)) &\implies \Cex(\withFileSB\ k\ h_1\ y) \\
		\Cex(k\ y\ h_2) &\implies \Cex(\withFileSC\ y\ k\ h_2) \\
		\IsOpen(h) \wedge \Cex(k\ \openhdl\ \readS)  &\implies \Cex(\hGetContentsS\ h\ k) \\
		\Cex(\putStrS\ x\ h_3\ \idS) &\implies \Cex(\actSB\ x\ h_3) \\
		\truetm &\implies \IsOpen(\openhdl) \\
		\truetm &\implies \IsClosed(\closedhdl) 
\end{align}
\end{minipage}
\caption{Clauses corresponding to the verification of Example~\ref{ex:file-wrong}}\label{ex:file}
\end{figure}



\begin{figure}
	\[
		\prftree[r]{(11)}{
			\prftree[r]{(5)}{
				\prftree[r]{(12)}{
					\prftree[r]{(15)}{
						\prftree{
							\prftree[r]{(17)}{\IsOpen(\openhdl)}
						}
						{
							\prftree[r]{(13)}{
								\prftree[r]{(6)}{
										\prftree[r]{(14)}{
											\prftree[r]{(16)}{
												\prftree[r]{(10)}{
													\prftree[r]{(7)}{
														\prftree[r]{(18)}{\IsClosed(\closedhdl)} 
													}
													{
														\Cex(\readS\ \closedhdl\ (\putContS\ \idS))
													}
												}
												{
													\Cex(\putStrS\ \readS\ \closedhdl\ \idS)
												}
											}
											{
												\Cex(\actSB\ \readS\ \closedhdl)
											}
										}
										{
											\Cex(\withFileSC\ \readS\ \actSB\ \closedhdl)  
										}
								}
								{
									\Cex(\closeS\ \openhdl\ (\withFileSC\ \readS\ \actSB))
								}
							}
							{
								\Cex(\withFileSB\ \actSB\ \openhdl\ \readS)
							}
						}
						{
							\IsOpen(\openhdl) \wedge \Cex(\withFileSB\ \actSB\ \openhdl\ \readS)
						}
					}
					{
						\Cex(\hGetContentsS\ \openhdl\ (\withFileSB\ \actSB))
					}
				}
				{
					\Cex(\withFileSA\ \hGetContentsS\ \actSB \ \openhdl)
				}
			}
			{
				\Cex(\openS\ \closedhdl\ (\withFileSA\ \hGetContentsS))
			}
		}
		{
			\Cex(\withFileS\ \texttt{``test.txt''}\ \ReadModeS\ \hGetContentsS\ \closedhdl\ \actSB)
		}
	\]
	\caption{Proof in the environment given by Figure~\ref{ex:file}}\label{fig:file-proof}
	\end{figure}

An abstraction of the behaviour of this expression, and the primitives and combinators contained therein, can be expressed as a set of higher-order MSL clauses shown in Figure~\ref{ex:file}.
A systematic approach to verifying lazy IO is not a contribution of this work, so it is not essential to understand the way in which they model the situation, since we will use them as a kind of running example, it is worth looking at the encoding in little bit of detail.
\changed[jj]{Note that [] and : are the usual Haskell nullary and binary list constructors, resp., that denote the empty list and list composition.}

The clauses effectively model a version of the above expression in which both global state (the status of the file handle) and control flow (lazy evaluation) are represented explicitly, by threading a state parameter $h$ and passing continuations $k$ respectively.
In addition to the various functions that appear in the source code, but which now expect an additional pair of arguments $h$ and $k$, there are two constants $\openhdl$ and $\closedhdl$, representing the two possible states (open and closed -- recall that we omit semi-closed for simplicity) of the handle, and four predicates $\Cex$, $\Exists$, $\IsOpen$ and $\IsClosed$.  

The idea is that the predicate $\Cex$ (for `\textsf{V}'iolation) is true of its argument $s$ just if $s$ represents an expression that will attempt to read from a closed file handle.
%

The goal $\Cex(\withFileS\ \texttt{"test.txt"}\ \hGetContentsS\ \closedhdl\ (\actSB\ \idS))$ represents the verification problem: does the given expression crash with a closed file-handle violation?
The idea of the representation is as follows.  
Given that we think of every function as taking a file handle and a continuation, we can rephrase the expression as:
\[
  \withFileS\ \texttt{"test.txt"}\ \hGetContentsS\ \closedhdl\ (\lambda x\,h_3.\ \putStrS\ x\ h_3\ (\abs{y}{y}))
\]
This captures via continuation passing style that evaluation must proceed by executing the expression 
\[
	\withFileS\ \texttt{"test.txt"}\ \hGetContentsS
\]
in the initially closed handle state and with continuation $\lambda x\,h_3.\ \putStrS\ x\ h_3\ k$.  
This continuation takes the suspended lazy stream $x$ that is output by $\mathsf{hGetContents}$ and the state of the handle $h_3$ on exit from $\withFileS$, and attempts to print it to stdout before continuing with the remainder of the program, which just returns whichever value is output by $\putStrS$ (which is just unit).
However, since we don't allow for $\lambda$-abstractions in our constraints, we give a $\lambda$-lifted version of the above, with the innermost abstraction given instead by $\idS$ and the outer one given by $\actSB$.

Similarly, clauses (11)--(14) model the bracketing behaviour of $\withFileS$ described above.  An application of $\withFileS$ to a filename $x$ and an action on handle $f$ will cause a violation (when started in a state in which the handle is $h$ and the remaining program to compute is $k$), whenever $\openS\ h\ (\lambda h_0.\ f\ h_0\ (\lambda y\,h_1.\ \closeS\ h_1\ (\lambda h_2.\ k\ y\ h_2)))$ does.
That is, calling $\openS$ (with the same state $h$) to open the (implicit) file, then continuing by running the action $f$, then continuing by calling $\closeS$ and then finally continuing by executing the remainder of the program $k$ (supplying the output $y$ of the action $f$).
The abstractions are lifted to, from left to right, $\withFileSA$, $\withFileSB$ and $\withFileSC$.

Clauses (1) and (2) constrain $\Exists$ to act like a specialised kind of (higher-order) existential quantifier.  $\Exists$ takes a predicate $p$ as input and holds whenever there is some list, of a certain form, that satisfies $p$.  The form of the list models the thunking behaviour of the lazy stream resulting from $\mathsf{hGetContents}$ -- in particular the fact that the tail of the list comprises another call to $\readS$.

A proof of $\Cex(\withFileS\ \texttt{"test.txt"}\ \hGetContentsS\ \closedhdl\ (\actSB\ \idS))$, witnessing the fact that the expression does cause a violation, can be seen in Figure~\ref{fig:file-proof}.
Here, each use of \ruleName{Res} is annotated by the number of the clause as given in Figure~\ref{ex:file}.

\section{From \texorpdfstring{\HOMSL($\omega$)}{\HOMSL(omega)} to Existential-Free \texorpdfstring{\MSL($\omega$)}{\MSL(omega)}}\label{sec:reductions}

\newcommand{\tr}[1]{(\!|#1|\!)}
\newcommand{\rt}[1]{(\!|#1|\!)^{-1}}
\newcommand{\calD}{\mathcal{D}}

The full \HOMSL($\omega$) fragment is a remarkably expressive language with higher-order constructors, predicates, and existentials, allowing a wide range of higher-order verification problems to be expressed in a language that closely matches a functional source program.

In this section, we show that some of that power is illusory: existential quantification is definable using higher-order predicates (Theorem~\ref{thm:MSL(omega)_to_exf_MSL(omega)}) and higher-order predicates are, in a sense, definable already using higher-order function symbols (Theorem~\ref{thm:HOMSL(omega)_to_MSL(omega)}).
Hence, we are able to reduce the solvability problem from \HOMSL($\omega$) to the solvability problem in existential-free \MSL($\omega$).
These reductions are extremely helpful for developing the rest of the results in the paper.

\newcommand\exf[1]{{#1}_{\mathsf{\not \exists}}}
\newcommand\exfD{\exf{D}}
\newcommand\exfFun[1]{\mathsf{exFree}(#1)}
\newcommand\resSort{\mathsf{S}_{\mathsf{sor}}}
\DeclarePairedDelimiter{\exfMap}{\Lbag}{\Rbag}

\begin{figure}
  \begin{minipage}{.2\linewidth}
    \begin{align*}
      \mathsf{P}\ \mathsf{Q}\ x &\implies \mathsf{S}\ x \\
      \mathsf{R}\ y \wedge x\ y &\implies \mathsf{P}\ x\ y \\
      \mathsf{R}\ x &\implies \mathsf{Q}\ (\mathsf{a}\ x) \\
      \truetm &\implies \mathsf{R}\ (\mathsf{a}\ x)
      &\\
      &\\
    \end{align*}
  \end{minipage}
  \begin{minipage}{.4\linewidth}
    \begin{align*}
      \mathsf{T}(\mathsf{p}\ \mathsf{q}\ x) &\implies \mathsf{T}(\mathsf{s}\ x) \\
      \mathsf{T}(\mathsf{r}\ y) \wedge \mathsf{T}(x\ y) &\implies \mathsf{T}(\mathsf{p}\ x\ y) \\
      \mathsf{Q}(z) &\implies \mathsf{T}(\mathsf{q}\ z) \\
      \mathsf{R}(z) &\implies \mathsf{T}(\mathsf{r}\ z) \\
      \mathsf{T}(\mathsf{r}\ x) &\implies \mathsf{Q}(\mathsf{a}\ x) \\
      \truetm &\implies \mathsf{R}(\mathsf{a}\ x)
    \end{align*}
  \end{minipage}
  \begin{minipage}{.35\linewidth}
    \[
  \prftree{
    \prftree{
      \prftree{
        \prftree{\mathsf{R}(\mathsf{a}\ (\mathsf{a}\ \mathsf{c}))}
      }
      {
        \prftree{
          \prftree{
            \prftree{\mathsf{R}(\mathsf{a}\ \mathsf{c})}
          }
          {
            \mathsf{T}(\mathsf{r}\ (\mathsf{a} \mathsf{c}))
          }
        }
        {
          \mathsf{Q}(\mathsf{a}\ (\mathsf{a}\ \mathsf{c}))
        }
      }
      {
        \mathsf{T}(\mathsf{r}\ (\mathsf{a}\ (\mathsf{a}\ \mathsf{c}))) \wedge \mathsf{T}(\mathsf{q}\ (\mathsf{a}\ (\mathsf{a}\ \mathsf{c})))
      }
    }
    {
      \mathsf{T}(\mathsf{p}\ \mathsf{q}\ (\mathsf{a}\ (\mathsf{a}\ \mathsf{c})))
    }
  }
  {
    \mathsf{T}(\mathsf{s}\ (\mathsf{a}\ (\mathsf{a}\ \mathsf{c})))
  }
\]
  \end{minipage}
  \caption{Example of clauses (left) and their transform (center) and a proof (right)}\label{fig:ho-transform}
\end{figure}

\subsection{Elimination of higher-order predicates}

The idea of the first reduction is to simulate higher-order predicates using higher-order function symbols and a new, first-order ``truth'' predicate $\mathsf{T} : \iota \to o$.
Consider the set of \HOMSL($\omega$) clauses, over predicates $\mathsf{P}: (\iota \to o) \to \iota \to o$, $\mathsf{Q}:\iota \to o$, $\mathsf{R}: \iota \to o$ and $\mathsf{S}:\iota \to o$, and function symbols $\mathsf{a}:\iota \to \iota$ and $\mathsf{c}:\iota$ that are shown on the left of Figure~\ref{fig:ho-transform}.
We have that the goal $\mathsf{S}\ (\mathsf{a}\ (\mathsf{a}\ \mathsf{c}))$ is provable from these clauses.  

We will represent each of the predicates $\mathsf{P}$, $\mathsf{Q}$, $\mathsf{R}$ and $\mathsf{S}$ by new function symbols $\mathsf{p}:(\iota \to \iota) \to \iota \to \iota$, $\mathsf{q} : \iota \to \iota$, $\mathsf{r}:\iota \to \iota$ and $\mathsf{s}: \iota \to \iota$ respectively.
Since we have exchanged $o$ everywhere in these types for $\iota$, combinations that were possible involving $\mathsf{P}$, $\mathsf{Q}$ and $\mathsf{R}$ are still possible using their representatives, just with a different type. 
For example, $\mathsf{P}\ \mathsf{Q}\ x$ in the the body of the first clause can be represented as $\mathsf{p}\ \mathsf{q
}\ x$, but note that this is a term of type $\iota$ so, in a sense, we have lost the notion of when the proposition is true.

To recover truth, we install a new predicate $\mathsf{T}$ and formulate the set of \MSL($\omega$) clauses shown in the center of Figure~\ref{fig:ho-transform}.
\changed[jj]{Thus $\mathsf{T}(t)$ is true just if the proposition represented by the tree $t$ is true (according to the representation scheme above).  When $t$ is a first-order predicate application, then its truth may depend on pattern matching in clause heads, and so truth is deferred to the original predicate (e.g.~in the third and fourth clauses).}
For example, we can derive the goal $\mathsf{T}(\mathsf{s}\ (\mathsf{a}\ (\mathsf{a}\ \mathsf{c})))$, which encodes the higher-order goal $\mathsf{S}\ (\mathsf{a}\ (\mathsf{a}\ \mathsf{c}))$, as shown on the right of Figure~\ref{fig:ho-transform}.

\begin{theorem}
  \label{thm:HOMSL(omega)_to_MSL(omega)}
  Provability in \HOMSL($\omega$) reduces to provability in \MSL($\omega$).
\end{theorem}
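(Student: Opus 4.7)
The plan is to construct a translation $(\cdot)^*$ that maps an \HOMSL($\omega$) instance $(D, G)$ to an \MSL($\omega$) instance $(D^*, G^*)$ along the lines illustrated by Figure~\ref{fig:ho-transform}, and then to show $D \proves G$ iff $D^* \proves G^*$. Define the type translation $\overline{\tau}$ by replacing every $o$ by $\iota$, so predicate types turn into constructor types of the same shape. For each higher-order predicate symbol $P \: \rho$ in $\Pi$, introduce a fresh function symbol $\mathsf{p} \: \overline{\rho}$ in the target signature, and introduce a single first-order predicate $\mathsf{T} \: \iota \to o$ standing for ``truth''. The term translation $s^*$ retypes every variable from $\tau$ to $\overline{\tau}$ and replaces each higher-order predicate symbol by its function-symbol avatar. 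Atoms are translated positionally: in body and goal positions an atom $A$ becomes $\mathsf{T}(A^*)$, while in clause head position only atoms whose head is a higher-order predicate are wrapped (yielding $\mathsf{T}(\mathsf{p}\,\vv{y}^*)$, which still satisfies the MSL shallow-linear condition), leaving first-order head atoms $Q\,s^*$ intact so that heads such as $Q\,(c\,\vv{y})$ remain MSL-legal. Finally, for each first-order predicate $Q$ we add a bridge clause $Q(z) \implies \mathsf{T}(\mathsf{q}\,z)$, so that a proof of $Q\,t$ can also conclude $\mathsf{T}(\mathsf{q}\,t)$. A direct check confirms that $D^*$ is in \MSL($\omega$).

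The forward direction, $D \proves G$ implies $D^* \proves G^*$, I would show by induction on the \HOMSL($\omega$) derivation. The \ruleName{T}, \ruleName{And} and \ruleName{Ex} cases translate step-by-step, using that $(\cdot)^*$ commutes with the substitution of terms for variables. For a \ruleName{Res} step using clause $\forall \vv{y}.\,G' \implies A$ and substitution $\sigma$, if $A$ is headed by a higher-order predicate or by a variable then the translated clause already has the desired head $\mathsf{T}(\mathsf{p}\,\vv{y}^*)$ (respectively $\mathsf{T}(x\,\vv{y}^*)$), and a single \ruleName{Res} step with the translated substitution yields $(A\sigma)^* = \mathsf{T}(\mathsf{p}\,(\vv{y}\sigma)^*)$. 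If $A = Q\,t$ for first-order $Q$, the goal translates to $\mathsf{T}(\mathsf{q}\,(t\sigma)^*)$: first apply the bridge clause to reduce it to $Q\,(t\sigma)^*$, then resolve against the translated clause whose head is unchanged. In each case the translated body $(G'\sigma)^*$ is supplied by the inductive hypothesis.

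The backward direction is the main obstacle. The worry is that an \MSL($\omega$) proof could instantiate a variable $x \: \overline{\rho}$ (for $\rho$ a predicate type) with a term outside the image of $(\cdot)^*$---for instance, substituting $x$ with an \HOMSL tree constructor of type $\iota \to \iota$ where no corresponding predicate exists. I plan to rule out such ``junk'' substitutions by a structural observation: the only clauses in $D^*$ whose heads unify with an atom of shape $\mathsf{T}(f\,\vv{s})$ are translations of \HOMSL clauses (giving $f = \mathsf{p}$ for some higher-order $P$) and bridge clauses (giving $f = \mathsf{q}$ for some first-order $Q$). Consequently, any subgoal $\mathsf{T}(f\,\vv{s})$ whose head symbol $f$ is not a predicate encoding is simply unprovable, so in any successful \MSL($\omega$) proof, substitutions at predicate-like types that are \emph{observed} in the proof must already lie in the image of $(\cdot)^*$, and unobserved substitutions may be replaced on the \HOMSL side by $\univRel$ at the corresponding predicate type. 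Formally, by induction on the \MSL proof of $D^* \proves G^*$, each resolution step can be untranslated: uses of translated clauses undo to \HOMSL resolutions under the inverted substitution, and each use of a bridge clause is absorbed into the subsequent resolution against the first-order head clause that it enables.
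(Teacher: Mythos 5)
You reproduce the paper's construction exactly: the same type translation $o \mapsto \iota$, function-symbol avatars for predicates, a single truth predicate $\mathsf{T}$, $\mathsf{T}$-wrapping of body and goal atoms, and reflection/bridge clauses $\forall z.\, Q\,z \implies \mathsf{T}(q^{\#}\,z)$; your forward direction is the paper's induction. The cosmetic differences (you wrap only higher-order variable-headed clause heads and keep all first-order heads, whereas the paper wraps every variable-headed head and keeps only constructor-headed ones) are immaterial.

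The backward direction is where there is a genuine gap. Your induction is over proofs of $D^{*} \proves G^{*}$, i.e.\ over goals in the image of the translation, with each \ruleName{Res} step ``untranslated under the inverted substitution''. But non-image terms do enter successful proofs, chiefly through \ruleName{Ex}: an existential of constructor type, say $z\:\iota$, may be witnessed in the target by an avatar-built term such as $q^{\#}a$, and that term is then consumed by variable-headed clause heads like $\mathsf{T}(r^{\#}\,y)$ and flows through the remainder of the proof without ever being refuted, because heads are shallow and linear, so substituted terms are only matched against variables and never decomposed. Consequently your structural claim (``a subgoal $\mathsf{T}(f\,\vv{s})$ with $f$ not a predicate encoding is unprovable, hence observed substitutions lie in the image'') does not deliver what you need: the offending terms either are not in the image at all, or are images of source terms of the \emph{wrong} type ($q^{\#}a$ untranslates to $Q\,a$ of type $o$ where a term of type $\iota$ is required), yet they are perfectly ``observable''. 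Moreover, once one such substitution has occurred, the premise of the next \ruleName{Res} step is no longer of the form $H^{*}[\ldots]$, so your induction hypothesis no longer applies; and $\univRel$ exists only at predicate types, so it cannot repair constructor-type positions (there you would need the inhabitation half of adequacy, together with an irrelevance lemma showing the subproof survives the replacement). The paper avoids this entire analysis with the one idea your plan is missing: make the untranslation total. It defines a retraction that simply erases $\mathsf{T}$ and the $\#$-marks and proves, by induction on the proof, that whenever the translated clauses prove an \emph{arbitrary} target goal $G$, the original clauses prove its erasure; instantiating $G := G^{*}$ then gives the backward direction with no distinction between image and non-image substitutions. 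Either strengthen your statement in that way, or first establish a standalone sanitization lemma; as written, your induction does not close.
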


Thanks to Corollary~\ref{prop:rho_existential_elim}, we assume WLOG that $D$ and $G$ contain no existentials of type $\rho$.

First, from a given \HOMSL($\omega$) constructor signature $\Sigma$ and predicate signature $\Pi$, we construct \MSL($\omega$) signatures $\Sigma^\#$ and $\Pi^\#$.  
Let us write $D \proves_\omega G$ to distinguish proof in the former and $D \proves_1 G$ in the latter.
Then we construct a section $\tr{-}$ that maps formulas of the former into formulas of the latter in such a way that $D \proves_\omega G$ iff $\tr{D} \proves_1 \tr{G}$.

\paragraph{The \MSL($\omega$) signature.}
Let $\Pi_1$ be the subsignature consisting only of the first-order monadic predicates from $\Pi$.
We start by transforming \HOMSL($\omega$) types $\tau$ to \MSL($\omega$) types $\tr{\tau}$.
\[
  \tr{\iota} \coloneqq \iota \qquad \tr{o} \coloneqq \iota \qquad \tr{\sigma \to \rho} \coloneqq \tr{\sigma} \to \tr{\rho} 
\]
We build \MSL($\omega$) signatures $\Sigma^\#$ and $\Pi^\#$ by introducing one additional first-order monadic predicate symbol $\mathsf{T}$ to denote ``truth'', and a new tree constructor $p^\#$ for each predicate symbol $P \in \Pi$:
\[
  \Sigma^\# \coloneqq \{ p^\# : \tr{\rho} \mid P : \rho \in \Pi \} \cup \Sigma
  \qquad \Pi^\# \coloneqq \{ \mathsf{T} : \iota \to o \} \cup \Pi_1
\]

\paragraph{The term transformation.}
Then define $\tr{t}$ by:
\[
  \tr{x} \coloneqq x 
  \qquad \tr{c} \coloneqq c
  \qquad \tr{P} \coloneqq p^\#
  \qquad \tr{s\,t} \coloneqq \tr{s}\,\tr{t}
\]
By some abuse we write $\tr{\vv{s}}$ to denote the pointwise transformation of a vector of terms $\vv{s}$.
We extend this to goal formulas $\tr{G}$ by:
\[
  \tr{\truetm} \coloneqq \truetm
  \qquad \tr{A} \coloneqq \mathsf{T}\,(\tr{A})
  \qquad \tr{G \wedge H} \coloneqq \tr{G} \wedge \tr{H} 
  \qquad \tr{\exists x.\,G} \coloneqq \exists x.\, \tr{G} 
\]
where, by some abuse, we refer to the term-level transformation on the right-hand side of the second equation.  We extend to definite formulas $\tr{C}$ by:
\begin{align*}
  \tr{\truetm} &\coloneqq \truetm \\
  \tr{C \wedge D} &\coloneqq \tr{C} \wedge \tr{D} \\
  \tr{\forall \vv{y}.\, G \implies P\,\vv{y}} &\coloneqq \forall \vv{y}.\, \tr{G} \implies \mathsf{T}\,(p^\#\,\vv{y}) \\
  \tr{\forall \vv{y}.\, G \implies P\,(c\,\vv{y})} &\coloneqq (\forall \vv{y}.\, \tr{G} \implies P\,(c\,\vv{y})) \wedge (\forall \vv{z}.\, P\,z \implies \mathsf{T}\,(p^\#\,z))
\end{align*}
We call the second conjunct of the last case of this definition the \emph{reflection clause}.
Note that the form of head in this case implies that $P : \iota \to o$ in $\Pi$.
There is some obvious redundancy in that the image of the transformation will typically contain many copies of the same reflection clause, but this could be easily avoided if considered undesirable.

Finally, $D \proves G$ iff $\tr{D} \vdash \tr{G}$ completes the reduction from Theorem~\ref{thm:HOMSL(omega)_to_MSL(omega)}.

\subsection{Elimination of existentials}
\label{sec:existentials}

The completeness of our proof system shows (as is usual for Horn logics) that our fragment has the existential witness property, that is: $D \models \exists x.\ G$ iff $D \models G[t/x]$ for some term $t$.
Consequently, we can define existential quantifiers of tree constructor types using higher-order predicates. We introduce a family of new predicate symbols $\exists_\gamma$ indexed by $\gamma$ and constrain them so that they hold of a given predicate $p$ on $\gamma$ whenever $p$ holds for \emph{some} term $t$.

For example, an existential quantifier $\exists_\iota : (\iota \to o) \to o$ on natural numbers, constructed using successor $\mathsf{s}:\iota \to \iota$ and zero $\mathsf{z}:\iota$, can be defined by the two clauses:
\[
  \forall p\:\iota \to o.\ p\ \mathsf{z} \implies \exists_\iota\ p 
  \qquad
  \forall p\:\iota \to o.\ \exists_\iota (\lambda x.\ p\ (\mathsf{s}\ x)) \implies \exists_\iota\ p
\]
However, since we do without $\lambda$-abstraction in our setting, we develop a version of the above with a kind of built-in lambda-lifting \changed[jj]{in which a predicate $\Lambda_{\gamma,G}$ models the $\lambda$-abstraction $\lambda x:\gamma.\,G$.
Thus, an atom $\exists_\iota \,\Lambda_{\iota, G}$ will represent the goal formula $\exists x:\iota.\,G$.}

\begin{theorem}
  \label{thm:MSL(omega)_to_exf_MSL(omega)}
  Provability in \MSL($\omega$) reduces to provability in existential-free \MSL($\omega$).
\end{theorem}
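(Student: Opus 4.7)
The plan is to eliminate existentials by a definitional lambda-lifting: each existential subformula is named by a fresh monadic predicate whose argument packages the subformula's free variables. In \MSL($\omega$) every existential has shape $\exists x\mathord{:}\iota.\,G(x,\vv{y})$ with $\vv{y}$ its other free variables, of types $\vv{\gamma}$ (possibly higher-order, since \MSL($\omega$) permits higher-order constructors). For each such subformula I would introduce a fresh constructor $c_G : \gamma_1 \to \cdots \to \gamma_k \to \iota$ and a fresh monadic predicate $\Lambda_G : \iota \to o$, replace the subformula by the atom $\Lambda_G(c_G\,\vv{y})$, and extend $D$ with the defining clause $\forall x,\vv{y}.\ G'(x,\vv{y}) \implies \Lambda_G(c_G\,\vv{y})$, where $G'$ is the recursive transform of $G$. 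The head is shallow and linear in $\vv{y}$, so this is a well-formed \MSL($\omega$) clause of the second form; the extra universal variable $x$ that occurs only in the body is the standard Horn-clause way of encoding the existential implicit in a body (compare, e.g., the first-order $\mathsf{Bk}$ example of the introduction). An inside-out induction on existential subformulas then delivers an existential-free \MSL($\omega$) instance $(D', G_0')$.

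Correctness amounts to $D \proves G_0 \iff D' \proves G_0'$, which I would prove by mutual induction on proof height. For soundness ($\Rightarrow$), the only new case is a final \ruleName{Ex} step deriving $\exists x.\,G(x,\vv{s})$ from $G(t,\vv{s})$; the inductive hypothesis yields $D' \proves G'(t,\vv{s})$, and one application of \ruleName{Res} on the defining clause with substitution $[t/x,\vv{s}/\vv{y}]$ discharges this to $D' \proves \Lambda_G(c_G\,\vv{s})$, which is precisely the transform of the original goal. For completeness ($\Leftarrow$), since both $\Lambda_G$ and $c_G$ are fresh symbols, the only way to derive $D' \proves \Lambda_G(c_G\,\vv{s})$ in the transformed system is a final \ruleName{Res} step on the defining clause; matching the head fixes the substitution on $\vv{y}$ to $\vv{s}$ and chooses some witness $t$ for $x$, exposing the subgoal $D' \proves G'(t,\vv{s})$. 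The inductive hypothesis then delivers $D \proves G(t,\vv{s})$, and one \ruleName{Ex} step recovers $D \proves \exists x.\,G(x,\vv{s})$.

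The main technical obstacle is the bookkeeping around nested existentials and their shared free-variable contexts: $\Lambda_G$ and $c_G$ must be indexed not only by the syntactic shape of $G$ but also by an ordered listing of its typed free variables, so that structurally identical existentials occurring in different scopes receive distinct lifted names when their contexts differ, and capture is avoided. Processing existentials inside-out then ensures that each $\Lambda_G$ is defined in terms of an already-existential-free $G'$, so the simultaneous induction on proof height goes through uniformly and the soundness/completeness arguments above compose across arbitrarily many lifted subformulas.
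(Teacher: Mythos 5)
Your reduction does not land in the target fragment, and the step where it fails is exactly the step that makes this theorem non-trivial. In the formal syntax of Definition~\ref{def:syntax}, a definite clause has shape $\forall \vv{y}.\, G \implies P\,\vv{y}$ or $\forall \vv{y}.\, G \implies P\,(c\,\vv{y})$, where the tuple of universally quantified variables is \emph{exactly} the tuple of distinct arguments in the head; there is no provision for a clause-level universal variable occurring only in the body. Your defining clause $\forall x,\vv{y}.\ G'(x,\vv{y}) \implies \Lambda_G(c_G\,\vv{y})$ has precisely such a variable $x$, so it is not a well-formed \MSL($\omega$) clause at all, let alone an existential-free one. The $\mathsf{Bk}$ clause you cite is the informal first-order notation of the introduction with quantifier prefixes omitted; in the paper's formalisation such body-only variables must be written as explicit existentials inside the body, which is exactly what this theorem is meant to remove (and what the downstream automaton-clause machinery forbids: automaton clauses have no variables occurring in the body but not the head). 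If body-only universals were admitted in the "existential-free" fragment, the theorem would be vacuous; your construction just re-notates the existential in a forbidden clause form rather than eliminating it.

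The genuinely hard part, which the proposal never confronts, is that a proof in the existential-free fragment must actually \emph{construct} a closed witness $t$ (a tree built from possibly higher-order constructors, and in general of higher constructor type) and feed it to the residual goal using only matching-based \ruleName{Res} steps, with finitely many clauses covering countably many candidate witnesses. The paper's proof does this by defining existential quantification with higher-order predicates: $\exists_\gamma$, the lambda-lifting predicates $\Lambda_{\gamma,H}$ (close in spirit to your $\Lambda_G$, but taking the residual goal as a \emph{predicate} argument rather than packaging variables under a fresh constructor), and the partial-application predicates $\mathsf{Comp}^{i,n}_f$, which assemble a witness one constructor at a time (Lemma~\ref{lem:aux_ex_generates_all_instances}); the result is an existential-free \HOMSL($\omega$) instance, which is then pushed into existential-free \MSL($\omega$) by Theorem~\ref{thm:HOMSL(omega)_to_MSL(omega)}, noting that that translation introduces no existentials. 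Without the $\exists_\gamma$/$\mathsf{Comp}$ witness-construction layer (or some substitute for it that stays inside the clause grammar), your soundness/completeness induction has nothing legal to resolve against, so the argument cannot be completed.
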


Because the elimination of higher-order predicates does not introduce existentials (Theorem~\ref{thm:HOMSL(omega)_to_MSL(omega)}), it suffices to reduce provability in \MSL($\omega$) to provability in existential-free \HOMSL($\omega$);
given \MSL($\omega$) definite formula $D$ and goal formula $G$ over constructor signature $\Sigma$ and predicate signature $\Pi$, we construct a \HOMSL($\omega$) definite formula $\exfD$ and goal formula $\exf{G}$ over constructor signature $\exf{\Sigma}$ and predicate signature $\exf{\Pi}$ such that $D \vdash G$ if, and only if, $\exfD \vdash \exf{G}$.
Thanks to Corollary~\ref{prop:rho_existential_elim}, we assume WLOG that $D$ and $G$ contain no existentials of type $\rho$.

\paragraph{The existential-free \HOMSL($\omega$) signature.}
From a given \MSL($\omega$) constructor signature $\Sigma$ and predicate signature $\Pi$, we construct \HOMSL($\omega$) signatures $\exf{\Sigma} \defeq \Sigma$ and $\exf{\Pi}$.

Let $\existSort{D \land G}$ contain the sorts of existential variables appearing in $D \land G$ and the arguments of any constructor from \( \Sigma \).
Furthermore, we define $\existGoals{D \land G}$ as all goal formulas \( G' \) such that \( \exists x.\, G' \) appears in \( D \) or \( G \).
For the purpose of the definition, we assume that there is some fixed ordering on variables.
We then define an extended predicate signature $\exf{\Pi}$ as follows:
\begin{align*}
  \exf{\Pi} & \defeq \Pi \cup \set{\exists_\gamma: (\gamma \to o) \to o \mid  \gamma \in \existSort{D \land G} }                                                                                                                      \\
            & \cup \set{\mathsf{Comp}^{i,n}_f : (\gamma \rightarrow o) \rightarrow \gamma_1 \rightarrow \cdots \rightarrow \gamma_i \rightarrow o \mid f : \gamma_1 \rightarrow \cdots \rightarrow \gamma_n \rightarrow \gamma \in \Sigma,\, 0 \leq i \leq n } \\
            & \cup \set{\Lambda_{\gamma,\, H} : \gamma_1 \to \cdots{} \to \gamma_k \to \gamma \rightarrow o \mid \gamma \in \Gamma,\, H \in \existGoals{D \land G},\,\fv(H) \setminus \{x\} = \{x_1,\ldots,x_k\} }
\end{align*}
where $\mathsf{Comp}^{0,n}_f : (\gamma \to o) \to o$, and, in the final summand, we require each $x_i:\gamma_i$.
\changed[jj]{Intuitively, $\mathsf{Comp}^{i,n}_f$ denotes an eventual application of constructor $f$ to $n$ arguments, $n$ being at most the arity of $f$, with $i \leq n$ already provided.}

\paragraph{Existential-free goals.}
We map an \MSL($\omega$) goal formula $G$ to an existential-free \HOMSL($\omega$) counterpart $\exfMap{G}$:
\[
  \exfMap{\truetm} \defeq \truetm
  \qquad \exfMap{A} \defeq A
  \qquad \exfMap{G \land H} \defeq \exfMap{G} \land \exfMap{H}
  \qquad \exfMap{\exists x:\gamma.\,G} \defeq \exists_\gamma\ (\Lambda_{\gamma,\, G}\,x_1\,\cdots{}\,x_k)
\]
where, in the last clause, $\fv(G) \setminus \{x\} = \{x_1,\ldots,x_k\}$, as sequenced by the assumed order.

\paragraph{Existential-free definite formulas.}
We map an \MSL($\omega)$ definite formula $C$ to an existential-free \HOMSL($\omega$) definite formula $\exfMap{C}$ over $(\exf{\Sigma},\exf{\Pi})$:
  \[
    \exfMap{\truetm} \defeq \truetm 
    \qquad \exfMap{C \land D} \defeq \exfMap{C} \land \exfMap{D}
    \qquad \exfMap{\forall \yy.\, G \Rightarrow P\,(c\,\yy)} \defeq (\forall \yy.\, \exfMap{G} \Rightarrow P\,(c\,\yy))
  \]

  \paragraph{The equi-provable existential-free \HOMSL($\omega$) instance.}
  For any \MSL($\omega$) definite formula $D$ and goal formula $G$, we define an equi-provable existential-free \HOMSL($\omega$) instance with definite formula $\exfD$ and goal formula $\exf{G}\defeq \exfMap{G}$:
  \begin{align*}
    \exfD \defeq
     & \left\{ \mathsf{Comp}^{0,n}_f\,v \Rightarrow \exists_\gamma \, v \;\middle|\; f : \gamma_1 \to \dots \to \gamma_n \to \gamma \in \Sigma, \exists_\gamma \in \exf{\Pi} \right\} \\
     & \cup\; \left\{ v\ (f\ x_1\ \cdots\ x_n) \Rightarrow \mathsf{Comp}^{n,n}_f\ v\ x_1\ \cdots\ x_n \;\middle|\; f : \gamma_1 \to \dots \to \gamma_n \to \gamma \in \Sigma \right\} \\
     & \cup\; \left\{ \exists_{\gamma_{i+1}}\,(\mathsf{Comp}^{i+1,n}_f\,v\,x_1\,\cdots\,x_i) \Rightarrow \mathsf{Comp}^{i,n}_f\,v\,x_1\,\cdots\,x_i \;\middle|\; \begin{array}{ll} f : \gamma_1 \to \dots \to \gamma_n \to \gamma \in \Sigma,\\ 0\leq i < n \end{array}\right\}\\
     & \cup\; \left\{ \exfMap{H} \Rightarrow \Lambda_{\gamma,\, H}\ x_1 \cdots{} x_k\, x \;\middle|\; \gamma \in \existSort{D \land G},\, H \in \existGoals{D \land G} \right\} \cup \exfMap{D}
  \end{align*}
  where, in the third summand, when $i=0$, the clause head is $\mathsf{Comp}^{0,n}_f\,v$.

\begin{wrapfigure}{R}{0.46\textwidth}
  \hfill
  \begin{minipage}{0.45\textwidth}
  \centering
  \vspace*{-10pt}
  \[
            \prftree
            { \prftree
              { \prftree
                { \prftree
                  { \prftree
                    { \prftree
                      { \prftree
                        { \prftree
                          { \prftree
                              { P\,(g\,a\,f\,b) \land Q\,(h\,(g\,a\,f)) }
                              { \Lambda_{\iota \to \iota, P\,(x\,b) \land Q\,(h\,x)}\,(g\,a\,f) }  
                            }
                            { \mathsf{Comp}_g^{2,2}\,(\Lambda_{\iota \to \iota, P\,(x\,b) \land Q\,(h\,x)})\,a\,f  }  
                        }
                        { \mathsf{Comp}_f^{0,0}\,(\mathsf{Comp}_g^{2,2}\,(\Lambda_{\iota \to \iota, P\,(x\,b) \land Q\,(h\,x)})\,a) } 
                      }
                      { \exists_{\iota \to \iota}\,(\mathsf{Comp}_g^{2,2}\,(\Lambda_{\iota \to \iota, P\,(x\,b) \land Q\,(h\,x)})\,a) }
                    } 
                    { \mathsf{Comp}_g^{1,2}\,(\Lambda_{\iota \to \iota, P\,(x\,b) \land Q\,(h\,x)})\,a }
                  }
                  { \mathsf{Comp}_a^0\,(\mathsf{Comp}_g^{1,2}\,(\Lambda_{\iota \to \iota, P\,(x\,b) \land Q\,(h\,x)})) }
                }
                { \exists_{\iota}\,(\mathsf{Comp}_g^{1,2}\,(\Lambda_{\iota \to \iota, P\,(x\,b) \land Q\,(h\,x)})) } 
              }
              { \mathsf{Comp}^{0,2}_g\,(\Lambda_{\iota \to \iota, P\,(x\,b) \land Q\,(h\,x)}) }
            }
            { \exists_{\iota \to \iota} \, (\Lambda_{\iota \to \iota, P\,(x\,b) \land Q\,(h\,x)}) }
    \]
    \caption{Example existential-free proof}
    \label{fig:exf_proof}
  \end{minipage}
\end{wrapfigure}
  
Partial instantiation of existential variables is key to eliminating them.
After all, there may be countably many instantiations but only finitely many clauses.
To this end, predicates $\mathsf{Comp}^{i,n}_f$ act as delayed applications of constructor $f$ that come into effect when all $n$ expected arguments to $f$ are fully instantiated (when $i=n$ and the clause headed by $\mathsf{Comp}^{n,n}_f$ applies).

For example, since $g: \iota \to (\iota \to \iota) \to \iota \to \iota$ has tail type $\iota \to \iota$ when given two arguments, we may instantiate $x:\iota \to \iota$ in $\exists x.\, P\,(x\,b) \land Q\,(h\,x)$ with $g\,a\,f$ (for $a:\iota$ and $f:\iota\to\iota$), resulting in the existential-free proof in Figure~\ref{fig:exf_proof}.

It follows that, for goal formulas $G$ that are subexpressions of the given instance, $D \vdash G$ iff $\exfD \vdash \exf{G}$.
Provability of the latter reduces to provability of an existential-free \MSL($\omega$) instance via the elimination of higher-order predicates (Theorem~\ref{thm:HOMSL(omega)_to_MSL(omega)}), which completes the reduction from Theorem~\ref{thm:MSL(omega)_to_exf_MSL(omega)}.

\section{Higher-Order Automaton Clauses and the Decision Procedure}
\label{sec:rewriting}

\newcommand\Exp[1]{\mathsf{exp}_{#1}}

Consider \MSL($\omega$) $\Delta \vdash D$ and $\Delta \vdash G$ over constructor signature $\Sigma$ and predicate signature $\Pi$. 
We aim to decide $D \vDash G$ by rewriting clauses to a solved form we call \emph{(higher-order) automaton formulas}, after their first-order counterparts in \citet{Goubault-Larrecq2002}.


\subsection{Higher-order resolution}

In Weidenbach's original work on MSL and Goubault-Larrecq's later work on \HOne{} \cite{Weidenbach1999,goubault-larrecq2005}, satisfiability is decided by a form of ordered resolution: the given set of MSL clauses is saturated under the ordered resolution rule and satisfiability is determined according to the presence or absence of the empty clause.

There is a higher-order analogue of the resolution rule which also forms the core of a refutationally complete calculus for higher-order (constrained) Horn clauses \cite{OngWagner2019}:
\[
  \prftree[r]{\ruleName{HO-Resolution}}
  {
  G \wedge R\ \vv{s} \implies A
  }
  {
  G' \implies R\ \yy
  }
  {
  G \wedge G'[\vv{s}/\yy] \implies A
  }
\]
This higher-order rule has exactly the same structure as the standard first-order rule (for Horn clauses).
However, as we shall describe below, this form of resolution on its own does not lend itself to a decision procedure for \MSL($\omega$).

In the first-order case, the key to ensuring termination of saturation under resolution is to identify a certain kind of solved form of constraints, which are called \emph{automaton}.
Such clauses have shape: $Q_1(x_{\pi(1)}) \wedge \cdots{} \wedge Q_k(x_{\pi(k)}) \implies P(f(x_1,\ldots{},x_m))$ with $1 \leq \pi(i) \leq m$ for all $i$.
As remarked in the introduction, such formulas are nothing but a clausal representation of alternating tree automata, but for our purposes, there are two features to take note of: (a) they have a depth-1 head and each atom in the body of the clause is depth 0, and (b) there are no existentially quantified variables (variables that occur in the body but not in the head).

The ordering of the first-order resolution calculus is carefully crafted to ensure that the side premise of each resolution inference is automaton.
It is easy to see that a resolution inference between an arbitrary MSL clause and an automaton clause will produce an MSL clause that is \emph{strictly closer to automaton form}, whenever the selected negative literal has depth at least 1\footnote{Negative literals with depth 0 are essentially already solved.}:
\[
  \prftree
  {
    G \wedge P(f(t_1,\ldots,t_n)) \implies A
  }
  {
    Q_1(x_{\pi(1)}) \wedge \cdots{} \wedge Q_k(x_{\pi(k)}) \implies P(f(x_1,\ldots{},x_m))
  }
  {
    G \wedge Q_1(t_{\pi(1)}) \wedge \cdots{} \wedge Q_k(t_{\pi(k)}) \implies A
  }
\]
Since the body of an automaton clauses is required to contain only atoms with depth 0, we can think of the clause in the conclusion as closer to automaton form than the main premise (on the left-hand side) since the new atoms $Q_i(t_{\pi(i)})$ in the body replace an atom $P(f(t_1,\ldots{},t_n))$ of strictly greater depth.

This is also the case for higher-order clauses in \MSL($\omega$) \emph{whenever the selected negative literal is headed by a function symbol}.
However, in higher-order clauses, the selected negative literal may be headed by a variable, and this spells trouble.
Consider, for example, the following resolution inference.
\changed[jj]{Recall that function application is convntionally left associative, so $h\,y_1\,y_2 = (h\,y_1)\,y_2$.}
\[
  \prftree{
  \mathsf{P}(x_1\ (\mathsf{f}\ \mathsf{a}\ x_2)) \implies \mathsf{Q}(\mathsf{g}\ x_1\ x_2)
  }
  {
  \mathsf{R}(y_2) \wedge \mathsf{S}(y_2) \implies \mathsf{P}(\mathsf{h}\ y_1\ y_2)
  }
  {
  \mathsf{R}(\mathsf{f}\ \mathsf{a}\ x_2) \wedge \mathsf{S}(\mathsf{f}\ \mathsf{a}\ x_2) \implies \mathsf{Q}(\mathsf{g}\ (\mathsf{h}\ y_1)\ x_2)
  }
\]
As before, the body of the clause in the conclusion can be viewed as closer to our automaton solved form, but the head of the clause is further away.
In fact, the clause has departed the MSL fragment completely since it no longer has a shallow head!
This is a significant problem because, by inspection, further resolution inferences with this non-MSL clause as the main premise can only produce clauses with a head of the same or even greater depth.

However, resolving on clauses where the selected negative literal is headed by a variable appears inescapable if we insist one of the premises of each resolution inference to be automaton:
\[
  \truetm                                             \implies \mathsf{P}(\mathsf{h}\ y_1\ y_2)
  \qquad
  \mathsf{P}(x_1\ (\mathsf{f}\ \mathsf{a}\ x_2))      \implies \mathsf{Q}(\mathsf{g}\ x_1\ x_2)
  \qquad
  \mathsf{Q}(\mathsf{g}\ (\mathsf{h}\ z)\ \mathsf{a}) \implies \mathsf{false}
\]
In this example, we can obtain a contradiction by resolution, but the only automaton clause is the first one, so there is no choice but to resolve the first and second, which leads to a deep head as above.

Our solution to this problem is to radically rethink the form of automaton clauses in the higher-order setting.  We observe that a clause with a deep head $\mathsf{R}(\mathsf{f}\ \mathsf{a}\ x_2) \wedge \mathsf{S}(\mathsf{f}\ \mathsf{a}\ x_2) \implies \mathsf{Q}(\mathsf{g}\ (\mathsf{h}\ y_1)\ x_2)$ can be thought of as a clause with a shallow head $\mathsf{R}(\mathsf{f}\ \mathsf{a}\ x_2) \wedge \mathsf{S}(\mathsf{f}\ \mathsf{a}\ x_2) \wedge x_1 = h\ y_1 \implies \mathsf{Q}(\mathsf{g}\ x_1\ x_2)$
that contains an additional constraint $x_1 = h\ y_1$ in the body.

Of course, allowing arbitrary equational constraints (and especially at higher type) in the body will lead us immediately outside of a decidable fragment, so we cannot state such constraints directly.
Rather, we ask only that the higher-order variable $x_1$ ``behave like'' $h\ y_1$.
Since $x_1$ and $h\ y_1$ are both functions, the most obvious route to making this precise is to ask that they behave similarly on similar inputs.
Moreover, there is a clear way to define similar, because we are only able to observe the behaviour of terms through the lens of our stock of predicate symbols\footnote{For example, if we only had a single predicate $P$ then all terms $s$ for which $P(s)$ is true are alike, we have no mechanism to write a constraint that distinguishes them.}.

Hence, to ask that $x_1$ behaves as $h\ y_1$ is to ask that $x_1$ satisfies $(\forall y_2.\ R(y_2) \wedge S(y_2) \implies P(x_1\ y_2))$.
Clearly, $h\ y_1$ is an instance of $x_1$ that satisfies this constraint and, we claim, \MSL($\omega$) cannot distinguish between $h\ y_1$ and any other $x_1$ that also satisfies it.

Incorporating this leads to a kind of abductive inference, in which we infer an additional premise:
\[
  \prftree{
  \mathsf{P}(x_1\ (\mathsf{f}\ \mathsf{a}\ x_2)) \implies \mathsf{Q}(\mathsf{g}\ x_1\ x_2)
  }
  {
    \mathsf{R}(y_1) \wedge \mathsf{S}(y_2) \implies \mathsf{P}(\mathsf{h}\ y_1\ y_2)
  }
  {
    \mathsf{R}(\mathsf{f}\ \mathsf{a}\ x_2) \wedge \mathsf{S}(\mathsf{f}\ \mathsf{a}\ x_2) \wedge (\forall y_2.\ \mathsf{R}(y_2) \wedge \mathsf{S}(y_2) \implies \mathsf{P}(x_1\ y_2)) \implies \mathsf{Q}(\mathsf{g}\ x_1\ x_2)
  }
\]

Of course, we have still ended up outside the MSL fragment, but the additional power required to state this form of constraint on $x_1$ seems much less dangerous.
This nested implication is none other than an MSL clause itself -- the head is shallow and linear -- and, moreover, its body is already in the correct form to be automaton.
In fact we show that this form of nested clause is exactly the generalisation of automaton clause that we need in the higher-order setting.

\subsection{Higher-Order automaton formulas}


\begin{figure}
  \[
    \begin{array}{c}
      \prftree[l,r]{$c\:\iota \in \Sigma$}{\ruleName{Fact$_1$}}{
       \types P\ c : o
      }\qquad
      \prftree[l]{\ruleName{Fact$_2$}}{
       x \: \iota \types P\ x : o
      }
      \\[4mm]
      \prftree[l,r]{$f \: \vv{\gamma} \to \iota \in \Sigma$}{\ruleName{ACl$_1$}}{
        \vv{y \: \gamma} \types U : o
      }
      {
        \types \forall \vv{y\mathord{:}\gamma}.\, U \Rightarrow P\,(f\,\vv{y}) : o
      }
      \qquad
     \prftree[l]{\ruleName{ACl$_2$}}{
        \vv{y \: \gamma} \types U : o
     }
     {
      x \: \vv{\gamma} \to \iota \types \forall \vv{y\:\gamma}.\, U \Rightarrow P\,(x\,\vv{y}) : o
     }
    \\[4mm]
    \prftree[l]{\ruleName{True}}
    {
      \Delta \types \truetm : o
    }\qquad
     \prftree[l,r]{$\Delta_1 = \emptyset$ iff $\Delta_2 = \emptyset$}{\ruleName{And}}{
       \Delta_1 \types U : o
     }
     {
       \Delta_2 \types V : o
     }
     {
       \Delta_1 \cup \Delta_2 \types U \wedge V : o
     }
    \end{array}
  \]
  \caption{Typing for automaton clauses and formulas}\label{fig:auto}
\end{figure}

Higher-order automaton formulas allow for the nesting of clauses inside the body of other clauses.  
This makes them more properly a fragment of higher-order hereditary Harrop formulas (HOHH).  
In fact it is easy to see that they are HOHH formulas of a special kind, since they live in the intersection of HOHH goal and definite formulas \cite[for HOHH see e.g.][]{miller-etal1991uniform}.

\begin{definition}[Automaton Formulas]
  \label{def:automaton_clauses}
  Define the \emph{automaton formulas}, typically $U$ and $V$, by the following grammar:
  \[
    \begin{array}{rrcl}
      \ruleName{Automaton Fm} &U,\,V &\Coloneqq& \truetm \mid U \wedge V \mid P\,x \mid P\,c \mid \forall \vv{y\:\gamma}.\,U \implies P\,(f\,\yy) \mid \forall \vv{y\:\gamma}.\,U \implies P\,(x\,\yy) \\
    \end{array}
  \]
  Note: the form $P\,x$ concerns such a particular free $x$ (i.e. it is \emph{not} $\forall x.\, P\,x$).
  We identify formulas and clauses up to renaming of bound variables and we identify up to the commutativity, associativity and idempotence of conjunction, with $\truetm$ as a unit, so that a formula $U$ will be thought of equally well as a set of conjuncts.
  We only consider formulas $\Delta \types U : o$ that are well typed according to the judgement of Figure~\ref{fig:auto}.
\end{definition}

Each clause in $U$ is essentially monadic, and by this we mean that it concerns either a single free variable $x$ or a single constant $c$ or $f$ from the signature.  
This can be seen in the two pairs of rules \ruleName{Fact$_1$}, \ruleName{ACl$_1$} and \ruleName{Fact$_2$}, \ruleName{ACl$_2$}, which have an empty and singleton typing context respectively.  

In automaton formulas, clauses can be nested inside the body of another, and one function of the type system is to ensure some stratification to the nesting.  
In particular, in a closed automaton formula (i.e. without free variables), top-level clauses can only concern constants from the signature and strictly nested clauses can only concern variables introduced by the clause that immediately contains them. To this end, note that the type context on the single premise of \ruleName{ACl$_1$} and \ruleName{ACl$_2$} contains exactly the variables $\vv{y\:\gamma}$ introduced by the universal quantifier prefixing the immediately enclosing clause. 
\changed[jj]{The side condition of \ruleName{And} guarantees that each conjunction of automaton clauses contains subjects from the same nesting level, rejecting ill-formed clauses like (5)-(8) below.}

Since there is no weakening in general in this type system, the context $\vv{y\:\gamma}$ introduced on the premise of \changed[jj]{\ruleName{ACl$_1$} and \ruleName{ACl$_2$}} implies that the body $U$ must contain a constraint concerning each of the variables in $\vv{y\:\gamma}$.  These variables can be distributed to the appropriate conjuncts of $U$ that contain the corresponding constraint by the \ruleName{And} rule.  However, note that, despite the lack of weakening, it is possible to leave a subset of the variables, say $\vv{y'\:\gamma'}$ unconstrained, but formally we must do that by introducing a $\truetm$ conjunct and discharge it by the judgement $\vv{y\:\gamma'} \types \truetm : o$.  Since, in practice, we will typically omit $\truetm$ conjuncts we will consider, for example, the clause $\forall xy.\, P\,x \Rightarrow R\,(a\,x\,y)$ to be well typed with $P:\iota \to o$ and $a: \iota \to \iota \to \iota$ by regarding the body $P\,x$ as sugar for $P\,x \wedge \truetm$.

For example, the following are closed automaton clauses over predicate symbols $P,Q,R : \iota \to o$ and tree constructor symbols $a:\iota$, $b:\iota \to \iota \to \iota$, $c:(\iota \to \iota) \to \iota \to \iota$ and $d:((\iota \to \iota) \to \iota) \to \iota$.
\begin{align}
  \setcounter{equation}{0}
  & \mathrel{\phantom{\Rightarrow}} P\,a  \\
  \forall xy.\,P\,x \wedge Q\,x &\Rightarrow R\,(b\,x\,y) \\
  \forall xy.\,Q\,y \wedge (\forall z.\, Q\,z \Rightarrow R\,(x\,z)) &\Rightarrow P\,(c\,x\,y) \\
  \forall x.\,(\forall y.\, (\forall z.\, P\,z \wedge Q\,z \Rightarrow R\,(y\,z)) \Rightarrow R\,(x\,y)) &\Rightarrow P\,(d\,x)
\end{align}
Note: strictly speaking clause (2) must be constructed as e.g. $\forall xy.\,P\,x \wedge Q\,x \wedge \truetm \Rightarrow R\,(b\,x\,y)$ (with $\truetm$ representing the constraint on $b$'s argument $y$).
However, the following are \emph{not} well formed as \emph{closed} automaton clauses:
\begin{align}
  & \mathrel{\phantom{\Rightarrow}} P\,x  \\
  \forall xy.\,P\,x \wedge Q\,a &\Rightarrow R\,(b\,x\,y) \\
  \forall xy.\,Q\,y \wedge (\forall z.\, Q\,z \wedge R\,y \Rightarrow R\,(x\,z)) &\Rightarrow P\,(c\,x\,y) \\
  \forall x.\,(\forall y.\, (\forall z.\, P\,z \wedge (\forall z'.\, Q\,z' \Rightarrow R\,(y\,z')) \Rightarrow R\,(y\,z)) \Rightarrow R\,(x\,y)) &\Rightarrow P\,(d\,x)
\end{align}
In (5) we have a free variable $x$, yet the clause is supposed to be closed.  In (6) we have an atom $Q\,a$ concerning a constant that appears in a strictly nested position and in (7) we have an atom $R\,y$ that concerns a variable from an outer scope -- predicates that appear in nested clauses can only concern variables, and only variables that are introduced by the clause that immediately contains them.  This is also the problem in clause (8), where the nested clause $(\forall z'.\, Q\,z' \Rightarrow R\,(y\,z'))$ concerns $y$, but $y$ is not a variable introduced by the immediately enclosing clause $(\forall z.\, P\,z \ldots \Rightarrow R\,(y\,z))$, which introduces only $z$.

\begin{notation}
  If $U$ \emph{is automaton wrt} $\vv{y}\,\vv{z}$ (i.e.~$\yy$ and $\zz$ are the free variables in $U$), we write $\res{U}{\vv{y}}$ and $\res{U}{\vv{z}}$ for the partition $U = \res{U}{\vv{y}} \land \res{U}{\vv{z}}$ according to whether the automaton clauses in $U$ contain a (free) variable from $\vv{y}$ or $\vv{z}$.
\end{notation}


Say that an automaton clause $T$ of shape $\forall \yy.\, U \implies P\,(\xi\ \yy)$ (for $\xi$ either a variable or a constant) is \emph{order $n$} just if the type of $\xi$ is order $n$.
Note that the type system ensures that clauses that are nested in the body of an order-$n$ clause are of strictly smaller order.  
In the following, $\exp_n(m)$ denotes a tower of exponentials of height \changed[jj]{$n+1$, with $\exp_1(m) = 2^m$, $\exp_2(m) = 2^{2^m}$, etc.}

\begin{theorem}
  \label{thm:no_of_automaton_clauses}
  Let $k$ denote the largest arity of any function symbol in $\Sigma$ and $\size{\Pi}$ denote the number of predicate symbols in $\Pi$. There are $O(\Exp{n}(k\size{\Pi}))$ clauses $T$ such that $x:\gamma \types T : o$, \changed[jj]{for $\order(\gamma)=n$.}
\end{theorem}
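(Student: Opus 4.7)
My plan is to proceed by induction on $n = \order(\gamma)$, leveraging the key structural observation that the body of an automaton clause typed from a multi-variable context decomposes cleanly, via the \ruleName{And} rule's context-splitting discipline, into sub-formulas each concerning a single bound variable of strictly smaller order.

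For the base case $n = 0$, we have $\gamma = \iota$, and inspection of Figure~\ref{fig:auto} shows that the only rule placing $x:\iota$ in the head position of a clause is \ruleName{Fact$_2$}, producing at most $|\Pi|$ distinct clauses of shape $P\,x$, comfortably within the claimed $O(\Exp{0}(k|\Pi|))$ bound.

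For the inductive step with $n \geq 1$, write $\gamma = \gamma_1 \to \cdots \to \gamma_m \to \iota$ with $m \leq k$ and each $\order(\gamma_i) \leq n-1$. Every clause $T$ with $x:\gamma \types T : o$ must arise from \ruleName{ACl$_2$}, and so has the form $\forall \vv{y\mathord{:}\gamma}.\,U \Rightarrow P\,(x\,\vv{y})$ with $\vv{y\mathord{:}\gamma} \types U : o$. The crux is to show that such a $U$ is determined, modulo $\truetm$-units together with associativity, commutativity and idempotence of $\wedge$, by an independent choice, for each $i$, of a \emph{set} $S_i$ of clauses $V$ satisfying $y_i \mathord{:} \gamma_i \types V : o$. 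By the induction hypothesis there are at most $O(\Exp{n-1}(k|\Pi|))$ such clauses $V$, so the number of choices for each $S_i$ is at most $2^{O(\Exp{n-1}(k|\Pi|))} = O(\Exp{n}(k|\Pi|))$. Multiplying over the $m \leq k$ positions and the $|\Pi|$ choices of head predicate $P$ yields
\[
  |\Pi| \cdot O(\Exp{n}(k|\Pi|))^{k} \;=\; O(\Exp{n}(k|\Pi|)),
\]
with the polynomial and constant factors absorbed at the top of the exponential tower.

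The main obstacle is justifying the partition of $U$ rigorously. I will need to verify that, because the leaf rules \ruleName{Fact} and \ruleName{ACl} only introduce singleton or empty contexts while the only combining rule \ruleName{And} imposes the ``both empty or both nonempty'' side condition with no general weakening permitted, every derivation of $\vv{y\mathord{:}\gamma} \types U : o$ (with nonempty context) must have each of its atomic-clause leaves in the \ruleName{Fact$_2$}/\ruleName{ACl$_2$} form — the empty-context leaves \ruleName{Fact$_1$}/\ruleName{ACl$_1$} are excluded because combining an empty with a nonempty context via \ruleName{And} is forbidden. Hence, once $\truetm$-conjuncts are collapsed under the assumed equivalences, $U$ corresponds canonically to a tuple $(S_1, \ldots, S_m)$ where $S_i$ is a set of single-variable clauses about $y_i$. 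Without this unique partition one risks either double-counting distinct derivations that yield the same equivalence class, or under-counting by missing some valid shape. Once the partition is pinned down, the remaining arithmetic on the tower $\Exp{n}$ is routine.
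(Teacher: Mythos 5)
Your proposal is correct and follows essentially the same route as the paper's proof: an induction on the order of $\gamma$ in which an order-$n$ clause is counted by choosing the head predicate and, for each universally quantified variable, a subset of the (inductively bounded) lower-order clauses that may concern it, with constant and polynomial factors absorbed at the top of the exponential tower. The only differences are cosmetic — you start the induction at order $0$ rather than order $1$ (your first inductive step reproduces the paper's order-$1$ base case), and you spell out via the \ruleName{And}/\ruleName{Fact$_2$}/\ruleName{ACl$_2$} context discipline the per-variable decomposition of the body that the paper takes for granted.
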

\begin{proof}
  Fix an order-1 variable $x$ and consider an order-1 clause $\forall \yy.\,U \implies P\,(x\, \yy)$.  Since every variable of $\yy$ is of type $\iota$, the body $U$ cannot contain any nested clauses, so every conjunct is of the form $P' y$.  Hence, order-1 clauses coincide with the automaton clauses of first-order MSL, and it is easy to see that there are at most $|\Pi| \cdot 2^{|\Pi|^{k}} = O(2^{k|\Pi|})$ different such, where $k$ is the maximum arity of any function symbol.
  Now consider an order-$(n+1)$ clause $\forall \yy.\,U \implies P\,(x\ \yy)$.  In the worst case, each variable $y$ of $\yy$ is order $n$, and we may assume there are $O(\Exp{n}{(k\size{\Pi})})$ clauses that concern $y$.  Hence, choosing this automaton clause amounts to choosing the predicate $P$ in the head and then, for each variable $y \in \yy$, choosing some subset of the $O(\Exp{n}{(k\size{\Pi})})$ different clauses that can be nested inside $U$.  Hence, we have $O(\Exp{n+1}{(k\size{\Pi})})$ many clauses at most.
\end{proof}

Automaton clauses $T$ that concern a function symbol $f:\gamma$, i.e. for which $\types T : o$ holds, are just automaton clauses of the above form in which we have replaced the variable $x$ by $f$.  Hence:

\begin{corollary}\label{cor:finite-clauses}
  There are finitely many automaton clauses of a given order.
\end{corollary}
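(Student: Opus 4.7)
The plan is to deduce this directly from Theorem~\ref{thm:no_of_automaton_clauses} by observing that the closed automaton clauses of order $n$ are in bijection (up to renaming of bound variables) with the open ones derivable via \ruleName{ACl$_2$} and \ruleName{Fact$_2$} from a single variable of the same type, together with a choice of function symbol of that type.

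First, I would enumerate the possible shapes of a closed automaton clause $T$ with $\vdash T : o$. Inspecting Figure~\ref{fig:auto}, the only rules whose conclusion has an empty typing context are \ruleName{Fact$_1$}, \ruleName{ACl$_1$}, \ruleName{True} and \ruleName{And}. Since a clause (as opposed to a conjunction) is not built with \ruleName{True} or \ruleName{And} at the outermost step, $T$ must have the form $P\,c$ with $c:\iota \in \Sigma$ or the form $\forall \vv{y\mathord{:}\gamma}.\,U \Rightarrow P\,(f\,\vv{y})$ with $f:\vv{\gamma} \to \iota \in \Sigma$ and $\vv{y\mathord{:}\gamma} \vdash U : o$. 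In either case, the order of $T$ is $\order(\gamma)$ where $\gamma$ is the type of $f$ (respectively $c$).

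Next, for a fixed $f:\gamma$ of order $n$, replacing $f$ by a fresh variable $x:\gamma$ yields a clause $\forall \vv{y\mathord{:}\gamma}.\,U \Rightarrow P\,(x\,\vv{y})$ derivable by \ruleName{ACl$_2$} from the same premise $\vv{y\mathord{:}\gamma} \vdash U : o$, so with $x:\gamma \vdash T' : o$. This gives an injection from closed automaton clauses headed by $f$ of order $n$ into the set of clauses $T'$ with $x:\gamma \vdash T' : o$. By Theorem~\ref{thm:no_of_automaton_clauses}, the latter has cardinality $O(\Exp{n}(k\size{\Pi}))$, which is finite.

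Finally, since the signature $\Sigma$ is finite by assumption, there are only finitely many function symbols $f:\gamma$ (or constants $c:\iota$) whose type has a given order $n$. Summing the finite bound obtained above over this finite set of symbols, and counting also the choice of predicate $P$ in $P\,c$ clauses (of which there are $\size{\Pi}$ many for each $c$), gives a finite total number of closed automaton clauses of order $n$. The only mild subtlety is to argue that we identify clauses up to $\alpha$-renaming of bound variables, which is consistent with the conventions of Definition~\ref{def:automaton_clauses}, so the bound from Theorem~\ref{thm:no_of_automaton_clauses} really does count equivalence classes rather than syntactic clauses.
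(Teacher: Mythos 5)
Your proposal is correct and takes essentially the same route as the paper: the paper likewise derives the corollary from Theorem~\ref{thm:no_of_automaton_clauses} by noting that closed automaton clauses concerning a symbol $f:\gamma$ are exactly the clauses over a variable $x:\gamma$ with $x$ replaced by $f$, and then using finiteness of the signature. Your extra care with the $P\,c$ facts and the $\alpha$-equivalence convention is consistent with, though more explicit than, the paper's one-line argument.
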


\subsection{Decision procedure}

Our decision procedure takes a set of \MSL($\omega$) definite clauses $D$ as input and iteratively rewrites them into automaton form $V$.  By construction, the new set of automaton clauses $V$ is sufficiently strong, though generally weaker than $D$, to entail any goal $G$ that follows from $D$.

Although we have so far been discussing resolution on clauses, the rewriting will be defined only for goals.  This is because rewriting will introduce nested clauses, and it seems easier to reason with nested clauses compositionally \emph{a la} \citet*{miller-etal1991uniform}.


\begin{definition}[Rewriting]\label{def:rew}
  Given an automaton formula $V$, variables $\yy$, and two goal clauses $G$ and $H$, we introduce the rewrite relation $V,\,\yy \vdash G \rew H$ defined by the rules in Figure~\ref{fig:rew}.
  Note that, in \ruleName{Assm}, we implicitly assume that the length of $\zz$ is the same as the length of $\ss$.
\end{definition}

The rules are mostly straightforward, and consist of simulating certain standard logical inference steps directly on the syntax of the formula.
As discussed, the heavy lifting is done by \ruleName{Step} and \ruleName{Assm} which simulate resolution steps on a goal.
The first, \ruleName{Step}, applies when the head symbol of the goal is a function symbol and the second, \ruleName{Assm}, applies when it is a variable.  Thus the latter provides the abductive method of adding an assumption described above.

The automaton formula $V$ is the set of automaton clauses that we are allowed to use when performing resolution steps and the variables $\yy$ that appear before the turnstile are the set of variables that we are willing to make additional assumptions about, via \ruleName{Assm}.
The idea is that, although we are only rewriting goal formulas $V, \yy \vdash G \rew^* H$, we can think of the goal formula $G$ as the body of an \MSL($\omega$) clause $\forall \yy.\, G \Rightarrow A$.
Then the outcome of rewriting, namely $H$, will give us a new clause $\forall \yy.\, H \Rightarrow A$, but we need to be sure we have only made additional assumptions about the top-level universally quantified variables $\yy$.
This is justified by the following theorem and the remarks that follow.

\begin{theorem}[Soundness]\label{thm:soundness-rewriting}
  If $V,\,\yy \vdash G \rew^* H$ then $V \models \forall \yy.\, H \implies G$.
\end{theorem}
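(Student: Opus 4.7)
The plan is to argue by induction on the length of the rewriting derivation $V,\,\yy \vdash G \rew^* H$, reducing the multi-step statement to a single-step soundness lemma that is then handled by case analysis on the rule. For the base case (zero steps) the claim is trivial, since $G = H$ and $V \models \forall \yy.\, G \implies G$ always holds. For the inductive step, decompose the derivation as $V,\,\yy \vdash G \rew G' \rew^* H$; the induction hypothesis yields $V \models \forall \yy.\, H \implies G'$, so combining this with the single-step claim $V \models \forall \yy.\, G' \implies G$ and transitivity of implication under the common universal closure delivers the result.

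Single-step soundness is proved by a further induction on the derivation of $V,\,\yy \vdash G \rew G'$, dispatching each rule in turn. The congruence-style rules (for conjunction, existential quantification, and the unit $\truetm$) invoke the induction hypothesis on a subformula and then close up using the monotonicity of the connectives under entailment. The rule \ruleName{Step} uses an automaton clause $\forall \zz.\, U \implies P\,(f\,\zz) \in V$ to replace an occurrence of $P\,(f\,\ss)$ by $U[\ss/\zz]$; instantiating the clause at $\ss$ in any model of $V$ yields $U[\ss/\zz] \implies P\,(f\,\ss)$ directly, and congruence lifts this to the enclosing goal. The rule \ruleName{Assm} replaces an atom $P\,(x\,\ss)$, whose head is a variable $x \in \yy$, with the conjunction of $U[\ss/\zz]$ and a nested assumption $\forall \zz.\, U \implies P\,(x\,\zz)$; the reverse implication is purely logical (instantiate the nested clause at $\ss$ and apply modus ponens with $U[\ss/\zz]$), so it holds in every structure regardless of $V$.

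The main conceptual point to watch is the \ruleName{Assm} case: the rewriting fabricates a hypothesis about $x$ that had no prior justification. For soundness this is entirely unproblematic, because the hypothesis is recorded verbatim in $H$ and the implication back to $G$ is propositional. It is really the completeness direction (not treated here) that has to justify such abductive steps, by showing the invented hypothesis is realizable in the models of interest. A secondary, purely bookkeeping concern is ensuring that binders introduced by congruence rules (notably the existential case) are chosen fresh relative to $\yy$ and to the variables introduced by the side conditions of \ruleName{Step} and \ruleName{Assm}, so that the substitutions $[\ss/\zz]$ remain well-behaved; this is a routine Barendregt-convention argument rather than a genuine mathematical obstacle.
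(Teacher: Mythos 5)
Your overall strategy (outer induction on the length of the rewrite sequence, plus a single-step soundness lemma proved by induction on the derivation) is the natural and correct one, and your treatment of \ruleName{Step} and \ruleName{Assm} — the two genuinely interesting cases — is sound: for \ruleName{Step} instantiate the automaton clause from $V$ at $\ss$; for \ruleName{Assm} the reverse implication is purely logical, obtained by instantiating the recorded hypothesis $\forall\zz.\,\res{U}{\zz}\Rightarrow P\,(y\,\zz)$ at $\ss$ and applying modus ponens (note the rule only carries $\res{U}{\zz}$, not all of $U$, but this changes nothing). Your remark that the abductive character of \ruleName{Assm} is only a burden for completeness is exactly right.

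The problem is that your case analysis is over the wrong rule set. The relation of Figure~\ref{fig:rew} has no rule for existential quantification (rewriting is only ever applied to existential-free goals), so the existential congruence case you discuss does not exist; conversely you omit \ruleName{Refl}, \ruleName{ImpAnd}, \ruleName{Scope} and, most importantly, \ruleName{Imp}. \ruleName{Imp} is the one remaining case that needs real work: its premise is a rewrite under the \emph{extended} formula $U\wedge V$, so the inductive hypothesis yields $U\wedge V\models\forall\yy.\,G'\Rightarrow G$, and you must convert this into $V\models\forall\yy.\,(\forall\zz.\,U\Rightarrow G')\Rightarrow(\forall\zz.\,U\Rightarrow G)$ by a deduction-theorem-style step: fix a model of $V$ and values for $\yy$ and $\zz$ with $U$ true, observe that this valuation models $U\wedge V$ (this is where freshness of $\zz$ with respect to $V$ and $\yy$ matters, and where the single-step lemma must be stated for automaton formulas $V$ with free variables rather than only the closed top-level $V\subseteq\apprx(D)$), and then chain the implications. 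Similarly, \ruleName{Refl} is not a congruence for $\truetm$: it uses the side condition $P\,x\in V$, and its soundness rests on $P\,x$ being a conjunct of $V$ with $x$ not among the quantified $\yy$. None of these steps fail — \ruleName{Scope} and \ruleName{ImpAnd} are logical validities — but as written your proposal neither names nor supplies them, so the argument is incomplete until the case analysis is redone over the rules actually present in the system.
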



\begin{figure}
  \[
    \begin{array}{c}

      \prftree[l,r]{\ruleName{Refl}}
      {
        $P\,x \in V$
      }
      {
        V,\,\vv{y} \vdash P\,x \rew \truetm
      }
      \qquad 
      \prftree[l,r]{\ruleName{Step}}{$(\forall \vv{x}.\,U \implies P\ (f\ \vv{x})) \in V$}
      {
        V,\,\vv{y} \vdash P\ (f\ \vv{s}) \rew U[\vv{s}/\vv{x}]
      } \\[5mm]

      \prftree[l,r]{\ruleName{Assm}}
      {
        $(\forall \xx\zz.\ U \implies P\ (f\ \xx\zz)) \in V$
      }
      {
        V,\,\vv{y} \vdash P\ (y\ \vv{s}) \rew \res{U}{\zz}[\vv{s}/\zz] \wedge (\forall \zz.\, \res{U}{\zz} \implies P\ (y\ \zz))
      } \\[5mm]


      \prftree[r]{\ruleName{AndL}}
      {
        V,\,\vv{y} \vdash G \rew G'
      }
      {
        V,\,\vv{y} \vdash G \wedge H \rew G' \wedge H
      }\qquad

      \prftree[r, l]{\ruleName{Imp}}
      {
        $
          \begin{array}{l}
            G \neq P\ (y\ \zz) \\
            \;\textrm{for any}\ y \in \yy
          \end{array}
        $\!\!
      }
      {
        U \wedge V,\,\yy \vdash G \rew G'
      }
      {
        V,\,\vv{y} \vdash (\forall \vv{z}.\, U \implies G) \rew (\forall \vv{z}.\, U \implies G')
      } \\[5mm]

      \prftree[r]{\ruleName{AndR}}
      {
        V,\,\vv{y} \vdash H \rew H'
      }
      {
        V,\,\vv{y} \vdash G \wedge H \rew G \wedge H'
      } 
      \qquad
      \prftree[l,r]{\ruleName{Scope}}
      {
        $\zz \cap \fv(G) = \emptyset$
      }
      {
        V,\,\vv{y} \vdash \forall \vv{z}.\, U \implies G \rew G
      }
      \\[5mm]
      \prftree[r]{\ruleName{ImpAnd}}
      {
        V,\,\vv{y} \vdash (\forall \vv{z}.\, G_1 \implies G_2 \wedge G_3) \rew (\forall \vv{z}.\,G_1 \implies G_2) \wedge (\forall \vv{z}.\, G_1 \implies G_3)
      } 
    \end{array}
  \]
  \caption{Rewrite system}\label{fig:rew}
\end{figure}

It follows that, if $V, \yy \vdash G \rew^* H$, then $V \land (\forall \yy.\, G \Rightarrow A) \vDash (\forall \yy.\, H \Rightarrow A)$.
In practice, we are only interested in rewriting sequences that start with the body of a non-solved clause and end in an automaton formula, \changed[jj]{giving us new automaton clauses. 
Termination of branches that have successfully reached automaton form is ensured by e.g.~the side conditions of \ruleName{Imp} and \ruleName{Scope}.
The new automaton clauses arising this way unlock additional avenues} to rewrite the remaining \MSL($\omega$) clauses (via \ruleName{Step} and \ruleName{Assm}) and this, in turn, generates more automaton clauses and so on.
It follows from Corollary~\ref{cor:finite-clauses} that, for a given set of definite clauses $D$, the limit, $\apprx(D)$, is a finite object.

\begin{definition}[Canonical solved form]\label{def:canonical-solved-form}
  Define the \emph{canonical solved form}, written $\apprx(D)$, of a set of definite clauses $D$ inductively by the following rule.  \changed[jj]{The base case -- when $V$ is empty -- occurs for definite clauses that are already automaton, like $\forall y_1 y_2.\, P\,y_1 \land R\,y_1 \Rightarrow S\,(f\,y_1\,y_2)$.}
  \[
    \begin{array}{c}
      \prftree[l]{$\left.\begin{array}{r}(\forall \yy.\, G \implies A) \in D\\V,\,\yy \vdash G \rew^* U\end{array}\right|$}{V \subseteq \apprx(D)}{(\forall \yy.\, U \implies A) \in \apprx(D)}
    \end{array}
  \]
\end{definition}

Since every clause that we add to $\apprx(D)$ is weaker than a clause in $D$, we have $\apprx(D) \models G$ implies $D \models G$.
It remains to show the converse, from which we can deduce that to decide satisfiability, it suffices to compute $\apprx(D)$ and then check whether or not the goal follows.

\begin{restatable}[Completeness]{theorem}{rewcomp}
  If $D \vdash G$ then $\apprx(D) \models G$.
\end{restatable}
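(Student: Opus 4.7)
The plan is to proceed by induction on the proof tree of $D \proves G$. The cases for \ruleName{T}, \ruleName{And}, and \ruleName{Ex} follow immediately from the induction hypotheses applied to the sub-proofs (and, after Section~\ref{sec:reductions}, we may assume $G$ is existential-free anyway). The only substantive case is \ruleName{Res}, where we have a clause $(\forall \yy.\, G' \implies A) \in D$ and a subproof of $D \proves G'[\ss/\yy]$, and must conclude $\apprx(D) \models A[\ss/\yy]$. By Definition~\ref{def:canonical-solved-form}, $\apprx(D)$ contains $\forall \yy.\, U \implies A$ for every rewriting $V,\,\yy \vdash G' \rew^* U$ with $V \subseteq \apprx(D)$, so it is enough to exhibit some rewriting for which $\apprx(D) \models U[\ss/\yy]$; the required entailment then follows by instantiating that clause at $\ss$.

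To produce such a rewriting I would strengthen the induction hypothesis into a \emph{simulation lemma}: for every goal $G$ with free variables $\yy$ and every substitution $\ss$ for $\yy$, if $D \proves G[\ss/\yy]$ then there exist $V \subseteq \apprx(D)$ and an automaton formula $U$ such that $V,\,\yy \vdash G \rew^* U$ and $\apprx(D) \models U[\ss/\yy]$. This lemma is proved by induction on the proof of $D \proves G[\ss/\yy]$, mirroring each logical step by an appropriate rewriting step. Conjunctions are handled by \ruleName{AndL} and \ruleName{AndR}; nested implications (which arise once prior \ruleName{Assm} steps have introduced them) are handled by \ruleName{Imp}, \ruleName{ImpAnd}, and \ruleName{Scope}. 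The interesting subcases occur when $G$ is atomic and the last step of the subproof is a \ruleName{Res}; they split according to the shape of the atom's subject. When the head is a constructor $f$, the step is mirrored by \ruleName{Step} applied to the corresponding automaton clause in $\apprx(D)$, which exists by the fixpoint nature of the construction; when the head is a free variable $x \notin \yy$, the atom is already automaton and is dispatched by \ruleName{Refl}.

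The main obstacle is the subcase where the head is some $y \in \yy$, in which we must invoke the abductive rule \ruleName{Assm}. The technical point is to choose an automaton clause $(\forall \xx\zz.\, W \implies P\,(f\,\xx\zz)) \in \apprx(D)$ whose function symbol $f$ matches the outermost constructor of the term substituted for $y$ and whose body $W$ captures the observable behaviour of that term in the subproof. Showing that this choice is correct reduces to verifying both that the instantiated body $\res{W}{\zz}[\ss/\zz]$ is entailed by $\apprx(D)$ (by a further application of the IH to a sub-subproof) and that the abducted clause $\forall \zz.\, \res{W}{\zz} \implies P\,(y\,\zz)$ holds under $\ss$; this latter condition holds by construction from the very automaton clause used in \ruleName{Assm}. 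The careful bookkeeping required here — matching the proof's use of $y$ against an appropriate automaton clause already present in the fixpoint $\apprx(D)$ — is the heart of the argument, and together with the routine structural cases establishes the simulation lemma, whence completeness.
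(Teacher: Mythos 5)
Your overall architecture matches the paper's: induct over the proof of $D \proves G$, and at each \ruleName{Res} step manufacture a rewriting $V,\yy \vdash G' \rew^* U$ with $V \subseteq \apprx(D)$ so that Definition~\ref{def:canonical-solved-form} places $\forall\yy.\,U \Rightarrow A$ in $\apprx(D)$, then discharge $U$ under the substitution. The gap is in how you propose to produce that rewriting. Your ``mirroring'' of the proof tree only goes through when the arguments of the atom being rewritten are top-level variables from $\yy$. In the typical case the arguments are compound (the paper's own motivating example is $\mathsf{P}(x_1\,(\mathsf{f}\,\mathsf{a}\,x_2))$): after a single \ruleName{Step} or \ruleName{Assm} with a clause $\forall\xx\zz.\,W \Rightarrow P\,(f\,\xx\,\zz)$ from $\apprx(D)$, the residual goal contains $\res{W}{\zz}[\vv{v}/\zz]$ for compound $\vv{v}$, which is not yet automaton with respect to $\yy$ and must be rewritten further. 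At that point your induction has no resource left: the only subproof of the current node is the one for the body of the $D$-clause, which you have already consumed to obtain $W$ (and hence membership of the corresponding clause in $\apprx(D)$), and your simulation lemma's conclusion about the residue is only the semantic statement $\apprx(D)\models U[\ss/\yy]$ --- semantic entailment of an instantiated formula cannot be converted back into rewrite steps of the uninstantiated one without, in effect, another completeness argument. So ``a further application of the IH to a sub-subproof'' does not typecheck: no such sub-subproof exists in the original derivation, since $W$ arose from rewriting, not from the proof.

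What is missing is precisely the paper's technical core: a substitution/rewriting commutation result proved by induction on the rewrite sequence rather than on the proof tree. The paper keeps its main induction statement syntactic --- $D \vdash G$ implies $\apprx(D) \vdash G \rew^* \truetm$ for the (instantiated) goal --- and proves separately a Decomposition lemma, via a Progress lemma, stating that if $V \vdash G[\uu/\yy]\rew^* \truetm$ then $V,\yy \vdash G \rew^* U$ for some $U$ with $V \vdash U[\uu/\yy] \rew^* \truetm$. Invoking Decomposition at each \ruleName{Res} step supplies exactly the further rewriting of the raw clause body that your sketch needs (including the delicate \ruleName{Assm} case where the substituted term's head and the in-goal arguments are split between $\xx$ and $\zz$), and a single appeal to soundness of rewriting at the end yields $\apprx(D)\models G$. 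Your handling of the abducted clause $\forall\zz.\,\res{W}{\zz}\Rightarrow P\,(y\,\zz)$ is correct as far as it goes, but without a Progress/Decomposition-style lemma the argument cannot be completed as proposed.
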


It follows from the fact that there are only finitely many automaton clauses, that the canonical solved form is finite.
Furthermore, as the rewrite system is well founded and terminates, we can effectively decide whether a given automaton clause is in the canonical solved form.
Hence, we can compute $\apprx(D)$ by enumerating the possible automaton clauses and checking if they are the solved form of any clause in $D$.

\begin{theorem}[Decidability]
  Let \( V \) be an automaton formula, \( \overline{y} \) variables, \( G \) an goal formula and \( U \) an automaton formula.
  Then \( V,\, \overline{y} \vdash G \rew^* U \) is decidable.  
  Hence, $\apprx(D)$ is computable.
\end{theorem}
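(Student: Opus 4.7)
The plan is to split the theorem into two parts: (i) decidability of the rewrite reachability problem $V, \yy \vdash G \rew^* U$, and (ii) computability of $\apprx(D)$. For (i), I would establish termination of $\rew$ via a well-founded measure together with finite branching at each step; for (ii), I would exploit the finiteness of well-typed automaton clauses (Corollary~\ref{cor:finite-clauses}) and a monotone fixpoint iteration.

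For termination, the natural measure on goal formulas is a multiset of type-theoretic orders: each \emph{rewriteable} atom $P(\xi\,\ss)$ contributes $\order(\xi)$, where an atom is deemed rewriteable if it does not occur as the head of a nested clause of the shape $P(y\,\zz)$ with $y \in \yy$ (such heads are inert because the side condition of \ruleName{Imp} forbids recursion into them). Under the multiset ordering, \ruleName{Step} strictly decreases the measure, since it replaces $P(f\,\ss)$ by atoms of $U[\ss/\xx]$ whose subjects have order strictly less than $\order(f)$, as enforced by the automaton clause typing discipline of Figure~\ref{fig:auto}. \ruleName{Assm} behaves analogously on its left conjunct $\res{U}{\zz}[\ss/\zz]$, and its right conjunct $\forall \zz.\,\res{U}{\zz} \Rightarrow P(y\,\zz)$ is inert by construction, contributing nothing to the measure. \ruleName{Refl} and \ruleName{Scope} erase material outright; the structural rules \ruleName{AndL}, \ruleName{AndR}, and \ruleName{Imp} descend into subformulas where the measure decreases by induction. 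The one subtlety is \ruleName{ImpAnd}, which duplicates the antecedent $G_1$: this is harmless because, in any rewriting sequence originating from a definite clause body in $D$, any $G_1$ encountered as the antecedent of a nested implication is already an automaton body introduced by \ruleName{Assm}, whose atoms are inert and so do not contribute to the rewrite potential.

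Finite branching is immediate: the only genuine nondeterminism lies in the choice of side clause in \ruleName{Step} and \ruleName{Assm}, and $V$ is finite. Combining termination with finite branching, the set of goals reachable from $G$ by $\rew^*$ is finite and effectively enumerable by breadth-first search, which decides $V, \yy \vdash G \rew^* U$.

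Computability of $\apprx(D)$ then follows from a monotone fixpoint iteration. The rule in Definition~\ref{def:canonical-solved-form} is monotone in $V$, and by Corollary~\ref{cor:finite-clauses} there are only finitely many well-typed automaton clauses over the signature. Starting from $\emptyset$, in each round we enumerate pairs of a clause $(\forall \yy.\,G \Rightarrow A) \in D$ and a candidate automaton body $U$, and add $\forall \yy.\,U \Rightarrow A$ whenever $V, \yy \vdash G \rew^* U$ for the current approximation $V$ — a test decidable by part (i). Since the lattice of automaton clause sets is finite and the iteration is monotone, it stabilises after finitely many rounds at $\apprx(D)$. The chief obstacle throughout is pinning down the well-founded measure and verifying that \ruleName{ImpAnd}, together with \ruleName{Imp}'s recursion into nested bodies, does not cause the measure to blow up; the argument above hinges on the invariant that antecedents of nested implications are always automaton and hence measure-inert.
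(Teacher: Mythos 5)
Your overall decomposition -- termination of $\rew$ together with finite branching for part (i), then a monotone fixpoint iteration over the finitely many automaton clauses (Corollary~\ref{cor:finite-clauses}) for part (ii) -- is essentially the paper's structure, and part (ii) is unproblematic. The gap is in the termination measure. You let each rewriteable atom $P\,(\xi\,\vv{s})$ contribute $\order(\xi)$, where $\xi$ is the head symbol, and claim \ruleName{Step} strictly decreases this multiset because the new atoms have ``subjects of order strictly less than $\order(f)$''. But the typing discipline bounds the order of the argument \emph{types} $\gamma_i$ of $f$, not the order of the head symbols occurring inside the argument \emph{terms}. Concretely, take $f:\iota\to\iota$ (order $1$), $g:(\iota\to\iota)\to\iota$ (order $2$), $a:\iota\to\iota$, and a clause $\forall x.\,P'\,x \Rightarrow P\,(f\,x)$ in $V$. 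Then \ruleName{Step} rewrites $P\,(f\,(g\,a))$ to $P'\,(g\,a)$: an atom headed by a symbol of order $1$ is replaced by one headed by a symbol of order $2$, so your multiset strictly \emph{increases}. The same phenomenon occurs under \ruleName{Assm}. Moreover, even where the measure does not increase it often fails to decrease: under \ruleName{ImpAnd} and \ruleName{Scope} your multiset is unchanged (inert antecedents contribute nothing and nothing rewriteable is consumed), so at best you would need a lexicographic secondary component, and the primary component is already broken for \ruleName{Step}.

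The paper avoids this by measuring \emph{depth} of subject terms rather than order of head symbols. Its weight assigns $\depth{t}$ to an atom $P\,t$, makes the antecedent $U$ of a nested implication $\forall\vv{x}.\,U\Rightarrow G$ invisible to the measure (only the conclusion counts, with a discount when the conclusion's subject mentions the bound variables), and charges an extra unit for a conjunction under an implication so that \ruleName{ImpAnd} strictly decreases. The key lemma is that instantiating an automaton body, $U[\vv{s}/\vv{x}]$, has weight at most $\max\{\depth{s_i}\}$, because automaton bodies contain only depth-$0$ atoms $P\,x$ and nested clauses whose subjects are headed by the bound variables; hence \ruleName{Step} and \ruleName{Assm} drop the weight from $\depth{f\,\vv{s}} = 1+\max\{\depth{s_i}\}$ to at most $\max\{\depth{s_i}\}$, independently of the orders of the symbols inside $\vv{s}$. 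If you replace your order-based multiset by such a depth-based weight (or add it as the dominant component), the rest of your argument -- finite branching, breadth-first enumeration of reducts, and the finite monotone iteration computing $\apprx(D)$ -- goes through as you describe.
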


\changed[jj]{
\begin{remark}
  In principle, all fragments identified in Section~\ref{sec:fragments} can be decided by our decision procedure. 
  Strictly speaking, the procedure takes as input an existential-free \MSL($\omega$) formula (and thus formulas of any of the syntactic sub-fragments \MSL($n$)).
  However, our translations from Section~\ref{sec:reductions} allow for transforming any formula of \HOMSL($\omega$) or a \HOMSL($n$) sub-fragment thereof into an equisatisfiable existential-free formula of \MSL($\omega$).  
  Existentials have been eliminated for simplicity; we could have introduced new rules to the calculus to handle them natively.
  
  
\end{remark}
}

\subsection{Rewriting example}

By way of an example, we show how to use rewriting to obtain automaton clauses (i.e.~a subset of $\apprx(D)$) from the given formulas in Figure~\ref{ex:file} that allow for a replay of the proof in Figure~\ref{fig:file-proof}.
One can view this as a concrete example of the completeness proof in action.
The strategy of the completeness argument is to start from the leaves of the proof tree, where resolution steps already involve automaton clauses\footnote{A resolution step whose conclusion is a leaf must use a clause of shape $\truetm \implies A$ as premise, which is already automaton.}, and work back towards the root.
We start at the leaf labelled with (18).

Since clauses (18) and (7) are already automaton, we can use (7) to rewrite the body of clause (10) using \ruleName{Assm}:
\[
  \Cex(x\ h\ (\putContS\ k)) \rew \IsClosed(h) \wedge (\forall h\,k.\ \IsClosed(h) \implies \Cex\,(x\ h\ k))
\]
From this, we obtain (10'): $\IsClosed(h) \wedge (\forall h\,k.\ \IsClosed(h) \implies \Cex\,(x\ h\ k)) \implies \Cex(\putStrS\ x\ h\ k)$ is an automaton clause in $\apprx(D)$.
Continuing down the tree, we can use (10') to rewrite the body of (16) using \ruleName{Step}:
\[
  \Cex(\putStrS\ x\ h_3\ \idS) \rew \IsClosed(h_3) \wedge (\forall h\,k.\ \IsClosed(h) \implies \Cex\,(x\ h\ k)) \\
\]
thus obtaining (16'): $\IsClosed(h_3) \wedge (\forall h\,k.\ \IsClosed(h) \implies \Cex\,(x\ h\ k)) \implies \Cex(\actSB\ x\ h)$, another automaton clause.
Following the proof branch down, we can use (16') to rewrite the body of (14) using \ruleName{Assm}:
\[
  \Cex(k\ y\ h_2) \;\rew
  \begin{array}{l}
    \IsClosed(h_2) \wedge (\forall h\,k.\ \IsClosed(h) \implies \Cex\,(y\ h\ k)) \\ 
    \qquad\wedge (\forall x\,h.\ \IsClosed(h) \wedge (\forall h\,k.\ \IsClosed(h) \implies \Cex\,(x\ h\ k)) \implies \Cex(k\ x\ h))
  \end{array}
\]
and, naming that formula $U$ for brevity, we obtain (14'): $\forall y\,k\,h_2.\ U \implies \Cex(\withFileSC\ y\ k\ h_2)$.  At any point we can stop and the automaton clauses we have obtained are sufficient to entail the goal at the same place in the tree.  
For example, it is easy to check that any model of the automaton clauses we have collected so far satisfies $\Cex(\withFileSC\ \readS\ \actSB\ \closedhdl)$.

\section{Automaton clauses: connecting logic, types, and automata}
\label{sec:automaton_formulas}

Recall that our `automaton clauses' are named after their first-order counterparts in \citet{Goubault-Larrecq2002}.
We argue that this name is equally justified for our higher-order automaton clauses.

While it is easy to see that a first-order automaton clause $\forall x\,y.\, P\,x \land Q\,y \land R\,y \Rightarrow S\,(f\,x\,y)$ fits the shape of a transition in a finite tree automaton (if $t_1$ is accepted from state $P$ and $t_2$ from states $Q$ and $R$, then $f\,t_1\,t_2$ is accepted from state $S$), this relation is more complicated for higher-order automaton clauses with variable-headed atoms and nested clauses.

Nonetheless, our automaton clauses share a vital trait with automaton transitions:
the monadic, shallow, linear heads `peel' a single constructor off a tree and separate its children without interdependencies.
Furthermore, the fact that the number of automaton clauses is $n$-exponential in the order $n$ of the program (Theorem~\ref{thm:no_of_automaton_clauses}) suggests automaton clauses could be a \emph{defunctionalisation} of \MSL($\omega$) and, thus, `first-order' in a sense -- maybe even regular, given they are essentially monadic.

This intuition turns out to be true: 
automaton clauses correspond to intersection types, which are known to give rise to \emph{alternating tree automata}, see~\citet{RehofUrzyczyn2011} and~\citet{broadbent2013saturation}, that are equivalent to ordinary non-deterministic finite tree automata.

This correspondence between automaton clauses and intersection types allows us to reduce \MSL($\omega$) provability to intersection untypeability and vice versa, giving us a new algorithm for solving higher-order recursion scheme (HORS) model checking along the way.

\subsection{Correspondence with intersection types}
\label{sec:intersection_types}

\newcommand\calG{{\mathcal{G}}}
\newcommand\calN{{\mathcal{N}}}
\newcommand\calR{{\mathcal{R}}}
\newcommand\conss{{\mathsf{c}}}
\newcommand\succs{{\mathsf{s}}}
\newcommand\zeros{{\mathsf{z}}}

In the remainder of this section, we reserve the word \emph{term} for terms of constructor types and $\tau$ for strict types.
We follow the practice from HORS model checking in stratifying the definitions of intersection types for convenience.

Fix $(q \in)\, \basetypes$ as a set of atomic \emph{base types}.
We define the \emph{intersection types} and \emph{strict types} over $Q$:
\[
  \textstyle
  \textsc{(Strict Types)} \;\; \tau \; \Coloneqq \; q \mid \sigmaInt \to\tau 
  \qquad \textsc{(Intersection Types)} \;\; \sigmaInt \; \Coloneqq \; \bigwedge^n_{i=1} \tau_i
\]
Arrows associate to the right; intersections bind tightest.
An \emph{intersection type environment}, written $\Gamma$, is a finite, partial function from variables to intersection types.
We denote an arbitrary strict or intersection type by $\theta$ and refer to it wlog as an intersection type (since strict types are intersection types with $n=1$).
We assume types are well typed, where base types $q$ have type $\iota$.
    
An \emph{intersection (refinement) type system} is a triple $(\Sigma, Q, \type)$ in which $\Sigma$ is a signature of constants, $Q$ a set of base types equipped with a preorder $\preorder$, and $\type$ assigns an intersection type of type $\sigma$ to each $a:\sigma \in \Sigma$.
The \emph{subtype relation} over $Q$ is the least relation on types, written $\leq$, that validates standard subtyping rules (see the long version of this work for details).
  
Terms can be typed in the following way:
\[
  \arraycolsep=5pt
  \begin{array}{cc}

    \prftree[l,r]
      { $\type(c) = \textstyle \bigwedge_{i=1}^n \tau_i$ }
      { \ruleName{T-Con} }
      { \Gamma \vdash \hasType{c}{\tau_i} }
        
      \hspace*{20pt}
        
    \prftree[l]
      { \ruleName{T-Var} }
      { }
      { \Gamma, \hasType{x}{\textstyle \bigwedge_{i=1}^n \tau_i} \vdash \hasType{x}{\tau_i} }
        
      \\[12pt]

    \prftree[l,r]
      { $\textstyle \bigwedge_{i=1}^n \tau_i \leq \sigmaInt$ }
      { \ruleName{T-App} }
      { \Gamma \vdash \hasType{s}{\sigmaInt \to \tau} }
      { \Gamma \vdash \hasType{t}{\tau_i}\;(\forall i \in [1..n]) }
      { \Gamma \vdash \hasType{s\,t}{\tau} }

  \end{array}
\]
    
We may write $\Gamma \vdash \hasType{t}{\sigmaInt}$ as a shorthand for $\bigwedge_{\tau \in \sigmaInt} (\Gamma \vdash \hasType{t}{\tau})$.

We consider intersection type systems $(\Sigma,\basetypes,\type)$ with a trivial preorder, i.e.~$\preorder = \set{(q,q) \mid q \in \basetypes}$.
This is not a restriction, because non-trivial preorders can be simulated with fresh base types.
For example, $\mathsf{nat} \leq \mathsf{int}$ can be enforced with a fresh base type $\mathsf{pos}$ by replacing $\mathsf{nat}$ by $\mathsf{int} \land \mathsf{pos}$.

\subsubsection{Type-clause correspondence}
\label{sec:types_typing-rewriting}

Recall that automaton clauses are essentially monadic;
each automaton clause contains precisely one symbol of type $\gamma$ that is not locally bound, either a free variable or a constructor from signature $\Sigma$.

Given an automaton clause $T$ with top-level variable $x$, an instantiation $T[t/x]$ rewrites to $\truetm$ just if $t$ satisfies the constraints imposed by $T$.
Clearly, $T[t/x]$ rewrites to $\truetm$ iff $T[y/x][t/y]$ does;
after all, $x$ and $y$ are the only free variables in $T$ and $T[y/x]$, resp.
This means automaton clauses $T$ and $T[y/x]$ are indistinguishable wrt the constraints they impose on their respective variables.

In this section, we discuss the constraints imposed by automaton clauses and formulas.
It will be helpful to think of an automaton clause as \emph{forgetful} with regards to its top-level symbol.
As we shall see below, forgetful automaton clauses (hereafter just `automaton clauses') correspond to strict intersection types over base types $\Pi$;
an instantiated automaton clause $T[t/x]$ rewrites to $\truetm$ precisely when closed term $t$ has the strict type corresponding to $T$ and vice versa.

Automaton formulas and intersection types are simply two (equivalent) ways of imposing constraints on $t$.

%
%

\begin{definition}[Correspondence between automaton clauses and intersection types]
  \label{def:HOMSL_intersection_type_correspondence}
  The following typing rules define a one-to-one correspondence between intersection types and \MSL($\omega$) automaton formulas.
  We assign an intersection type $\theta$ of type $\sigma_1 \to \dots \to \sigma_m \to \iota$ to automaton formula $U = \res{U}{x_1} \land \dots \land \res{U}{x_m}$ with $x_1:\sigma_1, \dots, x_m:\sigma_m$, written $\hasType{U}{\theta}$:
  \[
    \arraycolsep=5pt
    \begin{array}{cc}
      \prftree
      {}
      {\hasType{\truetm}{\top}}

      \hspace*{13pt}
      
      \prftree
      {}
      {\hasType{P\,\xi}{q_P}}
      
      \hspace*{13pt}
      
      \prftree
      {\hasType{\res{U}{x_1}}{\theta_1}}
      {\dots}
      {\hasType{\res{U}{x_m}}{\theta_m}}
      {\hasType{(\forall \vv{x}.\, U \Rightarrow P\,(\xi\,\vv{x}))}{\theta_1 \to \dots \to \theta_m \to q_P} }
      
      \hspace*{13pt}
      
      \prftree
      {\hasType{\res{{U_1}}{x}}{\theta_1}}
      {\hasType{\res{{U_2}}{x}}{\theta_2}}
      {\hasType{\res{{U_1}}{x} \land \res{{U_2}}{x}}{\theta_1 \land \theta_2} }
      
      
    \end{array}
  \]
  where $q_P \in Q$ is a basetype and $\xi$ a variable or constructor from $\Sigma$.
  
  We write $\fromClause{U} = \theta$ for $\hasType{U}{\theta}$, and $\toClause{\theta}{\vv{f}} = U$ for $\hasType{U}{\theta}$ with $\vv{f}$ as top-level constructors.
  
    We write $U \leqa U'$ for $U,U'$ over $x_1 \dots x_m$ just if $\forall i \in [1..m].\, \fromClause{\res{U}{x_i}} \leq \fromClause{\res{U'}{x_i}}$.
\end{definition}

\begin{example}
  Let $U = \forall v\,x\,y.\,Q\,y \wedge (\forall z.\, Q\,z \land P\,z \Rightarrow R\,(x\,z)) \Rightarrow P\,(c\,v\,x\,y)$.
  The following hold:
  \begin{align*}
    \fromClause{U}
      &= \top \to (q_Q \land q_P \to q_R) \to q_Q \to q_P \\
    U
      &= \toClause{\top \to (q_Q \land q_P \to q_R) \to q_Q \to q_P}{c}
  \end{align*}
  This is witnessed by:
  \[
    \prftree
      { \prfaxiom{\hasType{\truetm}{\top}} }
      { \prftree
        { \prfaxiom{\hasType{Q\,z}{q_Q}} }
        { \prfaxiom{\hasType{P\,z}{q_P}} }
        { \hasType{\forall z.\, Q\,z \land P\,z \Rightarrow R\,(x\,z)}{q_Q \land q_P \to q_R} } 
      }
      { \prfaxiom{\hasType{Q\,y}{q_Q}} }
      { \hasType{\forall v\,x\,y.\,Q\,y \wedge (\forall z.\, Q\,z \land P\,z \Rightarrow R\,(x\,z)) \Rightarrow P\,(c\,v\,x\,y)}{\top \to (q_Q \land q_P \to q_R) \to q_Q \to q_P} }
  \]
\end{example}

\begin{lemma}[Order isomorphism]
  \label{lem:order_iso}
  Let $\theta,\theta_1,\theta_2$ be intersection types of type $\sigma_1 \to \dots \to \sigma_m \to \iota$ and $U,U_1,U_2$ automaton clauses over $\xx = x_1\dots x_m$ with $x_i:\sigma_i$, for all $i \in [1..m]$.
  The following hold:
  \begin{enumerate}[label=(\roman*)]
    \item $\toClause{\fromClause{U}}{\xx} = U$ \label{enum:auto_iden}
    \item $\fromClause{\toClause{\theta}{\xx}} = \theta$ \label{enum:type_iden}
    \item $\theta_1 \leq \theta_2$ if, and only if, $\toClause{\theta_1}{\xx} \leqa \toClause{\theta_2}{\xx}$ \label{enum:type_to_clause_order} 
    \item $\fromClause{U_1} \leq \fromClause{U_2}$ if, and only if, $U_1 \leqa U_2$ \label{enum:clause_to_type_order}
  \end{enumerate}
\end{lemma}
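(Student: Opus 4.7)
The plan is to prove (i) and (ii) together by mutual structural induction on $U$ and $\theta$, and then to derive (iii) and (iv) from the bijection established by (i) and (ii) together with the definitions of $\leq$ and $\leqa$.

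For (i) and (ii), I would argue by simultaneous induction on the typing derivations given in Definition~\ref{def:HOMSL_intersection_type_correspondence}. Observe that the four typing rules are deterministic on the syntactic form of $U$ (respectively on the form of $\theta$), so the translations $\fromClause{-}$ and $\toClause{-}{\xx}$ are total functions once one fixes the top-level symbols $\xx$. The base cases are immediate: $\truetm \leftrightarrow \top$ and $P\,\xi \leftrightarrow q_P$. The inductive case for a clause $\forall \xx.\, U \Rightarrow P\,(\xi\,\xx)$ with $\xx = x_1\ldots x_m$ uses the canonical partition $U = \res{U}{x_1} \wedge \cdots \wedge \res{U}{x_m}$ whose existence is guaranteed by the well-formedness rule \ruleName{And} from Figure~\ref{fig:auto}; applying the IH to each $\res{U}{x_i}$ recovers $\fromClause{\res{U}{x_i}}$, and reassembly yields $\theta_1 \to \cdots \to \theta_m \to q_P$, which is precisely what $\toClause{-}{\xx}$ sends back to $U$. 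The dual direction for (ii) proceeds identically, using the fact that an intersection type $\bigwedge_j \tau_j$ at position $x_i$ is mapped back to the conjunction $\toClause{\tau_1}{x_i} \wedge \cdots \wedge \toClause{\tau_k}{x_i}$ that populates $\res{U}{x_i}$.

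For (iii) and (iv), the plan is to reduce both claims to the definitional content of $\leqa$. By (i) and (ii), the maps $\fromClause{-}$ and $\toClause{-}{\xx}$ are mutually inverse bijections, so (iii) and (iv) are in fact the same statement phrased on the two sides of this bijection. It therefore suffices to prove one direction, say (iv): $U_1 \leqa U_2$ iff $\fromClause{U_1} \leq \fromClause{U_2}$. Unfolding the definition of $\leqa$ gives pointwise subtyping on the projections $\fromClause{\res{{U_j}}{x_i}}$, which are exactly the argument components of the strict types $\fromClause{U_j} = \theta_{j,1} \to \cdots \to \theta_{j,m} \to q_{P_j}$. The remaining task is then to verify that under the standard subtyping rules on strict types with a common arrow skeleton and a common base-type tail (the latter being forced by the trivial preorder on $Q$, so that $q_{P_1} \leq q_{P_2}$ requires $P_1 = P_2$), the subtyping $\fromClause{U_1} \leq \fromClause{U_2}$ reduces to componentwise subtyping on the $\theta_{j,i}$. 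This is a routine induction on the arrow skeleton.

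The main obstacle I anticipate is the bookkeeping around the partition $U = \res{U}{x_1} \wedge \cdots \wedge \res{U}{x_m}$: one has to check that the \ruleName{And} rule ensures this partition is unique and that it is respected by the typing rules for intersection types, so that the inductive hypothesis applies cleanly to each $\res{U}{x_i}$. A secondary subtlety is the variance question in (iv): one must confirm that, because both $U_1$ and $U_2$ share the same $\xx$ and the preorder on $Q$ is trivial, the argument-position subtyping aligns with the covariant pointwise comparison built into $\leqa$; the equivalence ultimately rests on the fact that, with a trivial preorder, each inequality $\theta_{1,i} \leq \theta_{2,i}$ at a function-type argument either is forced by the identity on bases or can be derived inductively in exactly the way captured by the definition of $\leqa$ at one level lower.
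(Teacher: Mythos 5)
Your plan for (i) and (ii) is fine and is essentially the intended argument: the paper in fact gives no explicit proof of this lemma, treating it as immediate from the one-to-one correspondence of Definition~\ref{def:HOMSL_intersection_type_correspondence}, and your mutual induction over the correspondence rules, using the canonical partition $U = \res{U}{x_1} \wedge \cdots \wedge \res{U}{x_m}$, is exactly how one would spell that out.

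The problem is in your handling of (iii)/(iv), and it is a genuine variance error rooted in a misreading of what the $x_i$ and $\leqa$ are. You treat $U_1,U_2$ as clauses whose \emph{bound arguments} are $\xx$, set $\fromClause{U_1}=\theta_{1,1}\to\cdots\to\theta_{1,m}\to q_{P}$ (and similarly for $U_2$), and then claim that $\fromClause{U_1}\le\fromClause{U_2}$ reduces to \emph{covariant} componentwise subtyping $\theta_{1,i}\le\theta_{2,i}$, matching $\leqa$. But rule \ruleName{Q-Arr} is contravariant in argument position, so for arrow types with a common skeleton and the same head the inequality $\fromClause{U_1}\le\fromClause{U_2}$ holds iff $\theta_{2,i}\le\theta_{1,i}$ for all $i$ --- the opposite direction --- and triviality of the preorder on $Q$ does not collapse this, since intersection subtyping at argument positions remains non-trivial (e.g.\ $q_P\wedge q_Q\le q_P$ but not conversely, hence $(q_P\to q_R)\le(q_P\wedge q_Q\to q_R)$ in one direction only). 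Under your reading the biconditional in (iv) would actually be false, so no repair along these lines can work. The definition of $\leqa$, and the way it is used later (e.g.\ in the proof of Lemma~\ref{lem:monotonicity_of_rewrite}, where $\theta'_\ell\le\theta_\ell$ on argument types yields that the clause with body $\toClause{\vv{\theta}}{\zz}$ is $\leqa$ the clause with body $\toClause{\vv{\theta'}}{\zz}$), show that the $x_i$ in the restrictions are the top-level \emph{subjects} of the formulas: $\leqa$ compares, covariantly, the intersection types $\fromClause{\res{{U_1}}{x_i}}$ and $\fromClause{\res{{U_2}}{x_i}}$ assigned to each subject, while all contravariance is absorbed inside $\le$ at nested positions. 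Read this way, (iii) and (iv) follow by simply unfolding the definition of $\leqa$ and applying (i)/(ii); your arrow-skeleton analysis and the appeal to the trivial preorder should be dropped rather than patched.
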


This isomorphism between automaton formulas and intersection types extends to intersection type systems $(\Sigma,\basetypes,\type)$ and \MSL($\omega$) automaton formulas $V$ over signatures $(\Pi,\Sigma')$:
\MSL($\omega$) predicates $\Pi$ correspond to basetypes $\basetypes$;
automaton environment $V$ corresponds to intersection type environment $\Gamma \uplus \type$, both with domain $\dom{\Gamma} \cup \Sigma = \Sigma'$.
 
\begin{definition}[Correspondence between automaton formulas and intersection type systems]
  \label{def:type_system_automaton_formula}
  We map intersection type system $(\Sigma,\basetypes,\type)$ with type environment $\Gamma$ to \MSL($\omega$) automaton formula $V_{\Sigma,\Gamma}$ over signatures $(\Pi_Q,\Sigma \cup \dom{\Gamma})$ using:
  \begin{align*}
    \Pi_\basetypes
      \defeq \set{P \mid q_P \in \basetypes}
    \qquad
    V_{\Sigma,\Gamma} 
      \defeq \set{\toClause{\type(f)}{f} \mid f \in \Sigma} \cup \set{ \toClause{\theta}{a} \mid \hasType{a}{\theta} \in \Gamma }
  \end{align*}
  
  In the converse direction, we map an \MSL($\omega$) automaton formula $V$ over $(\Pi,\Sigma)$ to an intersection type system $(\emptyset,\basetypes_\Pi,\emptyset)$ with type environment $\Gamma_V$:
  \begin{align*}
    \basetypes_\Pi 
      \defeq \set{q_P \mid P \in \Pi} 
      \qquad
    \Gamma_V
      \defeq \left\{ \hasType{a}{\theta} \;\middle|\; a \in \dom{V} \land \fromClause{\res{V}{a}} = \theta \right\}
  \end{align*}
\end{definition}

The above is not strictly a one-to-one correspondence, because the separate environments $\type$ and $\Gamma$ for intersection-type `constants' and `variables', resp., are mapped to a single automaton formula $V$.
This distinction cannot be recovered in the other direction.
Note, however, that \ruleName{T-Con} and \ruleName{T-Var} both choose a strict type from $\Gamma \uplus \type$ for a symbol.
For the typeability of a term it does not matter whether a symbol lives in the domain of $\Gamma$ or $\type$.

Since the same typing judgements hold for $(\Sigma,\basetypes,\type)$ with type environment $\Gamma$ as for $(\emptyset,\basetypes,\emptyset)$ with type environment $\Gamma \uplus \type$, we identify $\Gamma_{V_{\Sigma,\Gamma}}$ and $\Gamma$ with $\type$ left implicit.
Clearly, $V_{\Gamma_V} = V$, which gives us a one-to-one correspondence after all.
Furthermore, because $\Gamma_{V_{\Sigma,\Gamma}}$ and $\Gamma$ type the same terms, we may assume WLOG our intersection type systems do not contain constants, thus, $V_{\Sigma,\Gamma} = V_\Gamma$, giving rise to the following lemma.

\begin{lemma}
  $V_{\Gamma_V} = V$ and $\Gamma_{V_\Gamma} = \Gamma$.
\end{lemma}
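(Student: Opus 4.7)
The plan is to derive both equalities as direct corollaries of the order isomorphism Lemma~\ref{lem:order_iso}, lifted to the level of environments via the partition $V = \bigcup_{a \in \dom{V}} \res{V}{a}$ induced by top-level free symbols. Under the WLOG assumption stated just before the lemma, $\Sigma = \emptyset$, so $V_\Gamma$ reduces to $\set{\toClause{\theta}{a} \mid \hasType{a}{\theta} \in \Gamma}$, and the computation on both sides boils down to applying parts (i) and (ii) of Lemma~\ref{lem:order_iso} pointwise.

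For $\Gamma_{V_\Gamma} = \Gamma$, first observe that each clause in $\toClause{\theta_a}{a}$ has $a$ as its unique top-level free symbol by construction (the typing rules \ruleName{Fact$_1$}/\ruleName{Fact$_2$}/\ruleName{ACl$_1$}/\ruleName{ACl$_2$} force the topmost head to be $a$), so $\res{(V_\Gamma)}{a} = \toClause{\theta_a}{a}$ and $\dom{V_\Gamma} = \dom{\Gamma}$. Unfolding the definition of $\Gamma_{(\cdot)}$ then gives $\Gamma_{V_\Gamma} = \set{\hasType{a}{\fromClause{\toClause{\theta_a}{a}}} \mid \hasType{a}{\theta_a} \in \Gamma}$, and Lemma~\ref{lem:order_iso}(ii) collapses $\fromClause{\toClause{\theta_a}{a}} = \theta_a$, yielding $\Gamma$.

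For $V_{\Gamma_V} = V$, I would unfold $\Gamma_V = \set{\hasType{a}{\fromClause{\res{V}{a}}} \mid a \in \dom{V}}$ and then $V_{\Gamma_V} = \bigcup_{a \in \dom{V}} \toClause{\fromClause{\res{V}{a}}}{a}$. Applying Lemma~\ref{lem:order_iso}(i) to each restriction $\res{V}{a}$ -- itself an automaton formula over the single-element vector $a$ -- gives $\toClause{\fromClause{\res{V}{a}}}{a} = \res{V}{a}$. Reassembling via the partition $V = \bigcup_{a \in \dom{V}} \res{V}{a}$ then gives $V_{\Gamma_V} = V$.

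The only mildly non-routine point is justifying that Lemma~\ref{lem:order_iso}, as stated for automaton clauses over a vector $\xx$, specialises correctly to each $\res{V}{a}$. This amounts to observing that the intersection-formation rule of Definition~\ref{def:HOMSL_intersection_type_correspondence} makes $\fromCl/\toCl$ well defined on a conjunction of single-variable automaton clauses sharing the same free variable, producing an intersection type rather than a strict one; everything else is a routine unfolding of definitions. I expect this checking of the partition-compatibility of the isomorphism, rather than any deeper content, to be the main obstacle.
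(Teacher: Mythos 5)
Your proposal is correct and matches the paper's intent: the paper gives no separate proof of this lemma, treating it as immediate ("Clearly") from Definition~\ref{def:type_system_automaton_formula}, the convention of folding $\type$ into $\Gamma$ (so $\Sigma$ may be taken empty), and the order isomorphism of Lemma~\ref{lem:order_iso}, which is precisely the pointwise unfolding you carry out. The only quibble is the degenerate case of a symbol assigned the empty intersection $\top$ (whose image under $\toClause{\cdot}{\cdot}$ is $\truetm$, dropping it from $\dom{V_\Gamma}$), but the paper glosses over this as well.
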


\subsubsection{Typing-rewriting correspondence}

Unless otherwise specified, we consider only goal formulas $G$ that result from rewriting some non-existential \MSL($\omega$) goal formula $G'$, i.e.~$V, \yy \vdash G' \rew^* G$.
Such a $G$ is itself non-existential, and all its implications have automaton bodies, as
rewriting does not introduce existentials and only introduces implications with automaton bodies.

Given an isomorphic type environment and automaton formula as per Definition~\ref{def:type_system_automaton_formula}, we can now formalise the equivalence between typing judgements and rewrites.
 
\begin{restatable}[Typing-rewriting correspondence]{proposition}{typeClause}
  \label{cor:type-clause}
  ~
  \begin{enumerate}[label=(\roman*)]
  
    \item For all $U$ over $\yy$, there exists $U'$ such that $U \leqa U'$ and $V, \yy \vdash \toClause{\sigmaInt}{t} \rew^* U'$ if, and only if, $\Gamma_V, \hasType{\yy}{\fromClause{U}} \vdash \hasType{t}{\sigmaInt}$. \label{enum:typeClause_without_goal_typing}
    
    \item $V \vdash \toClause{\sigmaInt}{t} \rew^* \truetm$ if, and only if, $\Gamma_V \vdash \hasType{t}{\sigmaInt}$ \label{enum:typeClause_true}
  
  \end{enumerate}
\end{restatable}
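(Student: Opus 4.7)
We prove~(i) by structural induction on the term $t$, then obtain~(ii) by specialisation to $\yy = \emptyset$: the only automaton formula over an empty variable context with $\truetm \leqa U'$ is $\truetm$, so the biconditional of~(i) collapses to that of~(ii).

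The base cases treat a single-symbol $t$ (a variable $y \in \yy$, a variable $x \in \dom{V}$, or a constant $f \in \Sigma$). The rewrite of $\toClause{\sigmaInt}{t}$ reduces, via the structural rules \ruleName{ImpAnd}, \ruleName{Imp}, \ruleName{AndL} and \ruleName{AndR}, to a family of atomic sub-rewrites $P\,(t\,\vv{x})$, one per strict conjunct of $\sigmaInt$. Each is discharged by \ruleName{Refl}, \ruleName{Step}, or \ruleName{Assm}, which select an automaton clause of the appropriate shape from $V$ (or, in the \ruleName{Assm} case, contribute a fresh nested clause about $y$ to $U'$). By the isomorphism of Lemma~\ref{lem:order_iso}, picking such a clause is equivalent to picking a strict component of an intersection type for $t$ in $\Gamma_V \uplus \hasType{\yy}{\fromClause{U}}$, which is precisely the side condition of \ruleName{T-Var} or \ruleName{T-Con}. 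In the inductive case $t = h\,u_1 \cdots u_n$, applying \ruleName{Step} or \ruleName{Assm} to each head-atom $P\,(h\,\vv{u}\,\vv{x})$ using a clause for $h$ in $V$ produces subgoals of the form $\toClause{\theta_i'}{u_i}$, where $\theta_1' \to \cdots \to \theta_n' \to \cdots \to q_P$ is the strict type of $h$ encoded by the chosen clause. The inductive hypothesis turns each such subgoal into a typing derivation $\hasType{u_i}{\theta_i'}$ and back, and these are assembled via \ruleName{T-App}.

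The main obstacle is the bookkeeping of $U$ and $U'$ in part~(i). Uses of \ruleName{Assm} on $y \in \yy$ abductively introduce nested automaton clauses about $y$ into $U'$, and we must show that $U$, viewed through the isomorphism as the intersection type assigned to $y$, is strong enough to account for each such abductive choice — that is, the relation $U \leqa U'$ corresponds on the typing side to the intersection type of $y$ in $\fromClause{U}$ subsuming, via subtyping, each strict type used for $y$ by \ruleName{T-Var}. The precise alignment of subtyping with $\leqa$ that Lemma~\ref{lem:order_iso} establishes is the critical ingredient that makes this bookkeeping work.
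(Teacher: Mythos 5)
Your overall plan (direct structural induction on $t$, with (ii) obtained from (i) at $\yy=\emptyset$) is a reasonable skeleton, and the specialisation step matches the paper's. But the proposal has a genuine gap: the work you defer to ``the precise alignment of subtyping with $\leqa$ that Lemma~\ref{lem:order_iso} establishes'' cannot be done by that lemma. Lemma~\ref{lem:order_iso} is a purely static, syntactic isomorphism between intersection types and automaton clauses; the two facts your induction actually needs are \emph{dynamic} properties of the rewrite relation. First, in the backward direction of your inductive case, \ruleName{T-App} only requires the derived argument types to be \emph{subtypes} of those demanded by the chosen strict type of the head ($\bigwedge_i\tau_i \leq \sigmaInt$), so the induction hypothesis hands you a rewrite of $\toClause{\sigmaInt_i'}{u_i}$ while the clause selected for $h$ in $V$ forces you to rewrite $\toClause{\theta_i}{u_i}$ for a possibly larger $\theta_i$. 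Bridging this needs monotonicity of rewriting under subtyping (the paper's Lemma~\ref{lem:monotonicity_of_rewrite}), a lemma with its own nontrivial induction; it does not follow from the order isomorphism. Second, your inductive case ignores the trailing bound variables: for higher-order $t$, $\toClause{\sigmaInt}{t}$ has the form $\forall\xx.\,\toClause{\vv{\sigmaInt}}{\xx}\Rightarrow P\,(h\,\vv{u}\,\xx)$, and after \ruleName{Step}/\ruleName{Assm} the instantiated body splits into the argument subgoals $\toClause{\theta_i'}{u_i}$ \emph{and} residual constraints on $\xx$, which must be discharged against the assumed $\toClause{\vv{\sigmaInt}}{\xx}$; this requires the characterisation that $(\forall\xx.\,U_1\Rightarrow U_2)\rew^*\truetm$ iff $U_1\leqa U_2$ (the paper's Lemma~\ref{lem:true_automaton}), again a fact about the rewrite system, not about the encoding. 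Third, in the forward direction you assume any rewrite of $\toClause{\sigmaInt}{t}$ to automaton form factors as a head \ruleName{Step}/\ruleName{Assm} followed by independent subrewrites of the arguments with the $U'$-conjuncts distributed accordingly; that factoring needs a decomposition argument for rewrite sequences (in the spirit of Lemma~\ref{lem:rew_conj} and the extrusion via \ruleName{ImpAnd}/\ruleName{Scope}, which is only legitimate because the extruded conjuncts do not mention $\xx$) and is not supplied by the structural rules alone.

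For comparison, the paper avoids inducting on $t$ directly: it introduces an auxiliary goal-typing judgement $\hasType{G}{\boolType}$ (Figure~\ref{fig:typing_goals}), proves the rewriting/goal-typing correspondence (Lemma~\ref{lem:type-clause}) by a subject-expansion argument over single rewrite steps in one direction and induction on the typing derivation in the other, and then reduces the proposition to the short equivalence $\Gamma \vdash \hasType{t}{\sigmaInt}$ iff $\Gamma \vdash \hasType{\toClause{\sigmaInt}{t}}{\boolType}$. Your route could likely be repaired, but only by stating and proving the monotonicity and implication-discharge lemmas above (and a rewrite-decomposition lemma for the forward direction); as written, the argument's critical steps rest on a lemma that cannot carry them.
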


Thanks to $\Gamma_{V_\Gamma} = \Gamma$, the above not only provides an equivalence between $V$ and $\Gamma_V$ but also between $V_{\Gamma}$ and $\Gamma$.
%

\subsection{Reducing HORS intersection typing to \texorpdfstring{\MSL($\omega$)}{\MSL(omega)}}
\label{sec:types_reduction}

One of the most well-studied problems in higher-order model checking is the safety problem for higher-order recursion schemes (HORS): does the tree $\sem{\calG}$ generated by HORS $\calG$ satisfy safety property $\varphi$?
The property is typically expressed as an alternating trivial automaton -- or its negation as an alternating cotrivial automaton.
This problem reduces to intersection (un)typeability~\cite{Kobayashi2009}, which we shall use.

A \emph{higher-order recursion scheme} (HORS) is a set $\calR$ of well-typed ground definitions of type $\iota$ (i.e.~the RHSs are applicative terms over the formal parameters, constants $\consig$, and variables $\calN$):
\[
  \set{ 
    f_1\, \vv{y_1} = t_1,
      \quad\dots,\quad
    f_n\, \vv{y_n} = t_n
  }
\]
where $f_1,\dots,f_n \in \calN$ and $\vv{y_1},\dots,\vv{y_n} \in \boundvars$ are vectors of distinct variables.
There is a designated nullary start symbol $S:\iota$, so a HORS can be denoted by a quadruple $\calG = \langle \calN, \consig, \calR, S \rangle$.

See e.g.~\citet{Ong2006} for a full account of HORS.

For the HORS safety problem, we assume a HORS comes equipped with base types $Q_\iota$ and a \emph{negative} typing of constants $\consig$, denoted by $\type$, with respect $Q_\iota$.
This gives rise to an intersection type system $(\consig, Q_\iota, \type)$.
Typically, we will ask whether $\vdash \hasType{S}{q_0}$ is this system, for some $q_0 \in Q_\iota$.

\paragraph{Negative types.}

Negative constant types are easily computable via a DeMorgan dual;
if we read a (positive) type $\hasType{f}{\bigwedge_{i \in [1..n]} (\sigmaInt_{i,1} \to \dots \to \sigmaInt_{i,m} \to q_P})$ as $\hasType{f\,t_1\dots t_m}{q_P}$ iff $\bigvee_{i \in [1..n]} \bigwedge_{j \in [1..m]} \hasType{t_j}{\sigmaInt_{i,j}}$, then the corresponding negative type is the DeMorgan dual of this boolean formula (see e.g.~\citet{Muller1987}, who use this construction to negate alternating tree automata).

Since we use only negative types but forget about this for the remainder of the section, we shall simply write $\hasType{t}{\tau}$ to mean $t$ has negative type $\tau$.

\begin{example}
  Consider integer-division operator $\divi$ with an error type:
  \begin{align*}
    &(\top \to \zero \to \err) \land (\err \to \top \to \err) \land (\top \to \err \to \err) 
  \end{align*}
  This means that $\hasType{n \mathbin{\divi} m}{\err}$ iff $(\hasType{n}{\top} \land \hasType{m}{\zero}) \lor (\hasType{n}{\err} \land \hasType{m}{\top}) \lor (\hasType{n}{\top} \land \hasType{m}{\err})$.
  The corresponding negative type can be computed via DeMorgan as $(\err \to \zero \land \err \to \err)$, which should be read as: 
  ``$n \mathbin{\divi} m$ does not have type $\err$ if neither $n$ or $m$ has type $\err$ and $m$ does not have type $\zero$.''
  
\end{example}

\begin{example}
  Consider order-2 HORS $\calG_2 = \langle \set{S,F,B},\set{\conss,\succs,\zeros},\calR,S \rangle$ that generates a binary $\conss$-labelled spine whose left subtrees are unary $\succs^{2^0}\,\zeros$, $\succs^{2^1}\,\zeros$, $\succs^{2^2}\,\zeros$, etc. in that order, with $\calR$ given by:
  \[
    S = F\,\succs
    \qquad F\,\varphi = \conss\,(\varphi\, \zeros)\,(F\,(B\,\varphi\,\varphi))
    \qquad B\,\varphi\,\psi\,x = \varphi \, (\psi \, x)
  \]
  
  Let $Q = \set{q_S,q_0,q_1,q_2,q_3}$ be our base types that count the number of $\succs$ in a sequence modulo 4.
  Because we are considering a negative typing, the terminal symbol (the constant) $\zeros$ has type $q_1 \land q_2 \land q_3 = \type(\zeros)$ instead of $q_0$ and:
  \[
    \type(\conss) = (q_1 \land q_3 \to q_1 \land q_3 \to q_S) \land \textstyle\bigwedge_{i \in [0..3]} (q_i \to \top \to q_i )
    \quad\, \type(\succs) = \textstyle\bigwedge_{i \in [0..3]} (q_i \to q_{i+1\!\!\!\!\!\mod 4} )
  \]
  Now we may ask whether $\vdash \hasType{S}{q_S}$.
  Note that $\hasType{\conss}{q_1 \land q_3 \to q_1 \land q_3 \to q_S}$ means that the left subtree of $\conss$ has an even number of $\succs$ (and so does the left subtree of its right child), which holds.
\end{example}

\paragraph{The HORS untypeability problem.}

We call an intersection type environment $\Gamma$ \emph{$\calG$-coconsistent} just if, 
\begin{enumerate*}[label=(\arabic*)]
  \item $\Gamma$ is empty, or
  \item there exist $\hasType{f}{\tau} \in \Gamma$ and $(f\,\yy = t) \in \calG$ such that $\Gamma \backslash \set{ \hasType{f}{\tau}}$ is $\calG$-coconsistent and $\Gamma \backslash \set{ \hasType{f}{\tau}}, \hasType{\yy}{\vv{\sigmaInt}} \vdash \hasType{t}{q_P}$, where $\tau = \sigmaInt_1 \to \dots \to \sigmaInt_m \to q_P$. 
\end{enumerate*}
Intuively, every intersection type in a $\calG$-coconsistent $\Gamma$ is (finitely) required by some program definition $(f\,\yy = t) \in \calG$.
Thus, a $\calG$-coconsistent type environment corresponds to a finite trace of an intersection typing for $\calG$.

We prove that an instance of the HORS untypeability problem
(does there exist a $\calG$-coconsistent type environment $\Gamma$ such that $\Gamma \vdash \hasType{S}{q_0}$?) reduces to \MSL($\omega$) provability by adapting the rewriting algorithm to construct an intersection type environment instead of a canonical solved form.

\begin{restatable}{theorem}{untypeabilityToProvability}
  \label{thm:untypeability_reduces_to_provability}
  HORS intersection untypeability reduces to \MSL($\omega$) provability.
\end{restatable}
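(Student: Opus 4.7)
The plan is to build, from a HORS $\calG = \langle \calN, \consig, \calR, S \rangle$ with negative constant typing $\type$ and target base type $q_0$, an \MSL($\omega$) instance whose provability expresses the existence of a $\calG$-coconsistent environment typing $S$ at $q_0$. First I would take the predicate signature $\Pi \defeq \{P_q \mid q \in Q_\iota\}$ and the constructor signature $\Sigma \defeq \consig \cup \calN$, treating each nonterminal as a constructor with its original simple type. I would then populate the definite formula $D$ with (i) the conjuncts of $\toClause{\type(c)}{c}$ for each constant $c \in \consig$, using Definition~\ref{def:HOMSL_intersection_type_correspondence}, and (ii) the clause $\forall \yy.\, P_q(t) \Rightarrow P_q(f\,\yy)$ for each HORS production $(f\,\yy = t) \in \calR$ and each base type $q \in Q_\iota$. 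These are valid \MSL($\omega$) definite clauses, as their heads are shallow and linear in the distinct variables $\yy$. The goal is $G \defeq P_{q_0}(S)$.

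For the forward direction of the equivalence $D \vdash G$ iff some $\calG$-coconsistent $\Gamma$ has $\Gamma \vdash \hasType{S}{q_0}$, I would apply completeness of the rewriting procedure to obtain $\apprx(D) \models P_{q_0}(S)$ and then use Proposition~\ref{cor:type-clause}\ref{enum:typeClause_true} to extract a typing $\Gamma_{\apprx(D)} \vdash \hasType{S}{q_0}$. A routine induction on the construction of $\apprx(D)$ would show that $\Gamma_{\apprx(D)}$ is $\calG$-coconsistent: every automaton clause headed by a nonterminal $f$ that ends up in $\apprx(D)$ must have been produced by a \ruleName{Step} using one of the clauses $\forall \yy.\, P_q(t) \Rightarrow P_q(f\,\yy)$, and by Proposition~\ref{cor:type-clause}\ref{enum:typeClause_without_goal_typing} the resulting body corresponds exactly to the typing premise required by coconsistency. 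For the converse, given a $\calG$-coconsistent $\Gamma$ proving $\hasType{S}{q_0}$, I would argue dually that $V_\Gamma \subseteq \apprx(D)$ by induction on the coconsistency derivation, feeding the typing premises back into the rewriting procedure via the same proposition, and then conclude $D \vdash P_{q_0}(S)$ from $V_\Gamma \models P_{q_0}(S)$ together with $V_\Gamma \subseteq \apprx(D)$.

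The main obstacle will be aligning the two inductions. Coconsistency justifies each $\hasType{f}{\tau} \in \Gamma$ by a single typing derivation $\Gamma', \hasType{\yy}{\vv\sigmaInt} \vdash \hasType{t}{q}$ that may reference other entries of $\Gamma$, while the canonical solved form is generated bottom-up from the initial automaton clauses through possibly many intermediate rewrites; the bridge on both sides is Proposition~\ref{cor:type-clause}\ref{enum:typeClause_without_goal_typing}, which precisely matches one typing derivation against one complete rewrite sequence. A minor subtlety is that constants of $\consig$ never appear as the head of a HORS production, so their typings in any coconsistent $\Gamma$ must come from $\type$ alone; correspondingly, no rewrite step introduces new automaton clauses concerning a constant, so the only such clauses in $\apprx(D)$ are the ones installed directly via $\toClause{\type(c)}{c}$.
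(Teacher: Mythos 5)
Your instance is exactly the paper's $D_\calG$ (productions wrapped in a predicate per base type, plus the automaton clauses $\toClause{\type(c)}{c}$ for constants), the goal is the same, and your forward direction is essentially the paper's: completeness of saturation plus the typing--rewriting correspondence, and an induction showing $\Gamma_{\apprx(D_\calG)}$ is $\calG$-coconsistent (modulo the cosmetic point that clauses enter $\apprx(D)$ via the canonical-solved-form rule, not a single \ruleName{Step}, and that one should invoke the rewrite $\apprx(D_\calG) \vdash P_{q_0}\,S \rew^* \truetm$ rather than the entailment when applying Proposition~\ref{cor:type-clause}\ref{enum:typeClause_true}). The converse direction, however, has a genuine gap: the claim $V_\Gamma \subseteq \apprx(D_\calG)$ is too strong and in general false. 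Coconsistency justifies a binding $\hasType{f}{\tau}$ only by a typing derivation $\Gamma', \hasType{\yy}{\vv{\sigmaInt}} \vdash \hasType{t}{q_P}$, and Proposition~\ref{cor:type-clause}\ref{enum:typeClause_without_goal_typing} converts this only into a rewrite $V_{\Gamma'}, \yy \vdash P\,t \rew^* U'$ with $\toClause{\vv{\sigmaInt}}{\yy} \leqa U'$; it does not let you reach the exact body $\toClause{\vv{\sigmaInt}}{\yy}$. Concretely, if the production's right-hand side $t$ does not use some parameter $y_i$ while $\tau$ imposes a non-trivial constraint on position $i$ (which coconsistency permits, since the typing judgement need not use all hypotheses), then no rewrite of $P\,t$ can ever constrain $y_i$, so the clause $\toClause{\tau}{f}$ is simply absent from $\apprx(D_\calG)$. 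This also breaks the induction itself: the inductive hypothesis only gives you that the clauses of $V_{\Gamma'}$ used in the justifying rewrite are \emph{dominated} by clauses of $\apprx(D_\calG)$, not that they are present, so you cannot replay the same rewrite sequence under $\apprx(D_\calG)$.

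What is needed --- and what the paper proves instead --- is the order-theoretic approximation $\apprx(D_\calG) \leqa V_\Gamma$ for every $\calG$-coconsistent $\Gamma$, together with machinery to transport rewrites and typings along $\leqa$: monotonicity of rewriting in the assigned type (Lemma~\ref{lem:monotonicity_of_rewrite}), the upward-closure lemmas that let a rewrite under a $\leqa$-larger automaton context be simulated under the smaller one, and the subsumption property that a $\leqa$-smaller environment types the same term at a subtype. With these, a coconsistent $\Gamma$ proving $\hasType{S}{q_0}$ yields $\Gamma_{\apprx(D_\calG)} \vdash \hasType{S}{q_0'}$ for some $q_0' \leq q_0$ (equal to $q_0$ at base type, since the preorder is trivial), hence $\apprx(D_\calG) \vdash P_{q_0}\,S \rew^* \truetm$, and only then $D_\calG \models P_{q_0}\,S$ and $D_\calG \vdash P_{q_0}\,S$ by soundness of rewriting and completeness of the proof system. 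Your outline omits this entire $\leqa$-transfer layer, which is the technical heart of the converse direction.
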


We convert ground definitions from $\calG$ to definite clauses $D_{\calG}$ by wrapping both sides in a predicate $P$ for each base type $q_P \in Q_\iota$.
This gives us $(f\,\yy = t) \in \calG$ if, and only if, $(\forall \yy.\, P\,t \Rightarrow P\,(f\,\yy)) \in D_\calG$ for all $q_P \in Q_\iota$.
Furthermore, the types of $\consig$ are added to $D_{\calG}$ as automaton clauses, giving rise to:
\[
  D_\calG 
    \defeq \set{\forall \yy.\, P\,t \Rightarrow P\,(f\,\yy)  \mid (f\,\yy = t) \in \calG \land t:\iota \land q_P \in Q_\iota}
      \cup \set{ \toClause{\tau}{c} \mid \hasType{c}{\tau} \in \consig  } 
        \footnote{
          We allow clauses $\toClause{\tau}{c}$ in $D_\calG$ even though they are not generally definite clauses.
          Because they are automaton, we could instead directly include them in the first iteration of the rewriting algorithm.
        }
\]
%
%
For our example $\calG_2$, $D_{\calG_2}$ looks like:
\begin{align*}
  D_\calG \defeq &\set{ ( P\, (F\,\succs) \Rightarrow P\,S) \mid q_P \in Q }
    \cup \set{ (\forall \varphi.\, P\,(\conss\,(\varphi\, \zeros)\,(F\,(B\,\varphi\,\varphi))) \Rightarrow P\,(F\,\varphi) ) \mid q_P \in Q } \\
    &\cup \set{ (\forall \varphi \, \psi\,x.\, P\,(\varphi \, (\psi \, x)) \Rightarrow P\,(B\,\varphi\,\psi\,x)) \mid q_P \in Q }
\end{align*}

\begin{definition}[Typing Algorithm]
  \label{def:typing_algorithm}
  Given definite formula $D$, we construct a type environment $\typeAppr{D} = \Gamma_{\apprx(D)}$ using the (inductive) \MSL($\omega$) rewrite algorithm:
  \[
      \begin{array}{c}
        \prftree[l]
          { $\left.\begin{array}{r}
            (\forall \yy.\, G \Rightarrow P\,(f\,\yy)) \in D \\ 
            V_\Gamma, \yy \vdash G \rew^* U
            \end{array}\right|$ }
          { \Gamma \subsetpluseq \typeAppr{D} }
          { \hasType{f}{\fromClause{\forall \yy.\, U \Rightarrow P\,(f\,\yy)}} \in \typeAppr{D} }
      \end{array}
    \]
\end{definition}

\begin{proof}[Proof sketch]
  For Theorem~\ref{thm:untypeability_reduces_to_provability}, it suffices to show the following:
  \[
    \exists \calG \text{-coconsistent } \Gamma.\, \Gamma \vdash \hasType{t}{\tau}
    \text{ if, and only if, }
    \apprx(D_\calG) \vdash \toClause{\tau}{t} \rew^* \truetm
  \]
  We rely on the correspondence between typing and rewriting (Proposition~\ref{cor:type-clause}), and the fact that $\typeAppr{D_\calG} = \Gamma_{\apprx(D_\calG)}$ is the largest $\calG$-coconsistent environment.
  The claim follows from the restricted case where the LHS is $\vdash \hasType{S}{q_0}$. 
\end{proof}

%
%

The reduction from HORS untypeability to \MSL($\omega$) provability is clearly polynomial.
This gives us a lower bound on the complexity of \MSL($\omega$) in the order $n$ of the program and highest arity $k$ of any function symbol, based on the known complexity of HORS untypeability.

\begin{theorem}
  Deciding \MSL($\omega$) provability is at least $\Exp{n-1}(k\size{\Pi})$-hard.
\end{theorem}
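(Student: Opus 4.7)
The plan is to transfer a known hardness result for higher-order recursion scheme (HORS) model checking across the polynomial reduction established in Theorem~\ref{thm:untypeability_reduces_to_provability}. Concretely, HORS model checking (equivalently, HORS intersection untypeability in the sense defined just above) is known to be $n$-EXPTIME-complete for order-$n$ schemes, with the matching lower bound due to \citet{kobayashi2013model} (building on results of Ong). Phrased in the $\Exp{\cdot}$ notation fixed earlier, this says that deciding HORS untypeability on an instance of order $n$ with $k$ the maximum arity and $|\Pi|$ the number of base types requires time at least $\Exp{n-1}(k|\Pi|)$ in the worst case.

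First, I would recall the reduction from Section~\ref{sec:types_reduction}: from a HORS $\calG = \langle \calN, \consig, \calR, S\rangle$ equipped with an intersection type system over base types $Q_\iota$, we construct the definite formula $D_\calG$ and ask whether $D_\calG \vdash P_0\,S$ for a distinguished $q_0 \in Q_\iota$. As the construction of $D_\calG$ is a local, symbol-by-symbol translation (one MSL clause per rule of $\calR$ per base type, plus the automaton clauses $\toClause{\tau}{c}$ for $\hasType{c}{\tau} \in \type$), the size of $D_\calG$ is bounded by a polynomial in $|\calG|$ and $|Q_\iota|$.

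Next I would verify that the parameters transfer faithfully: the order of the constructor signature of $D_\calG$ equals the order of $\calG$, the maximum arity of any function symbol in $\Sigma_{D_\calG}$ is bounded by the maximum arity occurring in $\calR$ and the constants of $\calG$, and the predicate signature $\Pi$ of $D_\calG$ is exactly $\{P \mid q_P \in Q_\iota\}$, so $|\Pi| = |Q_\iota|$. Hence, any sub-$\Exp{n-1}(k|\Pi|)$-time algorithm for \MSL($\omega$) provability would, composed with this polynomial reduction, yield a sub-$\Exp{n-1}(k|Q_\iota|)$-time algorithm for HORS untypeability on schemes of order $n$, contradicting the known lower bound.

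The main obstacle is the bookkeeping around parameters rather than any new logical content: one must make sure the HORS lower bound is quoted in a form whose complexity parameters ($n$, $k$, $|Q_\iota|$) match those used in Theorem~\ref{thm:no_of_automaton_clauses} and in the statement to be proven, accounting for the constants-versus-rules bookkeeping in $D_\calG$ and the shift between ``order'' conventions in the HORS literature. Once this is pinned down, the hardness statement follows immediately from composing the polynomial reduction of Theorem~\ref{thm:untypeability_reduces_to_provability} with the HORS lower bound, with no further calculation required.
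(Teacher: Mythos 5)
Your proposal matches the paper's own argument: the paper establishes this theorem precisely by noting that the reduction of Theorem~\ref{thm:untypeability_reduces_to_provability} is polynomial and parameter-preserving (order, arity, and number of base types carry over to $n$, $k$, and $\size{\Pi}$) and then invoking the known hardness of HORS intersection untypeability. One caveat: the relevant HORS problem is the safety (trivial/cotrivial automaton) fragment, whose order-$n$ complexity is $(n-1)$-EXPTIME rather than the $n$-EXPTIME bound you quote for full HORS model checking, but since you then state the transferred bound as $\Exp{n-1}(k\size{\Pi})$, your conclusion coincides with the paper's.
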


\subsection{Reducing \texorpdfstring{\MSL($\omega$)}{\MSL(omega)} to HORS intersection typing}
\label{sec:MSL_reduction}

\newcommand\conSym{C}
\newcommand\conFun[2]{\conSym \, {#1} \, {#2}}
\newcommand\disSym{D}
\newcommand\disFun[2]{\conSym \, {#1} \, {#2}}
\DeclarePairedDelimiter{\toTerm}{\sslash}{\sslash}
\DeclarePairedDelimiter{\asMSL}{\sslash}{\sslash}
\newcommand\tPred{T}

\newcommand\bbs{{o_{\mathsf{B}}}}
\newcommand\indBoolType{{q_{\tPred}}}

For the converse reduction, we reduce \MSL($\omega$) directly to HORS cotrivial automaton model checking.
We rely on an extension of HORS by \citet{Neatherway2012} called \emph{HORS with cases} that enables us to use nondeterminism and a case-switch on the base types $Q_\iota$ (i.e.~predicates $\Pi$), due to \MSL($\omega$) lacking a clean separation between a state-agnostic rewrite system and a property automaton.

The \MSL($\omega$) constants $\truetm$ and $\land$ are encoded as a HORS constant and variable, resp., making clause bodies monadic.
Now a clause $(\forall \yy.\, P'\,t \Rightarrow P\,(f\,\yy))$ can be mapped to $f\,\yy\,p = t \, p'$, where $p$ is a constant corresponding to $P \in \Pi$.

\paragraph{\MSL($\omega$)-to-HORS transformation.}

We transform \MSL($\omega$) constructor types $\gamma$ to HORS types $\gamma^+$ by setting $\iota^+ \defeq \iota \to \iota$ and $(\gamma_1 \to \gamma_2)^+ \defeq \gamma_1^+ \to \gamma_2^+$.  Then, the goal transformation to HORS bodies encodes $\truetm$ and $\land$ as:
\[
  \truetm^+ \defeq \truetm
  \qquad (G \land H)^+ \defeq G^+ \land H^+
  \qquad (P\,s)^+ \defeq s \, P 
\]
where, by some abuse, $\truetm$ and $\land$ on the RHS are a HORS constant and variable, resp.

\paragraph{\MSL($\omega$)-as-HORS}

Given an existential-free \MSL($\omega$) definite formula $D_0$ and goal formula $G_0$ over $\Pi$ and $\Sigma$, we construct the HORS $\calG = \langle \consig, \calN, \calR, S \rangle$ defined by:
\begin{align*}
  \consig &\defeq \set{ \truetm : \iota } \cup \set{ P : \iota \to o \in \Pi } \\
  \calN &\defeq \set{ f: \gamma^+ \mid f:\gamma \in \Sigma } \cup \set{S : \iota} \cup \set{ \land : \iota \to \iota \to \iota } \\
  \calR &\defeq \set{ f \, \xx \, P = G^+ \mid (\forall \xx.\, G \Rightarrow P \, (f \, \xx)) \in D_0 } \cup \set { S = G_0^+ } \cup \set{ \land \, \truetm \, \truetm = \truetm }
\end{align*}

\paragraph{The automaton.}

Because \MSL($\omega$) does not have a clean separation between automaton and state-agnostic definitions, our automaton is trivial;
it consists of a single state that accepts only the non-terminating/non-finished tree $\bot$.
Intuitively, $G_0^+$ rewrites to $\truetm$ precisely if $D_0 \vDash G_0$.

\begin{proposition}
  $D_0 \vDash G_0$ if, and only if, $\sem{\calG} \in \lang{A}$
\end{proposition}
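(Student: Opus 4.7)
The plan is to establish the proposition by factoring through an intermediate characterisation in terms of HORS rewriting: namely, that $\sem{\calG} \in \lang{A}$ holds iff the start symbol $S$ reduces under the rules $\calR$ to the HORS constant $\truetm$ (equivalently, $\sem{\calG}$ contains no occurrence of $\bot$). Since $S = G_0^+$ in $\calR$, this reduces the proposition to the key bridging lemma: $D_0 \vDash G_0$ iff $G_0^+$ rewrites to $\truetm$ in $\calG$. The rest of the argument establishes this lemma.

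By completeness of the proof system (Theorem~\ref{thm:proof_system}), $D_0 \vDash G_0$ iff $D_0 \vdash G_0$, so I would prove the following more general claim by mutual induction: for every non-existential goal formula $G$ in which all free variables have been replaced by closed \MSL($\omega$) terms $\vv{t}$, we have $D_0 \vdash G$ iff $G^+$ rewrites (via $\calR$) to $\truetm$. The key cases are: \textbf{(True)} $\truetm^+ = \truetm$ is immediate. \textbf{(Conjunction)} $(G \land H)^+ = G^+ \land H^+$, and the rule $\land\,\truetm\,\truetm = \truetm$ precisely captures \ruleName{And}; both directions follow by induction. \textbf{(Atomic, Res)} for $P\,s$ with $s = f\,\vv{u}$, observe that $(P\,s)^+ = s\,P = f\,\vv{u}\,P$ is exactly the LHS of the HORS rule $f\,\vv{y}\,P = G^+$ derived from the clause $(\forall \vv{y}.\,G \Rightarrow P\,(f\,\vv{y})) \in D_0$. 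Thus one HORS head reduction simulates one \ruleName{Res} step, reducing $(P\,s)^+$ to $G^+[\vv{u}/\vv{y}]$, which by the induction hypothesis rewrites to $\truetm$ iff $D_0 \vdash G[\vv{u}/\vv{y}]$, as required. Because the input is existential-free by hypothesis, no \ruleName{Ex} case arises.

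Conversely, given a reduction sequence $G^+ \to^* \truetm$, I would extract a proof by analysing the head redex at each step, inverting the construction above: each contraction of a head of shape $f\,\vv{u}\,P$ using a rule $f\,\vv{y}\,P = G^+$ corresponds to a \ruleName{Res} step against the defining clause of the HORS rule. Parallel reductions under conjunction are collected into an \ruleName{And} step. The induction terminates because the HORS derivation that lands in $\truetm$ is finite.

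The main obstacle, and what requires the greatest care, is handling higher-order subterms $s$ that are not of the form $f\,\vv{u}$: for instance, when a formal parameter of higher type is itself instantiated with a complex applicative term, one must check that the translation $(P\,s)^+ = s\,P$ commutes with substitution, i.e.~$(G[\vv{t}/\vv{x}])^+ = G^+[\vv{t}/\vv{x}]$, so that the simulated HORS reduction agrees with the \MSL($\omega$) proof after substitution. This follows because the translation $(-)^+$ is compositional on applicative structure and only modifies the two distinguished symbols $\truetm$ and $\land$; since neither appears in an \MSL($\omega$) term of constructor type, substitution cannot disturb the translation. Once this compatibility and the typing of the HORS (with $\iota^+ = \iota \to \iota$, accommodating the extra $P$-argument at tail positions) are verified, the induction goes through smoothly, and composing with the characterisation of $\sem{\calG} \in \lang{A}$ yields the proposition.
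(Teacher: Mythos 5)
Your core induction is exactly the argument the paper leaves implicit: for this proposition the paper offers nothing beyond the remark that ``intuitively, $G_0^+$ rewrites to $\truetm$ precisely if $D_0 \vDash G_0$'', and your bridging lemma (via completeness, $D_0 \vdash G_0$ iff $G_0^+$ reduces to $\truetm$, matching each \ruleName{Res} step with a head reduction by a rule $f\,\xx\,P = G^+$ and each \ruleName{And} step with the rule $\land\,\truetm\,\truetm = \truetm$, with the substitution--translation compatibility $(G[\vv{t}/\vv{x}])^+ = G^+[\vv{t}/\vv{x}]$ as the technical check) is the right formalisation of that remark.

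The genuine gap is in your first step, which is precisely the part of the statement that concerns model checking. You assert that $\sem{\calG} \in \lang{A}$ holds iff $S$ reduces to $\truetm$, ``equivalently, $\sem{\calG}$ contains no occurrence of $\bot$'', but these two readings are not interchangeable here: $\calG$ is genuinely nondeterministic, because two clauses of $D_0$ with the same head $P\,(f\,\xx)$ yield two rules with identical left-hand sides, and it is the existential reading (some resolution of the nondeterminism reaches $\truetm$) that matches provability, whereas ``the value tree contains no $\bot$'' is a universal statement, and ``the value tree'' is not even well defined until one fixes how nondeterminism and the case-matching rules of HORS with cases are evaluated. Moreover $A$ is a single co-trivial state described as accepting only the unfinished tree $\bot$, so unpacking $\sem{\calG} \in \lang{A}$ in terms of reduction sequences requires the actual definitions of value trees for HORS with cases and of (co-)trivial acceptance from Neatherway et al., together with an argument that acceptance coincides with the existence of a finite reduction of $S$ to $\truetm$; this should also include the decomposition fact that $\land\,t_1\,t_2$ reduces to $\truetm$ iff both $t_1$ and $t_2$ do (the rule only fires once both arguments are literally the terminal $\truetm$), which you currently assert when collecting ``parallel reductions under conjunction''. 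Without that bridge you have proved the paper's intuitive sentence, not the proposition in its model-checking formulation.
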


This provides the missing link for the following theorem.

\begin{theorem}
  \label{cor:homc}
  \MSL($\omega$) provability and HORS (cotrivial) model checking are interreducible.
\end{theorem}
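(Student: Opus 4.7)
The plan is to obtain the theorem as a corollary of the two reductions already developed in Section~\ref{sec:automaton_formulas}, packaging them together into interreducibility.

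For the direction $\MSL(\omega)$ provability $\leq$ HORS cotrivial model checking, I would invoke the $\MSL(\omega)$-as-HORS construction of Section~\ref{sec:MSL_reduction}, together with the proposition $D_0 \vDash G_0$ iff $\sem{\calG} \in \lang{A}$ stated immediately before the theorem. By construction the automaton $A$ has a single state accepting precisely the tree $\bot$, which is a cotrivial acceptance condition, so this proposition already supplies the reduction. The translation itself is evidently polynomial and preserves the order of the scheme.

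For the converse HORS cotrivial model checking $\leq \MSL(\omega)$ provability I would compose two steps. First, \citet{Kobayashi2009}'s reduction from HORS cotrivial model checking to HORS intersection untypeability with negative constant types; this is exactly the negative-typing discipline and DeMorgan dualisation installed at the start of Section~\ref{sec:types_reduction} and illustrated by the $\divi$ and $\calG_2$ examples. Second, Theorem~\ref{thm:untypeability_reduces_to_provability}, which reduces HORS intersection untypeability to $\MSL(\omega)$ provability through the clauses $D_\calG$ and the typing algorithm of Definition~\ref{def:typing_algorithm}. Composing the two gives the reduction, and both steps are polynomial, so no blow-up is introduced.

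The main obstacle will be to be precise that the $\MSL(\omega)$-as-HORS construction really targets \emph{standard} HORS cotrivial model checking rather than the strictly more permissive HORS-with-cases of Neatherway that it visibly uses. I would discharge this by compiling cases away: the case-switch is only over the finite predicate alphabet $\Pi$, and each case branch is determined by the head predicate of a definite clause, so for every non-terminal $f$ and every $P \in \Pi$ one can introduce a specialised non-terminal $f_P$ whose rule collects all clauses for $f$ headed by $P$; nondeterminism among co-headed clauses is realised by a single order-$0$ branching combinator. This compilation preserves both the order of the scheme and the cotrivial acceptance condition, so after it is in place the two reductions above combine to prove the theorem.
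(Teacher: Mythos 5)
Your core argument is the paper's own: Theorem~\ref{cor:homc} is obtained simply by packaging the two reductions of Section~\ref{sec:automaton_formulas} --- the forward direction is the \MSL($\omega$)-as-HORS construction of Section~\ref{sec:MSL_reduction} together with the proposition that $D_0 \vDash G_0$ holds iff the value tree of the constructed scheme is accepted by the one-state automaton, and the converse is \citet{Kobayashi2009}'s reduction of HORS (cotrivial) model checking to intersection untypeability composed with Theorem~\ref{thm:untypeability_reduces_to_provability}. On that part there is nothing to add (except that the translation does not preserve order: $\iota^{+}=\iota\to\iota$ raises it by one, which matters only for the complexity remarks, not for interreducibility).

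Where you depart from the paper is the last paragraph, and that step is both unnecessary for the paper's claim and, as sketched, unsound. The paper does not compile the case construct away: its target is explicitly HORS with cases in the sense of \citet{Neatherway2012}, and the complexity discussion in the conclusion accordingly appeals to \citet{Clairambault2018}. More importantly, per-predicate specialisation $f\mapsto f_P$ does not eliminate the case-switch, because in the encoding the ``state'' is a first-class argument: an atom $P\,(x\,\vv{s})$ in a clause body is translated to $x\,\vv{s}\,P$, so the predicate is consumed by whatever term is later substituted for the higher-order variable $x$, and the same variable may be applied to different predicates at different occurrences (e.g.\ a body $P\,(x\,a)\wedge Q\,(x\,b)$). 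Duplicating non-terminals by head predicate therefore leaves you unable to say which specialised version to pass as the argument instantiating $x$; removing cases requires transforming the \emph{types} of all higher-order arguments (e.g.\ tupling each value of type $\iota\to\iota$ into $|\Pi|$ components and selecting at use sites), i.e.\ essentially the known translation of schemes with finite data domains into plain HORS, after which order preservation needs its own argument. The order-$0$ branching combinator likewise obliges you to say how the single-state, $\bot$-accepting automaton treats the new branching terminal. None of this is needed if, like the paper, you read ``HORS (cotrivial) model checking'' as including the with-cases extension; if you do want the plain-HORS strengthening, the compilation must be the full product construction, not non-terminal duplication.
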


\section{Implementation \& Application}\label{sec:application}

We have implemented a decision procedure for \( \MSL(\omega) \) satisfiability in Haskell.
Recall that the full \( \HOMSL(\omega) \) language reduces to this fragment, see Section~\ref{sec:reductions}.
Our implementation incrementally rewrites clause bodies towards automaton form according to the rewrite relation from Section~\ref{sec:rewriting}, using automaton clauses that have already been discovered.
When a clause body is fully rewritten so a clause is automaton, further rewrites may become possible in other clause bodies, which are then reconsidered.
It is therefore important to retain partially rewritten clauses.
This procedure continues until no more automaton clauses can be produced and the set of clauses has been saturated.



\changed[jj]{To assess the viability of our \MSL($\omega$) decision procedure for higher-order verification, we study the case of socket programming in Haskell, where higher-order constraints arise naturally from the use of continuations in effectful code.

We implemented a Haskell library which provides an abstract typeclass of socket effects, one instance of which generates constraints whose satisfiability implies correct usage of the sockets. 
This approach alleviates the need for a heavyweight analysis front-end by exploiting a common pattern of coding with effects.
Typically, a program analysis front-end would take the source code of the program as input, internalise it as an AST and then walk over the AST to generate constraints; then a separate back-end would solve the constraints.
Our approach instead allows us to use a typeclass instance to generate constraints directly, without any need to process the program AST.
Doing this has practical advantages, because a standalone front-end usually requires regular updating to stay in sync with the syntax of a constantly evolving programming language.}

\paragraph{Socket API}

Socket APIs require the user to adhere to a strict protocol where only certain operations are permitted in each state.
Correctly tracking the state of sockets throughout a program can be difficult, much like with lazy IO, and is often the source of bugs.
Our implementation allows us to track not a single resource (e.g.\ file handler or socket) but countably many!

A socket can be in one of the following states: \emph{Ready}, \emph{Bound}, \emph{Listen}, \emph{Open}, \emph{Closed}.
%
%
The primitives modifying the state of a socket are summarised by the automaton 
in Figure~\ref{fig:socket-automaton}.
As these primitives operate in the \lstinline{IO}-monad, we encode them in an explicit continuation-passing style such that each primitive takes a socket and a continuation as arguments.
The socket and continuation are individuals (i.e.~type \( \iota \)), except in the case of \( \mathtt{Accept} \) that also creates a new socket and thus has a continuation of type \( \iota \rightarrow \iota \).
Each state is encoded as a predicate, with an additional \( \mathtt{Untracked} \) predicate whose meaning we explain below.

To account for the use of countably many resources, we employ a known trick that tracks the state of just one resource and non-deterministically chooses whether to track a newly created socket (unless one is already tracked)~\cite{kobayashi2013model, cookcook2007proving}.
In our case, the fresh socket is \changed[jj]{either
\begin{enumerate*}[label=(\arabic*)]
  \item labelled \( \underline{s} \) and subsequent operations acting on it contribute to the overall state or
  \item labelled \( \underline{u} \) and is untracked.
\end{enumerate*}
This approach suffices, because for each socket there exists a branch in which its behaviour is tracked and incorrect usage violates the overall state.}

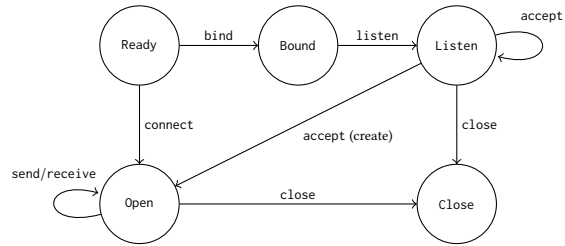
\begin{wrapfigure}{R}{0.55\textwidth}
  \begin{tikzpicture}[
      ->, auto, node distance=100pt,
      state/.style={circle, draw, minimum size=50pt}
    ]
    \begin{scope}[scale=0.6, transform shape]
      \node[state] (A)                    {$\mathtt{Ready}$};
      \node[state]         (B) [right of=A]       {$\mathtt{Bound}$};
      \node[state]         (D) [right of=B]       {$\mathtt{Listen}$};
      \node[state]         (C) [below of=A]       {$\mathtt{Open}$};
      \node[state]         (E) [below of=D]       {$\mathtt{Close}$};

      \path
      (A) edge              node {$\mathtt{bind}$} (B)
      (A) edge              node {$\mathtt{connect}$} (C)
      (B) edge              node {$\mathtt{listen}$} (D)
      (D) edge [loop right] node[above=0.4cm] {$\mathtt{accept}$} (D)
      (C) edge [loop left] node[above=0.4cm] {$\mathtt{send}/\mathtt{receive}$} (C)
      (D) edge node {$\mathtt{accept}$ (create)} (C)
      (C) edge              node {$\mathtt{close}$} (E)
      (D) edge              node {$\mathtt{close}$} (E);
    \end{scope}
  \end{tikzpicture}
  \caption{Socket states and operations manipulating them}\label{fig:socket-automaton}
\end{wrapfigure}

\paragraph{The \MSL($\omega$) clauses.}

Given a socket-manipulating Haskell program, the implementation computes a two-part \MSL($\omega$) formula that models its behaviour:
\begin{enumerate*}
  \item a formula that captures the socket protocol and \label{app:protocol}
  \item a clausal representation of the semantics of the program. \label{app:program}
\end{enumerate*}
Intuitively, a predicate is satisfied by a program when that program's usage of the tracked socket violates the protocol for the corresponding state.
The \( \mathtt{Untracked} \) predicate is satisfied by a program that violates the protocol \emph{for any} socket.
When predicates are supplied with the tracked socket, the clauses encode the complement of the automaton from Figure~\ref{fig:socket-automaton}; otherwise the state is unchanged.
Furthermore, when sockets are created in the \( \mathtt{Untracked} \) state, as described above, there are two clauses to account for whether the new socket is to be tracked or not.
If a socket is created in any other state, it is simply untracked to prevent junk branches where multiple sockets are tracked with overlapping states.

\paragraph{Extracting constraints.}

\changed[jj]{Part~(\ref{app:program}) the two-part \MSL($\omega$) formula is determined by the program.
However, our approach does not need to process the Haskell source code and obtain an AST.
Instead, we rely on a Haskell typeclass instance.}

The socket primitives are provided as methods of a typeclass refining the monad class, which is further parametrised by the type of sockets.
The instance of this typeclass for \lstinline{IO} behaves in the usual manner, but we also supply an instance for analysis whose sockets are variable names and which merely accumulates the effects as raw syntax, ignoring any parameters other than the socket and continuation.

The advantage of this approach is that processing the source code of the program is not required, instead relying on normal, program evaluation to construct the constraints.
One complication, however, is that we require any unbounded recursion to be made explicit to prevent an infinite evaluation of the program's definition.
We provide a method capturing this as part of the typeclass interface.
Recursion points are given a fresh name, to simulate as top-level definition of the program, and their bodies are analysed and attributed to those function names.
Once a collection of recursive top-level definitions has been identified with an additional entry point we generate clauses unfolding each definition, without changing state, as in Theorem~\ref{thm:untypeability_reduces_to_provability}.

Furthermore, the program cannot arbitrarily depend on runtime data such as the value received by a socket, the effects of the program must be statically known.
Branching code is not completely precluded however.
Inspired by the \emph{selective} extension of applicative functors that supports finite branching on runtime data, we add a branch combinator \( \mathtt{branch} : \mathtt{Bool} \rightarrow f a \rightarrow f a \rightarrow f a \), encoded as multiple clauses that disregard the condition~\cite{mokhov2019selective}.

\paragraph{Examples.}

We tested our tool on example socket-manipulating programs taken from StackOverflow (with values modified).
When presenting these examples, we will use Haskell's \lstinline{do}-notation and the ``bind'' operators (\( \gg \), \( \gg\!= \)) for monadic actions as exposed in the user-interface, which are to be understood as syntactic-sugar for the underlying continuation passing-style.
The first, with the original program on left, violates the protocol on line 8 where it attempts to send a message over \( \texttt{soc} \) which is in the \( \mathtt{Listening} \) state after line 5\footnote{\url{https://stackoverflow.com/questions/62052147/haskell-sendall-message-to-socket-client-results-in-exception-network-socket}}.
The tool was able to correctly detect the bug in 4.9ms, and accepted the correction (on the right) after 3.4ms.

\begin{minipage}[t]{0.45\textwidth}
  \begin{lstlisting}[label={ex:socket1-wrong}]
main = do
  soc <- socket
  bind soc 1234
  listen soc
  x <- accept soc
  forever $ do
    receive x
    send soc "Hi!"
\end{lstlisting}
\end{minipage}
\begin{minipage}[t]{0.49\textwidth}
  \begin{lstlisting}[label={ex:socket1-correct}]
main = do
  soc <- socket
  bind soc 1234
  listen soc
  x <- accept soc
  forever $ do
    receive x
    send x "Hi!"
\end{lstlisting}
\end{minipage}

The following toy example makes use of our branching construct.
When run in the \lstinline{IO} monad, this will behave just like an \lstinline{if-then-else} clause, for analysis, however, both branches are explored.
The snippet initialises a socket and repeatedly receives a message until it is ``closed'' when it closes the socket.
In the version on the left, the loop continues regardless thus attempting to receive from a closed socket.
This implementation violates the protocol and is detected by our tool in 4.2ms.
The fix, on the right, exits the loop once the socket is closed 5.1ms.

\begin{minipage}[t]{0.49\textwidth}
  \begin{lstlisting}
main = do
  soc <- socket
  bind soc 1234
  listen soc
  x <- accept soc
  forever $ do
    msg <- receive x
    branch (msg == "close")
      (close x)
      (pure ())
\end{lstlisting}
\end{minipage}
\begin{minipage}[t]{0.49\textwidth}
  \begin{lstlisting}
main = do
  soc <- socket
  bind soc 1234
  listen soc
  x <- accept soc
  fix $ \k -> do
    msg <- receive x
    branch (msg == "close")
      (close x >> k)
      (k ())
\end{lstlisting}
\end{minipage}


\section{Conclusion and Related Work}
\label{sec:related}

We have proposed new classes of constraints that are designed to capture the complex, higher-order behaviours of programs with first-class procedures. 
We developed their theory to (a) show decidability of the classes and (b) situate them with respect to higher-order program verification. 
We also described an implementation and its application to the verification of socket programming.

\changed[jj]{
\paragraph{Complexity.}
Our reduction of intersection typeability to \MSL($\omega$) satisfiability gives us $(n-1)$-EXPTIME hardness of \MSL($\omega$) satisfiability.
Furthermore, the reduction of order-$n$ \MSL($\omega$) to order-$(n+1)$ HORS with cases provides an $(n+1)$-EXPTIME upper bound, thanks to a result by \citet{Clairambault2018}. 
We derive this same naive upper bound directly from the decision procedure, where every application of \ruleName{Step} or \ruleName{Assm} on an order-$n$ symbol has $n$-exponentially many candidate side conditions.
Further study is required to obtain a tighter upper bound.
}

\subsubsection*{Related work}
We survey some of the work that is most closely related to our own.

\paragraph{Automata, types, and clauses.}
MSL was proposed independently by \citet{Weidenbach1999} and \citet{nielson2002normalizable} (as \HOne{}), with \citet{goubault-larrecq2005} providing the bridge between the two.
Since then, it has been extended beyond Horn and with the addition of straight dismatching constraints in \citet{teucke2017sdm}.
Recall that the solved form of clauses for the first-order \HOne{} fragment were named \emph{automaton} clauses because of their shape, a connection that has also been made in~\citet{Weidenbach1999,Nagaya2002rewriting}.
This name is equally justified for our (higher-order) automaton clauses, since they, too, define finite tree automata \cite[via intersection types,][]{broadbent2013saturation}.
The relationship between higher-order automata, types, and particular sets of clauses goes back to \citet{Fruhwirth1997}.

\paragraph{Set constraints}
Set constraints are a powerful language that has been very influential in program analysis \cite{aiken1999sets}.
They are known to be equivalent to the monadic class \cite{bachmair-etal1993monadic} and, therefore, have a very close connection with \( \MSL \).
Higher-order set constraints have also been considered, defining sets of terms rather than higher-order predicates much like \( \MSL(\omega) \)~\cite{goubault2002higher}.
Although the relationship between our constraints and those of \emph{loc cit} is not well understood, we point out that their constraints are solvable in $2$-NEXPTIME, whereas satisfiability in our class is $(n-1)$-EXPTIME hard.

\paragraph{HORS model checking}
There is a strong connection between traditional higher-order model checking with higher-order recursion schemes \cite[e.g.][]{kobayashi2013model} and \( \MSL(\omega) \) problems, as witnessed by their interreducibility.
Many approaches to inferring and verifying types for higher-order recursion schemes have been considered, but the most closely related to our work is the saturation-based approach considered by \citet{broadbent2013saturation}.
The main novelty of their algorithm is that typing constraints are propagated backwards starting from the final (unaccepted) states, rather than the forward from the target state.
While backward propagation is analogous to goal-orientation search, attempting to derive clauses in order to rewrite the goal, their saturation-based approach is similar to our accumulation of automaton clauses in a bottom-up manner.
Furthermore, follow up work improved upon the efficiency of the saturation-based approach by representing intersection types as a type of binary decision diagrams that compactly describes a family of sets~\cite{terao2014zdd}.
More work needs to be done to draw a detailed comparison between these algorithms and our own.
Furthermore, as many of these algorithms are in their second or third generation, there will be possible optimisations that can be transferred to our own setting, in addition to novel optimisations that take advantage of our setting.



\paragraph{HFL model checking}
Higher-order fixed-point logic, HFL, is a very expressive logic also designed as an appropriate language for program verification~\cite{kobayashi2021overview}.
It is more expressive than higher-order constrained Horn clauses in general, and our fragment in particular, by supporting both the greatest and least fixed-point.
This duality allows it to express liveness properties as well as safety properties.
Furthermore, this logic allows for a background constraint theory.
In the pure case, HFL is known to be decidable  by reduction to intersection typing problem~\cite{hosoi2019type}.

\paragraph{Refinement type checking and constrained Horn clauses}
It was observed by \citet{rybalchenko-etal2012} that a standard approach taken to solving refinement type inference problems, such as \citet{jhala-etal-cav2011,unno-kobayashi-PPDP2009,terauchi-popl2010}, is essentially a reduction to constrained Horn clause solving.  
Although only first-order, these systems of constraints are extremely expressive since they incorporate an arbitrary background theory, such as linear arithmetic or the theory of algebraic datatypes.
Consequently, they are typically undecidable.
Constrained Horn clauses were lifted to higher-order by \citet{burn2018hochc}, and the theory further explored in \citet{OngWagner2019}. 
In a follow-up work the same authors identified a family of decidable fragments intended for applications in database aggregation \cite{burn2021datalog}.

\changed[ch]{
\paragraph{Uniform proofs and logic programming.}
The formulation of our fragments and the proof system that underlies them follows the elegant presentation in the work of Miller and his collaborators, such as \citet{miller-etal1991uniform,miller_nadathur_2012}.  In particular, one can recognise their \emph{fohc}, \emph{hohc}, and \emph{hohh} as the underlying formalisms behind our $\MSL(1)$ clauses, $\HOMSL(\omega)$ clauses, and higher-order automaton clauses respectively.  Of course, we could have presented our fragments of HOL in a more traditional format for automated reasoning (e.g. with clauses as multisets of literals), but we consider the compositional characterisation that is characteristic of Miller's work essential for a clear exposition once we have to deal with the combination of nested clauses (in the sense of hereditary Harrop) and higher-order constructs.
}

\changed[ch]{
\paragraph{Constructive logic and `Horn Clauses as Types'}
Over a series of papers, Fu, Komendantskaya, and co-authors have presented a comprehensive analysis of Horn clauses and resolution according to the propositions-as-types tradition \cite{fu-Komendantskaya-lopstr2015,fu-etal-flops2016,fu-Komendantskaya-facs2017,farkaThesis}.
Like our work, they cast resolution as a form of rewriting, studying a number of different variations on the standard approach that have been motivated by the desire to capture computations with infinite data.  Using  Howard's System \textbf{H} \cite{howard}, they give a type theoretic semantics to each form of resolution, and this allows for a more meaningful notion of soundness and completeness than the traditional method using Herbrand models.  Since it is in the propositions-as-types tradition, their work views a Horn clause as the type of its proofs.  By contrast, we view an automaton clause with a single free variable $x$ as a type inhabited by the terms that satisfy the clause (when substituted for $x$).  Consequently, we do not make use of the constructive content of the resolution proofs themselves, but rather view resolution simply as a mechanism for generating a new clause from two given clauses -- i.e. a way to infer new types.
}

\begin{acks}                            
  We gratefully acknowledge the support of the \grantsponsor{EPSRC}{Engineering and Physical Sciences Research Council}{http://https://epsrc.ukri.org} (\grantnum{EPSRC}{EP/T006595/1}) and the National Centre for Cyber Security via the UK Research Institute in Verified Trustworthy Software Systems.
  We are also very grateful for the help of the reviewers in making the paper more clear and accurate, and for suggesting related work.
\end{acks}

\bibliography{references}

\appendix
\section{Semantics}
\label{appx:sem}

\newcommand\calA{\mathcal{A}}
\newcommand\trees[1]{\mathfrak{T}_{#1}}
\newcommand\cons[1]{c^{#1}}
\newcommand\limmod{\beta_\mathrm{lim}}  
\newcommand\ifun[1]{\mathsf{i}_{#1}}
\newcommand\jfun[1]{\mathsf{j}_{#1}}

Our fragments use the standard semantics of HOL.
An explicit denotational semantics is provided below.

\paragraph{Interpretation of types.}
Let $\Sigma$ be a constructor signature. 
A $\Sigma$-\emph{structure} $\calA$ assigns a non-empty set $A_\iota$ to the semantic domain of $\iota$ and an element $c^a \in \sem{\gamma}$ to each constructor $a : \gamma \in \Sigma$, with types interpreted as:
\[
  \sem{\iota} \defeq A_\iota
  \qquad \sem{o} \defeq \mathbb{B}
  \qquad \sem{\sigma_1 \to \sigma_2} \defeq \left[\sem{\sigma_1} \to \sem{\sigma_2}\right]
\]
where $\mathbb{B}$ denotes the boolean lattice and $\left[\sem{\sigma_1} \to \sem{\sigma_2}\right]$ denotes the full function space between $\sem{\sigma_1}$ and $\sem{\sigma_2}$, as dictated by the standard semantics of HOL.

\paragraph{Interpretation of terms and goal formulas.}
Let $\Pi'$ consist of predicate signature $\Pi$ and a variable environment $\Delta$. 
A $\Pi'$-\emph{valuation} (typically $\alpha$) is a function such that $\alpha(x) \in \sem{\sigma}$, for all $x:\sigma \in \Pi'$ (taking the liberty to write $x$ for both variables and predicates in $\Pi'$.)

By some abuse, we consider goal formulas as goal \emph{terms}, treating logical symbols as constants with their usual interpretations from HOL:
\begin{align*}
    \sem{x : \sigma}(\alpha) &= \alpha(x) \\
    \sem{a : \gamma}(\alpha) &= \cons{a} \\
    \sem{\truetm : o}(\alpha) &= 1 \\
    \sem{s \, t : \sigma_2}(\alpha) &= \sem{s : \sigma_1 \to \sigma_2}(\alpha)(\sem{t : \sigma_1}(\alpha)) \\
    \sem{ \lambda x:\sigma_1.\, t : \sigma_2}(\alpha) &= \lambda s \in \sem{\sigma_1}.\, \sem{t : \sigma_2}(\alpha[x \mapsto s])\\
    \sem{\wedge : o \to o \to o}(\alpha) &= \lambda b_1 \, b_2.\, \min\{ b_1, b_2 \} \\
    \sem{\exists_\sigma : (\sigma \to o) \to o}(\alpha) &= \lambda r.\, \max \{r\,s \mid s \in \sem{\sigma}\}
\end{align*}
where $\sigma$ is either a constructor type $\gamma$ or predicate type $\rho$, and the $\Sigma$-structure $\calA$ and environment $\Pi'$ are left implicit.

\paragraph{Interpretation of clauses.}
Interpreting clauses requires additional constants, again with their standard semantics.
Since the types are clear, we omit them.
\begin{align*}
    \sem{\vee}(\alpha) &= \lambda b_1 \, b_2.\, \max\{ b_1, b_2 \} &
    \sem{\neg}(\alpha) &= \lambda b.\, 1 - b \\
    \sem{{\Rightarrow}}(\alpha) &= \lambda b_1 \, b_2.\, \sem{\lor}\,(\sem{\neg}\,b_1)\,b_2  &
    \sem{\forall_\sigma}(\alpha) &= \lambda r.\, \min \{r\,s \mid s \in \sem{\sigma}\}
\end{align*}
We have allowed some redundancy in the logical symbols for convenience:
the semantics of $\forall_\sigma$ can equivalently be formulated in terms of $\neg$ and $\exists_\sigma$, and $\lor$ in terms of $\neg$ and $\land$.

The above determines the semantics of clauses-treated-as-terms.
This behaves as expected, with the semantics $\sem{\forall \yy.\, G \Rightarrow A}(\alpha)$ of definite clause $\forall \yy.\, G \Rightarrow A$ given by:
\begin{align*}
    \min \left\{\max \{\sem{A}(\alpha[\yy \mapsto \vv{s}]), 1 - \sem{G}(\alpha[\yy \mapsto \vv{s}]) \} \;\middle|\; \vphantom{\bigcup} \forall y_i \in \yy.\, s_i \in \sem{\sigma_i} \right\}
\end{align*}
I.e.~a clause valuates to $0$ iff there exists a valuation of the top-level universal variables such that the body valuates to $1$ and the head to $0$.

\paragraph{Satisfaction.}

A model of $D$ over $\Sigma$ and $\Pi$ consists of a $\Sigma$-structure $\calA$ and a $\Pi$-valuation $\alpha$ such that $\sem{D}(\alpha) = 1$.
We say $D \vDash G$ just if $\sem{G}(\alpha) = 1$ for all models of $D$.

\subsection{Completeness of the proof system}
\label{appx:complete_proof_system}

This section demonstrates soundness and completeness of the Figure~\ref{fig:prf-goal} proof system with respect to the standard semantics of HOL.

\proofSystemCompleteness*

Soundness follows from each rule being clearly admissible in HOL.

For completeness, suppose $D \models G$ over constructor signature $\Sigma$ and predicate signature $\Pi$.
We assume $D$ contains a universal clause $(\forall \yy.\, \truetm \Rightarrow \univRel_\rho\,\yy)$ for every type $\rho$ occurring in $D$ or $G$.
We refer to $\univRel_\rho$ as a \emph{universal relation}.
  
We construct an no-instance of the HOCHC satisfiability problem over the equational theory of first-order trees as background.
The instance is constructed by defunctionalising the higher-order constructors and simulating both pattern-matching clause heads and the higher-order typing discipline on first-order trees with equational constraints.

\paragraph{The HOCHC signature.}

Types $\tau$ are transformed to HOCHC types $\trSem{\tau}$:
\[
  \trSem{\gamma} \defeq \iota
  \qquad \trSem{o} \defeq o
  \qquad \trSem{\sigma \to \rho} \defeq \trSem{\sigma} \to \trSem{\rho}
\]

We flatten our constructors to first-order, where the only non-nullary constructor is binary application, written $\_\cdot{}\_$.
A family of new unary predicate symbols $T_\gamma$ is axiomatised so as to restrict $x : \gamma$ to represent a well-formed tree of type $\gamma$ under the original higher-order signature $\Sigma$.
  
Let $\existSort{D \land G}$ contain all types of existential variables in $D \land G$, so we can define first-order signature $\Sigma_\FO$ and set of predicates $\Delta$.
\begin{align*}
  \Sigma_\FO
    &\defeq \set{ f : \iota \mid f : \gamma \in \Sigma } \cup \set{ \_\cdot{}\_ : \iota \to \iota \to \iota  } \\
  \Delta 
    &\defeq \Pi \cup \set{ T_\gamma: \iota \to o \mid \gamma \in \existSort{D \land G} \lor (f : \gamma_1 \to \dots \to \gamma_n \in \Sigma \land \gamma \in \set{\gamma_1,\dots,\gamma_n}) } 
\end{align*}

\paragraph{The transformation.}

We define a term transformation $\trSem{t}$ by:
\changed[jj]{
\[
  \trSem{x} \defeq x 
  \qquad \trSem{c} \defeq c
  \qquad \trSem{P} \defeq P
  \qquad \trSem{s\,t} \defeq \left\{ \begin{array}{ll}
      \trSem{s} \cdot \trSem{t} & \text{if } s\,t : \gamma \\
      \trSem{s}\,\trSem{t} & \text{o/w}
    \end{array}\right.
\] }
We extend this to goal formulas $\trSem{G}$ by: 
\begin{alignat*}{3}
  \trSem{\truetm} & \defeq \truetm & \qquad\quad
  \trSem{\exists x:\rho.\,G} & \defeq \exists x:\trSem{\rho}.\, \trSem{G} \\
  \trSem{G \wedge H} & \defeq \trSem{G} \wedge \trSem{H} & \qquad\quad 
  \trSem{\exists x:\gamma.\,G} & \defeq \exists x:\iota.\, T_\gamma\,x \land \trSem{G} 
\end{alignat*}
We extend to definite formulas $\trSem{C}$ by:
\begin{alignat*}{3}
  \trSem{\truetm} & \defeq \truetm & \qquad\quad
  \trSem{\forall \vv{y}.\, G \implies P\,\vv{y}} & \defeq \forall \vv{y}.\, \trSem{G} \implies P\,\vv{y}\\
  \trSem{C \wedge D} & \defeq \trSem{C} \wedge \trSem{D} & \qquad\quad 
  \trSem{\forall \vv{y}.\, G \implies P\,(c\,\vv{y})} &\defeq \forall z.\, \trSem{G}_{z=c\yy} \implies P\,z 
\end{alignat*}
where
\[
  \trSem{G}_{z=c\yy} \defeq \exists \yy.\, z = \trSem{c\,\yy} \land \bigwedge_{y_i:\gamma_i \in \yy} T_{\gamma_i}\,y_i \land \trSem{G}
\]
  
Finally, we define an equi-satisfiable HOCHC instance over the first-order equational theory of trees with signature $\Sigma_\FO$ and predicates $\Delta$:
\begin{align*}
  \Dhochc &\defeq \trSem{D} \land \set{ \forall x.\, \trSem{\truetm}_{x=f y_1 \dots y_n} \implies T_\gamma\,x \mid f:\gamma_1 \to \dots \to \gamma_n \to \gamma \in \Sigma } \\
  \Ghochc &\defeq \trSem{G}
\end{align*}

The transformation $\trSem{\_}$ has a straightforward inverse $\rtSem{\_}$ on existential-free goal formulas.

Note that all universal variables are guarded by a $T_\gamma$ of the appropriate type.
Since universal variables do not occur in goal formulas, those do not need to be guarded.

%

\paragraph{The HOCHC proof system.}
      
\citet{OngWagner2019} propose a simple, sound and refutationally complete resolution proof system for HOCHC (under the standard semantics of HOL) that we shall call the \emph{HOCHC proof system}.
Any resolution proof for our HOCHC instance requires only two of their three proof rules, due to the lack of internal $\lambda$s and the fact that the HOCHC proof system does not introduce those.
In the terminology of our paper, these two proof rules are:
\[
  \begin{array}{cc}
    \!
    \prftree[l]
      { \OWRef }
      { \textstyle \exists \yy. \bigwedge_{i \in [1..n]} \varphi_i \land \exists \xx. \bigwedge_{j \in [1..m]} x_{k_j}\,\vv{M_j} }
      { \top }
      
    \;\,

    \prftree[l,r]
      { \!\!\!\! $(\forall \yy.\,G' \Rightarrow P\,\yy) \in \Dhochc$ }
      { \OWRes }
      { \exists \zz.\, G \land P\,\vv{M} }
      { \exists \zz.\, G \land G'[\vv{M}/\yy] }
      \end{array}
  \]
where $\varphi_1,\dots,\varphi_n$ are constraints from the equational theory of trees for which there exists a satisfying assignment $\alpha$ (i.e.~$\alpha \vDash \varphi_1 \land \dots \land \varphi_n$ such that $\alpha(y)$ is a finite tree over $\Sigma_\FO$, for all $y \in \yy$).
Note that $\xx$ in \OWRef~are higher-order and $\yy$ first-order existentials.
Proof rule \OWRef~implicitly assigns universal relations to $\xx$;
these are part of the semantic domain and guaranteed to satisfy the formula. 
            
We write $\Dhochc \vdash G' \tto G''$ to denote $G'$ as a premise for $G''$ in the HOCHC proof system, with $\Dhochc \vdash G' \tto^* G''$ its reflexive, transitive closure.

\begin{notation}
  \label{lem:sem_shape}
  Every HOCHC goal formula $G'$ can be written as $\exists \yfo \, \xho.\, \gphi \land \gv \land \ga$ where:
  \begin{itemize}
    \item $\gphi$ is existential-free and contains the constraints and atoms headed by $T_\gamma$,
    \item $\gv$ is existential-free and contains the atoms headed by $x \in \xho$, and
    \item $\ga$ is existential-free and contains the remaining atoms.
  \end{itemize}
  We may write $\gnphi$ for $\gv \land \ga$.
\end{notation}

\paragraph{HOCHC semantics.}

Recall that $D \vDash G$ means ``every model of $D$ satisfies $G$'', much like $\Dhochc \vDash \Ghochc$ means ``every model of $\Dhochc$ satisfies $\Ghochc$''.
Note, however, that a HOCHC instance has a fixed background theory from which it inherits $\sem{\iota}$ and interpretations of the constants in $\Sigma$.
Intuitively, our quantifier ranges over all $\Pi$-valuations and $\Sigma$-structures, while the HOCHC quantifier only ranges over $\Pi$-valuations that extend the background theory.

Consequently, we can choose our $\Sigma$-structure in the HOCHC case to match the background theory of trees, and it is not difficult to see that $D \vDash G$ implies $\Dhochc \vDash \Ghochc$.

\paragraph{Proof outline.}

By refutation completeness of the HOCHC proof system, there exists a HOCHC proof $\Dhochc \vdash \Ghochc \tto^* \top$ with satisfying assignment $\alpha$ to the first-order existentials.

We replay this HOCHC proof in our Figure~\ref{fig:prf-goal} proof system starting from $G$.  
Since all existential variables are guarded by typing predicates $T_\gamma$ and all elements of the universe of the theory of (finite) trees are representable by terms, the satisfying assignment can be explicitly constructed and implemented using \ruleName{Ex}.
Furthermore, the presence of universal clauses allow us to prove existential-headed atoms using \ruleName{Ex} as well\footnote{Although the type of a HOCHC existential may not match the corresponding original existential, the existence of the appropriate universal clause in $D$ is guaranteed.}.
Hence $D \vdash G$.

\paragraph{From constraints to type-respecting substitutions.}

The purpose of our equality constraints and $T_\gamma$ predicates is to restrict the domain of HOCHC variable $y : \iota$ to well-formed trees of type $\gamma$ under the original higher-order signature;
they guarantee that every satisfying assignment $\alpha$ from an application of \OWRef~respects the original higher-order types.
 
We say that an assignment $\alpha$ to $\yy$ is \emph{type-respecting} just if, for all $y \in \yy$, there exists a closed tree $t:\gamma$ over $\Sigma$ such that $\alpha(y) = \trSem{t}$ and $\gamma$ is the original type of $y$. 
 
\begin{lemma}
  \label{lem:sem_single_constraint}
  If $\Dhochc \vdash \exists y.\, T_{\gamma}\,y \tto^* H \tto \top$ with satisfying assignment $\alpha$, then $\alpha$ is type-respecting.
\end{lemma}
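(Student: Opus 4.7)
The plan is to induct on the length of the HOCHC derivation $\Dhochc \vdash \exists y.\, T_\gamma\,y \tto^* H \tto \top$, using as induction hypothesis that whenever $\Dhochc \vdash \exists \vv{y}.\, \gphi \land \bigwedge_i T_{\gamma_i}\,y_i \tto^* \top$ with satisfying assignment $\alpha$, $\alpha$ is type-respecting (this is the shape we expect to be preserved by resolution, following Notation~\ref{lem:sem_shape}).

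First I would observe that in the base goal $\exists y.\, T_\gamma\,y$, the only atom is headed by $T_\gamma$, which is a predicate symbol, so the closing \OWRef{} step cannot apply while $T_\gamma\,y$ remains; some sequence of \OWRes{} steps must fire. By construction of $\Dhochc$, the only clauses whose head is $T_\gamma$ are the axiomatic clauses $\forall x.\, \trSem{\truetm}_{x = f\,y_1 \dots y_n} \implies T_\gamma\,x$ for each $f : \gamma_1 \to \dots \to \gamma_n \to \gamma \in \Sigma$. Resolving against such a clause replaces $T_\gamma\,y$ by $\exists y_1 \dots y_n.\, y = \trSem{f\,y_1 \dots y_n} \land \bigwedge_i T_{\gamma_i}\,y_i$. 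Thus after one resolution step the goal has the invariant shape above, with a single first-order equality constraint $y = f \cdot \trSem{y_1} \cdots{} \cdot \trSem{y_n}$ accumulated into $\gphi$, and new $T_{\gamma_i}$-guarded existentials over subtrees of the appropriate type.

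Next I would show that this shape is preserved by further \OWRes{} steps: any resolution against another $T_{\gamma'}$-axiom only refines one of the $T_{\gamma_i}\,y_i$ conjuncts into a further equality constraint together with fresh $T_{\gamma''}$-atoms over subtrees, and accumulates these into $\gphi$ and $\bigwedge_i T_{\gamma_i}\,y_i$ respectively. In particular, once the final \OWRef{} step applies, every $T$-atom has been eliminated, and $\gphi$ is a conjunction of equality constraints that together rigidly define each first-order existential as a first-order term over $\Sigma_\FO$ whose shape mirrors a closed higher-order term constructed from $\Sigma$. Since \OWRef{} requires a satisfying assignment over finite trees, the assignment $\alpha$ must send each existential to precisely $\trSem{t}$ for the closed higher-order term $t$ read off from these equalities, and the original type discipline is respected by construction of the $T_\gamma$-axioms.

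The main obstacle I anticipate is carefully formulating the invariant so that it survives arbitrary interleavings of resolution steps on different $T_{\gamma_i}$-atoms, and confirming that no other clause in $\Dhochc$ can ever produce a $T_\gamma$-atom as a head (so the only way to discharge $T_\gamma\,y$ is via the intended axioms). This is essentially a clause-head inspection in the construction of $\Dhochc$: the transformed clauses $\trSem{\forall \yy.\, G \implies P\,(c\,\yy)}$ are all headed by a user predicate $P \in \Pi$, not by any $T_\gamma$, so the invariant holds unconditionally. From this the result follows by straightforward induction on the derivation.
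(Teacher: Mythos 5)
Your proposal is correct and rests on the same core ingredients as the paper's proof: induction on the length of the resolution derivation, the observation that the only clauses in $\Dhochc$ with a $T_\gamma$-head are the constructor axioms (all transformed user clauses are headed by some $P \in \Pi$), and the fact that the resulting equality constraints over finite trees force $\alpha$ to map each existential to the encoding $\trSem{t}$ of a well-typed closed term. The organization differs, though. The paper applies the induction hypothesis \emph{per child}: after the first \OWRes{} step on $T_\gamma\,y$ it extracts, for each fresh $z_j$, a shorter subderivation of $T_{\gamma_j}\,z_j$ to which the IH gives $\alpha(z_j) = \trSem{t_j}$ with $t_j:\gamma_j$, and then appeals to soundness of the HOCHC system to recover the equality $y = \trSem{f\,\zz}$ at the parent, concluding $\alpha(y) = \trSem{f\,\vv{t}}$. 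You instead carry a global invariant on the goal shape through arbitrary interleavings and read the whole assignment off the accumulated triangular equality system at the final \OWRef{} step; this sidesteps the soundness appeal (the equalities persist into $H$, which \OWRef{} requires $\alpha$ to satisfy) at the cost of having to argue explicitly that the invariant survives interleaved resolutions and that the acyclic, everywhere-constrained equality system determines a closed term of the right type $\gamma$ -- which is exactly the content the paper's per-child IH delivers for free, and which you assert ("respected by construction of the axioms") rather than spell out. Both routes go through; the paper's decomposition keeps the typing bookkeeping local to each inductive call, while yours makes the role of the constraint store more transparent.
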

\begin{proof}
  \citet{OngWagner2019} leave existential quantifiers implicit in the HOCHC proof system.
  For convenience, we do the same here.
  We use a straightforward induction on length $k$ in $\Dhochc \vdash T_\gamma\,y \tto^k  H \tto \top$.
  
  Clearly, $k$ cannot be zero, so suppose $k=1$.
  This first proof step must be \OWRes~with a side condition $(\forall x.\, x=\trSem{f} \Rightarrow T_\gamma\,x) \in \Dhochc$ with $f:\gamma \in \Sigma$, so $H = (y = \trSem{f})$.
  This means that $y$ equals $\trSem{f}$ under $\alpha$, so $\alpha(y) = \trSem{f}$ witnesses that $\alpha$ is type-respecting.
  
  Suppose $k > 1$ with side condition $(\forall x.\, \exists \zz.\, x = \trSem{f\,\zz} \land \bigwedge_{j \in [1..m]} T_{\gamma_j}\,z_j \Rightarrow T_\gamma\,x) \in \Dhochc$ with $f:\gamma_1 \to \dots \to \gamma_m \to \gamma \in \Sigma$.
  The result of this first step is $G'' = (y = \trSem{f\,\zz}) \land \bigwedge_{j \in [1..m]} T_{\gamma_j}\,z_j$, still omitting quantifiers.
  
  For all $j \in [1..m]$, $\alpha$ is a satisfying assignment for some $\Dhochc \vdash T_{\gamma_j}\,z_j \tto^{k_j} \bot$ with $k_j < k$.
  By the IH, $\alpha(z_j) = \trSem{t_j}$ for closed tree $t_j:\gamma_j$ over $\Sigma$. 
  
  Now, $\alpha$ satisfies $H$ and, by soundness of the HOCHC proof system, also $G''$.
  In particular, this means $y$ equals $\trSem{f\,\zz}$ under $\alpha$, so $\alpha(y)$ equals $\trSem{f\,\vv{t}}$ for closed tree $f\,\vv{t} : \gamma$ over $\Sigma$, as required.
\end{proof}

Observe from the transformation $\trSem{\_}$ that all first-order existentials occurring in a HOCHC proof of a $\trSem{G'}$ are guarded by the appropriate $T_\gamma$. 
It is easy to see that the above lemma generalises to $\Dhochc \vdash \exists \yfo\,\xho.\, \bigwedge_{y_i:\gamma_i \in \yfo} T_{\gamma_i} \, y_i \land G^0 \tto^* H \tto \top$ with existential-free $G^0$.
Since every $\trSem{G'}$ can be written in this shape, 
the following holds.

\begin{lemma}
  \label{lem:sem_type_respecting}
  If $\Dhochc \vdash \trSem{G'} \tto^* \top$ with satisfying assignment $\alpha$, then $\alpha$ is type-respecting.
\end{lemma}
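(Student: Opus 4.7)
The plan is to lift Lemma~\ref{lem:sem_single_constraint} from the single-guard case $\exists y.\, T_\gamma\,y$ to the full shape $\exists \yfo\,\xho.\, \bigwedge_{y_i:\gamma_i \in \yfo} T_{\gamma_i}\,y_i \land G^0$ (with $G^0$ existential-free) exhibited by every image $\trSem{G'}$ of the transformation. The key structural observation is that $\trSem{\_}$ guards every first-order existential with a $T_\gamma$ atom of the correct original type: this holds both in goal formulas (the clause $\trSem{\exists x:\gamma.\,G} \defeq \exists x:\iota.\,T_\gamma\,x \land \trSem{G}$) and inside clause bodies via the $\trSem{G}_{z=c\yy}$ pattern, whose internal existentials $\yy$ come with a $\bigwedge_{y_i:\gamma_i \in \yy} T_{\gamma_i}\,y_i$ conjunct. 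A routine inspection of \OWRes~applied to clauses of $\Dhochc$ shows this invariant is preserved along $\tto^*$, so every intermediate goal still has the required shape.

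Next, I would decompose the derivation $\Dhochc \vdash \trSem{G'} \tto^* \top$ along the conjunctive structure of the goal. Each step in the HOCHC proof system acts on a single atomic subgoal: \OWRes~resolves one atom against a clause while copying the remaining conjuncts, and the terminating \OWRef~step requires every atomic conjunct to have been reduced beforehand, because a $T_{\gamma_i}\,y_i$ literal cannot be discharged as one of the permitted $x_{k_j}\,\vv{M_j}$ atoms (the head $T_{\gamma_i}$ is a fixed predicate, not a higher-order existential in $\xho$). A standard permutation argument then extracts, for each guard $T_{\gamma_i}\,y_i$, a sub-derivation $\Dhochc \vdash T_{\gamma_i}\,y_i \tto^* \top$ whose side assignments to $y_i$ agree with $\alpha$.

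Applying Lemma~\ref{lem:sem_single_constraint} to each such sub-derivation yields $\alpha(y_i) = \trSem{t_i}$ for some closed tree $t_i:\gamma_i$ over the original signature $\Sigma$; this establishes that $\alpha$ is type-respecting on $\yfo$, which is all the definition requires (the higher-order existentials $\xho$ are not constrained by the notion). The main obstacle I expect is handling the equality constraints generated by \OWRes~against the axioms $\forall x.\, \trSem{\truetm}_{x=f y_1 \dots y_n} \implies T_\gamma\,x$, which can entangle a guard $T_{\gamma_i}\,y_i$ with residual equations on freshly introduced existentials; however, because these constraints live purely in the first-order theory of trees and $\alpha$ simultaneously satisfies the full conjunction $\varphi_1 \land \cdots \land \varphi_n$ at the terminal \OWRef~step, one can peel off the relevant equations and reuse exactly the induction of Lemma~\ref{lem:sem_single_constraint} to force $\alpha(y_i)$ to be the image of a well-formed $\Sigma$-tree of type $\gamma_i$.
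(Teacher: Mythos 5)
Your proposal follows essentially the same route as the paper: the paper's proof simply observes that the transformation guards every first-order existential with the appropriate $T_\gamma$ atom and then states that Lemma~\ref{lem:sem_single_constraint} generalises from the single-guard case to goals of shape $\exists \yfo\,\xho.\, \bigwedge_{y_i:\gamma_i \in \yfo} T_{\gamma_i}\,y_i \land G^0$, which covers every $\trSem{G'}$. Your extraction of per-guard sub-derivations and the handling of the residual equality constraints at the terminal \OWRef{} step just spell out the details the paper leaves implicit in that generalisation, so the argument is correct and aligned with the paper's.
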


Note that any satisfying assignment $\alpha$ to $\Dhochc \vdash \exists \yfo\, \xho.\, \gphi \land \gnphi \tto^* \top$ is also a satisfying assignment to $\Dhochc \vdash \exists \yfo\, \xho.\, \gnphi \tto^* \top$. 
We know that $\alpha$ respects original types, so $[\rtSem{\res{\alpha}{\yfo}}/\yfo]$ is a well-defined substitution in our original program, giving rise to the following lemma that allows us to replay a HOCHC proof in our Figure~\ref{fig:prf-goal} proof system.

\begin{lemma}[Replay lemma]
  \label{lem:sem_replay}
  If $\Dhochc \vdash \exists \yfo \, \xho.\, \gphi \land \gnphi \tto^* \top$ with satisfying assignment $\alpha$, then 
  \[
    D \vdash \rtSem{\gnphi }[\rtSem{\res{\alpha}{\yfo}}/\yfo][\vv{\univRel}/\xho]
  \]
\end{lemma}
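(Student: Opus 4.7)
The plan is to proceed by induction on the length $k$ of the HOCHC derivation $\Dhochc \vdash \exists \yfo\,\xho.\,\gphi \land \gnphi \tto^* \top$, replaying each HOCHC step as a corresponding step in the proof system of Figure~\ref{fig:prf-goal}. The type-respecting property of $\alpha$ (Lemma~\ref{lem:sem_type_respecting}) is crucial: it guarantees that $\rtSem{\res{\alpha}{\yfo}}$ yields well-typed closed terms over the original signature, so the substitution $[\rtSem{\res{\alpha}{\yfo}}/\yfo][\vv{\univRel}/\xho]$ is well-defined.

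For the base case $k = 1$, the single step must be $\OWRef$, which forces $\ga$ to be empty and $\gnphi$ to consist only of variable-headed atoms $x_{k_j}\,\vv{M_j}$ with $x_{k_j} \in \xho$. After substitution each such atom becomes $\univRel_{\rho_j}\,\vv{s_j}$, proved by $\ruleName{Res}$ with the universal clause $\forall \vv{y}.\,\truetm \Rightarrow \univRel_{\rho}\,\vv{y}$ in $D$ discharged by $\ruleName{T}$, with $\ruleName{And}$ assembling the conjunction.

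For the inductive step the first HOCHC step is $\OWRes$ on some atom $P\,\vv{M}$ with side condition $(\forall \vv{y}.\,G' \Rightarrow P\,\vv{y}) \in \Dhochc$, and I would split on whether $P = T_\gamma$ (so $P\,\vv{M}$ lies in $\gphi$) or $P \in \Pi$ (so $P\,\vv{M} \in \ga \subseteq \gnphi$). In the first case, only $\gphi$, $\yfo$, and $\alpha$ change, leaving $\gnphi$ intact; the induction hypothesis on the one-step-shorter derivation yields the required conclusion directly, because $\gnphi$ does not mention the newly introduced existentials. In the second case, I would look up the original clause in $D$ giving rise to the $\Dhochc$-clause and split on its head shape. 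When the head is $P\,\vv{y}$, I would apply $\ruleName{Res}$ with substitution $\vv{s} = \rtSem{\vv{M}}[\rtSem{\res{\alpha}{\yfo}}/\yfo][\vv{\univRel}/\xho]$ for $\vv{y}$, reducing $P\,\vv{s}$ to $G[\vv{s}/\vv{y}]$, which the induction hypothesis furnishes as a conjunct of $\rtSem{\gnphi'}[\rtSem{\alpha}/\yfo][\vv{\univRel}/\xho]$ (for $\gnphi'$ the goal after the HOCHC step). When the head is $P\,(c\,\vv{y})$, $\alpha$ witnesses fresh existentials $\vv{z}$ satisfying $\vv{M} = \trSem{c\,\vv{z}}$ and $T_{\gamma_i}\,z_i$, and by Lemma~\ref{lem:sem_type_respecting} the witnesses $\rtSem{\alpha(\vv{z})}$ are closed terms of the correct original type; I would apply $\ruleName{Res}$ with this substitution for $\vv{y}$, leaving $G[\rtSem{\alpha(\vv{z})}/\vv{y}]$ for the induction hypothesis.

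The main obstacle will be the careful bookkeeping of how $\alpha$ evolves as fresh existentials are introduced (by $\OWRes$ on a $T_\gamma$ atom or by a pattern-matching clause), and ensuring that its restriction to the original $\yfo$ remains coherent across stages so the induction hypothesis can be invoked consistently. A secondary concern is verifying the natural commutations such as $\rtSem{\trSem{G}} = G$ and that substitution of HOCHC witnesses interacts with $\rtSem{-}$ as expected, so that each replayed $\ruleName{Res}$ subgoal exactly matches the clause body supplied by the induction hypothesis.
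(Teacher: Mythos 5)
Your proposal is correct and follows essentially the same route as the paper's proof: induction on the HOCHC derivation, a base case where \OWRef{} is discharged using the universal clauses via \ruleName{Res} and \ruleName{T}, an inductive case split on the head shape ($P\,\vv{y}$ versus $P\,(c\,\vv{y})$) of the originating clause in $D$, and Lemma~\ref{lem:sem_type_respecting} to convert $\alpha$'s witnesses into closed terms fed to \ruleName{Ex} and \ruleName{Res}. The only structural difference is cosmetic: the paper permutes all $T_\gamma$-headed resolutions to the end of the HOCHC proof (WLOG) and inducts on the remaining prefix, whereas you handle $T_\gamma$-steps in place by observing that $\gnphi$ and the relevant restriction of $\alpha$ are unchanged---both variants go through.
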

\begin{proof}
  Suppose $\Dhochc \vdash G' \tto^* H \tto \top$ with satisfying assignment $\alpha$.
  Because \OWRes~is a local rule and \OWRef~a global one, and $T_\gamma$-clauses only have other $T_{\gamma'}$-predicates in their bodies, we may assume WLOG that all $T_\gamma$-headed resolutions in a HOCHC proof are performed last.

  Suppose our HOCHC proof has shape $\Dhochc \vdash G' \tto^k G'' \tto^\ell H \tto \top$ for a $G''$ where the first $k$ steps do not use $T_\gamma$-headed clauses, followed by $\ell$ steps that (exclusively) do.
  We use induction on $k$.
  \begin{itemize}

    \item If $k=0$, then let $G' = G'' = \exists \yfo \, \xho.\, \gphi \land \gv$ with $\xho : \trSem{\vv{\rho}}$.
      Let $\gv = \bigwedge_{j \in [1..m]} x_{i_j}\,\vv{M_j}$ with $x_{i_1},\dots,x_{i_m} \in \xho$.
      We prove $D \vdash \rtSem{ \gv }[\rtSem{\res{\alpha}{\yfo}}/\yfo][\vv{\univRel}/\xho]$ where $\vv{\univRel} : \vv{\rho}$:
      \[
 \prftree[l]
            { \ruleName{And}$^*$ }
            { \forall j \in [1..m].\!\!\!\! }
            { \prftree[l]
              { \ruleName{Res} }
              { \prftree[l]
                { \ruleName{T} }
                { \Dhochc \vdash \truetm } 
              }
              { \Dhochc \vdash \univRel_{i_j}\,\rtSem{\vv{M_j}}[\rtSem{\res{\alpha}{\yfo}}/\yfo][\vv{\univRel}/\xho] } 
            }
            { \Dhochc \vdash \bigwedge_{j \in [1..m]} \univRel_{i_j}\,\rtSem{\vv{M_j}}[\rtSem{\res{\alpha}{\yfo}}/\yfo][\vv{\univRel}/\xho] }
      \]
   
    \item If $k > 0$, then let $G' = \exists \yfo\, \xho.\, \gphi\land \gnphi \land P\,\ss$ such that the first step is \OWRes~on $P\,\ss$. 
      Suppose $\xho$ and $\vv{\univRel}$ with $\rho$ as in the previous case, and $\xho^0$ and $\vv{\univRel^0}$ with $\rho^0$ analogous.
      
      We shall abbreviate $[\rtSem{\res{\alpha}{\vv{x_1}}}/\vv{x_1},\dots,\rtSem{\res{\alpha}{\vv{x_n}}}/\vv{x_n}]$ to $[\rtSem{\res{\alpha}{\vv{x_1}\dots\vv{x_n}}}/\vv{x_1}\dots\vv{x_n}]$.
      \begin{itemize}

        \item Suppose the side condition is $(\forall \zz.\, \trSem{\exists \yfo^0\,\xho^0.\, G^0} \Rightarrow P\,\zz) \in \Dhochc$ with existential-free $G^0$ from some $(\forall \zz.\, \exists \yfo^0\,\xho^0.\, G^0 \Rightarrow P\,\zz) \in D$.
          In this case, $\trSem{\exists \yfo^0\,\xho^0.\, G^0}$ equals some $\exists \yfo^0\,\xho^0.\, \bigwedge_{y_j^0 \in \yfo^0} T_{\gamma_j^0}\,y_j^0 \land \trSem{G^0}$.
          This gives us:
          \[ 
            G'' = \exists \yfo\,\yfo^0\, \xho\,\xho^0.\, \gphi \land \gnphi \land \bigwedge_{y_j^0 \in \yfo^0} T_{\gamma_j^0}\,y_j^0 \land \trSem{G^0}[\ss/\zz] 
          \]
          
          We aim to show $D \vdash \rtSem{ \gnphi \land P\,\ss }[\rtSem{\res{\alpha}{\yfo}}/\yfo][\vv{\univRel}/\xho]$, which holds if both of these claims hold:
          \begin{align}
            D &\vdash \rtSem{\gnphi}[\rtSem{\res{\alpha}{\yfo}}/\yfo][\vv{\univRel}/\xho] \label{eq:sem_first_claim} \tag{C1}\\
            D &\vdash P\,\rtSem{\ss}[\rtSem{\res{\alpha}{\yfo}}/\yfo][\vv{\univRel}/\xho] \label{eq:sem_second_claim} \tag{C2}
          \end{align}
  
          Note $D \vdash \rtSem{\gnphi \land \trSem{G^0}[\ss/\zz] }[\rtSem{\res{\alpha}{\yfo\yfo^0}}/\yfo\,\yfo^0][\vv{\univRel}/\xho, \vv{\univRel^0}/\xho^0]$, i.e. $D \vdash(\rtSem{\gnphi} \land G^0[\rtSem{\ss}/\zz] )[\rtSem{\res{\alpha}{\yfo\yfo^0}}/\yfo\,\yfo^0][\vv{\univRel}/\xho, \vv{\univRel^0}/\xho^0]$, by the IH on $G''$.
          This implies the following, by \ruleName{And}: 
          \begin{align}
            D &\vdash \rtSem{\gnphi}[\rtSem{\res{\alpha}{\yfo\yfo^0}}/\yfo\,\yfo^0][\vv{\univRel}/\xho, \vv{\univRel^0}/\xho^0] \label{eq:sem_first_conj} \tag{IH1} \\
            D &\vdash G^0[\rtSem{\ss}/\zz][\rtSem{\res{\alpha}{\yfo\yfo^0}}/\yfo\,\yfo^0][\vv{\univRel}/\xho, \vv{\univRel^0}/\xho^0] \label{eq:sem_second_conj} \tag{IH2}
          \end{align}
  
          Since $\rtSem{\gnphi}$ does not contain $\yfo^0$ or $\xho^0$, \ref{eq:sem_first_conj} gives us \ref{eq:sem_first_claim} immediately.

          We apply \ruleName{Ex} to \ref{eq:sem_second_conj} for $D \vdash \exists \yfo^0\,\xho^0.\, G^0[\rtSem{\ss}/\zz][\rtSem{\res{\alpha}{\yfo}}/\yfo,\vv{\univRel}/\xho]$.
          This is now an instance of the body of the original definite clause $(\forall \zz.\, \exists \yfo^0\,\xho^0.\, G^0 \Rightarrow P\,\zz) \in D$, so we apply \ruleName{Res} to obtain \ref{eq:sem_second_claim}.
              
        \item Suppose the side condition is $(\forall v.\, \trSem{\exists \yfo^0\,\xho^0.\, G^0}_{f\,\zz} \Rightarrow P\,v) \in \Dhochc$ with existential-free $G^0$ from some $(\forall \zz.\, \exists \yfo^0\,\xho^0.\, G^0 \Rightarrow P\,(f\,\zz)) \in D$.
          Then, $\trSem{\exists \yfo^0\,\xho^0.\, G^0}_{f\,\zz}$ equals some $\exists\yfo^0\,\zz\, \xho^0.\,  \bigwedge_{y_j^0:\gamma^0_j \in \yy} T_{\gamma^0_j}\,y_j^0 \land (v = \trSem{f\,\zz}) \land \bigwedge_{z_i:\gamma^1_i \in \zz} T_{\gamma^1_i}\,z_i \land \trSem{G^0}$.

          This gives us:
          \begin{align*}
            G'' = \exists \yfo\,\yfo^0\,\zz\, \xho\,\xho^0.\, 
              &\gphi \land \gnphi \land \bigwedge_{y_j^0:\gamma^0_j \in \yy} T_{\gamma^0_j}\,y_j^0 \land (s_1 = \trSem{f\,\zz}) \\
              &\land\; \bigwedge_{z_i:\gamma^1_i \in \zz} T_{\gamma^1_i}\,z_i \land \trSem{G^0}[s_1/v]
          \end{align*}

          By the IH, $D \vdash \rtSem{\gnphi \land \trSem{G^0}[s_1/v]}[\rtSem{\res{\alpha}{\yfo\yfo^0\zz}}/\yfo\,\yfo^0\,\zz][\vv{\univRel}/\xho, \vv{\univRel^0}/\xho^0]$.
          However, $v$ does not occur in $\trSem{G^0}$, so $\trSem{G^0}[s_1/v] = \trSem{G^0}$, which yields:
          \[
            D \vdash (\rtSem{\gnphi} \land G^0)[\rtSem{\res{\alpha}{\yfo\yfo^0\zz}}/\yfo\,\yfo^0\,\zz][\vv{\univRel}/\xho, \vv{\univRel^0}/\xho^0]
          \]  
                
          Compared with the IH in the previous case, only the substitution to $\zz$ is different;
          instead of $\zz$ first being partially instantiated with $\rtSem{\ss}$ (in the previous case), $\zz$ is directly fully instantiated by $\alpha$.
                
          We know from Lemma~\ref{lem:sem_type_respecting} that $\alpha$ assigns each $z_i \in \zz$ a closed term $\trSem{t'_i}$ such that $t'_i : \gamma^1_i$.
          Furthermore, $\alpha$ satisfies $H$ and, by soundness of the HOCHC proof system, also $G''$.
          In particular, this means $s_1$ equals $\trSem{f\,\zz}$ under $\alpha$, and $\rtSem{s_1}[\rtSem{\res{\alpha}{\yfo}}/\yfo]$ equals $f\,\vv{t'}$.
          
          Since we set out to prove $D \vdash \rtSem{\gnphi \land P\,s_1}[\rtSem{\res{\alpha}{\yfo}}/\yfo][\vv{\univRel}/\xho]$, it suffices to show 
          \[
            D \vdash \rtSem{\gnphi}[\rtSem{\res{\alpha}{\yfo}}/\yfo][\vv{\univRel}/\xho] \land P\,(f\,\vv{t'})
          \]
                
          We now proceed as in the previous case.
      \end{itemize}
   
  \end{itemize}
\end{proof}

\begin{proof}[Proof of Theorem~\ref{thm:proof_system}]
  Since $\Ghochc = \trSem{G}$, it remains to show $\Dhochc \vdash \trSem{G} \tto^* \top$ implies $D \vdash G$.
  
  Let $G = \exists \yfo\, \xho.\, \gv \land \ga$, so $\Ghochc = \trSem{G} = \exists \yfo\, \xho.\, \gphi \land \trSem{\gv \land \ga}$.
  This gives us $\Dhochc \vdash \exists \yfo\, \xho.\, \gphi \land \trSem{\gv \land \ga} \tto^* \top$ for a $\gphi$ and satisfying, type-respecting assignment $\alpha$.
  
  By Replay Lemma~\ref{lem:sem_replay}, $D \vdash \rtSem{ \trSem{\gv \land \ga} }[\rtSem{\res{\alpha}{\yfo}}/\yfo][\vv{\univRel}/\xho]$, which is:
  \[
    D \vdash (\gv \land \ga)[\rtSem{\res{\alpha}{\yfo}}/\yfo][\vv{\univRel}/\xho]
  \]
  Finally, we prove $Ds \vdash G$ with \ruleName{Ex} and conclude that the Figure~\ref{fig:prf-goal} proof system is complete with respect to the standard semantics of HOL.
\end{proof}

\subsection{Elimination of predicate-type existentials}

\begin{corollary}
  Predicate-type existentials do not add expressive power to \HOMSL($\omega$).
\end{corollary}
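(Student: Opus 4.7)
The plan is to lean on Corollary~\ref{prop:rho_existential_elim}, which already eliminates a single top-level predicate-type existential by substituting the universal relation $\univRel_\rho$. The remaining task is to promote that one-step elimination to an elimination of \emph{all} predicate-type existentials occurring anywhere in an arbitrary \HOMSL($\omega$) instance $(D,G)$, obtaining an equi-provable instance in the predicate-existential-free sublanguage.

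The first step is to formalise a syntactic transformation $(-)^\sharp$ that replaces every subformula of shape $\exists x\mathord{:}\rho.\,G'$ (with $\rho$ a predicate type) by $G'[\univRel_\rho/x]^\sharp$, applied recursively through conjunctions, universal prefixes, and clause bodies. Adequacy of the signature guarantees $\univRel_\rho$ is available for every $\rho$ appearing in $D$ or $G$, and the accompanying universal clause $\forall\yy.\,\truetm \Rightarrow \univRel_\rho\,\yy$ ensures $\univRel_\rho$ is a legitimate inhabitant. I would prove, by structural induction on the formula being transformed, that $D \vdash G$ iff $D^\sharp \vdash G^\sharp$; the crucial inductive case is precisely Corollary~\ref{prop:rho_existential_elim}, invoked in the context where an existential has been exposed at the head of a goal. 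The remaining syntactic cases are bookkeeping, using soundness and completeness of the proof system (Theorem~\ref{thm:proof_system}) to move freely between provability and semantic entailment, which respects substitution and commutes with the logical connectives of the fragment.

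The one non-trivial subtlety, and the step I expect to require the most care, is handling predicate-type existentials that occur \emph{inside} the body $G$ of a definite clause $\forall \yy.\, G \implies A$ in $D$, and more generally under nested universal prefixes. Here one cannot just appeal to Corollary~\ref{prop:rho_existential_elim} at the top level, since the existential is under a universal scope; instead I would argue semantically, using that $\univRel_\rho$ is a greatest element for $\rho$-valued predicates (anything true of an arbitrary predicate is true of $\univRel_\rho$), so $\exists x\mathord{:}\rho.\,G'$ and $G'[\univRel_\rho/x]$ are logically equivalent in any model of $D$, and hence the transformed clause is equivalent to the original. The soundness and completeness of the proof system then lifts semantic equivalence back to provability.

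Putting these pieces together yields $(D,G)$ equi-provable with $(D^\sharp, G^\sharp)$, where the latter contains no predicate-type existentials, which is exactly the corollary. The resulting instance stays inside \HOMSL($\omega$) because the transformation only substitutes a predicate constant for a variable and deletes quantifiers, preserving the shape restrictions on definite clauses and the well-typedness of goals.
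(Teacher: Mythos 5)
There is a genuine gap, in two places. First, your plan is circular: Corollary~\ref{prop:rho_existential_elim} is not an independent, previously proved fact that you can lean on — its justification \emph{is} the statement under discussion, and the paper's proof of this corollary consists precisely of establishing $D \vdash \exists x_0\mathord{:}\rho.\,G \iff D \vdash G[\univRel_\rho/x_0]$. The non-trivial direction (provability of the existential implies provability after substituting $\univRel_\rho$) is the entire content, and your proposal never supplies an argument for it. The paper obtains it proof-theoretically, via the HOCHC reduction used for Theorem~\ref{thm:proof_system}: in an Ong--Wagner resolution refutation, predicate-type existentials are never instantiated during resolution and are discharged only by \OWRef, which implicitly assigns them the maximal element of their type, i.e.\ exactly the denotation pinned down by the universal clause for $\univRel_\rho$; one can therefore substitute $\univRel_\rho$ up front and replay the refutation (via the replay machinery of Appendix~A) to get a proof of $G[\univRel_\rho/x_0]$.

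Second, the semantic argument you propose for occurrences under universal prefixes is false as stated. In the standard semantics the domains are full function spaces and definite clauses only impose lower bounds on predicates, so a model of $D$ may interpret a predicate non-monotonically in its higher-order arguments. For example, with $P : (\iota \to o) \to o$ unconstrained by $D$, a model may make $P$ true of the empty relation only; then $\exists x\mathord{:}\iota\to o.\,P\,x$ holds in that model while $P\,\univRel_{\iota\to o}$ fails (the universal clause forces $\univRel_{\iota\to o}$ to denote the total relation). Hence ``$\exists x\mathord{:}\rho.\,G'$ and $G'[\univRel_\rho/x]$ are logically equivalent in any model of $D$'' does not hold; the equivalence is only an equivalence of entailment/provability \emph{from} $D$, which is exactly what must be argued and cannot be had by a per-model monotonicity claim. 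Any repair has to work at the level of proofs (or least/monotone models), e.g.\ by the paper's HOCHC route or by an induction on derivations showing that replacing a predicate-type witness term by $\univRel_\rho$ preserves provability because atoms headed by the replaced variable become instances of the universal clause; neither is present in your proposal.
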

\begin{proof}
  We prove that $D \vdash \exists x_0:\rho.\, G$ if, and only if, $D \vdash G[\mho_\rho/x_0]$, relying on the reduction of provability in \HOMSL($\omega$) to provability in HOCHC with a decidable equational theory of first-order trees (see Theorem~\ref{thm:proof_system}, soundness and completeness of the proof system).
  
  Direction $\Leftarrow$ is immediate from \ruleName{Ex}, so suppose $D \vdash \exists x_0:\rho.\, G$.
  There exists a HOCHC proof:
  \[
    \textstyle \Dhochc \vdash \exists x_0:\trSem{\rho}.\, \trSem{G} \tto^* \exists x_0\,\xx\,\yy.\,\bigwedge_{i \in [1..n]} \varphi_i \bigwedge_{j \in [1..m]} x_{k_j}\,\vv{M_j} \tto \top
  \]
  with a satisfying assignment $\alpha$ to the existentials $\yy$ with constructor types.
  The existentials $x_0$ and $\xx$ with predicate types are implicitly assigned the maximal element in the semantic domain of their respective types, which is captured by $\mho_{\trSem{\rho'}}$ for type $\trSem{\rho'}$.
  
  This means there is a proof $\Dhochc \vdash \trSem{G}[\mho_{\trSem{\rho}}/x_0] \tto^* \top$.
  Because $\trSem{G}[\mho_{\trSem{\rho}}/x_0] = \trSem{G[\mho_{\rho}/x_0]}$, the final part of the proof of Theorem~\ref{thm:proof_system} gives us $D \vdash G[\mho_\rho/x_0]$.
  
  We conclude that preprocessing \HOMSL($\omega$) can eliminate predicate-type existentials by instantiating them with universal relations.
\end{proof}

\section{Supporting Material for Section~\ref{sec:reductions}}
\label{appx:reductions}

\subsection{Elimination of higher-order predicates}
\label{appx:HO_elim}

Note that the transformation on goal formulas has a straightforward retraction $\tr{\_}^{-1}$ which removes occurrences of $\mathsf{T}$ and the $\#$.

\begin{lemma}
  $D \vdash_\omega G$ implies $\tr{D} \vdash_1 \tr{G}$
\end{lemma}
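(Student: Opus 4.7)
The plan is to proceed by induction on the derivation of $D \vdash_\omega G$. As a prerequisite I would first establish a substitution lemma: $\tr{G[\ss/\yy]} = \tr{G}[\tr{\ss}/\yy]$ and likewise at the term level $\tr{t[\ss/\yy]} = \tr{t}[\tr{\ss}/\yy]$. This is a routine structural induction because $\tr{-}$ is homomorphic on applications, leaves variables unchanged, uniformly replaces each predicate symbol $P$ by a constructor $p^\#$, and lifts every type $o$ uniformly to $\iota$, so substituting a translated higher-order term for a variable is compatible with the pointwise definition of $\tr{-}$.

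The cases for \ruleName{T}, \ruleName{And} and \ruleName{Ex} are simulated directly by the corresponding rule in $\vdash_1$ applied to the translated premises. For \ruleName{Ex} in particular, the IH gives $\tr{D} \vdash_1 \tr{G[t/x]} = \tr{G}[\tr{t}/x]$ by the substitution lemma, and instantiating the translated existential with $\tr{t}$ finishes the case. The substantive case is \ruleName{Res}, where the derivation ends with some $(\forall \yy.\, G \Rightarrow A) \in D$ and a subproof $D \vdash_\omega G[\ss/\yy]$; I would split on the two MSL shapes of $A$.

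If $A = P\,\yy$, then $\tr{D}$ contains $(\forall \yy.\, \tr{G} \Rightarrow \mathsf{T}(p^\#\,\yy))$. By the IH and the substitution lemma, $\tr{D} \vdash_1 \tr{G}[\tr{\ss}/\yy]$, so a single \ruleName{Res} step yields $\mathsf{T}(p^\#\,\tr{\ss}) = \tr{P\,\ss} = \tr{A[\ss/\yy]}$. If instead $A = P\,(c\,\yy)$ (so $P : \iota \to o$), the definition of $\tr{-}$ places \emph{two} clauses in $\tr{D}$: the translated original $(\forall \yy.\, \tr{G} \Rightarrow P\,(c\,\yy))$ and the reflection clause $(\forall z.\, P\,z \Rightarrow \mathsf{T}(p^\#\,z))$. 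From the IH, a first \ruleName{Res} with the translated original gives $\tr{D} \vdash_1 P\,(c\,\tr{\ss})$; a second \ruleName{Res} using the reflection clause, instantiating $z$ with $c\,\tr{\ss}$, gives $\tr{D} \vdash_1 \mathsf{T}(p^\#\,(c\,\tr{\ss})) = \tr{P\,(c\,\ss)}$ as required.

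The main obstacle I anticipate is bureaucratic rather than conceptual: checking that substitutions used in the translated \ruleName{Res} and \ruleName{Ex} steps remain well typed. Because $\tr{-}$ acts uniformly on types (sending $o$ to $\iota$, $\rho$ to $\tr{\rho}$, and each $P:\rho$ to $p^\#:\tr{\rho}$), any substitution of $\tr{t}$ for a variable whose type has been rewritten accordingly is well typed by construction, and the two-step resolution used in the $P\,(c\,\yy)$ case is exactly what the reflection clauses are engineered to enable. With the substitution lemma and typing bookkeeping in hand, the induction closes without further work.
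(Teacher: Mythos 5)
Your proposal is correct and follows essentially the same route as the paper's proof: induction on the derivation, an (implicit in the paper, explicit in yours) substitution lemma $\tr{G[\ss/\yy]} = \tr{G}[\tr{\ss}/\yy]$, a single \ruleName{Res} step for heads of shape $P\,\yy$, and the two-step resolution through the reflection clause $(\forall z.\, P\,z \Rightarrow \mathsf{T}\,(p^\#\,z))$ for heads of shape $P\,(c\,\yy)$. The only difference is presentational (you build the derivation bottom-up from the IH, the paper describes it top-down from the goal), so no gap remains.
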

\begin{proof}
  The proof is by induction on the proof object in the antecedent.
  \begin{description}
    \item[\ruleName{T}] Clear by definition.
    \item[\ruleName{Res}] In this case, $G$ is of the form $A[\vv{s}/\vv{y}]$ and there is a clause $(\forall \yy.\, H \implies A) \in D$.  By induction, we may assume that $\tr{D} \vdash_1 \tr{H[\vv{s}/\vv{y}]}$.  Note that, by definition, $\tr{H[\vv{s}/\vv{y}]} = \tr{H}[\tr{\vv{s}}/\vv{y}]$ and $\tr{A[\vv{s}/\vv{y}]} = \tr{A}[\tr{\vv{s}}/\vv{y}]$.  We consider two cases for $A$.
    \begin{itemize}
      \item If $A$ is of shape $P\,(c\,\vv{y})$, then $\tr{A} = \mathsf{T}\,(p^\#\,(c\,\vv{y}))$ and we  prove $\tr{A}[\tr{\vv{s}}/\vv{y}] = \mathsf{T}\,(p^\#\,(c\,\tr{\vv{s}}))$ by two applications of \ruleName{Res}.  The first uses the reflection clause from $\tr{D}$ to unwrap the $\mathsf{T}$ predicate, obtaining $P\,(c\,\tr{\vv{s}})$, and the second uses the transformed copy of the side premise to obtain $\tr{H}[\tr{\vv{s}}/\vv{y}]$, which is provable by the induction hypothesis.
      \item Otherwise, $A$ is of the form $P\,\vv{y}$ and $\tr{A} = \mathsf{T}\,(p^\#\,\vv{y})$.  We can prove $\tr{A}[\vv{\tr{s}}/\vv{y}]$ by \ruleName{Res} directly using the transformed copy of the side premise and the induction hypothesis.  
    \end{itemize} 
    \item[\ruleName{Ex}] In this case, $G$ is of the form $\exists x.\,H$ and $\tr{G} = \exists x.\, \tr{H}$.  By induction, we may assume $\tr{H[t/x]}$, which is synonymous with $\tr{H}[\tr{t}/x]$, is already provable.  Therefore, the proof can be concluded using \ruleName{Ex}.
    \item[\ruleName{And}] Follows immediately from the induction hypotheses.
  \end{description}
\end{proof}

We need to generalise a little bit in the converse direction.

\begin{lemma}
  $\tr{D} \vdash_1 G$ implies $D \vdash_\omega \rt{G}$
\end{lemma}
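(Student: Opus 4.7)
The plan is to proceed by induction on the derivation of $\tr{D} \vdash_1 G$, noting that $G$ need not be in the image of the translation; the retraction $\rt{-}$ is defined on any goal over $\tr{\Sigma}$ and $\tr{\Pi}$ as it simply erases occurrences of $\mathsf{T}$ and the $\#$ annotation on function symbols, restoring predicate symbols to their original type. Two basic facts will be used throughout: (a) $\rt{\tr{X}} = X$ on terms, goals and clauses, and (b) substitution commutes with retraction, i.e.\ $\rt{X[\vv{t}/\vv{y}]} = \rt{X}[\rt{\vv{t}}/\vv{y}]$. With these in hand, the cases \ruleName{T}, \ruleName{And}, and \ruleName{Ex} are immediate: for \ruleName{Ex} with $G = \exists x.\,H$, the IH gives $D \vdash_\omega \rt{H[t/x]} = \rt{H}[\rt{t}/x]$, so \ruleName{Ex} concludes $D \vdash_\omega \exists x.\,\rt{H} = \rt{G}$.

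The interesting case is \ruleName{Res}, which splits according to which of the three kinds of clause in $\tr{D}$ is used as side premise: (i) a translated higher-order head $\forall \vv{y}.\, \tr{H} \implies \mathsf{T}\,(p^\#\,\vv{y})$; (ii) a translated first-order head $\forall \vv{y}.\, \tr{H} \implies P\,(c\,\vv{y})$; or (iii) a reflection clause $\forall z.\,P\,z \implies \mathsf{T}\,(p^\#\,z)$. In case (i), the conclusion is $G = \mathsf{T}\,(p^\#\,\vv{s})$ with $\rt{G} = P\,\rt{\vv{s}}$; by IH $D \vdash_\omega \rt{\tr{H}[\vv{s}/\vv{y}]} = H[\rt{\vv{s}}/\vv{y}]$, so \ruleName{Res} on the original clause $\forall \vv{y}.\,H \implies P\,\vv{y}$ derives $\rt{G}$. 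Case (ii) is analogous, using the original first-order clause $\forall \vv{y}.\,H \implies P\,(c\,\vv{y})$. For case (iii), the conclusion is $G = \mathsf{T}\,(p^\#\,s)$ with $\rt{G} = P\,\rt{s}$, while the body is $P\,s$; applying the IH to the sub-derivation yields $D \vdash_\omega \rt{P\,s} = P\,\rt{s} = \rt{G}$ directly, so the reflection clause is simulated by a no-op on the untranslated side.

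The main obstacle I anticipate is bookkeeping around higher-order substitutions: the substitution $[\vv{s}/\vv{y}]$ in the translated proof is type-correct in the flattened types (where $o$ has become $\iota$), and we must be sure that $\rt{\vv{s}}$ is type-correct as a higher-order substitution under the original signature. This follows from the fact that the only way a symbol of translated type $\tr{\rho}$ can be built syntactically is from the images of original symbols of type $\rho$ (function symbols are unchanged, and $p^\#$ is the sole image of $P$), so $\rt{-}$ is a typed inverse on well-typed translated terms. With this type invariant and the two substitution/retraction lemmas (a) and (b) established as preliminaries, the induction goes through directly and, combined with the converse direction already proved, completes Theorem~\ref{thm:HOMSL(omega)_to_MSL(omega)}.
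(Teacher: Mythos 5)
Your proposal is correct and follows essentially the same route as the paper's own proof: induction on the derivation of $\tr{D} \vdash_1 G$ (stated for arbitrary $G$, not just images of the translation), with trivial \ruleName{T}/\ruleName{And}/\ruleName{Ex} cases and the same three-way split in \ruleName{Res} according to whether the side premise is a translated $P\,\vv{y}$-headed clause, a translated $P\,(c\,\vv{y})$-headed clause, or a reflection clause, the last being absorbed as a no-op on the untranslated side. The auxiliary facts you isolate (retraction inverts the translation and commutes with substitution) are exactly what the paper uses implicitly via ``by definition,'' so the argument matches the paper's case for case.
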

\begin{proof}
  The proof is by induction.
  \begin{description}
    \item[\ruleName{T}] Clear by definition.
    \item[\ruleName{Res}] In this case, $G$ has the form $A[\vv{s}/\vv{y}]$ and there is a clause $(\forall \yy.\,H \implies A) \in \tr{D}$.  We distinguish cases on the form of $A$ and the provenance of the side premise.  
    \begin{itemize}
      \item If $A$ is of the form $P\,(c\,\vv{y})$, then there is a clause $(\forall \yy.\,\rt{H} \implies P\,(c\,\vv{y})) \in D$ from which the side premise originates.  From the induction hypothesis, we can assume that $D \vdash_\omega \rt{H[\vv{s}/\vv{y}]}$, which is, by definition, $D \vdash_\omega \rt{H}[\rt{\vv{s}}/\vv{y]}]$.  Hence, we can conclude $\rt{G}$, namely $P\,(c\,\rt{\vv{s}})$, directly by \ruleName{Res}.
      \item If $A$ is of the form $\mathsf{T}\,(p^\#\,\vv{y})$ and the side premise is not a reflection, then there is a clause $(\forall \yy.\, \rt{H} \implies P\,\yy) \in D$.  From the induction hypothesis we can assume $D \vdash_\omega \rt{H[\vv{s}/\vv{y}]}$, which is, by definition, $D \vdash_\omega \rt{H}[\rt{\vv{s}}/\vv{y}]$.  We can conclude $\rt{G}$, namely $P\,\rt{\vv{s}}$, directly by \ruleName{Res}.
      \item Otherwise the side premise is a reflection, has shape $\forall y.\, P\,y \implies \mathsf{T}\,(p^\#\,y)$, and $A$ is of the form $\mathsf{T}\,(p^\#\,y)$.  We may assume, by the induction hypothesis, that $D \vdash_\omega \rt{P\,s}$, which is $D \vdash_\omega P\,\rt{s}$ by definition.  But this is already $D \vdash_\omega \rt{G}$, so we are done.
    \end{itemize}
    \item[\ruleName{Ex}] In this case, $G$ has the form $\exists x.\,H$ and it follows from the induction hypothesis that there is some $t$ such that $D \vdash_\omega \rt{H[t/x]}$.  By definition, this means we have $D \vdash_\omega \rt{H}[\rt{t}/x]$ and so we can conclude $D \vdash_\omega \rt{G}$, being $\exists x.\, \rt{H}$, immediately by \ruleName{Ex}.
    \item[\ruleName{And}] This case follows directly from the induction hypotheses.
  \end{description}
\end{proof}

It now follows immediately from these lemmas that $D \proves_\omega G$ iff $\tr{D} \vdash_1 \tr{G}$, proving Theorem~\ref{thm:HOMSL(omega)_to_MSL(omega)}.

\subsection{Elimination of existentials}
\label{appx:existentials}

\begin{lemma}
  \label{lem:aux_ex_generates_all_instances}
  For all terms \changed[jj]{$\exists_\gamma\,s$ over signatures $\exf{\Sigma}$ and $\exf{\Pi}$,} $\exfD \vdash \exists_{\gamma}\, s$ if, and only if, $\exfD \vdash s\, t$ for a closed $t:\gamma$.
\end{lemma}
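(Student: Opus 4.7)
The plan is to reduce this biconditional to an auxiliary claim about the delayed-application predicates $\mathsf{Comp}^{i,n}_f$, which are designed to encode the progressive instantiation of a constructor's arguments. Concretely, I would prove: for every $f:\gamma_1\to\cdots\to\gamma_n\to\gamma\in\Sigma$, every $s:\gamma\to o$, every $0\le i\le n$, and every closed $t_1:\gamma_1,\ldots,t_i:\gamma_i$,
\[
\exfD \vdash \mathsf{Comp}^{i,n}_f\,s\,t_1\cdots t_i \iff \exists\text{ closed }t_{i+1},\ldots,t_n.\ \exfD \vdash s\,(f\,t_1\cdots t_n).
\]
With this claim in hand the original lemma is immediate, since any closed $t:\gamma$ over $\exf{\Sigma}=\Sigma$ must have constructor shape $t=f\,t_1\cdots t_n$, matching exactly the use of clause (1), $\mathsf{Comp}^{0,n}_f\,v \Rightarrow \exists_\gamma\,v$.

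The first observation I would rely on is that the predicate symbols $\exists_\gamma$, $\mathsf{Comp}^{i,n}_f$, and $\Lambda_{\gamma,H}$ are fresh and introduced only by the construction of $\exf{\Pi}$, so the only clauses in $\exfD$ whose head unifies with an $\exists_\gamma$-atom (resp.\ $\mathsf{Comp}^{i,n}_f$-atom) are the designated clauses (1) (resp.\ clauses (2) and (3)) in the definition of $\exfD$. Every \ruleName{Res} step on such an atom is therefore forced to use a specific clause, which drives the induction. I would then prove the auxiliary claim by downward induction on $n-i$: the base case $i=n$ is immediate because clause (2), $v\,(f\,x_1\cdots x_n)\Rightarrow\mathsf{Comp}^{n,n}_f\,v\,x_1\cdots x_n$, is the unique applicable rule. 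For the inductive step, \ruleName{Res} on clause (3) reduces provability of $\mathsf{Comp}^{i,n}_f\,s\,t_1\cdots t_i$ to provability of $\exists_{\gamma_{i+1}}(\mathsf{Comp}^{i+1,n}_f\,s\,t_1\cdots t_i)$, and invoking the main lemma itself at type $\gamma_{i+1}$ extracts a closed witness $t_{i+1}$; the auxiliary inductive hypothesis at $i+1$ then closes the loop.

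The hard part will be the apparent circularity between the main lemma at $\gamma$ and its recursive use at subtypes $\gamma_{i+1}$, which can again contain $\gamma$ itself (consider $\mathsf{s}:\iota\to\iota$), so induction on types is unavailable. My plan is to discharge this by framing the whole argument as a single well-founded induction with a separate measure per direction: strong induction on the depth of the derivation for $\Rightarrow$ (every recursive appeal fires on a strict subderivation of the proof of $\exists_\gamma\,s$), and structural induction on the closed witness term $t$ for $\Leftarrow$ (every recursive appeal is at a proper subterm $t_j$ of $t=f\,t_1\cdots t_n$). Both measures descend in lock-step with the recursive calls, so the induction is well-founded and no extra bookkeeping is needed.
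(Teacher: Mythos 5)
Your proposal follows essentially the same route as the paper: both exploit the fact that $\exists_\gamma$ and $\mathsf{Comp}^{i,n}_f$ are fresh, so every \ruleName{Res} step on such an atom is forced to use one of the designated clauses of $\exfD$; the paper proves the forward direction by induction on the proof tree (counting the $\exists_{\gamma'}$-headed steps external to $s$) and the converse by induction on the number of symbols in the witness $t$, which are exactly your two measures. The only organisational difference is that you make the intermediate invariant about $\mathsf{Comp}^{i,n}_f\,s\,t_1\cdots t_i$ an explicit auxiliary biconditional, whereas the paper inlines it by applying its induction hypothesis $k$ times inside the case $m>1$; this is a harmless repackaging. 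One point you should tighten: for the forward direction, the statement you induct on must be strengthened from bare provability to ``the given derivation of $\exists_\gamma\,s$ \emph{contains} a subderivation of $\exfD \vdash s\,t$'' (and likewise for the $\mathsf{Comp}$ claim). Otherwise, in the case $i<n$, the appeal to the main lemma at $\gamma_{i+1}$ only returns provability of $\mathsf{Comp}^{i+1,n}_f\,s\,t_1\cdots t_i\,t_{i+1}$ with no bound on the resulting derivation, and your subsequent appeal to the auxiliary claim at $(i+1,n)$ is not covered by the depth-based induction hypothesis. With the containment formulation---which is precisely what the paper proves (``we prove $\tree$ contains $\exfD \vdash s\,t$'')---your claim that every recursive appeal fires on a strict subderivation becomes literally true and the mutual induction is well founded.
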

\begin{proof}
  For $\Rightarrow$, let $\tree$ be an underlying proof tree of $\exfD \vdash \exists_{\gamma}\, s$.
  We prove $\tree$ contains $\exfD \vdash s\, t$ for a closed $t:\gamma$, by induction on the number $m$ of proof steps headed by any $\exists_{\gamma'}$ external to $s$.
  \begin{itemize}

    \item If $m=1$, then, \changed[jj]{by our assumption of adequate signatures,} there exists $c:\gamma \in \Sigma$ such that a $\tree$ has shape:
          \[
            \prftree[l]
            { \ruleName{Res} }
            { \prftree[l]
              { \ruleName{Res} }
              { \exfD \vdash s\,c }
              { \exfD \vdash \mathsf{Comp}_c^{0,0}\,s } 
            }
            { \exfD \vdash \exists_{\gamma}\, s }
          \]

    \item If $m>1$, then $\tree(1) = \exfD \vdash \mathsf{Comp}^{0,k}_f\,s$ for a $k>0$ and constructor $f:\gamma_1 \to \dots \to \gamma_k \to \gamma$.
          We apply the IH to the subproof rooted at node $111$ (which clearly has $m'_1 < m$).

          Now, if $k=1$, then we use \ruleName{Res} with $\forall v\,x_1.\, v\,(f\,x_1) \Rightarrow \mathsf{Comp}_f^{1,1} \, v \, x_1$ to give us $\exf{D} \vdash s\,(f\,t_1)$, as required.
          Otherwise, we use \ruleName{Res} with $\forall v\, x_1.\, \exists_{\gamma_2}\,(\mathsf{Comp}^{2,k}\,v\,x_1) \Rightarrow \mathsf{Comp}^{1,k}_f\, v\, x_1$ to give us $\exf{D} \vdash \exists_{\gamma_2}\,(\mathsf{Comp}^{2,k}_f\,s\,t_1)$ to which we apply the IH again.
          
          We use the IH a total of $k$ times, once for each $\gamma_i$, for a proof tree of the following shape:
          \[
            \prftree[l]
            { \ruleName{Res} }
            { \prftree[l]
              { \ruleName{Res} }
              { \prftree[l]
                { \ruleName{IH} }
                {
                  \prftree
                  { \prftree
                    { \prftree[l]
                      { \ruleName{Res} }
                      {\exfD \vdash s\ (f\ t_1 \dots t_k) }
                      { \exfD \vdash \mathsf{Comp}_f^{k,k}\ s\ t_1\ \cdots\ t_k } 
                    }
                    { \vdots }
                  }
                  { \exfD \vdash \mathsf{Comp}^{1,k}_f\,s\,t_1 }
                }
                { \exfD \vdash \exists_{\gamma_1}\,(\mathsf{Comp}^{1,k}_f\,s) } 
              }
              { \exfD \vdash \mathsf{Comp}^{0,k}_f\,s  }
            }
            { \exfD \vdash \exists_{\gamma}\, s }
          \]

  \end{itemize}

  The direction $\Leftarrow$ is a straightforward inverse of hereof, using induction on the number of symbols in $t$.
  %
  %
  %
  %
\end{proof}

\begin{lemma}
  For all closed subformulas $G$ of the original clauses, $D \vdash G$ if and only if $\exfD \vdash \exfMap{G}$.
\end{lemma}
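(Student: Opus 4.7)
The plan is to prove both directions simultaneously by induction on the structure of the proof tree, doing case analysis on the last rule applied. Since $G$ is closed, every intermediate subgoal visited by the induction will also be closed without loss of generality: the witness $t$ in \ruleName{Ex} and the vector $\vv{s}$ in \ruleName{Res} may be chosen closed because $\fv(G)=\emptyset$ and $\fv(G')\subseteq\vv{y}$ in each applicable clause. The structural cases are then immediate -- \ruleName{T} because $\exfMap{\truetm}=\truetm$, and \ruleName{And} because $\exfMap{G_1\wedge G_2}=\exfMap{G_1}\wedge\exfMap{G_2}$ splits directly into the inductive hypotheses.

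For the forward direction, the \ruleName{Res} case replays the original resolution: for every clause $(\forall\vv{y}.\,G' \Rightarrow A) \in D$, the construction supplies a matching $(\forall\vv{y}.\,\exfMap{G'}\Rightarrow A) \in \exfD$ with the same atomic head, so the IH applied to the (closed) premise $G'[\vv{s}/\vv{y}]$ yields $\exfD \vdash \exfMap{G'}[\vv{s}/\vv{y}]$, and a matching resolution step concludes. The \ruleName{Ex} case is the interesting one: from $D \vdash H[t/x]$ the IH provides $\exfD \vdash \exfMap{H[t/x]}$; since $G = \exists x{:}\gamma.\,H$ is closed, the free-variable prefix $\vv{x}$ indexing $\Lambda_{\gamma,H}$ is empty, so a single \ruleName{Res} step against the defining clause $\forall x.\,\exfMap{H} \Rightarrow \Lambda_{\gamma,H}\,x$ produces $\exfD \vdash \Lambda_{\gamma,H}\,t$, and Lemma~\ref{lem:aux_ex_generates_all_instances} lifts this to $\exfD \vdash \exists_\gamma\,\Lambda_{\gamma,H} = \exfMap{G}$.

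The converse is handled by inverting each \ruleName{Res} step according to which clause family from $\exfD$ it draws on. If the clause originates from $\exfMap{D}$, the head of $\exfMap{G}$ is an original atom and the step mirrors back directly; if $\exfMap{G}$ has shape $\exists_\gamma\,\Lambda_{\gamma,H}$, Lemma~\ref{lem:aux_ex_generates_all_instances} extracts a closed witness $t$ with $\exfD \vdash \Lambda_{\gamma,H}\,t$, and because the only clause in $\exfD$ whose head unifies with $\Lambda_{\gamma,H}\,t$ is the defining $\Lambda$-clause, the proof must proceed through a sub-proof of $\exfD \vdash \exfMap{H[t/x]}$, from which the IH gives $D \vdash H[t/x]$ and hence $D \vdash G$ by \ruleName{Ex}. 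The auxiliary $\mathsf{Comp}^{i,n}_f$ and $\exists_\gamma$ clauses only conclude formulas headed by those same auxiliary predicates, so they cannot appear as the last step when $\exfMap{G}$ lives in the image of the translation except in the controlled chain underlying the $\exists_\gamma$ case. The hard part will be bookkeeping the interaction of $\exfMap{\_}$ with substitution: because the auxiliary $\Lambda$-predicates are indexed syntactically by formulas, in general $\exfMap{H[t/y]}$ and $\exfMap{H}[t/y]$ invoke distinct $\Lambda$-symbols; for closed $G$ this collapses because every invocation of the IH happens only after the substitution has produced a closed formula, but confirming that the finite family of $\Lambda$-predicates installed in $\exf{\Pi}$ contains exactly the instances required by the induction is the main technical burden.
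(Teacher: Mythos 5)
There is a genuine gap, and it is exactly the point you defer to at the end as ``the main technical burden'': the claimed collapse of $\exfMap{H[t/x]}$ and $\exfMap{H}[t/x]$ for closed goals does not happen, and the induction as you set it up does not stay inside the class of formulas for which the statement is asserted. After one \ruleName{Res} or \ruleName{Ex} step, the subgoal is $G'[\vv{s}/\vv{y}]$ or $H[t/x]$ --- a \emph{substitution instance} of a subformula of the original clauses, not a subformula itself --- so the IH ``for all closed subformulas $G$ of the original clauses'' simply does not apply to it. Worse, the translation of such an instance is not even supported by the constructed signature: $\exfMap{H[t/x]}$ would need predicates $\Lambda_{\gamma',H'[t/x]}$ for the nested existential bodies $H'$ occurring inside $H$, and these instantiated indices are not in $\existGoals{D \land G}$, so $\exfD$ contains no defining clauses for them (and $\exf{\Pi}$ need not even contain the symbols). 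Your forward \ruleName{Ex} case also quietly assumes the variable prefix of $\Lambda_{\gamma,H}$ is empty; that is only true at the very top level and fails as soon as the derivation has passed through a clause body with universally quantified variables.

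The paper's proof resolves this by strengthening the induction hypothesis rather than restricting to closed formulas: it proves, for every subformula $G$ of the original clauses with $\fv(G)=\vv{z}$ and every closed instantiation, that $D \vdash G[\vv{t}/\vv{z}]$ implies $\exfD \vdash \exfMap{G}[\vv{t}/\vv{z}]$, and conversely that $\exfD \vdash H$ with $H = \exfMap{H'}[\vv{t}/\vv{z}]$ implies $D \vdash H'[\vv{t}/\vv{z}]$. Keeping the translation applied \emph{before} the substitution means every $\Lambda$-atom encountered has the form $\Lambda_{\gamma,G'}\,\vv{t}$ with $G' \in \existGoals{D\land G}$, so the defining clauses exist, and the \ruleName{Ex} case goes through via $\exfD \vdash \exfMap{G'}[\vv{t}/\vv{z}][u/x]$, one \ruleName{Res} step on the $\Lambda$-clause, and Lemma~\ref{lem:aux_ex_generates_all_instances}. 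Your overall architecture (rule-by-rule case analysis, use of Lemma~\ref{lem:aux_ex_generates_all_instances}, and the observation that only the defining $\Lambda$-clause can head a $\Lambda$-atom in the backward direction) matches the paper, but without this generalisation of the statement the induction cannot be closed, so the proposal as written does not constitute a proof.
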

\begin{proof}
  In the forwards direction, we prove the following for all subformulas $G$ of the original clauses with free variables $\vv{z}$:
  \[
    D \vdash G[\vv{t/z}] \quad\text{implies}\quad \exfD \vdash \exfMap{G}[\vv{t/z}]
  \]
  The proof is by induction on $D \vdash G$.
  \begin{description}
    \item[\ruleName{T}] Immediate.
    \item[\ruleName{And}] Immediate from the induction hypothesis and the definition of the mapping.
    \item[\ruleName{Ex}] In this case $G[\vv{t/z}]$ is of shape $\exists x:\gamma.\, G'[\vv{t/z}]$ and $\exfMap{G} = \exists_\gamma\,(\Lambda_{\gamma,\,G'}\,\vv{z})$.  It follows from the induction hypothesis that $\exfD \vdash \exfMap{G'}[\vv{t/z}][u/x]$ for some closed $u$.  Hence, it follows from \ruleName{Res} that $\exfD \vdash \Lambda_{\gamma,G'}\,\vv{t}\,u$.  Then, from Lemma~\ref{lem:aux_ex_generates_all_instances}, we obtain $\exfD \vdash \exists_\gamma\,(\Lambda_{\gamma,G'}\,\vv{t})$, which is just $\exfD \vdash \exfMap{G}[\vv{t/z}]$ as required.
    \item[\ruleName{Res}] In this case $G[\vv{t/z}]$ is of shape $A[\vv{s/y}]$ and there is a clause $\forall \vv{y}.\ G' \Rightarrow A$ in $D$.  Then there is a clause $\forall \vv{y}.\ \exfMap{G'} \Rightarrow A$ in $\exfD$.  It follows from the induction hypothesis that $\exfD \vdash \exfMap{G'}[\vv{s/y}]$.  Therefore, by \ruleName{Res}, we get $\exfD \vdash A[\vv{s/y}]$.  Since $G[\vv{t/z}] = A[\vv{s/y}]$ with $A$ the head of a clause, we have that $\exfMap{G} = G$ and hence our conclusion is already in the desired form.
  \end{description}

  For the backward direction, we prove the following for all subformulas $H'$ of the original clauses with free variables $\vv{z}$:
  \[
    \exfD \vdash H \quad\text{and}\quad H = \exfMap{H'}[\vv{t/z}] \quad\text{and}\quad \fv(H') = \vv{z} \quad\text{implies}\quad D \vdash H'[\vv{t/z}]
  \]
  The proof is by strong induction on $\exfD \vdash H$:
  \begin{description}
    \item[\ruleName{T}] Immediate.
    \item[\ruleName{And}] Immediate from the induction hypothesis and the definition of the mapping.
    \item[\ruleName{Ex}] In this case, $H$ is of shape $\exists x\:\gamma.\ G$, but there is no $G'$ such that $\exists x\:\gamma.\ G = \exfMap{G'}[\vv{t/z}]$.  So the result is vacuously true.
    \item[\ruleName{Res}] In this case, $H$ is of shape $A[\vv{s/y}]$ and we have $\forall \vv{y}.\ G \implies A \in \exfD$.  We analyse cases on the predicate that heads $A[\vv{s/y}]$ (which is just the predicate that heads $A$).  As in the previous case, this predicate cannot be $\Lambda_{\gamma,K}$ or $\mathsf{Comp}_f^{i,j}$ since neither of those heads a formula in the image of $\exfMap{-}$.
      \begin{itemize}
        \item If the predicate is $\exists_\gamma$, we reason as follows.  In this case, since $A$ heads a clause in $\exfD$, it must be that $A$ is $\exists_\gamma\,y$ for some variable $y$.  Therefore, $\vv{s}$ is a singleton $s$.  Suppose $A[\vv{s/y}] = \exfMap{H'}[\vv{t/z}]$ for some $H'$.  Since there are no existential quantifiers in $A[\vv{s/y}]$ and $A$ is headed by $\exists_\gamma$, $\exfMap{H'}$ must be of shape $\exists_\gamma\,(\Lambda_{\gamma,\,K}\,\vv{z})$ for some $K$ and $\vv{z} = \fv(K)$.  Hence, $s = (\Lambda_{\gamma,\,K}\,\vv{z})$ with $\vv{z} = \fv(K)$.  Thus $A[\vv{t/z}] = \exists_\gamma\,(\Lambda_{\gamma,\,K}\,\vv{t})$ and, since we know this is provable in the context $\exfD$, it follows from the proof of from Lemma~\ref{lem:aux_ex_generates_all_instances} (although we did not state the lemma this way) that there is a closed term $u$ and there is actually a subproof concluding $\exfD \vdash (\Lambda_{\gamma,\,K}\,\vv{z})\,u$.  This subproof must be concluded using the rule \ruleName{Res} with premise $\exfD \vdash \exfMap{K}[\vv{t/z}][u/x]$.  Since this is a strict subproof of the original, we can apply the induction hypothesis to obtain $D \vdash K[\vv{t/z}][u/x]$.  Then it follows from \ruleName{Ex} that $D \vdash (\exists x:\gamma.\ K)[\vv{t/z}]$, as required.
        \item Otherwise the predicate is some $P$ from $\Pi$ (the old signature).  In this case, the side premise comes from $\exfMap{D}$ and $G$ of the side premise must be of shape $\exfMap{G'}$ for some $G'$ from the original clauses with $\fv(G') = \vv{y}$.  Then, the induction hypothesis applies to $\exfD \vdash \exfMap{G'}[\vv{s/y}]$, so we can deduce $D \vdash G'[\vv{s/y}]$.  Then, by \ruleName{Res}, we obtain $A[\vv{s/y}]$.  Now, suppose that $A[\vv{s/y}] = \exfMap{H'}[\vv{t/z}]$ as required by the statement.  Since $A$ is a head with predicate $P$, it can only be that $\exfMap{H'} = H'$ and so what we obtained is already in the correct form.
      \end{itemize}
  \end{description}







\end{proof}

\section{Supporting Material for Section~\ref{sec:rewriting}}
\label{appx:rewriting}

\begin{lemma}[Reflexivity of implication]\label{lem:imp-refl}
  $V,\,\yy \vdash (\forall \vv{z}.\, U \implies U) \rew^* \truetm$
\end{lemma}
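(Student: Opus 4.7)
The plan is to proceed by structural induction on the automaton formula $U$. In each case the general strategy is the same: use \ruleName{Imp} to push past the outer $\forall \zz$, placing (the clauses of) $U$ into the rewrite context $V \cup \{U\}$, rewrite the resulting body down to $\truetm$ using the newly available clauses, and finally use \ruleName{Scope} to discharge the now-trivial $\forall \zz.\,U \Rightarrow \truetm$.

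For the base cases: when $U = \truetm$, the bound variables $\zz$ are disjoint from $\fv(\truetm)$, so a single application of \ruleName{Scope} suffices. When $U = P\,x$, the side condition of \ruleName{Imp} is satisfied and we reduce to rewriting $P\,x$ in the context $V \cup \{P\,x\}$; \ruleName{Refl} then closes the body, and \ruleName{Scope} finishes. The case $U = P\,c$ is analogous, using \ruleName{Step} (with $P\,c$ viewed as the nullary clause $\truetm \Rightarrow P\,c$). For a conjunction $U = U_1 \wedge U_2$, \ruleName{ImpAnd} distributes the implication into $(\forall \zz.\,U \Rightarrow U_1) \wedge (\forall \zz.\,U \Rightarrow U_2)$, and \ruleName{AndL}/\ruleName{AndR} together with the inductive hypothesis (strengthened so as to carry along arbitrary contexts that include $U$) handle each conjunct. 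For a nested constructor-headed clause $U = \forall \vv{y'}.\,U' \Rightarrow P\,(f\,\vv{y'})$, two nested applications of \ruleName{Imp} reduce the task to rewriting $P\,(f\,\vv{y'})$ in a context containing both $U$ and the conjuncts of $U'$; a \ruleName{Step} with side premise $U$ itself turns the body into $U'$, to which the inductive hypothesis (on the structurally smaller $U'$) applies, and two further \ruleName{Scope} steps recover $\truetm$.

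The hard part will be the nested variable-headed case $U = \forall \vv{y'}.\,U' \Rightarrow P\,(x\,\vv{y'})$, where the head variable $x$ is among the locally bound variables introduced by the surrounding $\forall \zz$. Unlike the constructor-headed case, here stripping the inner $\forall \vv{y'}$ by a second \ruleName{Imp} leaves us with a body $P\,(x\,\vv{y'})$ that is not immediately closable: \ruleName{Refl} and \ruleName{Step} do not apply to its shape, and \ruleName{Assm} requires a constructor-headed side premise for $P$, which is not generally present in $V \cup \{U\}$. I expect the formal proof to circumvent this by strengthening the inductive hypothesis so as to quantify uniformly over both the rewrite context and the abducible set $\yy$, and by handling this case directly at the level of $U$ rather than its body, exploiting the fact that $U$ has been inserted into the context and that its single conjunct has the same binding shape as a clause generated by \ruleName{Assm}. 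The remaining obligations will then reduce, via the inductive hypothesis on the strictly smaller $U'$, to reflexivity instances already established in the inductive step.
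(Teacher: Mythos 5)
Your treatment of the routine cases (the base cases, the conjunction case via \ruleName{ImpAnd} with a context-strengthened induction hypothesis, and the constructor-headed nested case via two \ruleName{Imp}s, a \ruleName{Step} with the clause itself, and \ruleName{Scope}) is sound. But the case you yourself flag as the hard part is where the whole proof lives, and your proposal does not close it. For $U = \forall \vv{y'}.\,U' \Rightarrow P\,(x\,\vv{y'})$ with $x$ among the outer bound variables $\zz$, after the two \ruleName{Imp} steps the body is the atom $P\,(x\,\vv{y'})$ in the context $U' \wedge U \wedge V$ with abducibles $\yy$. You correctly note that, reading \ruleName{Step} as constructor-headed only, none of \ruleName{Refl}, \ruleName{Step}, \ruleName{Assm} fires (for \ruleName{Assm} there are in fact two blockers: the side premise must be constructor-headed, \emph{and} the head of the goal atom must be an abducible variable, which $x \in \zz$ is not, since \ruleName{Imp} keeps $\yy$ fixed and does not promote $\zz$ to abducibles). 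Your proposed escape --- handling the case ``at the level of $U$'', appealing to the fact that the conjunct ``has the same binding shape as a clause generated by \ruleName{Assm}'', and quantifying the IH over the abducible set --- does not yield any rewrite step: the only rules applicable to a goal of this shape are \ruleName{Imp} and \ruleName{Scope}, \ruleName{Scope} is blocked because $x$ occurs free in the body, and having a variable-headed clause sitting in the context is useless unless some rule can consume it. So, as written, the induction cannot be completed.

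The missing idea is that \ruleName{Step} must be read with the head symbol ranging over variables as well as constructors, so that the clause $\forall \vv{y'}.\,U' \Rightarrow P\,(x\,\vv{y'})$ --- placed into the context by the outer \ruleName{Imp} --- can itself be used to rewrite $P\,(x\,\vv{y'})$ to $U'$. This is exactly how the paper argues in its proof of the more general Lemma~\ref{lem:true_automaton} (of which Lemma~\ref{lem:imp-refl} is the instance $U \leqa U$), where \ruleName{Step} is applied with side condition $(\forall \xx.\,U''' \Rightarrow P\,(z_i\,\xx)) \in U$ after $U$ has been moved into the context; the paper gives no separate proof of Lemma~\ref{lem:imp-refl} itself. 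With that reading your skeleton does work: outer \ruleName{Imp} adds $U$ to the context, inner \ruleName{Imp} adds $U'$, \ruleName{Step} with the clause itself turns the body into $U'$, the strengthened IH (every conjunct of $U'$ is now in the context, and $U'$ is of strictly smaller order) rewrites it to $\truetm$, and two \ruleName{Scope} steps finish. Without either adopting this reading of \ruleName{Step} or adding an explicit rule for variable-headed context clauses, the variable-headed case --- and hence the lemma --- is not derivable by the route you describe.
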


\begin{lemma}[Progress]\label{lem:progress}
  If $V,\,\zz \vdash G[\uu/\yy] \rew H$ then there is $K$ and $U$ such that: 
  \[
    \textit{(a)} \quad V,\,\yy\zz \vdash G \rew^* K \wedge U
  \]
  and we can split $H$ as $H_1 \wedge H_2$ so that both the following hold:
  \[
    \textit{(b)} \quad V,\,\zz \vdash U[\uu/\yy] \rew^* H_1 \qquad\qquad \textit{(c)} \quad K[\uu/\yy] = H_2
  \]
  and, moreover, $\fv(U) \subseteq \yy$.
\end{lemma}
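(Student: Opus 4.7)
The plan is to proceed by structural induction on the derivation of the single rewrite step $V,\zz \vdash G[\uu/\yy] \rew H$, with case analysis on the rule applied. The guiding idea is to classify each rewrite as either \emph{locally traceable} to $G$ itself, in which case a corresponding rewrite is applied on $G$ (producing $K$, with $U = \truetm$, $H_1 = \truetm$, $H_2 = H$), or \emph{substitution-enabled}, in which case we defer by letting $U$ capture the part of $G$ whose rewriting only becomes possible after the substitution $[\uu/\yy]$.

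For the axiomatic rules \textbf{Refl}, \textbf{Step}, \textbf{Assm}, I would case-split on whether the head symbol of $G[\uu/\yy]$ matched by the rule originates in $G$ or in a substituted $u_i$. In the first sub-case, the same rule is directly applicable on $G$. In the second, for instance when $G = P\,(y_i\,\vv{s'})$ with $y_i \in \yy$ and $u_i$ constructor-headed so that \textbf{Step} applies to $G[\uu/\yy]$, I would apply \textbf{Assm} on $G$ using the very automaton clause of $V$ that witnessed the step. This yields two conjuncts: a substituted clause body, which I take as $K$ (its image under $[\uu/\yy]$ matches the part of $H$ arising from the $\zz$-dependent arguments), and a residual $(\forall \vv{z'}.\,U' \implies P\,(y_i\,\vv{z'}))$ that I take as $U$; its only free variable is $y_i \in \yy$. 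After applying $[\uu/\yy]$, this residual clause can be unwound into the remaining part of $H$ by a short sequence of \textbf{Imp}, \textbf{Step}, \textbf{ImpAnd}, and \textbf{Scope} steps, invoking Lemma~\ref{lem:imp-refl} to discharge the reflexive conjunct that \textbf{ImpAnd} leaves behind.

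The structural rules \textbf{AndL}, \textbf{AndR} are handled by a direct appeal to the induction hypothesis on the sub-derivation, conjoining the resulting $K$ and $U$ with the unrewritten sibling. For \textbf{Imp}, I apply the IH under the augmented rewrite environment $U'' \wedge V$, where $U''$ is the antecedent of the clause matched by \textbf{Imp}, obtaining $K_0, U_0$ and a decomposition of the body's rewrite target. I then reconstruct $V,\yy\zz \vdash G \rew^* (\forall \vv{z'}.\,U'' \implies K_0) \wedge (\forall \vv{z'}.\,U'' \implies U_0)$ by one \textbf{Imp} step followed by \textbf{ImpAnd}, take these two conjuncts as $K$ and $U$ respectively, and verify the \textbf{Imp} side condition by noting that the forbidden shape on the body is inherited from the side condition of the original rewrite.

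The main obstacle is maintaining the invariant $\fv(U) \subseteq \yy$ in the substitution-enabled sub-cases when $G$'s arguments $\vv{s'}$ have free variables mixed between $\yy$ and $\zz$: naive deferral $U \defeq G$ would violate this whenever some $z_j \in \zz$ occurs in $\vv{s'}$. The use of \textbf{Assm} on $G$ rather than a blanket deferral is what resolves this, since it absorbs all $\zz$-dependent arguments into the $K$-conjunct while keeping the residual clause on $y_i$, whose only free variable is indeed in $\yy$. A secondary nuisance is that the \textbf{Imp} case cannot produce a natural split $H = H_1 \wedge H_2$ when $H$ is a single nested clause; here the \textbf{ImpAnd} rewrite is used to syntactically distribute $\wedge$ across the clause head during the rewriting of $G$, producing the required conjunctive form.
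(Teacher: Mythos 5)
Your proposal follows essentially the same route as the paper's proof: induction on the single rewrite step with case analysis on the rule, where the crux — the variable-headed \ruleName{Step}/\ruleName{Assm} sub-case — is resolved exactly as in the paper by applying \ruleName{Assm} to $G$ with the same clause, taking the residual nested clause $\forall \vv{x_2}.\,\res{U'}{\vv{x_2}} \implies P\,(y\,\vv{x_2})$ as $U$ (so $\fv(U) \subseteq \yy$), and discharging (b) via \ruleName{Imp} with \ruleName{Step}, \ruleName{ImpAnd}, \ruleName{Scope} and Lemma~\ref{lem:imp-refl}. The only divergence is inessential: in the \ruleName{Imp} case the paper takes $U := U'$ unwrapped and uses \ruleName{Scope} (its free variables lie in $\yy$, disjoint from the bound ones) to surface it in (a), so that (b) is literally the induction hypothesis, whereas you keep both conjuncts wrapped under the implication, which is equivalent but obliges you to replay the (b) rewrite under the enlarged clause environment.
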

\begin{proof}
  By induction on the rewrite $V,\,\zz \vdash G[\uu/\yy] \rew H$.
  \begin{description}
    \item[\ruleName{Refl}] In this case, $H$ is $\truetm$ and $G[\uu/\yy]$ is some $P\,x \in V$.  Take $K = H_1 = H_2 = \truetm$ and (c) follows immediately.  It must be that $G = P\,v$ for some variable $v$.  We distinguish cases on $v$: 
      \begin{itemize}
        \item If $v \in \yy$ then take $U$ to be $P\ v$.  Then (a) follows in 0 rewrite steps and (b) follows by \ruleName{Refl} because $U[\uu/\yy] = P\ x$.  
        \item Otherwise, take $U$ to be $\truetm$.  Then (a) follows by \ruleName{Refl} and (b) is immediate.
      \end{itemize}
  \item[\ruleName{Step}] In this case, $G[\uu/\yy] = P\ (f\ \ss)$, $H = U'[\ss/\xx]$ and there is a clause $(\forall \xx.\, U' \implies P\ (f\ \xx)) \in V$.  In general, $G$ has shape $P\,s$ and so we distinguish cases on $s$:
  \begin{itemize} 
    \item If the head symbol of $s$ is a variable $y \in \yy$, it must be that $s$ is of shape $y\ \vv{v}$, $y[\uu/\yy] = f\,\vv{s_1}$, $\vv{v}[\uu/\yy] = \vv{s_2}$ and $\vv{s_1s_2} = \ss$ and we may split $\xx$ as $\vv{x_1}\vv{x_2}$ (i.e. $|\vv{v}| = |\vv{s_2}| = |\vv{x_2}|$).  
    First observe that $H = U'[\vv{s}/\vv{x}]$ can also be written as $\res{U'}{\vv{x_1}}[\vv{s}/\vv{x_1}] \wedge \res{U'}{\vv{x_2}}[\vv{v}/\vv{x_2}][\vv{u}/\vv{y}]$.  Take $H_1$ as the former and $H_2$ as the latter conjunct.  
    Take $K$ to be $\res{U'}{\vv{x_2}}[\vv{v}/\vv{x_2}]$ and $U$ to be $\forall \vv{x_2}.\, \res{U'}{\vv{x_2}} \implies P\,(y\ \vv{x_2})$.  Thus (c) holds immediately.
    Then (a) follows by \ruleName{Assm}:
    \[
      V,\,\yy\zz \types P(y\ \vv{v}) \rew \res{U'}{\vv{x_2}}[\vv{v}/\vv{x_2}] \wedge (\forall \vv{x_2}.\, \res{U'}{\vv{x_2}} \implies P\,(y\ \vv{x_2}))
    \]
    For (b), observe that $U[\uu/\yy]$ is $\forall \vv{x_2}.\, \res{U'}{\vv{x_2}} \implies P\ (f\,\vv{s_1}\vv{x_2})$.  By \ruleName{Imp} and \ruleName{Step}, \ruleName{ImpAnd}, and \ruleName{Scope} respectively, we have:  
    \begin{align*}
      V, \zz \vdash \forall \vv{x_2}.\, \res{U'}{\vv{x_2}} \implies P\ (f\,\vv{s_1}\vv{x_2}) 
      &\rew \forall \vv{x_2}.\, \res{U'}{\vv{x_2}} \implies U'[\vv{s_1}\vv{x_2}/\vv{x}] \\
      &\rew^* (\forall \vv{x_2}.\, \res{U'}{\vv{x_2}} \implies \res{U'}{\vv{x_1}}[\vv{s_1}/\vv{x_1}]) \wedge (\forall \vv{x_2}.\, \res{U'}{\vv{x_2}} \implies \res{U'}{\vv{x_2}}) \\
      &\rew^* \res{U'}{\vv{x_1}}[\vv{s_1}/\vv{x_1}] \wedge (\forall \vv{x_2}.\, \res{U'}{\vv{x_2}} \implies \res{U'}{\vv{x_2}})
    \end{align*}
    with the latter conjunct rewriting to $\truetm$ by the reflexivity of implication, leaving $H_1$.

    \item Otherwise, $s$ is of shape $f\ \ww$ and $\ww[\uu/\yy] = \ss$.  Then take $K \defeq U'[\ww/\xx]$, $H_2 \defeq H$, and $U \defeq \truetm$ and $H_1 \defeq \truetm$.  Now (a) follows by \ruleName{Step}, (b) is trivial, and (c) from $K[\uu/\yy] = U'[\ww/\xx][\uu/\yy] = H_2$.
  \end{itemize}

  \item[\ruleName{Assm}] In this case, $G[\uu/\yy]$ has shape $P\,(z\,\ss)$ and $H$ has shape $U'[\ss/\xx] \land (\forall \xx.\, U' \Rightarrow P\,(z\,\xx))$ with $z \in \zz$.  In general, $G$ has shape $P\,s$ and so we distinguish cases on $s$:
  \begin{itemize} 
    \item This case is similar to the corresponding \ruleName{Step} case.  If the head symbol of $s$ is a variable $y \in \yy$, it must be that $s$ is of shape $y\ \vv{v}$, $y[\uu/\yy] = z\,\vv{s_1}$, $\vv{v}[\uu/\yy] = \vv{s_2}$ and $\vv{s_1s_2} = \ss$ and we may split $\xx$ as $\vv{x_1}\vv{x_2}$ (i.e. $|\vv{v}| = |\vv{s_2}| = |\vv{x_2}|$).  
    First observe that $H = U'[\vv{s}/\vv{x}] \land (\forall \xx.\, U' \Rightarrow P\,(z\,\xx))$ can also be written as $\res{U'}{\vv{x_1}}[\vv{s}/\vv{x_1}] \land (\forall \xx.\, U' \Rightarrow P\,(z\,\xx)) \land \res{U'}{\vv{x_2}}[\vv{v}/\vv{x_2}][\vv{u}/\vv{y}]$.  Take $H_1$ as the first two conjuncts and $H_2$ as the third conjunct.  
    Take $K$ to be $\res{U'}{\vv{x_2}}[\vv{v}/\vv{x_2}]$ and $U$ to be $\forall \vv{x_2}.\, \res{U'}{\vv{x_2}} \implies P\,(y\ \vv{x_2})$.  Thus (c) holds immediately.
    Then (a) follows by \ruleName{Assm}:
    \[
      V,\,\yy\zz \types P(y\ \vv{v}) \rew \res{U'}{\vv{x_2}}[\vv{v}/\vv{x_2}] \wedge (\forall \vv{x_2}.\, \res{U'}{\vv{x_2}} \implies P\,(y\ \vv{x_2}))
    \]
    For (b), observe that $U[\uu/\yy]$ is $\forall \vv{x_2}.\, \res{U'}{\vv{x_2}} \implies P\ (z\,\vv{s_1}\vv{x_2})$.  This rewrites with \ruleName{Assm} (with \ruleName{Imp}) followed by \ruleName{ImpAnd}:  
    \begin{align*}
      V, \zz \vdash\, &\forall \vv{x_2}.\, \res{U'}{\vv{x_2}} \implies P\ (z\,\vv{s_1}\vv{x_2}) \\
      &\rew \forall \vv{x_2}.\, \res{U'}{\vv{x_2}} \implies U'[\vv{s_1}\vv{x_2}/\vv{x}] \land (\forall \vv{x}.\, U' \Rightarrow P\,(z\,\vv{x})) \\
      &\rew (\forall \vv{x_2}.\, \res{U'}{\vv{x_2}} \implies U'[\vv{s_1}\vv{x_2}/\vv{x}]) \land (\forall \vv{x_2}.\, \res{U'}{\vv{x_2}} \implies (\forall \vv{x}.\, U' \Rightarrow P\,(z\,\vv{x})))
    \end{align*}
    The right conjunct rewrites to $\forall \vv{x'}.\, U' \Rightarrow P\,(z\,\vv{x'})$ by \ruleName{Scope}.  Note $U'[\vv{s_1}\vv{x_2}/\vv{x}]$ is $\res{U'}{\vv{x_1}}[\vv{s_1}/\vv{x_1}] \land \res{U'}{\vv{x_2}}$, so the left conjunct rewrites with \ruleName{ImpAnd} and \ruleName{Scope}:
    \begin{align*}
      V, \zz \vdash\, &\forall \vv{x_2}.\, \res{U'}{\vv{x_2}} \implies \res{U'}{\vv{x_1}}[\vv{s_1}/\vv{x_1}] \land \res{U'}{\vv{x_2}} \\
      &\rew (\forall \vv{x_2}.\, \res{U'}{\vv{x_2}} \implies \res{U'}{\vv{x_1}}[\vv{s_1}/\vv{x_1}]) \land (\forall \vv{x_2}.\, \res{U'}{\vv{x_2}} \implies \res{U'}{\vv{x_2}})  \\
      &\rew \res{U'}{\vv{x_1}}[\vv{s_1}/\vv{x_1}] \land (\forall \vv{x_2}.\, \res{U'}{\vv{x_2}} \implies \res{U'}{\vv{x_2}})
    \end{align*}
    with the latter conjunct rewriting to $\truetm$ by the reflexivity of implication.
    This gives us a rewrite of $U[\uu/\yy]$ to $H_1$, proving (b):
    \[
      V, \zz \vdash (\forall \vv{x_2}.\, \res{U'}{\vv{x_2}} \implies P\ (z\,\vv{s_1}\vv{x_2}))
        \rew^* \res{U'}{\vv{x_1}}[\vv{s_1}/\vv{x_1}] \land (\forall \vv{x}.\, U' \Rightarrow P\,(z\,\vv{x}))
    \]
    \item Otherwise, $s$ is of shape $z\ \ww$ and $\ww[\uu/\yy] = \ss$.  Then take $K \defeq U'[\ww/\xx]$, $H_2 \defeq U'[\ss/\xx]$, and $U \defeq (\forall \vv{x}.\, U' \Rightarrow P\,(z\,\vv{x}))$ and $H_1 \defeq (\forall \vv{x}.\, U' \Rightarrow P\,(z\,\vv{x}))$.  Now (a) follows by \ruleName{Assm}, (b) is trivial, and (c) from $K[\uu/\yy] = U'[\ww/\xx][\uu/\yy] = H_2$.
  \end{itemize}

  \item[\ruleName{AndL}, \ruleName{AndR}] Since both are similar, we only show the former.  In this case $G[\uu/\yy]$ has shape $G_1[\uu/\yy] \wedge G_2[\uu/\yy]$, $H$ has shape $H_1 \wedge G_2[\uu/\yy]$ and we have that $V, \zz \vdash G_1[\uu/\yy] \rew H_1$.  It follows from the induction hypothesis that there are $K'$, $U'$ and we may split $H_1$ as $H_1'$ and $H_2'$ such that the required properties hold.  Then take $U := U'$, $H_1 := H_1'$, $K := K' \wedge G_2$ and $H_2 := H_2' \wedge G_2[\uu/\yy]$.  Then (a) holds because $G_1 \wedge G_2 \rew^* U \wedge K' \wedge G_2$ and both (b) and (c) follow straightforwardly from the induction hypothesis.

  \item[\ruleName{Imp}] In this case, $G[\uu/\yy]$ has shape $\forall \xx.\, W \Rightarrow G'[\uu/\yy]$ and $H$ has shape $\forall \xx.\, W \Rightarrow H'[\uu/\yy]$ and we have $V, \zz \vdash G'[\uu/\yy] \rew H'$. Note that, as \( W \) is an automaton body it only contain \( \xx \). It follows from the induction hypothesis that there are $K'$ and $U'$ and that we can split $H'$ as $H_1' \wedge H_2'$, such that the required properties hold.  Then take $U := U'$, $H_1 := H_1'$, $K := \forall \zz.\, W \Rightarrow K'$ and $H_2 := \forall \zz.\, W \Rightarrow  H_2'$.  Then (a) follows because $U$ can only contain the $\yy$ as free variables, we can apply scoping to deduce:
  \[
    V, \yy\zz \vdash \forall \xx.\, W \Rightarrow G' \rew \forall \xx.\, W \Rightarrow K' \wedge U' \rew (\forall \xx.\, W \Rightarrow K') \wedge U'
  \] 
  and properties (b) and (c) follow from the induction hypothesis.

  \item[\ruleName{ImpAnd}] In this case, $G[\uu/\yy]$ has shape $\forall \vv{x}.\, U' \implies G_2[\uu/\yy] \wedge G_3[\uu/\yy]$.  Set $U := \truetm$, $H_1 := \truetm$, $K := (\forall \vv{x}.\, U' \implies G_2) \wedge (\forall \vv{x}.\, U' \implies G_3)$ and $H_2 := H$.  Then (a) follows by \ruleName{ImpAnd}, (b) and (c) are immediate.
  
  \item[\ruleName{Scope}] In this case, $G[\uu/\yy]$ has shape $\forall \xx.\, U' \implies G[\uu/\yy]$ and $H$ has shape $G[\uu/\yy]$.  We assume $G[\uu/\yy]$ does not contain any variable from $\xx$.  Hence, neither does $G$ and so we can rewrite:
  \[
    V, \yy\zz \vdash \forall \xx.\, U' \implies G \rew G
  \] 
  Hence, the result follows by taking $U \defeq \truetm$, $H_1 \defeq \truetm$, $K \defeq G$ and $H_2 \defeq G[\uu/\yy]$.
  \end{description}
\end{proof}


\subsection{Completeness}

To show that $D \models G$ implies $\apprx(D) \models G$, we will argue that one can use a proof of $D \proves G$ as a guide to follow in order to obtain exactly the automaton clauses $V \subseteq \apprx(D)$ needed to replay the proof as $V \proves G \rew \truetm$.
Clearly, the main aim of our argument will be to show that, wherever we made a resolution step using a clause from $D$ in the proof of $D \proves G$, we can find an automaton clause in $\apprx(D)$ with which to simulate it when rewriting.

To see why, consider again the proof in Figure~\ref{fig:file-proof}.
We can observe that the resolution steps at the leaves of the proof necessarily involve an automaton clause as the side premise: it cannot be otherwise since a resolution step at a leaf must be via a clause that has a vacuous body, such as $\truetm \implies \IsClosed(\closedhdl)$, and clauses of this form are necessarily automaton.

Then, there may be many consecutive resolution steps leading up to this leaf that all happen to involve clauses from $D$ that are automaton already.
We observe that such a situation gives rise to a maximal subproof, say of $D \types H$ which can be viewed equivalently as a proof $\apprx(D) \proves H$, since it uses only clauses that are automaton (and hence in $\apprx(D)$).
In the case of Figure~\ref{fig:file-proof}, the subproof rooted at \changed[jj]{$\Cex(\readS\ \closedhdl\ (\putContS\ \idS))$ is of this form, since (18) and (7) both happen to be automaton.}

If the subproof $D \types H$ is the largest that contains only automaton inferences, then continuing down the proof tree, the next resolution step involves a clause from $D$ that is not itself automaton.
Let us say that this inference is of shape:
\[
  \prftree[r,l]{$(\forall \yy.\ G' \implies A) \in D$}{\ruleName{Res}}{
  D \proves G'[\vv{s}/\yy]
  }
  {
  D \proves A[\vv{s}/\yy]
  }
\]
and, for simplicity, that $H = G'[\vv{s}/\yy]$.
The fact that $\forall \yy.\ G' \implies A$ is not automaton indicates that $G'$ is not of the correct form.
However, the subproof $\apprx(D) \proves H$ that sits on top of this inference shows us how to prove $G'[\vv{s}/\yy]$ using only resolution with automaton clauses.
We can view this proof as a specification for rewriting $G'[\vv{s}/\yy]$ to $\truetm$, we just need to perform the same resolution steps in the same order.
The following lemma shows that performing the same rewriting sequence on the raw goal $G'$ (which is the body that is preventing the \ruleName{Res} inference from being automaton) will give rise to an automaton formula.





\begin{restatable}[Decomposition]{lemma}{decomp}
  If $V \vdash G[\vv{u}/\vv{y}] \rew^* \truetm$ then there is some $U$ such that:
  \begin{enumerate}[label=(\alph*)]
    \item $V,\,\yy \vdash G \rew^* U$
    \item and $V \vdash U[\vv{u}/\vv{y}] \rew^* \truetm$
  \end{enumerate}
\end{restatable}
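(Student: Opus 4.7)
The plan is to prove this by induction on the length $n$ of the rewrite sequence $V \vdash G[\uu/\yy] \rew^* \truetm$, using the Progress Lemma as the single-step engine.

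For the base case $n = 0$, we have $G[\uu/\yy] = \truetm$, which forces $G = \truetm$ since substitution cannot introduce a $\truetm$ subterm. Taking $U \defeq \truetm$ satisfies (a) and (b) trivially. For the inductive case, factor the sequence as $V \vdash G[\uu/\yy] \rew H \rew^* \truetm$, where the tail has length $n-1$. Applying the Progress Lemma (with an empty extra-variable context $\zz$) to the first step yields a $K$ and an automaton formula $U'$ such that $V,\yy \vdash G \rew^* K \wedge U'$, together with a decomposition $H = H_1 \wedge H_2$ where $V \vdash U'[\uu/\yy] \rew^* H_1$, $K[\uu/\yy] = H_2$, and $\fv(U') \subseteq \yy$.

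The key observation is that the derivation $H_1 \wedge K[\uu/\yy] \rew^* \truetm$ can be factored into independent derivations $H_1 \rew^* \truetm$ and $K[\uu/\yy] \rew^* \truetm$, each of length at most $n-1$. This is because the only rewrite rules whose principal formula is a conjunction are \ruleName{AndL} and \ruleName{AndR}, which act locally on a single conjunct, together with our convention identifying formulas up to $\truetm$ as unit. Hence we may compose $V \vdash U'[\uu/\yy] \rew^* H_1 \rew^* \truetm$ directly, and apply the induction hypothesis to $V \vdash K[\uu/\yy] \rew^* \truetm$ to obtain a $U''$ with $\fv(U'') \subseteq \yy$ such that $V,\yy \vdash K \rew^* U''$ and $V \vdash U''[\uu/\yy] \rew^* \truetm$.

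Setting $U \defeq U'' \wedge U'$, property (a) follows by composing $V,\yy \vdash G \rew^* K \wedge U' \rew^* U'' \wedge U'$, where the second leg propagates the rewriting of $K$ to $U''$ through \ruleName{AndL}; property (b) follows by composing $V \vdash U[\uu/\yy] = U''[\uu/\yy] \wedge U'[\uu/\yy] \rew^* \truetm \wedge H_1 \rew^* \truetm$. The main obstacle is the factoring of the conjunction rewrite into its independent components: one must be precise about the handling of $\truetm$ as unit and about the fact that rewrites on disjoint conjuncts do not interfere, since the rules of Figure~\ref{fig:rew} only ever consult the side context $V$ (which is unchanged) rather than the sibling conjunct. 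Given this, the induction closes, and the Progress Lemma cleanly carries out the rest of the work.
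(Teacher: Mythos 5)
Your proposal is correct and follows essentially the same route as the paper's own proof: induction on the length of the rewrite sequence, with the Progress Lemma applied (with empty extra context) to the first step, the tail factored into independent rewrites of the two conjuncts of $H$, the induction hypothesis applied to $K$, and $U \defeq U' \wedge U''$. The only difference is cosmetic: you justify explicitly the factoring of $H_1 \wedge H_2 \rew^* \truetm$ into componentwise rewrites (via \ruleName{AndL}/\ruleName{AndR} and $\truetm$ as unit), a step the paper simply asserts.
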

\begin{proof}
  By induction on the length $n$ of the rewrite sequence $V \vdash G[\vv{u}/\vv{y}] \rew^n \truetm$.

  \begin{itemize}
  \item When $n=1$, $G = \truetm$ and so we can take $U \defeq \truetm$ too.

  \item Otherwise, the sequence has shape $V \vdash G[\uu/\yy] \rew H \rew^{n-1} \truetm$ and we can invoke progress (Lemma~\ref{lem:progress}) to obtain some $U'$ and $K$ such that (a) $V \vdash G \rew^* U' \wedge K$ and we can split $H$ as $H_1 \wedge H_2$ with $V \vdash U'[\uu/\yy] \rew^* H_1$ and $K[\uu/\yy] = H_2$.  Since $V \vdash H \rew^{n-1} \truetm$, there are sequences $V \vdash H_1 \rew^* \truetm$ and $V \vdash H_2 \rew^* \truetm$ of length at most $n-1$.  So, by the induction hypothesis, we obtain some $U''$ such that $V \vdash K \rew^* U''$ and $V \vdash U''[\uu/\yy] \rew^* \truetm$.  Hence, we choose $U \defeq U' \wedge U''$.
  \end{itemize}

\end{proof}

For example, in Figure~\ref{fig:file-proof}, although the resolution step rooted at \changed[jj]{$\Cex(\putStrS\ \readS\ \closedhdl\ \idS)$ uses clause (10), namely $\forall x\,h\,k.\ \Cex(x\ h\ (\putContS \ k))  \implies \Cex(\putStrS\ x\ h\ k)$, the proof above this point shows us that resolving with automaton clauses (7) and (18) allows for rewriting $\Cex(\readS\ \closedhdl\ (\putContS\ \idS))$ to $\truetm$:
\[
  \Cex(\readS\ \closedhdl\ (\putContS\ \idS)) \rew \IsClosed(\closedhdl) \rew \truetm
\]
Using \ruleName{Assm}, we can simulate these same resolution steps on the raw body of clause (10) to obtain a new automaton formula:
\[
  \Cex(x\ h\ (\putContS \ k)) \rew \IsClosed(h) \wedge (\forall h'\,k'.\ \IsClosed(h') \implies \Cex(x\ h'\ k'))
\]
Here, we are adding the assumption that $x$ behaves like $\readS$ does, according to clause (7).
This then justifies adding a new automaton clause to $\apprx(D)$ corresponding to (10):
\[
  \forall x\,h\,k.\ \IsClosed(h) \wedge (\forall h'\,k'.\ \IsClosed(h') \implies \Cex(x\ h'\ k')) \implies \Cex(\putStrS\ x\ h\ k)
\]
This new clause allows us to extend our subproof now one further deduction, to:
\[
  D \proves \Cex(\putStrS\ \readS\ \closedhdl\ \idS)
\]
This reflects that $\apprx(D) \proves \Cex(\putStrS\ \readS\ \closedhdl\ \idS) \rew^* \truetm$.
By repeating this process we can replay the whole deduction using only automaton clauses from $\apprx(D)$.  Hence:}

\rewcomp*
\begin{proof}
  We show $D \vdash G$ implies $\apprx(D) \vdash G \rew^* \truetm$ by induction on the proof $D \vdash G$.  The result then follows from soundness (Theorem \ref{thm:soundness-rewriting}).
  \begin{description}
    \item[\ruleName{True}] The result is immediate.
    \item[\ruleName{Res}] In this case, $G$ is of shape $P\ s[\uu/\yy]$.  The premise is $D \vdash H[\uu/\yy]$ for some $(\forall \yy.\, H \implies P\ s) \in D$.  It follows from the induction hypothesis that $\apprx(D) \vdash H[\uu/\yy] \rew^* \truetm$.  Hence, by decomposition, there is some $U$ such that $\apprx(D),\,\yy \vdash H \rew^* U$ and $\apprx(D) \vdash U[\uu/\yy] \rew^* \truetm$. Then, by definition, $(\forall \yy.\, U \implies P\ s) \in \apprx(D)$ and so we can use \ruleName{Step} to rewrite $\apprx(D) \vdash P\ s[\uu/\yy] \rew U[\uu/\yy] \rew^* \truetm$, as required.
    \item[\ruleName{And}] In this case, $G$ is of shape $H_1 \wedge H_2$ and it follows from the induction hypothesis that $\apprx(D) \vdash H_1 \rew^* \truetm$ and $\apprx(D) \vdash H_2 \rew^* \truetm$.  Hence, the result follows straightforwardly.
  \end{description}
\end{proof}

\subsection{Decidability}
\begin{theorem}[Decidability]
  Let \( V \) be an automaton formula, \( \overline{y} \) is a sequence of variables, \( G \) an arbitary goal formula and \( U \) an automaton formula.
  Then \( V,\, \overline{y} \vdash G \rew^* U \) is decidable.
\end{theorem}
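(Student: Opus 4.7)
My plan is to decide $V,\,\yy \vdash G \rew^* U$ by performing a breadth-first search through the space of one-step reducts of $G$. For this strategy to work I need two ingredients: that $\rew$ is finitely branching so that reducts are effectively enumerable, and that the set of goals reachable from $G$ is finite, so that the search terminates. Finite branching is immediate: at each position in $G$ at most one rule can fire, and each rule's side condition selects an automaton clause from the finite set $V$ by pattern-matching on the subject, so the one-step reducts of any goal form a finite, computable set.

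For termination I would exhibit a well-founded measure $\mu(H)$ on goal formulas and verify that each rule strictly decreases it. A natural candidate is a multiset ordering over the complexities of the \emph{active} atoms of $H$, where an atom is deemed inactive exactly when it is the head of a nested clause $\forall \zz.\, U' \implies P\,(y\,\zz)$ with $y$ among the top-level parameters $\yy$, since the side condition of \ruleName{Imp} prevents any further rewriting of such atoms. Under a suitable definition of atom complexity, the rules \ruleName{Refl}, \ruleName{Step}, and \ruleName{Assm} each eliminate one active atom and replace it with atoms (plus at most one inert nested clause) of strictly smaller complexity; the congruence rules \ruleName{AndL}, \ruleName{AndR}, and \ruleName{Imp} lift strict decrease from subgoals to the enclosing context; and \ruleName{ImpAnd}, \ruleName{Scope} perform purely structural reorganisation that introduces no new active atoms.

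The hard part will be pinning down atom complexity precisely enough to get strict decrease in every case. The trickiest situation is when \ruleName{Step} or \ruleName{Assm} substitutes $[\ss/\xx]$ into a variable-application atom $P''(x\,\vv{y})$ of the selected automaton body: the result $P''(s_i\,\vv{y})$ may have the same raw subject depth as the redex, since $\depth(s_i\,\vv{y}) = \depth(s_i) + 1$ and $\depth(s_i) \leq \max \depth(\ss)$. I anticipate resolving this by moving to a lexicographic combination of the outer subject depth and the length of the head's spine, or by a finer structural count that tracks pending applications separately from the original term depth. Once $\mu$ is in hand, the combination of well-foundedness and finite branching produces an effectively computable finite reachability graph, and decidability of $V,\,\yy \vdash G \rew^* U$ reduces to graph reachability.
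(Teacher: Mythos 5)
Your overall strategy coincides with the paper's: show $\rew$ is finitely branching, exhibit a well-founded measure that every rule strictly decreases, and then decide $V,\,\yy \vdash G \rew^* U$ by exhaustive search over the finite reachability graph. The problem is that the proposal stops exactly where the paper's proof begins to do real work: you never produce the measure. You correctly isolate the delicate case --- after \ruleName{Step} or \ruleName{Assm} applies $[\ss/\xx]$ to the selected automaton body, a nested conjunct $\forall \zz.\,U' \Rightarrow P'\,(x_i\,\zz)$ becomes $\forall \zz.\,U' \Rightarrow P'\,(s_i\,\zz)$, whose subject need not be smaller than that of the redex $P\,(f\,\ss)$ --- but you defer its resolution to ``a lexicographic combination \ldots or a finer structural count''. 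The paper resolves precisely this point by a weight function in which a nested clause whose head mentions the clause's \emph{own} bound variables is charged only $\depth{t}$, while top-level atoms and clauses with closed heads are charged an extra unit, and by the key lemma that for an automaton body $U$ (which, by the typing discipline, mentions only the variables bound by the immediately enclosing clause) the weight of $U[\ss/\xx]$ is bounded by $\max\{\depth{s} \mid s \in \ss\}$; since the redex in \ruleName{Step}/\ruleName{Assm} weighs strictly more than this bound, strict decrease follows, and the congruence and structural rules (\ruleName{AndL}, \ruleName{AndR}, \ruleName{Imp}, \ruleName{ImpAnd}, \ruleName{Scope}) are then routine, much as you describe. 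Until you state such a measure and prove the analogous substitution lemma, the multiset-of-active-atoms plan is a plausible outline, not a proof; your anticipated ``finer structural count'' is indeed the right direction, but it is the whole content of the theorem.

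A smaller inaccuracy: it is not the case that ``at each position in $G$ at most one rule can fire''. Several automaton clauses of $V$ may match the same atom in \ruleName{Step}, and \ruleName{Assm} may apply with several choices of side clause, so a single position can yield several one-step reducts. Finite branching still holds, simply because $V$ is finite and each rule instance is determined by a position together with a clause of $V$, which is all your breadth-first search requires.
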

\begin{proof}
  \begin{definition}
    The \emph{weight} of a clause is defined as follows:

    \begin{align*}
      \mathsf{weight}(\textsf{true})                                                                   & \coloneqq 0                                                                                                        \\
      \mathsf{weight}(P\ t)                                                                            & \coloneqq \depth{t}                                                                                                \\
      \mathsf{weight}(G \wedge H)                                                                      & \coloneqq \mathsf{weight}(G) + \mathsf{weight}(H)                                                                  \\
      \mathsf{weight}(\forall \xx.\, U \Rightarrow \textsf{true}) = 1                                                                                                                                                       \\
      \mathsf{weight}(\forall \overline{x}.\, U \Rightarrow P\ t)                                      & \coloneqq \begin{cases}
                                                                                                                     \depth{t}     & \textrm{if}\ \fv(t) \cap \{ x \} \neq \emptyset \\
                                                                                                                     1 + \depth{t} & \textrm{otherwise}
                                                                                                                   \end{cases}                                          \\
      \mathsf{weight}(\forall \overline{x}.\, U \Rightarrow H_1 \wedge H_2)                            & \coloneq 1 + \mathsf{weight}(\forall \xx.\, U \Rightarrow H_1) + \mathsf{weight}(\forall \xx.\, U \Rightarrow H_2) \\
      \mathsf{weight}(\forall \overline{x}.\, U \Rightarrow (\forall \overline{y}.\, V \Rightarrow G)) & \coloneq 1 + \mathsf{weight}(\forall \overline{y}.\, V \Rightarrow G)
    \end{align*}
    When multiple cases apply due to the quotient structure of conjunction, we take the smallest weight.
  \end{definition}

  \begin{lemma}
    If \( \mathsf{measure}(G) < \mathsf{measure}(G') \), then \( \mathsf{measure}(\forall \xx. U \Rightarrow G) < \mathsf{measure}(\forall \xx. U \Rightarrow G') \).
  \end{lemma}

  \begin{lemma}
    If \( U \) is an automaton formula, then the weight of \( U[\ss/\xx] \) is at most \( \max \{ \depth{s} \mid s \in \ss \} \).
    \begin{proof}
      In the case of \( (P\ x)[\ss/\xx] \), we have that weight is bounded by \( \max \{ \depth{s} \mid s \in \ss \} \) as \( x \) may be instantiated with any \( \ss \).
      For \( (\forall \zz.\, U \Rightarrow P\ (x\ \zz))[\ss/xx] \), we have weight \( \depth{s} \leq \max \{ \depth s \mid \in \ss \} \) as \( U \) only contains the variables \( \zz \).
      Otherwise, \( U \) is a conjunction of both cases and we can proceed by structural induction.
    \end{proof}
  \end{lemma}

  \begin{lemma}
    Let \( V \) be an automaton formula, \( \overline{y} \) is a sequence of variables, and \( G \) and \( H \) formulas such that \( V,\, \yy \vdash G \rew H \).
    Then \( \mathsf{weight}(H) < \mathsf{weight}(G) \).
    \begin{proof}
      We proceed by induction on \( V,\, \yy \vdash G \rew H \):
      \begin{description}
        \item[\ruleName{Refl}] Trivial.
        \item[\ruleName{Step}]
          This case \( G \) is the atom \( P\ (f\ \ss) \) and has depth \( 1 + \max \{ \depth s \mid s \in \ss \} \).
          By the previous lemma, the weight of \( U[\ss/\xx] \) is at most \( \max \{ \depth s \mid s \in \ss \} \) as it is an automaton body.
          Therefore, we have a proper decrease as required.
        \item[\ruleName{Assm}]
          In this case, \( G \) is the atom \( P\ (x\ ss) \) which has depth \( 1 + \max \{ \depth s \mid s \in \ss \} \) and the reduct is \( U[\ss/\zz] \wedge (\forall \zz.\, U \Rightarrow P\ (x\ \zz)) \) which has depth \( \max \{ \depth{s} \mid s \in \ss \} \), again by the previous lemma.
          Thus this case is also satisfied.
        \item[\ruleName{AndL}, \ruleName{AndR}] Trivial.
        \item[\ruleName{Imp}]
          In this case, the goal is \( \forall \xx.\, U \Rightarrow G \) where \( G \) is not of the form \( P\ (y\ \xx) \).
          By induction on \( V \wedge U,\, \yy \vdash G \rew G' \), we have that \( \mathsf{weight}(G') < \mathsf{weight}(G) \).
          Therefore, \( \textsf{measure}(\forall \xx.\, U \Rightarrow G) < \textsf{measure}(\forall \xx.\, U \Rightarrow G') \) as required.
        \item[\ruleName{ImpAnd}] Trivial.
        \item[\ruleName{Scope}]
          In this case, the goal is \( \forall \xx.\, U \Rightarrow G \) where \( \fv{G} \cap \xx = \emptyset \).
          Consider inductively the possible shapes of \( G \):
          \begin{itemize}
            \item \( \mathsf{true} \) --- the measure is \( 1 \) and after rewritting is \( 0 \).
            \item \( P\ t \) --- the measure is \( 1 + \depth{t} \) and after rewriting is \( 1 \).
            \item \( G_1 \wedge G_2 \) --- the measure is \( 1 + \textsf{measure}(\forall \xx.\, U \Rightarrow G_1) + \textsf{measure}(\forall \xx.\, U \Rightarrow G_2) > 1 + \textsf{measure}(G_1) + \textsf{measure}(G_2) \) and after rewritting it is just \( \textsf{measure}(G_1) + \textsf{measure}(G_2) \).
            \item Finally, for \( \forall \yy.\, U' \Rightarrow G \), the measure is \( 1 + \textsf{measure}(\forall \yy.\, U' \Rightarrow G) \) and after rewriting it is \( \textsf{measure}(\forall \yy.\, U' \Rightarrow G) \).
          \end{itemize}
          In each case, we have a strict decrease as required.
      \end{description}
    \end{proof}
  \end{lemma}

  \begin{corollary}
    For any choice of \( V \), \( \overline{y} \) and \( G \), there are finitely many reducts \( G' \) such that \( V,\, \yy \vdash G \rew G' \).
  \end{corollary}
\end{proof}
\section{Supporting Material for Section~\ref{sec:automaton_formulas}}
\label{appx:intersection_types}

\subsection{Types}

\paragraph{Subtyping rules.} 
\begin{description}
  \item [\ruleName{Q-Bas}] if $(q_1,q_2) \in \preorder$, then $q_1 \leq q_2$
  \item [\ruleName{Q-Arr}] if $\sigmaInt_2 \leq \sigmaInt_1$ and $\tau_1 \leq \tau_2$, then $\sigmaInt_1 \to \tau_1 \leq \sigmaInt_2 \to \tau_2$
  \item [\ruleName{Q-Fun}] if $\bigwedge_{i=1}^n \tau_i \leq \tau$, then $\bigwedge_{i=1}^n (\sigmaInt \to \tau_i) \leq \sigmaInt \to \tau$
  \item [\ruleName{Q-Prj}] for all $1 \leq i \leq n$, $\bigwedge_{j=1}^n \tau_j \leq \tau_i$
  \item [\ruleName{Q-Glb}] if, for all $1 \leq i \leq n$, $\sigmaInt \leq \tau_i$, then $\sigmaInt \leq \bigwedge_{j=1}^n \tau_j$
  \item [\ruleName{Q-Trs}] if $\theta_1 \leq \theta_2$ and $\theta_2 \leq \theta_3$, then $\theta_1 \leq \theta_3$
\end{description}

\begin{lemma}
  \label{lem:subtyping}
  \[
    \textstyle \bigwedge_{i=1}^n \tau_i \leq \bigwedge_{j=1}^m \tau_j'
      \quad \iff \quad
    \forall j \in [1..m].\,\exists i_j \in [1..n].\, \tau_{i_j} \leq \tau_j'
  \]
\end{lemma}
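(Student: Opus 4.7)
The plan is to prove the two directions separately, observing that the ``$\Leftarrow$'' direction is a straightforward composition of the subtyping axioms while ``$\Rightarrow$'' reduces to a subtle inversion property that is the only real obstacle.

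For ``$\Leftarrow$'', I would assume that for each $j \in [1..m]$ there exists some $i_j \in [1..n]$ with $\tau_{i_j} \leq \tau_j'$. By \ruleName{Q-Prj} applied to $\bigwedge_{i=1}^n \tau_i$ with index $i_j$, we have $\bigwedge_{i=1}^n \tau_i \leq \tau_{i_j}$. Composing with the assumption via \ruleName{Q-Trs} yields $\bigwedge_{i=1}^n \tau_i \leq \tau_j'$ for every $j$. A final application of \ruleName{Q-Glb} assembles these into $\bigwedge_{i=1}^n \tau_i \leq \bigwedge_{j=1}^m \tau_j'$.

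For ``$\Rightarrow$'', I would first reduce the statement to the following inversion lemma: \emph{if $\bigwedge_{i=1}^n \tau_i \leq \tau'$ with $\tau'$ strict, then $\tau_i \leq \tau'$ for some $i$}. The reduction works by applying \ruleName{Q-Prj} (to pick out $\tau_j'$ from $\bigwedge_j \tau_j'$) and \ruleName{Q-Trs}, giving $\bigwedge_i \tau_i \leq \tau_j'$ for each $j$; the inversion lemma then supplies the witness $i_j$. Proving the inversion lemma itself is the main obstacle, because \ruleName{Q-Fun} allows intersections to be distributed through arrows, and \ruleName{Q-Trs} permits detours through arbitrary intermediate types $\theta$ that may themselves be intersections.

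To handle this, I would proceed by induction on the subtyping derivation of $\bigwedge_i \tau_i \leq \tau'$, analysing the final rule used. The cases \ruleName{Q-Prj} and \ruleName{Q-Glb} (the latter vacuous since $\tau'$ is strict) are immediate; \ruleName{Q-Bas} and \ruleName{Q-Arr} apply only when the left-hand side is already strict (so $n=1$) and are trivial. The case \ruleName{Q-Fun} forces $\tau' = \sigmaInt \to \tau''$ and each $\tau_i = \sigmaInt \to \tau_i''$, reducing to an induction on $\bigwedge_i \tau_i'' \leq \tau''$. The genuinely delicate case is \ruleName{Q-Trs}, where we have $\bigwedge_i \tau_i \leq \theta \leq \tau'$ for some $\theta$: here I expect to need a secondary induction on $\theta$, using the inductive hypothesis on the right premise to find a strict component $\theta_k$ of $\theta$ with $\theta_k \leq \tau'$, and then on the left to find an $i$ with $\tau_i \leq \theta_k$. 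The soundness of this plan hinges on showing that the relation ``$\theta$ is expressible as a conjunction of strict types'' is preserved under subtyping derivations in a sufficiently strong sense; if a direct syntactic induction proves resistant, a semantic detour via a simple set-theoretic interpretation of types (assigning each strict type a set of values and each intersection the set-theoretic intersection) would give the inversion lemma immediately, since strict subtypes are characterised pointwise.
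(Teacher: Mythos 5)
Your ``$\Leftarrow$'' direction and the reduction of ``$\Rightarrow$'' to the $m=1$ case are fine, but the inversion lemma itself is exactly where your plan has a genuine gap, and it is the \ruleName{Q-Trs} case. With your induction statement restricted to a \emph{strict} right-hand side, the IH applies to the right premise $\theta \leq \tau'$ (giving a strict component $\theta_k$ of $\theta$ with $\theta_k \leq \tau'$), but it does \emph{not} apply to the left premise $\bigwedge_i \tau_i \leq \theta$, whose right-hand side is an intersection. To extract ``some $\tau_i \leq \theta_k$'' from it you would either have to compose with \ruleName{Q-Prj} and \ruleName{Q-Trs} -- which produces a new, larger derivation to which the size/structure induction gives you no access -- or you would need precisely the general statement (intersection on the right) as your induction invariant. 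A ``secondary induction on $\theta$'' does not repair this, because inducting on the syntax of $\theta$ never hands you a subderivation of $\bigwedge_i \tau_i \leq \theta_k$. The paper avoids the problem by taking the lemma as stated (arbitrary intersection on the right) as the induction statement and inducting directly on the subtyping derivation: since the intermediate type $\theta$ in \ruleName{Q-Trs} is itself an intersection of strict types, the IH applies to \emph{both} premises, and the two witnesses compose by transitivity; the remaining cases (\ruleName{Q-Bas}, \ruleName{Q-Prj}, \ruleName{Q-Arr} immediate, \ruleName{Q-Fun} via the IH on its premise, \ruleName{Q-Glb} by applying the IH to each premise) are as you describe.

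Your proposed fallback is also not sound as stated: in an arbitrary set-theoretic model where intersections are interpreted as set intersections, $\bigcap_i S_i \subseteq T$ does not imply $S_i \subseteq T$ for some $i$ (take $S_1, S_2$ disjoint and $T = \emptyset$), so semantic soundness alone does not yield the inversion property for strict types; you would additionally need a completeness/primality argument for a carefully chosen model, which is more work than the syntactic strengthening. The short fix is simply to prove the lemma in its stated generality by induction on the derivation, i.e.\ the paper's argument.
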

\begin{proof}
  Direction $\Leftarrow$ follows from \ruleName{Q-Glb} followed by \ruleName{Q-Trs} and \ruleName{Q-Prj}.
  
  For $\Rightarrow$, suppose $ \bigwedge_{i=1}^n \tau_i \leq \bigwedge_{j=1}^m \tau_j'$.
  We use induction on the subtyping rules, of which \ruleName{Q-Bas}, \ruleName{Q-Prj}, and \ruleName{Q-Arr} are immediate.
  \begin{description}
    
    \item[\ruleName{Q-Fun}] Here, $j=1$ and the assumption is some $ \bigwedge_{i=1}^n (\sigmaInt \to \tau_i) \leq (\sigmaInt \to \tau')$.
    The premise is $\bigwedge_{i=1}^n \tau_i \leq \tau'$.
    By the IH, there exists $i \in [1..n]$ such that $\tau_i \leq \tau'$.
    It follows from \ruleName{Q-Arr} that $\sigmaInt \to \tau_i \leq \sigmaInt \to \tau'$, which proves the claim.
    
    \item[\ruleName{Q-Glb}] The premises are $\bigwedge_{i=1}^n \tau_i \leq \tau_j'$, for all $j \in [1..m]$.
      By the IH, there exists $i_j \in [1..n]$ such that $\tau_{i_j} \leq \tau_j'$, for all $j \in [1..m]$, as required.
    
    \item[\ruleName{Q-Trs}] The premises give us $\theta$ such that $ \bigwedge_{i=1}^n \tau_i \leq \theta$ and $\theta \leq \bigwedge_{j=1}^m \tau_j'$.
    Let $j \in [1..m]$.
      By the IH, there exists $\tau''_k \in \theta$ such that $\tau''_k \leq \tau'_j$.
      Furthermore, there exists $i_k \in [1..n]$ such that $\tau_{i_k} \leq \tau''_k$.
      By transitivity, $\tau_{i_k} \leq \tau'_j$, as required.
  \end{description}
\end{proof}

\begin{lemma}[Weakening]
  \label{lem:weakening}
  If $\Gamma \vdash \hasType{t}{\tau}$ and $\zz \cap \fv(t) = \emptyset$, then $\Gamma, \hasType{\zz}{\vv{\sigmaInt}} \vdash \hasType{t}{\tau}$ for all $\vv{\sigmaInt}$ of the appropriate types.
  Idem for $\hasType{G}{\boolType}$.
\end{lemma}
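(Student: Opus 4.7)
The plan is to prove the lemma by straightforward induction on the derivation of $\Gamma \vdash \hasType{t}{\tau}$, exploiting the fact that the variables in $\zz$ do not appear in $t$ and hence cannot be consulted by any typing rule.

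First I would observe that only \ruleName{T-Var} actually inspects the environment, and it does so only at a variable $x$ that occurs in $t$. Since $\zz \cap \fv(t) = \emptyset$, extending $\Gamma$ by $\hasType{\zz}{\vv{\sigmaInt}}$ can never clash with or shadow a binding used by the derivation. The case analysis then proceeds as follows:
\begin{itemize}
\item \ruleName{T-Con}: the conclusion $\Gamma \vdash \hasType{c}{\tau_i}$ depends only on $\type(c)$, so $\Gamma, \hasType{\zz}{\vv{\sigmaInt}} \vdash \hasType{c}{\tau_i}$ is immediate.
\item \ruleName{T-Var}: here $t = x$ with $\hasType{x}{\bigwedge_i \tau_i} \in \Gamma$. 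Since $x \in \fv(t)$ we have $x \notin \zz$, so $\hasType{x}{\bigwedge_i \tau_i} \in \Gamma, \hasType{\zz}{\vv{\sigmaInt}}$ and the rule reapplies.
\item \ruleName{T-App}: apply the induction hypothesis to both premises (the side condition $\bigwedge_j \tau_j \leq \sigmaInt$ is unaffected) and recombine with \ruleName{T-App}. Note $\fv(s) \cup \fv(t) = \fv(s\,t)$, so the hypothesis $\zz \cap \fv(\cdot) = \emptyset$ transfers to both subterms.
\end{itemize}

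For the ``idem'' clause for goal formulas $\hasType{G}{\boolType}$, I would proceed by a parallel induction on the structure of $G$ (using the implicit typing of goal formulas as terms of type $o$): the $\truetm$ case is trivial, atoms $P\,t$ reduce to the term case just proved, and conjunction distributes the induction hypothesis over its two conjuncts since $\fv(G_1 \wedge G_2) = \fv(G_1) \cup \fv(G_2)$.

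There is no real obstacle here; the lemma is a textbook structural weakening result. The only point deserving of mild care is ensuring that the fresh variables $\zz$ can be chosen disjoint from any variables bound by inner scopes (should we extend the induction to handle binders), but under the present syntax $t$ is a simply typed applicative term without binders, so this concern does not arise.
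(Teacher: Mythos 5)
The paper states this lemma without giving any proof, treating it as a routine structural fact, and your proposal supplies exactly the argument one would expect: the term half (induction over \ruleName{T-Con}, \ruleName{T-Var}, \ruleName{T-App}) is correct and complete as written.

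There is, however, a small but genuine omission in the ``idem'' half. Goal formulas typed with $\boolType$ are generated by the rules of Figure~\ref{fig:typing_goals}, which besides atoms and conjunction include \ruleName{GT-Imp} for formulas of the shape $\forall \xx.\, U \Rightarrow G$; your case analysis (truetm, atoms, conjunction) skips this case, and your closing remark that binders do not arise is precisely wrong here: \ruleName{GT-Imp} binds $\xx$ and extends the context with $\hasType{\xx}{\fromClause{U}}$. This matters because the lemma is applied in the paper to goal formulas containing such implications (e.g.\ in the \ruleName{Scope} case of the typing--rewriting correspondence). The repair is routine: since formulas are identified up to renaming of bound variables, rename $\xx$ apart from $\zz$, apply the induction hypothesis to the premise $\Gamma, \hasType{\xx}{\fromClause{U}} \vdash \hasType{G}{\boolType}$ (noting $\zz \cap \fv(G) \subseteq \zz \cap (\fv(\forall \xx.\, U \Rightarrow G) \cup \xx) = \emptyset$ after the renaming), and observe that because type contexts are partial functions the extensions by $\zz$ and $\xx$ commute, so \ruleName{GT-Imp} reapplies to give $\Gamma, \hasType{\zz}{\vv{\sigmaInt}} \vdash \hasType{\forall \xx.\, U \Rightarrow G}{\boolType}$. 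With that case added your proof is complete.
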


\subsection{Typing-rewriting correspondence}
\label{appx:types_typing-rewriting}

\begin{lemma}
  \label{lem:rew_conj}
  If $V, \yy \vdash (\forall \zz.\, U \Rightarrow G_1 \land G_2) \rew^* U'$, then there exist $U_1',U_2'$ such that $V, \yy \vdash (\forall \zz.\, U \Rightarrow G_1) \rew^* U_1'$ and $V, \yy \vdash (\forall \zz.\, U \Rightarrow G_2) \rew^* U_2'$ and $U' = U_1' \land U_2'$.
\end{lemma}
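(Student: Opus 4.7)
The plan is to proceed by strong induction on the length $n$ of the rewrite sequence $V, \yy \vdash (\forall \zz.\, U \Rightarrow G_1 \land G_2) \rew^n U'$, with case analysis on the first rewrite step. Inspecting Figure~\ref{fig:rew}, the only rules whose redex matches the shape $\forall \zz.\, U \Rightarrow G_1 \land G_2$ are \ruleName{ImpAnd}, \ruleName{Imp}, and (when $\zz \cap \fv(G_1 \land G_2) = \emptyset$) \ruleName{Scope}. The base case $n=0$ is handled by taking $U_i' \defeq (\forall \zz.\, U \Rightarrow G_i)$, with $U'$ identified with $U_1' \land U_2'$ via \ruleName{ImpAnd}, or else trivially with $U_2' \defeq \truetm$ modulo the unit convention.

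Before the main argument, I would establish an auxiliary splitting lemma: if $H_1 \land H_2 \rew^* U'$, then there exist $U_1', U_2'$ such that $H_i \rew^* U_i'$ and $U' = U_1' \land U_2'$. This is proved by induction on the rewrite length; because no rewrite rule operates across a top-level conjunction, every step must be an \ruleName{AndL} or \ruleName{AndR}, each affecting exactly one conjunct and leaving the other untouched.

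For the step case of the main lemma, I would proceed by cases. If the first step is \ruleName{ImpAnd}, the sequence becomes $(\forall \zz.\, U \Rightarrow G_1) \land (\forall \zz.\, U \Rightarrow G_2) \rew^{n-1} U'$, and the auxiliary splitting lemma yields the required $U_1', U_2'$ directly. If the first step is \ruleName{Imp}, then $U \land V, \yy \vdash G_1 \land G_2 \rew H$ for some $H$, and inspecting the rules again, $H$ must be of the form $G_1' \land G_2$ or $G_1 \land G_2'$ (from \ruleName{AndL}/\ruleName{AndR}); in either case the result is still of the form $\forall \zz.\, U \Rightarrow G_1'' \land G_2''$, so the inductive hypothesis applies to the remaining $n-1$ steps. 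Finally, if the first step is \ruleName{Scope}, we have $G_1 \land G_2 \rew^{n-1} U'$, and the auxiliary splitting lemma finishes the case (with $V, \yy \vdash (\forall \zz.\, U \Rightarrow G_i) \rew U_i'$ realised by first applying \ruleName{Scope} to each, using that $\zz$ is fresh for each $G_i$).

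The main technical delicacy will be the \ruleName{Imp} case: although \ruleName{Imp} has the side condition $G \neq P\,(y\,\zz)$ for any $y \in \yy$, this is automatic here since $G = G_1 \land G_2$ is a conjunction; but I must be careful when invoking the IH that the enriched environment $U \land V$ on the inner premise does not disturb the statement, which it does not because the outer formulation of the lemma is about rewrites of the quantified implication itself. Once these bookkeeping points are handled, the argument collapses to the two simple structural observations above, with \ruleName{ImpAnd} providing the essential distributivity and the auxiliary lemma providing the independence of conjuncts under rewriting.
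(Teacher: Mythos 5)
Your overall architecture (induction on the length of the sequence, a separate splitting lemma for top-level conjunctions, and the \ruleName{ImpAnd}/\ruleName{Scope} cases) is sound, and the auxiliary lemma is exactly the kind of conjunct-independence the paper itself relies on elsewhere. The gap is in the \ruleName{Imp} case. After invoking the induction hypothesis on the tail you only hold rewrites of $(\forall \zz.\, U \Rightarrow G_1')$ and $(\forall \zz.\, U \Rightarrow G_2)$, whereas the lemma demands a rewrite of $(\forall \zz.\, U \Rightarrow G_1)$; the only way to supply it is to prepend an \ruleName{Imp} step built from the extracted inner premise $U \wedge V,\,\yy \vdash G_1 \rew G_1'$, and that application of \ruleName{Imp} carries the side condition $G_1 \neq P\,(y\,\zz)$ for $y \in \yy$. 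Your ``delicacy'' remark verifies the side condition only for the original step, where the inner goal is the conjunction $G_1 \wedge G_2$ and the condition is trivially met; at the point where it actually matters---the reconstructed step on the isolated conjunct---it is not automatic, and you never address how the IH's conclusion about $G_1'$ is converted into one about $G_1$.

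This is not mere bookkeeping. If $G_1 = P\,(y\,\zz)$ with $y \in \yy$, the split formula $(\forall \zz.\, U \Rightarrow P\,(y\,\zz))$ is irreducible (\ruleName{Imp} is blocked by its side condition, \ruleName{Scope} by $\zz \subseteq \fv(P\,(y\,\zz))$), yet inside the original formula \ruleName{AndL} with \ruleName{Assm} can still fire on that conjunct, replacing it by $\res{U''}{\zz'}[\zz/\zz'] \wedge (\forall \zz'.\, \res{U''}{\zz'} \Rightarrow P\,(y\,\zz'))$ for the body $U''$ of whatever clause was used; after the $\zz$-constraints are discharged against $U$ and the nested clause is extracted by \ruleName{Scope}, the final $U'$ contains $(\forall \zz'.\, \res{U''}{\zz'} \Rightarrow P\,(y\,\zz'))$, whose body need not coincide with $U$. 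Concretely, with $V = (\forall x.\, Q\,x \Rightarrow P\,(f\,x))$, $U = Q\,z \wedge R\,z$, $G_1 = P\,(y\,z)$, $G_2 = R\,z$, the original formula rewrites to $(\forall x.\, Q\,x \Rightarrow P\,(y\,x))$, which cannot be written as $U_1' \wedge U_2'$ with $U_1'$ reachable from the irreducible $(\forall z.\, Q\,z \wedge R\,z \Rightarrow P\,(y\,z))$. So your argument must explicitly rule this shape out---which is legitimate for the goals the paper actually rewrites, since (by the standing assumption of Section~\ref{sec:automaton_formulas}) they arise by rewriting \MSL($\omega$) goals, and there an inner conjunct is never an atom headed by a $\yy$-variable applied to exactly the bound tuple $\zz$---or the statement must be weakened; as written, the \ruleName{Imp} case does not go through. (A minor further point: in the base case $U'$ \emph{is} the starting formula, so the decomposition is obtained by observing that an implication with a conjunctive conclusion is automaton only when, modulo the unit convention, one conjunct is $\truetm$; invoking \ruleName{ImpAnd} there conflates a rewrite with an identity.)
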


\begin{lemma}
  \label{lem:true_automaton}
  For automaton formulas $U$ and $U'$ over $\zz$,  $V, \yy \vdash (\forall \zz.\, U \Rightarrow U') \rew^* \truetm$ if and only if $U \leqa U'$. 
\end{lemma}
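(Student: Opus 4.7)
The plan is to proceed by structural induction on $U'$, exploiting the componentwise nature of $\leqa$ (Definition~\ref{def:HOMSL_intersection_type_correspondence}) together with Lemma~\ref{lem:subtyping}'s characterisation of intersection subtyping. Lemma~\ref{lem:rew_conj} lets me split the rewrite of a conjunctive $U' = U_1' \wedge U_2'$ into independent rewrites of $\forall \zz.\, U \Rightarrow U_1'$ and $\forall \zz.\, U \Rightarrow U_2'$, which match the conjunct-wise definition of $\leqa$.

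The two base cases are essentially bookkeeping. For $U' = \truetm$, the rewrite succeeds by \ruleName{Scope} since $\fv(\truetm) \cap \zz = \emptyset$, and $\fromClause{\res{U}{z_i}} \leq \top$ holds trivially for each $i$. For $U' = P\,z$ with $z \in \zz$, the equivalence collapses to a single fact: $\fromClause{\res{U}{z}} \leq q_P$ holds by Lemma~\ref{lem:subtyping} exactly when $P\,z$ is a conjunct of $U$, which is precisely the precondition for rewriting $\forall \zz.\, U \Rightarrow P\,z$ to $\truetm$ using \ruleName{Imp}, \ruleName{Refl} and then \ruleName{Scope}.

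The main case is $U' = \forall \vv{w}.\, W \Rightarrow P\,(z\,\vv{w})$ with $z \in \zz$, for which $\fromClause{\res{U'}{z}}$ has the arrow shape $\fromClause{\res{W}{w_1}} \to \cdots \to \fromClause{\res{W}{w_m}} \to q_P$. For the $(\Leftarrow)$ direction, Lemma~\ref{lem:subtyping} applied to $\fromClause{\res{U}{z}} \leq \fromClause{\res{U'}{z}}$ extracts a conjunct of $U$ of shape $\forall \vv{x}.\, W' \Rightarrow P\,(z\,\vv{x})$ with $W \leqa W'$ after alpha-renaming $\vv{x}$ to $\vv{w}$. Two applications of \ruleName{Imp} followed by \ruleName{Step} on this conjunct (treating the now-free $z$ as the matched head) reduce the inner atom to $W'$, leaving $\forall \vv{w}.\, W \Rightarrow W'$; the inductive hypothesis finishes the rewrite to $\truetm$, and a final \ruleName{Scope} drops the outer quantifier.

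The $(\Rightarrow)$ direction is the main obstacle: I must show that any successful rewrite of the inner atom $P\,(z\,\vv{w})$ to $\truetm$ in context $V \wedge U \wedge W$ can only proceed via \ruleName{Step} on a conjunct of $U$ headed by $z$. The alternatives are excluded because $z \notin \yy$ blocks \ruleName{Assm}, the closedness of $V$ means $V$ contains no clauses involving the locally free $z$, and the type-theoretic stratification of automaton clauses (Definition~\ref{def:automaton_clauses}) ensures $W$'s conjuncts only concern the $\vv{w}$. This forces a matching conjunct $\forall \vv{x}.\, W' \Rightarrow P\,(z\,\vv{x})$ in $U$, reducing the residual rewrite to $\forall \vv{w}.\, W \Rightarrow W'$, to which the inductive hypothesis applies to yield $W \leqa W'$ and hence, via Lemma~\ref{lem:subtyping}, the required subtyping $\fromClause{\res{U}{z}} \leq \fromClause{\res{U'}{z}}$.
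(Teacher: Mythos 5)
Your overall route is the same as the paper's: split conjunctions with Lemma~\ref{lem:rew_conj}, use Lemma~\ref{lem:subtyping} to extract a matching conjunct of $U$ headed by the same variable, reduce the nested-clause case to a statement about the two bodies, and close with \ruleName{Imp}/\ruleName{Step}/\ruleName{Scope} (and your argument for why the inner atom $P\,(z\,\vv{w})$ can only be resolved against a $z$-headed conjunct of $U$ is, if anything, more explicit than the paper's). However, there is a genuine gap in the induction scheme you declare. You say you proceed by \emph{structural induction on $U'$}, but in the key case $U' = \forall \vv{w}.\, W \Rightarrow P\,(z\,\vv{w})$ the inductive hypothesis you invoke (in both directions) is the lemma for the pair $(W, W')$, where $W'$ is the body of the matching conjunct of $U$. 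That $W'$ is \emph{not} a subformula of $U'$ -- it comes from the other argument -- so the instance $V,\yy \vdash (\forall \vv{w}.\, W \Rightarrow W') \rew^* \truetm \iff W \leqa W'$ is not covered by structural induction on $U'$, and the step ``the inductive hypothesis finishes the rewrite'' (resp.\ ``yields $W \leqa W'$'') does not go through as written.

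The fix is exactly what the paper does: induct on the types of the variables that $U'$ concerns (outer), with an inner induction on the number of conjuncts. Since the typing discipline for automaton formulas forces both $W$ and $W'$ to be automaton formulas over $\vv{w}$, whose types are strictly smaller than the type of $z$, the recursive instance is legitimate under that measure. (Alternatively, a combined measure such as the total size of the pair $(U, U')$ would also do, since $W$ is a proper subformula of $U'$ and $W'$ a proper subformula of $U$.) With the induction measure repaired, the remaining details of your argument -- including the contravariant direction $W \leqa W'$ obtained from $\fromClause{\forall\vv{x}.\,W' \Rightarrow P\,(z\,\vv{x})} \leq \fromClause{U'}$ -- match the paper's proof.
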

\begin{proof}
  We prove the claim by induction on automaton clause $U'$:
  an outer induction on the types of the variables in $U'$ and an inner induction on the number of conjuncts.
  First, note that both $V, \yy \vdash (\forall \zz.\, U \Rightarrow U') \rew^* \truetm$ and $U \leqa U'$ hold for $U' = \truetm$.
  
  Proof of direction $\Rightarrow$:
      \begin{itemize}
      
         \item Suppose $V, \yy \vdash (\forall \zz.\, U \Rightarrow P\,z_i) \rew^* \truetm$.
           This rewrite must start with \ruleName{Imp} with \ruleName{Refl}, which has $P\,z_i \in U$ as side condition.
           It is immediate that $\fromClause{\res{U}{z_i}} \leq \fromClause{\res{{P\,z_i}}{z_i}}$ and $U \leqa P\,z_i$ follows.
           
         \item Suppose $V, \yy \vdash (\forall \zz.\, U \Rightarrow U_1' \land U_2') \rew^* \truetm$ for $U_1',U_2'$ containing only $z_i$ of type $\iota$ (and at least one such $z_i$ each).
           By Lemma~\ref{lem:rew_conj}, we may assume WLOG that this rewrite starts with \ruleName{ImpAnd}, i.e.~$V, \yy \vdash (\forall \zz.\, U \Rightarrow U_1' \land U_2') \rew (\forall \zz.\, U \Rightarrow U_1') \land (\forall \zz.\, U \Rightarrow U_2') \rew^* \truetm$.
           Thanks to \ruleName{AndL} and \ruleName{AndR}, $V, \yy \vdash (\forall \zz.\, U \Rightarrow U_1') \rew^* \truetm$ and $V, \yy \vdash (\forall \zz.\, U \Rightarrow U_2') \rew^* \truetm$.
           We apply the IH for $U \leqa U_1'$ and $U \leqa U_2'$.
           Clearly, also $U \leqa U_1' \land U_2'$.
           
         \item Suppose $V, \yy \vdash (\forall \zz.\, U \Rightarrow (\forall \xx.\, U'' \Rightarrow P\,(z_i\,\xx))) \rew^* \truetm$ for some $\xx = x_1 \dots x_m$.
           \begin{itemize}
           
               
                              
             \item Suppose this rewrite starts with \ruleName{Imp}.
               The only option for the premise is another \ruleName{Imp} with $V \land U \land U'', \yy \vdash P\,(z_i\,\xx)) \rew U'''$ as premise.
               We proceed as in the previous subcase.
               
           \end{itemize}
           
         \item Suppose $V, \yy \vdash (\forall \zz.\, U \Rightarrow U_1' \land U_2') \rew^* \truetm$.
           Again, Lemma~\ref{lem:rew_conj} allows us to assume WLOG that this rewrite starts with \ruleName{ImpAnd}.
           We proceed as in the first-order case.
         
      \end{itemize}
  
  Proof of direction $\Leftarrow$:
  \begin{itemize}
  
    \item Suppose $U \leqa P\,z_i$.
      This means that $\fromClause{\res{U}{z_i}} \leq q_P$ and, thus, $q_P \in \fromClause{\res{U}{z_i}}$ and $P\,z_i \in U$.
      We prove $V, \yy \vdash (\forall \zz.\, U \Rightarrow P\,z_i) \rew^* \truetm$ by \ruleName{Imp} with \ruleName{Refl}, followed by \ruleName{Scope}. 
      
    \item Suppose $U \leqa U_1' \land U_2'$ containing only $z_i$ of type $\iota$ (and at least one such $z_i$ each).
      This means that $\fromClause{\res{U}{z_i}} \leq \fromClause{\res{(U_1' \land U_2')}{z_i}}$, for all $i \in [1..n]$.
      Clearly, also $\fromClause{\res{U}{z_i}} \leq \fromClause{\res{{U_1}'}{z_i}}$ and $\fromClause{\res{U}{z_i}} \leq \fromClause{\res{{U_2}'}{z_i}}$, for all $i \in [1..n]$.
      We now apply the IH to $U \leqa U_1'$ and $U \leqa U_2'$ for $V, \yy \vdash (\forall \zz.\, U \Rightarrow U_1') \rew^* \truetm$ and $V, \yy \vdash (\forall \zz.\, U \Rightarrow U_2') \rew^* \truetm$.
      This gives us a rewrite $V, \yy \vdash (\forall \zz.\, U \Rightarrow U_1') \land (\forall \zz.\, U \Rightarrow U_2') \rew^* \truetm$ to which we prepend an \ruleName{ImpAnd}-step for $V, \yy \vdash (\forall \zz.\, U \Rightarrow U_1' \land U_2') \rew^* \truetm$. 
      
    \item Suppose $U \leqa (\forall \xx.\, U'' \Rightarrow P\,(z_i\,\xx))$ for some $\xx = x_1 \dots x_m$.
      This means $\fromClause{\res{U}{z_i}} \leqa \fromClause{\forall \xx.\, U'' \Rightarrow P\,(z_i\,\xx)}$.
      By Lemma~\ref{lem:subtyping}, there exists $(\forall \xx.\, U''' \Rightarrow P\,(z_i\,\xx)) \in \res{U}{z_i}$ such that $\fromClause{\forall \xx.\, U''' \Rightarrow P\,(z_i\,\xx)} \leqa \fromClause{\forall \xx.\, U'' \Rightarrow P\,(z_i\,\xx)}$. 
           
      This gives us $\fromClause{\res{U''}{x_j}} \leq \fromClause{\res{U'''}{x_j}}$, for all $j \in [1..m]$.
      This implies $U'' \leqa U'''$, to which we can apply the IH for $V, \yy \vdash (\forall \xx.\, U'' \Rightarrow U''') \rew^* \truetm$, because the types of $\xx$ are smaller than the type of $z_i$.
      Trivially, also $V \land U, \yy \vdash (\forall \xx.\, U'' \Rightarrow U''') \rew^* \truetm$.
      
      We prepend \ruleName{Step} with side condition $(\forall \xx.\, U''' \Rightarrow P\,(z_i\,\xx)) \in U$ to this rewrite to obtain $V \land U, \yy \vdash (\forall \xx.\, U'' \Rightarrow P\,(z_i\, \xx)) \rew (\forall \xx.\, U'' \Rightarrow U''') \rew^* \truetm$.
      We replicate this rewrite under an implication using \ruleName{Imp} and conclude it with \ruleName{Scope}:
      \[
        V, \yy \vdash
          (\forall \zz.\, U \Rightarrow (\forall \xx.\, U'' \Rightarrow P\,(z_i\, \xx)) )
          \rew^* (\forall \zz.\, U \Rightarrow \truetm)
          \rew \truetm
      \]
      This proves the case.
      
    \item Suppose $U \leqa U_1' \land U_2'$.
      As the inductive first-order case.

  \end{itemize}
\end{proof}

\begin{lemma}[Monotonicity of rewrite]
  \label{lem:monotonicity_of_rewrite}
  For all terms $t: \sigma_1 \to \dots \to \sigma_k \to \iota$ over $\yy$, if $\sigmaInt \leq \sigmaInt'$ and $V, \yy \vdash \toClause{\sigmaInt}{t} \rew^* U$, then there exists $U'$ such that $V, \yy \vdash \toClause{\sigmaInt'}{t} \rew^* U'$ and $U \leqa U'$.
\end{lemma}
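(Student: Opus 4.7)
The plan is to reduce to the strict-type case by decomposition and then induct on the subtyping derivation. Writing $\sigmaInt = \bigwedge_i \tau_i$ and $\sigmaInt' = \bigwedge_j \tau_j'$, the hypothesis $V, \yy \vdash \toClause{\sigmaInt}{t} \rew^* U$ factors (via \ruleName{AndL} and \ruleName{AndR}) into independent rewrites $V, \yy \vdash \toClause{\tau_i}{t} \rew^* U_i$ with $U = \bigwedge_i U_i$, since $\toClause{\sigmaInt}{t} = \bigwedge_i \toClause{\tau_i}{t}$. By Lemma~\ref{lem:subtyping}, each $\tau_j' \in \sigmaInt'$ admits some $i_j$ with $\tau_{i_j} \leq \tau_j'$. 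Thus it suffices to prove the strict-type instance: for $\tau \leq \tau'$ and $V, \yy \vdash \toClause{\tau}{t} \rew^* U_0$, produce some $U_0'$ with $V, \yy \vdash \toClause{\tau'}{t} \rew^* U_0'$ and $U_0 \leqa U_0'$. The required $U'$ is then $\bigwedge_j U_{i_j}'$, with $U \leqa U'$ holding pointwise on $\yy$ by the definition of $\leqa$ together with Lemma~\ref{lem:subtyping}.

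For the strict-type instance, I would induct on the structure of $\tau'$. The trivial preorder on $\basetypes$ forces the base case to be $\tau = \tau' = q_P$, for which $U_0' \defeq U_0$ suffices. In the arrow case, $\tau = \sigmaInt_1 \to \tau_2$ and $\tau' = \sigmaInt_1' \to \tau_2'$ with $\sigmaInt_1' \leq \sigmaInt_1$ and $\tau_2 \leq \tau_2'$, and both $\toClause{\tau}{t}$ and $\toClause{\tau'}{t}$ are implications of shape $\forall \xx.\, \toClause{\sigmaInt_1^{(\prime)}}{\xx} \Rightarrow \toClause{\tau_2^{(\prime)}}{t\,\xx}$. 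If the head has the blocked form $P(y\,\zz)$ for some $y \in \yy$, then no \ruleName{Imp} step can fire, $U_0$ coincides with $\toClause{\tau}{t}$ (up to \ruleName{ImpAnd}/\ruleName{Scope}), and I would take $U_0' \defeq \toClause{\tau'}{t}$, with $U_0 \leqa U_0'$ following from Lemma~\ref{lem:order_iso}\ref{enum:type_to_clause_order} applied to $\tau \leq \tau'$. Otherwise, the derivation proceeds via \ruleName{Imp} and reduces to a rewrite of the head $\toClause{\tau_2}{t\,\xx}$ under the enriched context $V \wedge \toClause{\sigmaInt_1}{\xx}$. I would apply the outer IH on $\tau_2 \leq \tau_2'$ in that context, transport the resulting rewrite of $\toClause{\tau_2'}{t\,\xx}$ into the required context $V \wedge \toClause{\sigmaInt_1'}{\xx}$, and finally wrap with \ruleName{Imp}.

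The main obstacle is the contravariant context transport. Since $\sigmaInt_1' \leq \sigmaInt_1$ yields $\toClause{\sigmaInt_1'}{\xx} \leqa \toClause{\sigmaInt_1}{\xx}$ via Lemma~\ref{lem:order_iso}\ref{enum:type_to_clause_order}, the new context is $\leqa$-smaller than the old and intuitively provides stronger clauses. I would establish a subsidiary monotonicity-of-context lemma: if $V_1 \leqa V_2$ then every rewrite $V_2, \zz \vdash G \rew^* W$ lifts to some $V_1, \zz \vdash G \rew^* W'$ with an appropriate $\leqa$-relation between $W$ and $W'$. The proof is by induction on the rewrite, simulating each side condition of \ruleName{Step} or \ruleName{Assm} that selects a $V_2$-clause by an ordered $V_1$-clause identified via Lemma~\ref{lem:subtyping} on the automaton clause viewed as an intersection type, and invoking Lemma~\ref{lem:true_automaton} for \ruleName{Refl}. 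Tracking $\leqa$ through the nested implications introduced by \ruleName{Imp} requires careful bookkeeping because of contravariant flips, but these flips cancel out exactly when the result is wrapped by the outer \ruleName{Imp} to produce $U_0' = \forall \xx.\, \toClause{\sigmaInt_1'}{\xx} \Rightarrow W'$, yielding the desired $U_0 \leqa U_0'$ and closing the arrow case.
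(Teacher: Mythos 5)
Your top-level decomposition (splitting $\sigmaInt$ and $\sigmaInt'$ into strict types via Lemma~\ref{lem:subtyping} and reassembling the $U_{i_j}'$ at the end) is exactly how the paper begins, but the strict-type case does not go through as you set it up. First, your arrow case misreads the correspondence: by Definition~\ref{def:HOMSL_intersection_type_correspondence}, $\toClause{\sigmaInt_1 \to \dots \to \sigmaInt_k \to q_P}{t}$ is a \emph{single} clause $\forall \zz.\, \toClause{\vv{\sigmaInt}}{\zz} \Rightarrow P\,(t\,\zz)$ quantifying all arguments at once with an atomic head; it is not the curried formula $\forall x_1.\,\toClause{\sigmaInt_1}{x_1} \Rightarrow \toClause{\tau_2}{t\,x_1}$ (automaton clause heads must be atoms, so the latter is not even an automaton clause). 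Consequently the structural induction on $\tau'$, and in particular ``apply the IH on $\tau_2 \leq \tau_2'$ to the head'', has nothing to recurse on: $\toClause{\tau}{t}$ and $\toClause{\tau'}{t}$ have the \emph{same} head atom $P\,(t\,\zz)$, and the entire difference sits in the contravariantly changed body $\toClause{\vv{\theta}}{\zz}$ versus $\toClause{\vv{\theta'}}{\zz}$.

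Second, the subsidiary ``monotonicity-of-context'' lemma is where all the work lies, and your sketch of its proof does not close. If you simulate a \ruleName{Step}/\ruleName{Assm} that used a clause of $V_2$ by a $\leqa$-smaller clause of $V_1$ (obtained via Lemma~\ref{lem:subtyping}), the residual goal changes: the new clause's body is $\leqa$-larger, so the continuation of the rewrite is now about a different, weaker goal, and a plain induction on the original rewrite no longer applies. To repair this you must strengthen to simultaneous monotonicity in the context \emph{and} the goal, and goal-monotonicity for goals of the form $\toClause{\sigmaInt}{s}$ is precisely the lemma you are trying to prove; note also that in the paper the context-monotonicity facts (Lemma~\ref{lem:aux_red_types}, parts 2--3) are \emph{derived from} Lemma~\ref{lem:monotonicity_of_rewrite}, so following your route without a from-scratch proof of the generalized statement would be circular. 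The paper avoids any such induction: it case-splits on the length of the rewrite of $(\forall \zz.\, \toClause{\vv{\theta}}{\zz} \Rightarrow P\,(t\,\zz))$; after the single \ruleName{Step}/\ruleName{Assm} on the head, Lemma~\ref{lem:rew_conj} splits the residual into the part over $\yy$ (whose rewrite never uses the implication body, hence transfers verbatim to the $\vv{\theta'}$-body) and the part over $\zz$, which must rewrite to $\truetm$; Lemma~\ref{lem:true_automaton} characterizes such $\truetm$-rewrites exactly as $\leqa$, so the contravariant body change is absorbed by transitivity of $\leqa$ rather than by transporting an arbitrary rewrite across a changed context. Your ``the flips cancel out'' remark is exactly the step that needs this argument.
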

\begin{proof}
  Let $\sigmaInt = \bigwedge_{i=1}^n \tau_i$ and $\sigmaInt' = \bigwedge_{j=1}^m \tau_j'$ such that $\sigmaInt \leq \sigmaInt'$ and $V, \yy \vdash \toClause{\sigmaInt}{t} \rew^* U$.
  By Subtyping Lemma~\ref{lem:subtyping}, $\forall j \in [1..m].\, \exists i_j \in [1..n].\, \tau_{i_j} \leq \tau_j'$.
  Clearly, $\sigmaInt = \sigmaInt'' \land \bigwedge_{j=1}^m \tau_{i_j}$ for some $\sigmaInt''$ and, thus, $\toClause{\sigmaInt}{t} = \toClause{\sigmaInt''}{t} \land \toClause{\bigwedge_{j=1}^m \tau_{i_j}}{t}$.
  
  This allows us to derive separate rewrites $V, \yy \vdash \toClause{\sigmaInt''}{t} \rew^* U_1$ and $V, \yy \vdash \toClause{\bigwedge_{j=1}^m \tau_{i_j}}{t} \rew^* U_2$ such that $U = U_1 \land U_2$ from assumption $V, \yy \vdash \toClause{\sigmaInt}{t} \rew^* U$.
  Furthermore, $\toClause{\bigwedge_{j=1}^m \tau_{i_j}}{t} = \toClause{\tau_{i_1}}{t} \land \dots \land \toClause{\tau_{i_m}}{t}$ allows us to separate the latter rewrite into $V, \yy \vdash \toClause{\tau_{i_j}}{t} \rew^* U_{2,j}$, for all $j \in [1..m]$, such that $U_2 = U_{2,1} \land \dots \land U_{2,m}$.
  
  We aim to show that $V, \yy \vdash \toClause{\tau_{i_j}}{t} \rew^* U_{2,j}$ determines a $V, \yy \vdash \toClause{\tau_j'}{t} \rew^* U_{2,j}'$ with $U_{2,j} \leqa U_{2,j}'$, for all $j \in [1..m]$.
  This suffices due to $\toClause{\sigmaInt'}{t} = \bigwedge_{j=1}^m \toClause{\tau_j'}{t}$ and:
  \[
    U \leqa U_2 = \textstyle \bigwedge_{j=1}^m U_{2,j} \leqa \bigwedge_{j=1}^m U_{2,j}'
  \]
  We use case analysis on the type of $t$. 
  
  Suppose $t:\iota$.
  We only have trivial subtyping for type $\iota$, so $\forall j \in [1..m].\, \exists i_j \in [1..n].\, \tau_{i_j} = \tau_j'$.
  The claim is immediate. 
  
  Suppose $t: \sigma_1 \to \dots \to \sigma_k \to \iota$ for $k>0$.
  Fix a $j \in [1..m]$ with a corresponding $i_j \in [1..n]$.
  Let $\tau_{i_j} = \theta_1 \to \dots \to \theta_k \to q_P$ and $\tau_j' = \theta_1' \to \dots \to \theta_k' \to q_P$.
  We know from $\tau_{i_j} \leq \tau_j'$ that $\theta_\ell' \leq \theta_\ell$, for all $\ell \in [1..k]$. 
  
  Consider the rewrite $V, \yy \vdash \toClause{\tau_{i_j}}{t} \rew^* U_{2,j}$ we obtained from assumption $V, \yy \vdash \toClause{\sigmaInt}{t} \rew^* U$;
  since $\tau_{i_j} = \theta_1 \to \dots \to \theta_k \to q_P$, it is equal to:
  \begin{align}
    V, \yy \vdash (\forall \zz.\, \toClause{\vv{\theta}}{\zz} \Rightarrow P\,(t\,\zz)) \rew^* U_{2,j} \tag{\dag} \label{eq:rew_types}
  \end{align}
  We proceed by case analysis on the length of this rewrite~\ref{eq:rew_types}.
  \begin{itemize}
  
    \item Suppose rewrite~\ref{eq:rew_types} has length 0.
      Then, $U_{2,j} = (\forall \zz.\, \toClause{\vv{\theta}}{\zz} \Rightarrow P\,(t\,\zz))$ is automaton wrt $\yy$.
      Clearly, $(\forall \zz.\, \toClause{\vv{\theta'}}{\zz} \Rightarrow P\,(t\,\zz))$ is also automaton wrt $\yy$, and trivially:
      \[
        V, \yy \vdash (\forall \zz.\, \toClause{\vv{\theta'}}{\zz} \Rightarrow P\,(t\,\zz)) \rew^* (\forall \zz.\, \toClause{\vv{\theta'}}{\zz} \Rightarrow P\,(t\,\zz)) 
      \]
      The inequality $(\forall \zz.\, \toClause{\vv{\theta}}{\zz} \Rightarrow P\,(t\,\zz)) \leqa (\forall \zz.\, \toClause{\vv{\theta'}}{\zz} \Rightarrow P\,(t\,\zz))$ concludes this subcase, which follows from $\theta_\ell' \leq \theta_\ell$, for all $\ell \in [1..k]$. 
    
    \item Suppose rewrite~\ref{eq:rew_types} has length greater than 0.
      The rewrite starts with either \ruleName{Step} or \ruleName{Assm}, depending on the top-level symbol in $t$.
      The case for \ruleName{Assm} is analogous to \ruleName{Step}; 
      it merely introduces an extra conjunct that is already automaton (in which case both the rewrite of $\toClause{\tau_{i_j}}{t}$ and $\toClause{\tau'_{j}}{t}$ will introduce it; it must be a conjunct in $U_{2,j}$ and $U_{2,j}'$).
      Therefore, we assume $t = f\,\vv{t'}$ for some $f \in \Sigma$, so \ruleName{Step} with some $(\forall \vv{x_1}\,\vv{x_2}.\, U'' \Rightarrow P\,(f\,\vv{x_1}\,\vv{x_2})) \in V$ is the first rewrite step in~\ref{eq:rew_types}, resulting in: 
     \[
        V, \yy \vdash (\forall \zz.\, \toClause{\vv{\theta}}{\zz} \Rightarrow P\,(t\,\zz)) 
          \rew (\forall \zz.\, \toClause{\vv{\theta}}{\zz} \Rightarrow U''[\vv{t'}/\vv{x_1},\zz/\vv{x_2}])
          \rew^* U_{2,j} 
      \]
      By Lemma~\ref{lem:rew_conj}, $U_{2,j} = U_{2,j,1} \land U_{2,j,2}$ such that $V, \yy \vdash (\forall \zz.\,\toClause{\vv{\theta}}{\zz} \Rightarrow \res{U''}{\vv{x_1}}[\vv{t'}/\vv{x_1}]) \rew^* U_{2,j,1}$ and $V, \yy \vdash(\forall \zz.\,\toClause{\vv{\theta}}{\zz} \Rightarrow \res{U''}{\vv{x_2}}[\zz/\vv{x_2}]) \rew^* U_{2,j,2}$.
      Let us consider both parts individually.
      \begin{itemize}
  
        \item The goal formula $\res{U''}{\vv{x_1}}[\vv{t'}/\vv{x_1}]$ does not contain $\zz$, so we obtain $V, \yy \vdash (\forall \zz.\,\toClause{\vv{\theta'}}{\zz} \Rightarrow \res{U''}{\vv{x_1}}[\vv{t'}/\vv{x_1}]) \rew^* U_{2,j,1}$ immediately from the former rewrite.
    
        \item The goal formula $\res{U''}{\vv{x_2}}[\zz/\vv{x_2}]$ does not contain $\yy$, so $U_{2,j,2} = \truetm$.

          Inequality $\vv{\theta'} \leq \vv{\theta}$ gives us $\toClause{\vv{\theta'}}{\zz} \leqa \toClause{\vv{\theta}}{\zz}$.
          We use this in combination with Lemma~\ref{lem:true_automaton} to prove a sequence of implications:
          \begin{align*}
            &V, \yy \vdash (\forall \zz.\, \toClause{\vv{\theta}}{\zz} \Rightarrow \res{U''}{\vv{x_2}}[\zz/\vv{x_2}]) \rew^* \truetm \\
              &\quad\Rightarrow \toClause{\vv{\theta}}{\zz} \leqa \res{U''}{\vv{x_2}}[\zz/\vv{x_2}]  \\
              &\quad\Rightarrow \toClause{\vv{\theta'}}{\zz} \leqa \res{U''}{\vv{x_2}}[\zz/\vv{x_2}] \\
              &\quad\Rightarrow V, \yy \vdash (\forall \zz.\, \toClause{\vv{\theta'}}{\zz} \Rightarrow \res{U''}{\vv{x_2}}[\zz/\vv{x_2}]) \rew^* \truetm
          \end{align*}
      
      \end{itemize}
      We can put these two rewrites back together using \ruleName{AndL} and \ruleName{AndR}:
      \[
         V, \yy \vdash (\forall \zz.\,\toClause{\vv{\theta'}}{\zz} \Rightarrow \res{U''}{\vv{x_1}}[\vv{t'}/\vv{x_1}]) \land (\forall \zz.\,\toClause{\vv{\theta'}}{\zz} \Rightarrow \res{U''}{\vv{x_2}}[\zz/\vv{x_2}])
           \rew^* U_{2,j}
      \]
      All that remains is replicating the beginning of rewrite~\ref{eq:rew_types}:
      \begin{align*}
        &V, \yy \vdash (\forall \zz.\, \toClause{\vv{\theta'}}{\zz} \Rightarrow P\,(t\,\zz)) \\
          &\qquad\rew (\forall \zz.\, \toClause{\vv{\theta'}}{\zz} \Rightarrow U''[\vv{t'}/\vv{x_1},\zz/\vv{x_2}]) \\
          &\qquad\rew (\forall \zz.\,\toClause{\vv{\theta'}}{\zz} \Rightarrow \res{U''}{\vv{x_1}}[\vv{t'}/\vv{x_1}]) \land (\forall \zz.\,\toClause{\vv{\theta'}}{\zz} \Rightarrow \res{U''}{\vv{x_2}}[\zz/\vv{x_2}])\\
          &\qquad\rew^* U_{2,j}
      \end{align*} 
  
  \end{itemize}
 
   We have now proved that $V, \yy \vdash \toClause{\tau_{i_j}}{t} \rew^* U_{2,j}$ determines $V, \yy \vdash \toClause{\tau_j'}{t} \rew^* U_{2,j}'$ with $U_{2,j} \leqa U_{2,j}'$, for all $j \in [1..m]$.
   Thus, for every conjunct in $\toClause{\sigmaInt'}{t}$, there exists a conjunct in $\toClause{\sigmaInt}{t}$ that rewrites to something smaller, which proves $V, \yy \vdash \toClause{\sigmaInt'}{t} \rew^* U'$ for a $U'$ such that $U \leqa U'$.
\end{proof}

Given an isomorphic type environment and automaton formula as per Definition~\ref{def:type_system_automaton_formula}, we establish an equivalence between typing judgements and rewrites.
To bridge the gap, we define $\boolType$ as a type-equivalent of truth and type goal formulas as in Figure~\ref{fig:typing_goals}.
This typing system distills the term-typing judgements that make a goal formula true.
  
\begin{figure}
  \[
    \arraycolsep=5pt
    \begin{array}{cc}
    
    \prftree[l]
        { \ruleName{GT-Atom} }
        { \Gamma \vdash \hasType{t}{q_P} }
        { \Gamma \vdash \hasType{P\,t}{\boolType} }
        
        \hspace*{20pt}
        
      \prftree[l]
        { \ruleName{GT-Imp} }
        { \Gamma, \hasType{\zz}{\fromClause{U}} \vdash \hasType{G}{\boolType} }
        { \Gamma \vdash \hasType{\forall \zz.\, U \Rightarrow G}{\boolType} }  
     
        \\[12pt]

      \prftree[l]
        { \ruleName{GT-And} }
        { \Gamma \vdash \hasType{G_1}{\boolType} }
        { \dots }
        { \Gamma \vdash \hasType{G_n}{\boolType} }
        { \Gamma \vdash \hasType{G_1 \land \dots \land G_n}{\boolType} }

    \end{array}
  \]
  \caption{Typing rules for goal formulas}\label{fig:typing_goals}
\end{figure}

\begin{lemma}
  \label{lem:typeable_automaton}
  For automaton formulas $U,U'$ over $\yy$, $U \leqa U'$ implies $\Gamma_V, \hasType{\yy}{\fromClause{U}} \vdash \hasType{U'}{\boolType}$.
\end{lemma}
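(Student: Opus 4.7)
The plan is to proceed by structural induction on $U'$, exploiting the fact that the conjunction rule in the automaton-to-type correspondence (Definition~\ref{def:HOMSL_intersection_type_correspondence}) splits through both $\fromClause{-}$ and $\leqa$, and that the goal-typing rule \ruleName{GT-And} splits through conjunction. The base case $U' = \truetm$ is handled by \ruleName{GT-And} with zero premises. For the conjunction case $U' = U_1' \land U_2'$, the assumption $U \leqa U'$ yields $U \leqa U_1'$ and $U \leqa U_2'$ by pointwise decomposition, after which the inductive hypothesis and \ruleName{GT-And} conclude.

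The two interesting cases are the atomic shapes of a conjunct concerning some $y_i \in \yy$. If $U' = P\,y_i$, then $\fromClause{\res{U'}{y_i}} = q_P$ and the assumption gives $\fromClause{\res{U}{y_i}} \leq q_P$. By Subtyping Lemma~\ref{lem:subtyping}, some $\tau$ in the intersection $\fromClause{\res{U}{y_i}}$ satisfies $\tau \leq q_P$; since $q_P$ is a base type, inversion on the subtyping rules forces $\tau = q_P$, so \ruleName{T-Var} yields $\Gamma_V, \hasType{\yy}{\fromClause{U}} \vdash \hasType{y_i}{q_P}$ and \ruleName{GT-Atom} completes the derivation.

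The harder case is $U' = \forall \xx.\,V'' \Rightarrow P\,(y_i\,\xx)$. Here $\fromClause{\res{U'}{y_i}} = \theta_1 \to \cdots \to \theta_m \to q_P$ with $\theta_j = \fromClause{\res{V''}{x_j}}$, and by \ruleName{GT-Imp} it suffices to derive $\Gamma_V, \hasType{\yy}{\fromClause{U}}, \hasType{\xx}{\fromClause{V''}} \vdash \hasType{y_i\,\xx}{q_P}$ followed by \ruleName{GT-Atom}. From $\fromClause{\res{U}{y_i}} \leq \theta_1 \to \cdots \to \theta_m \to q_P$, Subtyping Lemma~\ref{lem:subtyping} provides a strict conjunct $\tau$ of $\fromClause{\res{U}{y_i}}$ with $\tau \leq \theta_1 \to \cdots \to \theta_m \to q_P$; inversion on \ruleName{Q-Arr} (iterated $m$ times) forces $\tau = \theta_1' \to \cdots \to \theta_m' \to q_P$ with $\theta_j \leq \theta_j'$ for each $j$. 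Then \ruleName{T-Var} gives $\hasType{y_i}{\theta_1' \to \cdots \to \theta_m' \to q_P}$, and $m$ iterated applications of \ruleName{T-App} supply the $\hasType{x_j}{\theta_j'}$ premises, each discharged by \ruleName{T-Var} applied to $\hasType{x_j}{\theta_j}$ combined with the rule's built-in subtyping side condition $\theta_j \leq \theta_j'$.

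\textbf{Main obstacle.}
The principal delicate step is the inversion on subtyping into an arrow-shaped strict type of matching arity in the nested-clause case; this relies on a canonical-form property of the subtyping calculus (that $\tau \leq \vec{\theta} \to q_P$ with $\tau$ strict forces $\tau$ to have the same arrow skeleton) which, while essentially standard for intersection types, is not spelled out as a separate lemma in the excerpt and may need to be extracted from Subtyping Lemma~\ref{lem:subtyping} and the rules of Appendix~\ref{appx:intersection_types}. Once this inversion is in hand, the rest of the argument is a bookkeeping exercise threading subtyping through \ruleName{T-App}'s side condition.
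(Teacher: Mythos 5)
Your proof follows essentially the same route as the paper's: structural induction on $U'$, with $\truetm$ and conjunction handled via \ruleName{GT-And}, the base-type atom via \ruleName{T-Var} and \ruleName{GT-Atom}, and the nested-clause case via Subtyping Lemma~\ref{lem:subtyping}, \ruleName{T-Var}, \ruleName{T-App} (using its built-in subtyping side condition) and finally \ruleName{GT-Atom}/\ruleName{GT-Imp}. The arrow-skeleton inversion you flag as the main obstacle is dispatched in the paper by observing that the conjunct of $\res{U}{y_i}$ supplied by Lemma~\ref{lem:subtyping} is itself a well-formed automaton clause $\forall \xx.\, U''' \Rightarrow P\,(y_i\,\xx)$, whose associated type already has the required arrow shape, so only the componentwise \ruleName{Q-Arr} inequalities $\fromClause{\res{U''}{x_j}} \leq \fromClause{\res{U'''}{x_j}}$ are needed.
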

\begin{proof}
  We proceed by structural induction on $U'$.
  Suppose $U \leqa U'$.
  \begin{itemize}
  
    \item Suppose $U' = \truetm$.
      Both $U \leqa \truetm$ and $\Gamma_V, \hasType{\yy}{\fromClause{U}} \vdash \hasType{\truetm}{\boolType}$ trivially hold.
    
    \item Suppose $U' = P\,y_i$.
      This means that $\fromClause{\res{U}{y_i}} \leq q_P$ and, thus, $q_P \in \fromClause{\res{U}{y_i}}$.
      We aim to show $\Gamma_V, \hasType{\yy}{\fromClause{U}} \vdash \hasType{P\,y_i}{\boolType}$, which holds just if $\Gamma_V, \hasType{\yy}{\fromClause{U}} \vdash \hasType{y_i}{q_P}$.
      We prove the latter with \ruleName{T-Var}, thanks to $q_P \in \fromClause{\res{U}{y_i}}$.
           
    \item Suppose $U' = (\forall \xx.\, U'' \Rightarrow P\,(y_i\,\xx))$ for some $\xx = x_1 \dots x_m$.
      This means $\fromClause{\res{U}{y_i}} \leqa \fromClause{\forall \xx.\, U'' \Rightarrow P\,(y_i\,\xx)}$.
      By Lemma~\ref{lem:subtyping}, there exists $T \in \res{U}{y_i}$ such that $\fromClause{T} \leqa \fromClause{\forall \xx.\, U'' \Rightarrow P\,(y_i\,\xx)}$.
           
      Let $T = (\forall \xx.\, U''' \Rightarrow P\,(y_i\,\xx))$ be such a $T$.
      This gives us $\fromClause{\res{U''}{x_j}} \leq \fromClause{\res{U'''}{x_j}}$ by subtyping rule \ruleName{Q-Arr}, for all $j \in [1..m]$, which implies $U'' \leqa U'''$.
           
      Finally, we apply \ruleName{T-App} to:
      \begin{itemize}
      
        \item $\Gamma_V, \hasType{\yy}{\fromClause{U}}, \hasType{\xx}{\fromClause{U''}} \vdash \hasType{y_i}{\fromClause{\forall \xx.\, U''' \Rightarrow P\,(y_i\,\xx)}}$, by \ruleName{T-Var}
        
        \item for all $j \in [1..m]$: $\Gamma_V, \hasType{\yy}{\fromClause{U}}, \hasType{\xx}{\fromClause{U''}} \vdash \hasType{x_j}{\fromClause{\res{U''}{x_j}}}$, by \ruleName{T-Var}
        
        \item for all $j \in [1..m]$: $\fromClause{\res{U''}{x_j}} \leq \fromClause{\res{U'''}{x_j}}$
        
      \end{itemize}
      This gives us $\Gamma_V, \hasType{\yy}{\fromClause{U}}, \hasType{\xx}{\fromClause{U''}} \vdash \hasType{y_i\,\xx}{q_P}$, to which we apply \ruleName{GT-Atom} and \ruleName{GT-Imp} to prove the claim.
          
    \item Suppose $U' = U_1' \land U_2'$.
      This means that $\fromClause{\res{U}{y_i}} \leq \fromClause{\res{(U_1' \land U_2')}{y_i}}$, for all $i \in [1..n]$.
      Clearly, also $\fromClause{\res{U}{y_i}} \leq \fromClause{\res{{U_1}'}{y_i}}$ and $\fromClause{\res{U}{z_i}} \leq \fromClause{\res{{U_2}'}{y_i}}$, for all $i \in [1..n]$.
      By the IH, $\Gamma_V, \hasType{\yy}{\fromClause{U}} \vdash \hasType{U_1'}{\boolType}$ and $\Gamma_V, \hasType{\yy}{\fromClause{U}} \vdash \hasType{U_2'}{\boolType}$.
      Finally, \ruleName{GT-And} gives us $\Gamma_V, \hasType{\yy}{\fromClause{U}} \vdash \hasType{U_1' \land U_2'}{\boolType}$.
  
  \end{itemize}
\end{proof}

\begin{lemma}[Type-clause correspondence]
  \label{lem:type-clause}
  
  For all $U$, there exists $U'$ such that $U \leqa U'$ and $V, \yy \vdash G \rew^* U'$ if, and only if $\Gamma_V, \hasType{\yy}{\fromClause{U}} \vdash \hasType{G}{\boolType}$.
\end{lemma}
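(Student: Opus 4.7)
The plan is to prove both directions simultaneously by structural induction on the goal formula $G$. The cases $G = \truetm$ and $G = G_1 \wedge G_2$ are routine: for the conjunction case, the forward direction uses Lemma~\ref{lem:rew_conj} to split $V, \yy \vdash G_1 \wedge G_2 \rew^* U'$ into two sub-rewrites producing $U_1' \wedge U_2' = U'$, the IH yields typing of each conjunct, and \ruleName{GT-And} concludes; the reverse direction inverts this using the congruence rules \ruleName{AndL}, \ruleName{AndR}. For a nested clause $G = \forall \zz.\, W \Rightarrow G'$, the forward direction notes that any rewrite must proceed via \ruleName{Imp}, \ruleName{ImpAnd} or \ruleName{Scope}; we apply the IH to $V \wedge W,\, \yy \vdash G' \rew^* U''$ in the extended environment $\Gamma_{V \wedge W}, \hasType{\yy}{\fromClause{U}}$ (noting $\Gamma_{V \wedge W} = \Gamma_V, \hasType{\zz}{\fromClause{W}}$ by Definition~\ref{def:type_system_automaton_formula}) and conclude with \ruleName{GT-Imp}. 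The reverse direction is symmetric.

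The heart of the proof is the atomic case $G = P\,t$, which splits according to the head symbol of $t$. If $t = y_i$ for $y_i \in \yy$, both sides reduce to the equivalence $q_P \in \fromClause{\res{U}{y_i}} \iff P\,y_i \in U$: choose $U' = P\,y_i$ (rewriting in zero steps, which is already automaton) and observe $U \leqa P\,y_i$ iff $\fromClause{\res{U}{y_i}} \leq q_P$ iff $P\,y_i$ is among the conjuncts of $U$. If $t = f\,\vv{s}$ with $f$ a constructor, the rewrite must begin with \ruleName{Step} using some $(\forall \vv{x}.\, W \Rightarrow P\,(f\,\vv{x})) \in V$; the resulting body $W[\vv{s}/\vv{x}]$ decomposes along the $\vv{x}$-partition into sub-rewrites $V, \yy \vdash \toClause{\fromClause{\res{W}{x_i}}}{s_i} \rew^* U_i'$, which by the inductive hypothesis correspond to typing judgements $\Gamma_V, \hasType{\yy}{\fromClause{U}} \vdash \hasType{s_i}{\fromClause{\res{W}{x_i}}}$, in turn assembled via \ruleName{T-App} and \ruleName{T-Con} (using $\Gamma_V$'s assignment to $f$) to yield $\hasType{f\,\vv{s}}{q_P}$.

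The genuinely hard subcase is $t = y\,\vv{s}$ with $y \in \yy$, since this is where \ruleName{Assm} departs from first-order intuition. Here the rewrite picks a side condition $(\forall \vv{x}\vv{z}.\, W \Rightarrow P\,(f\,\vv{x}\,\vv{z})) \in V$ and produces $\res{W}{\vv{z}}[\vv{s}/\vv{z}] \wedge (\forall \vv{z}.\, \res{W}{\vv{z}} \Rightarrow P\,(y\,\vv{z}))$, which is \emph{not} a specialisation of a single conjunct of $U$ but rather of a conjunct of the nested shape $\forall \vv{z}.\, W' \Rightarrow P\,(y\,\vv{z})$ appearing in $\res{U}{y}$. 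To align these, I would use Lemma~\ref{lem:subtyping} to show that $U \leqa U'$ forces some suitable nested clause to lie in $\res{U}{y}$, then invoke Lemma~\ref{lem:monotonicity_of_rewrite} to reconcile the rewrite generated from the clause in $V$ with the weaker type demanded by the context. Conversely, given the typing $\Gamma_V, \hasType{\yy}{\fromClause{U}} \vdash \hasType{y\,\vv{s}}{q_P}$, the type of $y$ must contain a strict arrow $\vv{\theta} \to q_P$; unpacking via Lemma~\ref{lem:typeable_automaton} and Lemma~\ref{lem:true_automaton} lets us convert this into an explicit nested automaton conjunct in $U$, which provides the required bound on the \ruleName{Assm}-generated output.

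The main obstacle is this variable-headed application case: the typing derivation sees $y$ at a higher-order arrow type and simply discharges each argument, whereas the rewrite system exposes a specific side-condition clause from $V$ and synthesises a nested implication. Bridging these requires a careful bookkeeping of which conjunct of $\res{U}{y}$ witnesses the required subtype, together with the two-way use of Lemmas~\ref{lem:true_automaton}, \ref{lem:monotonicity_of_rewrite}, and \ref{lem:typeable_automaton} to convert between nested-implication form and arrow-intersection form. Once this case is in hand, the remaining structural cases are straightforward bookkeeping.
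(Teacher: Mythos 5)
Your backward direction is essentially the paper's argument (induction over the typing derivation, with the variable-headed case handled by extracting an arrow type from $\fromClause{\res{U}{y}}$ and discharging the nested output of \ruleName{Assm} via \ruleName{Scope}), but your forward direction has a genuine well-foundedness problem. You announce a structural induction on $G$, yet in the atomic case $G = P\,(f\,\vv{s})$ the rewrite first fires \ruleName{Step}, and the residual goals you feed back into the induction are the instantiated body conjuncts $\toClause{\fromClause{\res{W}{x_i}}}{s_i}$. These are not subformulas of $P\,(f\,\vv{s})$; when $s_i$ has higher type they are nested implications $\forall \zz.\, \cdots \Rightarrow P'\,(s_i\,\zz)$, syntactically larger than the atom you started from, and the same happens after \ruleName{Assm}. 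So the inductive hypothesis you invoke is simply not available under the stated measure. The natural repairs are either to induct on the length of the rewrite sequence (each residual rewrite is strictly shorter, though you then also owe a decomposition lemma for multi-step rewrites of top-level conjunctions, which is analogous to but not literally Lemma~\ref{lem:rew_conj}), or to do what the paper does: prove a single-step subject-expansion statement --- if $V,\yy \vdash G_1 \rew G_2$ and $\Gamma_V, \hasType{\yy}{\fromClause{U}} \vdash \hasType{G_2}{\boolType}$ then $\Gamma_V, \hasType{\yy}{\fromClause{U}} \vdash \hasType{G_1}{\boolType}$ --- by cases on the rewrite rule, and seed the backward propagation with $\Gamma_V, \hasType{\yy}{\fromClause{U}} \vdash \hasType{U'}{\boolType}$, which is exactly Lemma~\ref{lem:typeable_automaton} applied to $U \leqa U'$. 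With the single-step formulation no decomposition of multi-step rewrites is needed at all.

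A secondary, repairable slip: in the constructor case you let the induction deliver the term typing $\hasType{s_i}{\fromClause{\res{W}{x_i}}}$ directly, but the statement being proved only speaks of goal typings $\hasType{\cdot}{\boolType}$. The passage between $\hasType{\toClause{\theta}{s_i}}{\boolType}$ and $\hasType{s_i}{\theta}$ is precisely the content of Proposition~\ref{cor:type-clause}, which is derived \emph{from} this lemma, so inside this proof you must either extract the argument types from the structure of the typing derivation itself (as the paper does in its \ruleName{Step}, \ruleName{Assm} and \ruleName{T-App} cases, where the types $\tau_j$ are read off the \ruleName{GT-Atom} leaves) or prove that small equivalence separately; it does not come for free from the induction. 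Your account of the crux case $P\,(y\,\vv{s})$ otherwise points at the right tools (Lemmas~\ref{lem:subtyping}, \ref{lem:true_automaton}, \ref{lem:monotonicity_of_rewrite}, \ref{lem:typeable_automaton}), but as written it is a plan rather than a proof, and the bookkeeping there only goes through once the induction measure is fixed as above.
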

\begin{proof}
  Lemma~\ref{lem:typeable_automaton} proves $\Gamma_V, \hasType{\yy}{\fromClause{U}} \vdash \hasType{U'}{\boolType}$ from $U \leqa U'$, so it suffices to show the following for direction $\Rightarrow$:
  if $V, \yy \vdash G_1 \rew G_2$ and $\Gamma_V, \hasType{\yy}{\fromClause{U}} \vdash \hasType{G_2}{\boolType}$, then $\Gamma_V, \hasType{\yy}{\fromClause{U}} \vdash \hasType{G_1}{\boolType}$.
  
  We use case analysis on the rewrite rule used in $V, \yy \vdash G_1 \rew G_2$.
  \begin{description}
  
    \item[\ruleName{Refl}]
      Suppose $V, \yy \vdash P\,x \rew \truetm$ and $\Gamma_V, \hasType{\yy}{\fromClause{U}} \vdash \hasType{\truetm}{\boolType}$.
      The side condition is $P\,x \in V$, which means $q_P \in \fromClause{\res{V}{x}}$ and $\hasType{x}{q_P} \in \Gamma_V$.
      Now, $\Gamma_V, \hasType{\yy}{\fromClause{U}} \vdash \hasType{P\,x}{\boolType}$ can be proved using \ruleName{GT-Atom} and \ruleName{T-Var}. 
    
    \item[\ruleName{Step}]
      Suppose $V, \yy \vdash P\,(f\,\ss) \rew U'[\ss/\xx]$ and $\Gamma_V, \hasType{\yy}{\fromClause{U}} \vdash \hasType{U'[\ss/\xx]}{\boolType}$.
      The side condition is $(\forall \xx.\, U' \Rightarrow P\,(f\,\xx)) \in V$, which means $\fromClause{\forall \xx.\, U' \Rightarrow P\,(f\,\xx)} \in \fromClause{\res{V}{f}}$ and $\hasType{f}{\fromClause{\forall \xx.\, U' \Rightarrow P\,(f\,\xx)}} \in \Gamma_V$. 
      
      Let $U' = T_1 \land \dots \land T_k$.
      Clearly, there exists a proof $\Gamma_V, \hasType{\yy}{\fromClause{U}} \vdash \hasType{U'[\ss/\xx]}{\boolType}$ just if $\Gamma_V, \hasType{\yy}{\fromClause{U}} \vdash \hasType{T_j[\ss/\xx]}{\boolType}$, for all $j \in [1..k]$, using \ruleName{GT-And}.
      If $T_j$ contains a (single) variable $x_i \in x_1 \dots x_n$ of type $\iota$, then the proof $\Gamma_V, \hasType{\yy}{\fromClause{U}} \vdash \hasType{T_j[\ss/\xx]}{\boolType}$ uses \ruleName{GT-Atom} at the root.
      Otherwise, if the (single) variable $x_i \in x_1 \dots x_n$ in $T_j$ does not have type $\iota$, then the proof $\Gamma_V, \hasType{\yy}{\fromClause{U}} \vdash \hasType{T_j[\ss/\xx]}{\boolType}$ uses \ruleName{GT-Imp} at the root, followed by \ruleName{GT-Atom};
      i.e.~if $T_j = (\forall \zz.\, U'' \Rightarrow P'\,(x_i\,\zz))$, then we have $\Gamma_V, \hasType{\yy}{\fromClause{U}} \vdash \hasType{T_j[\ss/\xx]}{\boolType}$ just if $\Gamma_V, \hasType{\yy}{\fromClause{U}}, \hasType{\zz}{\fromClause{U''}} \vdash \hasType{P'\,(s_i\,\zz)}{\boolType}$. 
      
      We have now identified all $k$ nodes in $\Gamma_V, \hasType{\yy}{\fromClause{U}} \vdash \hasType{U'[\ss/\xx]}{\boolType}$ at which \ruleName{GT-Atom} is used.
      Thanks to their premises, $T_j = (\forall \zz.\, U'' \Rightarrow P'(x_i\,\zz))$ implies $\Gamma_V, \hasType{\yy}{\fromClause{U}}, \hasType{\zz}{\fromClause{U''}} \vdash \hasType{s_i\,\zz}{q_{P'}}$.
      Let $\zz = z_1 \dots z_m$.
      By \ruleName{T-App}, there exist $\sigmaInt_\ell,\sigmaInt_\ell'$, for all $\ell \in [1..m]$, such that:
      \begin{itemize}
      
        \item $\Gamma_V, \hasType{\yy}{\fromClause{U}}, \hasType{\zz}{\fromClause{U''}} \vdash \hasType{s_i}{\sigmaInt_1 \to \dots \to \sigmaInt_m \to q_{P'}}$
        
        \item for all $\ell \in [1..m]$: $\Gamma_V, \hasType{\yy}{\fromClause{U}}, \hasType{\zz}{\fromClause{U''}} \vdash \hasType{z_\ell}{\sigmaInt_\ell'}$
        
        \item for all $\ell \in [1..m]$: $\sigmaInt_\ell' \leq \sigmaInt_\ell$
      
      \end{itemize}
      Typing rule \ruleName{T-Var} guarantees $\sigmaInt_\ell' \subseteq \fromClause{\res{U''}{z_\ell}}$ and, thus, $\fromClause{\res{U''}{z_\ell}} \leq \sigmaInt_\ell' \leq \sigmaInt_\ell$.
      It follows that $\sigmaInt_1 \to \dots \to \sigmaInt_m \to q_{P'} \leq \fromClause{\res{U''}{z_1}} \to \dots \to \fromClause{\res{U''}{z_m}} \to q_{P'} = \fromClause{T_j}$.
      
      Let us refer to $\sigmaInt_1 \to \dots \to \sigmaInt_m \to q_{P'}$ as $\tau_j$, the type extracted from $T_j$ for (its only variable) $x_i$.
      Recall $\tau_j \leq \fromClause{T_j}$.
      We define an type $\theta_i$ for each $s_i$ such that $\theta_i \leq \fromClause{\res{U'}{x_i}}$:
      \begin{align*}
         \theta_i &\defeq \set{\tau_j \mid T_j \in \res{U'}{x_i} \land j \in [1..k]}
      \end{align*}

      Finally, we have the following:
      \begin{itemize}
      
        \item $\Gamma_V, \hasType{\yy}{\fromClause{U}} \vdash \hasType{f}{\fromClause{\forall \xx.\, U' \Rightarrow P\,(f\,\xx)}}$, by \ruleName{T-Con}
        
        \item for all $i \in [1..n]$: $\Gamma_V, \hasType{\yy}{\fromClause{U}} \vdash \hasType{s_i}{\theta_i}$
        
        \item for all $i \in [1..n]$: $\theta_i \leq \fromClause{\res{U'}{x_i}}$
        
      \end{itemize}
      We conclude the case with \ruleName{T-App} with those premises, followed by \ruleName{GT-Atom}.  
      
    \item[\ruleName{Assm}]
      Let $\xx = x_1 \dots x_n$ and $\yy = y_1 \dots y_m$.
      Suppose $V, \yy \vdash P\,(y_j\,\ss) \rew U''[\ss/\xx] \land (\forall \xx.\, U'' \Rightarrow P\,(y\,\xx))$ and $\Gamma_V, \hasType{\yy}{\fromClause{U}} \vdash \hasType{U''[\ss/\xx] \land (\forall \xx.\, U'' \Rightarrow P\,(y_j\,\xx))}{\boolType}$, for some $j \in [1..m]$.
      
      The typing proof uses \ruleName{GT-And} at the root with $\Gamma_V, \hasType{\yy}{\fromClause{U}} \vdash \hasType{U''[\ss/\xx]}{\boolType}$ on the LHS and on the RHS: 
      \[
        \prftree[l]
          { \ruleName{GT-Imp} }
          {\prftree[l]
            { \ruleName{GT-Atom} }
            { \Gamma_V, \hasType{\yy}{\fromClause{U}}, \hasType{\xx}{\fromClause{U''}} \vdash \hasType{y_j\,\xx}{q_P} }
            { \Gamma_V, \hasType{\yy}{\fromClause{U}}, \hasType{\xx}{\fromClause{U''}} \vdash \hasType{P\,(y_j\,\xx)}{\boolType} }
          }
          { \Gamma_V, \hasType{\yy}{\fromClause{U}} \vdash \hasType{\forall \xx.\, U'' \Rightarrow P\,(y_j\,\xx)}{\boolType} }
      \]
      Typing rule \ruleName{T-App} now guarantees the existence of $\sigmaInt_i,\sigmaInt_i'$, for all $i \in [1..n]$, such that:
      \begin{itemize}
      
       \item $\Gamma_V, \hasType{\yy}{\fromClause{U}}, \hasType{\xx}{\fromClause{U''}} \vdash \hasType{y_j}{\sigmaInt_1 \to \dots \to \sigmaInt_m \to q_P}$
        
        \item for all $i \in [1..n]$: $\Gamma_V, \hasType{\yy}{\fromClause{U}}, \hasType{\xx}{\fromClause{U''}} \vdash \hasType{x_i}{\sigmaInt'_i}$
        
        \item for all $i \in [1..n]$: $\sigmaInt'_i \leq \sigmaInt_i$
        
      \end{itemize}
      The first gives us $\sigmaInt_1 \to \dots \to \sigmaInt_m \to q_P \in \fromClause{\res{U}{y_j}}$ thanks to \ruleName{T-Var}, and the second $\sigmaInt_i' \subseteq \fromClause{\res{U''}{x_i}}$.
      This implies $\fromClause{\res{U''}{x_i}} \leq \sigmaInt_i' \leq \sigmaInt_i$.
      The third bullet point now yields $\sigmaInt_1 \to \dots \to \sigmaInt_n \to q_P \leq \fromClause{\res{U''}{x_1}} \to \dots \to \fromClause{\res{U''}{x_n}} \to q_P = \fromClause{\forall \xx.\, U'' \Rightarrow P\,(y_j\,\xx)}$. 
      
      The left premise $\Gamma_V, \hasType{\yy}{\fromClause{U}} \vdash \hasType{U''[\ss/\xx]}{\boolType}$ of the assumption allows us to prove the existence of $\theta_i$ such that $\theta_i \leq \fromClause{\res{U''}{x_i}}$ and $\Gamma_V, \hasType{\yy}{\fromClause{U}} \vdash \hasType{s_i}{\theta_i}$ for all $i \in [1..n]$, as we did in \ruleName{Step}.
      This means that $\theta_i \leq \fromClause{\res{U''}{x_i}} \leq \sigmaInt_i' \leq \sigmaInt_i$.
      
      Finally, we prove $\Gamma_V, \hasType{\yy}{\fromClause{U}} \vdash \hasType{P\,(y_j\,\ss)}{\boolType}$ with \ruleName{T-App} and \ruleName{GT-Atom} on:
      \begin{itemize}
      
       \item $\Gamma_V, \hasType{\yy}{\fromClause{U}} \vdash \hasType{y_j}{\sigmaInt_1 \to \dots \to \sigmaInt_m \to q_P}$
        
        \item for all $i \in [1..n]$: $\Gamma_V, \hasType{\yy}{\fromClause{U}} \vdash \hasType{s_i}{\theta_i}$
        
        \item for all $i \in [1..n]$: $\theta_i \leq \sigmaInt_i$
        
      \end{itemize} 
    
    \item[\ruleName{ImpAnd}]
      Suppose $V, \yy \vdash  (\forall \zz.\, U'' \Rightarrow G_1' \land G_2') \rew  (\forall \zz.\, U'' \Rightarrow G_1') \land (\forall \zz .\, U'' \Rightarrow G_2')$ and $\Gamma_V, \hasType{\yy}{\fromClause{U}} \vdash \hasType{ (\forall \zz.\, U'' \Rightarrow G_1') \land (\forall \zz .\, U'' \Rightarrow G_2')}{\boolType}$.
      
      Our assumption $\Gamma_V, \hasType{\yy}{\fromClause{U}} \vdash \hasType{ (\forall \zz.\, U'' \Rightarrow G_1') \land (\forall \zz .\, U'' \Rightarrow G_2')}{\boolType}$ uses \ruleName{GT-And} at the root with the subproofs given by, for $i \in [1,2]$:
      \[
          \prftree[l]
            { \ruleName{GT-Imp} }
            { \Gamma_V, \hasType{\yy}{\fromClause{U}}, \hasType{\zz}{\fromClause{U''}} \vdash \hasType{G_i'}{\boolType} }
            { \Gamma_V, \hasType{\yy}{\fromClause{U}} \vdash \hasType{ \forall \zz.\, U'' \Rightarrow G_i'}{\boolType} }
      \]
      Let us refer to premise $\Gamma_V, \hasType{\yy}{\fromClause{U}}, \hasType{\zz}{\fromClause{U''}} \vdash \hasType{G_i'}{\boolType}$ as $\prfAb_i$, for $i \in [1,2]$.

      This allows us to construct:
      \[
        \prftree[l]
          { \ruleName{GT-Imp} }
          {\prftree[l]
            { \ruleName{GT-And} }
            { \prfAb_1 \qquad \qquad \qquad \qquad }  
            { \prfAb_2 } 
            { \Gamma_V, \hasType{\yy}{\fromClause{U}}, \hasType{\zz}{\fromClause{U''}} \vdash \hasType{G_1' \land G_2'}{\boolType} }
          }
          { \Gamma_V, \hasType{\yy}{\fromClause{U}} \vdash \hasType{ \forall \zz.\, U'' \Rightarrow G_1' \land G_2'}{\boolType} }
      \] 
    
     
      
    
    \item[\ruleName{Scope}]
      Suppose $V, \yy \vdash (\forall \zz.\, U_1 \Rightarrow G_2) \rew G_2$ and $\Gamma_V, \hasType{\yy}{\fromClause{U}} \vdash \hasType{G_2}{\boolType}$ with $\zz \cap \fv(G_2) = \emptyset$.
      Thanks to Weakening Lemma~\ref{lem:weakening}, $\Gamma_V, \hasType{\yy}{\fromClause{U}}, \hasType{\zz}{\fromClause{U_1}} \vdash \hasType{G_2}{\boolType}$ also holds.
      Finally, \ruleName{GT-Imp} proves this case.
    
    \item[\ruleName{Imp}]
      This is the first inductive case.
      Suppose $V, \yy \vdash (\forall \zz.\, U_1 \Rightarrow G_1') \rew (\forall \zz.\, U_1 \Rightarrow G_2')$ and $\Gamma_V, \hasType{\yy}{\fromClause{U}} \vdash \hasType{(\forall \zz.\, U_1 \Rightarrow G_2')}{\boolType}$.
      The latter gives us $\Gamma_V, \hasType{\yy}{\fromClause{U}}, \hasType{\zz}{\fromClause{U_1}} \vdash \hasType{G_2'}{\boolType}$ by \ruleName{GT-Imp}, which is equal to $\Gamma_{V \land U_1}, \hasType{\yy}{\fromClause{U}} \vdash \hasType{G_2'}{\boolType}$. 
      
      We use the IH on premise $V \land U_1, \yy \vdash G_1' \rew G_2'$ of \ruleName{Imp} for $\Gamma_{V \land U_1}, \hasType{\yy}{\fromClause{U}} \vdash \hasType{G_1'}{\boolType}$, which is equal to $\Gamma_V, \hasType{\yy}{\fromClause{U}}, \hasType{\zz}{\fromClause{U_1}} \vdash \hasType{G_1'}{\boolType}$.
      Finally, \ruleName{GT-Imp} gives us $\Gamma_V, \hasType{\yy}{\fromClause{U}} \vdash \hasType{(\forall \zz.\, U_1 \Rightarrow G_1')}{\boolType}$, as required. 
          
    \item[\ruleName{AndL}]
      Suppose $V, \yy \vdash G_{1} \land G_{2} \rew G_{1}' \land G_{2}$ and $\Gamma_V, \hasType{\yy}{\fromClause{U}} \vdash \hasType{G_{1}' \land G_{2}}{\boolType}$.
      The proof is straightforward: 
      use (backwards) \ruleName{GT-And} on $\Gamma_V, \hasType{\yy}{\fromClause{U}} \vdash \hasType{G_{1}' \land G_{2}}{\boolType}$, invoke the IH on the parts, and use \ruleName{GT-And} for $\Gamma_V, \hasType{\yy}{\fromClause{U}} \vdash \hasType{G_{1} \land G_{2}}{\boolType}$.

  \end{description}
  
  For direction $\Leftarrow$, suppose $\Gamma_V, \hasType{\yy}{\fromClause{U}} \vdash \hasType{G}{\boolType}$.
  We use structural induction on $G$ to prove $V, \yy \vdash G \rew^* U_0$ for some $U_0$ such that $U \leqa U_0$.
  \begin{description}
  
    \item[\ruleName{GT-Atom}]
      Suppose $\Gamma_V, \hasType{\yy}{\fromClause{U}} \vdash \hasType{P\,s}{\boolType}$ with premise $\Gamma_V, \hasType{\yy}{\fromClause{U}} \vdash \hasType{s}{q_P}$.
      We use structural induction on term $s$. 
      \begin{description}
      
        \item[\ruleName{T-Con}]
          Suppose $s = f:\iota \in \Sigma$.
          By the premise, $q_P \in \fromClause{\res{V}{f}}$ and, thus, $P\,f \in \res{V}{f}$.
          We use \ruleName{Refl} with this side condition for:
          \[
            V, \yy \vdash P\,f \rew \truetm
          \]
          Since $U \leq \truetm$, this proves the case.
          
        \item[\ruleName{T-Var}]
          If $s = z \not\in \yy$, then we proceed as in the previous case, so suppose $s = y_j \in \yy$.
          Clearly, $P\,y_j$ is automaton wrt $\yy$.
          Thus, $V, \yy \vdash P\,y_j \rew^* P\,y_j$.
          
          It remains to show $U \leqa P\,y_j$.
          Premise $\Gamma_V, \hasType{\yy}{\fromClause{U}} \vdash \hasType{y_j}{q_P}$ guarantees $q_P \in \fromClause{\res{U}{y_j}}$, by \ruleName{T-Var}, which implies $U \leqa \res{U}{y_j} \leqa P\,y_j$, as required. 
          
        \item[\ruleName{T-App}]
          We must distinguish the case where $s = y_j\,\vv{t}$ with $y_j \in \yy$ from $s = f\,\vv{t}$ with $f \in \Sigma$, but first assume $s = f\,t_1 \dots t_n = f\,\vv{t}$ for some $f \in \Sigma$. 

          Premise $\Gamma_V, \hasType{\yy}{\fromClause{U}} \vdash \hasType{f\,\vv{t}}{q_P}$ guarantees $\sigmaInt_1 \to \dots \to \sigmaInt_n \to q_P \in \fromClause{\res{V}{f}}$, by \ruleName{T-App} and \ruleName{T-Con}, with $\Gamma_V, \hasType{\yy}{\fromClause{U}} \vdash \hasType{t_i}{\sigmaInt_i'}$ such that $\sigmaInt'_i \leq \sigmaInt_i$, for all $i \in [1..n]$.
          Thus, there exists an automaton clause $(\forall \xx.\, \toClause{\sigmaInt_1}{x_1} \land \dots \land \toClause{\sigmaInt_n}{x_n} \Rightarrow P\,(f\,\xx)) \in \res{V}{f}$ we can use as a side condition for \ruleName{Step} in:
          \[
            V, \yy \vdash P\,(f\,\vv{t}) \rew \toClause{\sigmaInt_1}{t_1} \land \dots \land \toClause{\sigmaInt_n}{t_n}
          \]
          We aim to show that each $\toClause{\tau}{t_i} \in \toClause{\sigmaInt_i}{t_i}$ rewrites to an automaton clause larger than $U$, for all $i \in [1..n]$.
          Let us consider such a $\toClause{\tau}{t_i}$. 
          \begin{itemize}
          
            \item Suppose $t_i : \iota$.
              Let $\tau = q_{P'}$, so $\toClause{\tau}{t_i} = P'\,t_i$.
              Recall we have $\Gamma_V, \hasType{\yy}{\fromClause{U}} \vdash \hasType{t_i}{\sigmaInt_i'}$. 
              Since $\sigmaInt_i' \leq \sigmaInt_i$ implies $\sigmaInt'_i \supseteq \sigmaInt_i$ for type $\iota$, we also have $\Gamma_V, \hasType{\yy}{\fromClause{U}} \vdash \hasType{t_i}{q_{P'}}$.
              
              Typing rule \ruleName{GT-Atom} gives us $\Gamma_V, \hasType{\yy}{\fromClause{U}} \vdash \hasType{P'\,t_i}{\boolType}$, to which we can apply the IH for $V, \yy \vdash P'\,t_i \rew^* U_{\tau}$ such that $U \leqa U_{\tau}$.
          
            \item Suppose $t_i : \sigma_{1} \to \dots \to \sigma_{k} \to \iota$ for $k > 0$.
              Thanks to Lemma~\ref{lem:subtyping}, the inequality $\sigmaInt_i' \leq \sigmaInt_i$ with $\tau \in \sigmaInt_i$ guarantees the existence of $\tau' \in \sigmaInt'$ such that $\tau' \leq \tau$.
              
              Recall we have $\Gamma_V, \hasType{\yy}{\fromClause{U}} \vdash \hasType{t_i}{\sigmaInt_i'}$ and, thus, $\Gamma_V, \hasType{\yy}{\fromClause{U}} \vdash \hasType{t_i}{\tau'}$.
              Let $\toClause{\tau}{t_i} = (\forall \zz.\, U_0 \Rightarrow P'\,(t_i\,\zz))$ and $\toClause{\tau'}{t_i} = (\forall \zz.\, U_0' \Rightarrow P'\,(t_i\,\zz))$.
              Since $\tau' \leq \tau$, it holds that $U_0 \leqa U_0'$.
              This gives us the following:
              \begin{itemize}
      
                \item $\Gamma_V, \hasType{\yy}{\fromClause{U}}, \hasType{\zz}{\fromClause{U_0}} \vdash \hasType{t_i}{\fromClause{\forall \zz.\, U_0' \Rightarrow P'\,(t_i\,\zz)}}$ by Weakening Lemma~\ref{lem:weakening}
        
                \item for all $z_\ell \in \zz$: $\Gamma_V, \hasType{\yy}{\fromClause{U}}, \hasType{\zz}{\fromClause{U_0}} \vdash \hasType{z_\ell}{\fromClause{\res{{U_0}}{z_\ell}}}$, by \ruleName{T-Var}
        
                \item for all $z_\ell \in \zz$: $\fromClause{\res{{U_0}}{z_\ell}} \leq \fromClause{\res{{U_0'}}{z_\ell}}$
        
              \end{itemize}
              We apply \ruleName{T-Var} to these premises for $\Gamma_V, \hasType{\yy}{\fromClause{U}}, \hasType{\zz}{\fromClause{U_0}} \vdash \hasType{t_i\,\zz}{q_{P'}}$.
              By \ruleName{GT-Atom}, $\Gamma_V, \hasType{\yy}{\fromClause{U}}, \hasType{\zz}{\fromClause{U_0}} \vdash \hasType{P'\,(t_i\,\zz)}{\boolType}$, which is equal to $\Gamma_{V \land U_0}, \hasType{\yy}{\fromClause{U}} \vdash \hasType{P'\,(t_i\,\zz)}{\boolType}$.
              
              Finally, by the IH, there exists $U_{\tau}$ such that $V \land U_0, \yy \vdash P'\,(t_i\,\zz) \rew^* U_{\tau}$ and $U \leqa U_{\tau}$.
              Typing rule \ruleName{GT-Imp} gives us $V, \yy \vdash (\forall \zz.\, U_0 \Rightarrow P'\,(t_i\,\zz)) \rew^* (\forall \zz.\, U_0 \Rightarrow U_{\tau})$.
              Observe that $U_{\tau}$ is automaton wrt $\yy$. 
              Therefore, it does not contain $\zz$, and we can apply \ruleName{Scope} for:
              \[
                V, \yy \vdash 
                  (\forall \zz.\, U_0 \Rightarrow P'\,(t_i\,\zz))
                   \rew^* (\forall \zz.\, U_0\Rightarrow U_{\tau})
                   \rew U_{\tau}
              \]
          
          \end{itemize}

          We have proved, for all $i \in [1..n]$, that every $\toClause{\tau}{t_i} \in \toClause{\sigmaInt_i}{t_i}$ rewrites to some automaton clause greater than $U$.
          Therefore, so does $\toClause{\sigmaInt_i}{t_i}$, and so does $\toClause{\sigmaInt_1}{t_1} \land \dots \land \toClause{\sigmaInt_n}{t_n}$.
          We conclude the case with:
          \[
            V, \yy \vdash P\,(f\,\vv{t}) \rew \toClause{\sigmaInt_1}{t_1} \land \dots \land \toClause{\sigmaInt_n}{t_n}
          \]
          
          Now, suppose $s = y_j\,t_1 \dots t_n$ with $y_j \in \yy$ instead.
          Similar to before, premise $\Gamma_V, \hasType{\yy}{\fromClause{U}} \vdash \hasType{y_j\,\vv{t}}{q_P}$ guarantees $\sigmaInt_1 \to \dots \to \sigmaInt_n \to q_P \in \fromClause{\res{U}{y_j}}$, by \ruleName{T-App} and \ruleName{T-Var}, with $\Gamma_V, \hasType{\yy}{\fromClause{U}} \vdash \hasType{t_i}{\sigmaInt_i'}$ such that $\sigmaInt'_i \leq \sigmaInt_i$, for all $i \in [1..n]$.
          Thus, there exists an automaton clause $T = (\forall \xx.\, \toClause{\sigmaInt_1}{x_1} \land \dots \land \toClause{\sigmaInt_n}{x_n} \Rightarrow P\,(y_j\,\xx)) \in \res{U}{y_j}$ we can use as assumption in \ruleName{Assm}:
          \[
            V, \yy \vdash P\,(f\,\vv{t}) \rew \toClause{\sigmaInt_1}{t_1} \land \dots \land \toClause{\sigmaInt_n}{t_n} \land T
          \]
          Conjunct $T$ is in $U$ by definition.
          For the other conjuncts, we proceed as before.
          
      \end{description}
    
    \item[\ruleName{GT-And}]
      Suppose $\Gamma_V, \hasType{\yy}{\fromClause{U}} \vdash \hasType{G_1 \land G_2}{\boolType}$ with premises $\Gamma_V, \hasType{\yy}{\fromClause{U}} \vdash \hasType{G_1}{\boolType}$ and $\Gamma_V, \hasType{\yy}{\fromClause{U}} \vdash \hasType{G_2}{\boolType}$.
      By the IH, there exist $U_1,U_2$ such that $V, \yy \vdash G_1 \rew^* U_1$ and $U \leqa U_1$ and $V, \yy \vdash G_2 \rew^* U_2$ and $U \leqa U_2$.
      Clearly, $U \leqa U_1 \land U_2$, so the following rewrite proves the case, using \ruleName{AndL} and \ruleName{AndR}:
      \[
        V, \yy \vdash G_1 \land G_2 \rew^* U_1 \land G_2 \rew^* U_1 \land U_2
      \]
    
    \item[\ruleName{GT-Imp}]
      Suppose $\Gamma_V, \hasType{\yy}{\fromClause{U}} \vdash \hasType{\forall \zz.\, U' \Rightarrow G''}{\boolType}$ with premise $\Gamma_V, \hasType{\yy}{\fromClause{U}}, \hasType{\zz}{\fromClause{U'}} \vdash \hasType{G''}{\boolType}$, which is equal to $\Gamma_{V \land U'}, \hasType{\yy}{\fromClause{U}} \vdash \hasType{G''}{\boolType}$.
      By the IH, there exists $U''$ such that $V \land U', \yy \vdash G'' \rew^* U''$ and $U \leqa U''$.
      Rewrite rule \ruleName{Imp} gives us $V, \yy \vdash (\forall \zz.\, U' \Rightarrow G'') \rew^* (\forall \zz.\, U' \Rightarrow U'')$.
      
      It remains to show this rewrites further to something larger than $U$.
      Observe that $U''$ is automaton wrt $\yy$. 
      Therefore, it does not contain $\zz$, and we can apply \ruleName{Scope} for:
      \[
        V, \yy \vdash (\forall \zz.\, U' \Rightarrow G'') \rew^* (\forall \zz.\, U' \Rightarrow U'') \rew U''
      \]
      
%
%
%
%
%

  \end{description}
\end{proof}

\typeClause*
\begin{proof}
  We aim to show $\Gamma \vdash \hasType{t}{\sigmaInt}$ if, and only if, $\Gamma \vdash \hasType{\toClause{\sigmaInt}{t}}{\boolType}$;
  this proves \ref{enum:typeClause_without_goal_typing}, since Lemma~\ref{lem:type-clause} already gives us $\exists U'.\, U \leqa U' \land V, \yy \vdash \toClause{\sigmaInt}{t} \rew^* U'$ if, and only if $\Gamma_V, \hasType{\yy}{\fromClause{U}} \vdash \hasType{\toClause{\sigmaInt}{t}}{\boolType}$, for all $U$.
  Furthermore, if $\size{\yy} = 0$, then $U = \truetm$ and $U' = \truetm$, so \ref{enum:typeClause_true} is an instance of \ref{enum:typeClause_without_goal_typing}.
  
  It remains to prove $\Gamma \vdash \hasType{t}{\sigmaInt}$ if, and only if, $\Gamma \vdash \hasType{\toClause{\sigmaInt}{t}}{\boolType}$.
  
  Suppose $t:\iota$.
  This yields the following sequence of equivalences:
  \begin{align*}
    \Gamma \vdash \hasType{t}{\sigmaInt}
      &\iff \forall q_P \in \sigmaInt.\, \Gamma \vdash \hasType{t}{q_P} \\
      &\iff \forall q_P \in \sigmaInt.\, \Gamma \vdash \hasType{P\,t}{\boolType} \\
      &\iff \forall q_P \in \sigmaInt.\, \Gamma \vdash \hasType{\toClause{q_P}{t}}{\boolType} \\
      &\iff \Gamma \vdash \hasType{\toClause{\sigmaInt}{t}}{\boolType}
  \end{align*}
  where we rely on notation, \ruleName{GT-Atom}, $\toClause{q_P}{t} = P\,t$, and \ruleName{GT-And}, resp.
  
  Suppose $t: \sigma_1 \to \dots \to \sigma_m \to \iota$ with $m>0$.
  It suffices to prove the claim for a strict type $\tau$ of this type, thanks to \ruleName{GT-And}.
  Let $\tau = \sigmaInt_1 \to \dots \to \sigmaInt_m \to q_P$.
  We obtain the following equivalences:
  \begin{align*}
    \Gamma \vdash \hasType{t}{\tau}
      &\iff \Gamma, \hasType{\zz}{\vv{\sigmaInt}} \vdash \hasType{t}{\tau} \\
      &\iff \Gamma, \hasType{\zz}{\vv{\sigmaInt}} \vdash \hasType{t\,\zz}{q_P} \\
      &\iff \Gamma, \hasType{\zz}{\vv{\sigmaInt}} \vdash \hasType{P\,(t\,\zz)}{\boolType} \\
      &\iff \Gamma \vdash \hasType{\forall \zz.\, \toClause{\vv{\sigmaInt}}{\zz} \Rightarrow P\,(t\,\zz)}{\boolType} \\
      &\iff \Gamma \vdash \hasType{\toClause{\tau}{t}}{\boolType}
  \end{align*}
  where we respectively rely on Weakening Lemma~\ref{lem:weakening}, \ruleName{T-App}, \ruleName{GT-Atom}, \ruleName{GT-Imp}, and $\toClause{\tau}{t} = (\forall \zz.\, \toClause{\vv{\sigmaInt}}{\zz} \Rightarrow P\,(t\,\zz))$.

\end{proof}

\subsection{Reducing HORS intersection typing to \texorpdfstring{\MSL($\omega$)}{\MSL(omega)}}
\label{appx:reducing_types}

\begin{lemma}
  \label{lem:P-coconsistency}
  A non-empty $\Gamma$ is $\calG$-coconsistent if, and only if, there exist $(\forall \yy.\, U \Rightarrow P\,(f\,\yy)) \in V_\Gamma$ and $(\forall \yy.\, P\,t \Rightarrow P\,(f\,\yy)) \in D_\calG$ such that:
  \[
    \Gamma \backslash \set{ \hasType{f}{\tau} } \text{ is } \calG \text{-coconsistent and }
    \exists U'.\, V_{\Gamma \backslash \set{ \hasType{f}{\tau} }}, \yy \vdash P\,t \rew^* U' \land U \leqa U' 
  \] 
  where $\tau = \fromClause{\forall \yy.\, U \Rightarrow P\,(f\,\yy)}$. 
\end{lemma}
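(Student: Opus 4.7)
My plan is to prove Lemma~\ref{lem:P-coconsistency} by essentially unpacking the definitions of the various objects involved and applying the typing-rewriting correspondence of Proposition~\ref{cor:type-clause}(i) as the main technical lemma. Both directions will run along the same identifications, so I will prove the equivalence componentwise.

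First, I will use the correspondences established in Section~\ref{sec:intersection_types} (Definitions~\ref{def:HOMSL_intersection_type_correspondence} and~\ref{def:type_system_automaton_formula}) to rephrase ``$\hasType{f}{\tau} \in \Gamma$'' as ``$(\forall \yy.\, U \Rightarrow P\,(f\,\yy)) \in V_\Gamma$'' with $\tau = \fromClause{\forall \yy.\, U \Rightarrow P\,(f\,\yy)}$. Concretely, if $\tau = \sigmaInt_1 \to \cdots \to \sigmaInt_m \to q_P$, then $U$ is forced to be $\toClause{\sigmaInt_1}{y_1} \wedge \cdots \wedge \toClause{\sigmaInt_m}{y_m}$ by Lemma~\ref{lem:order_iso}\ref{enum:auto_iden}, and $\fromClause{\res{U}{y_i}} = \sigmaInt_i$. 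Moreover, removing $\hasType{f}{\tau}$ from $\Gamma$ on one side corresponds to removing the clause $(\forall \yy.\, U \Rightarrow P\,(f\,\yy))$ from $V_\Gamma$ on the other side, so $V_{\Gamma \backslash \set{\hasType{f}{\tau}}}$ is well defined and $\Gamma_{V_{\Gamma \backslash \set{\hasType{f}{\tau}}}} = \Gamma \backslash \set{\hasType{f}{\tau}}$.

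Next, I will use the definition of $D_\calG$ from Section~\ref{sec:types_reduction} to match $(f\,\yy = t) \in \calG$ with $(\forall \yy.\, P\,t \Rightarrow P\,(f\,\yy)) \in D_\calG$: by construction, the former is in $\calG$ if and only if the latter is in $D_\calG$ for every $q_P \in Q_\iota$, so the choice of rule in $\calG$ corresponds exactly to a choice of such a definite clause in $D_\calG$ (together with the base type $q_P$ coming from the head of the selected automaton clause on the $V_\Gamma$ side).

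With these identifications in place, the remaining condition is the reformulation of the typing judgement
\[
  \Gamma \backslash \set{\hasType{f}{\tau}},\, \hasType{\yy}{\vv{\sigmaInt}} \vdash \hasType{t}{q_P}
\]
as the rewriting condition
\[
  \exists U'.\ V_{\Gamma \backslash \set{\hasType{f}{\tau}}},\, \yy \vdash P\,t \rew^* U' \wedge U \leqa U'.
\]
This is exactly an instance of Proposition~\ref{cor:type-clause}\ref{enum:typeClause_without_goal_typing}, applied with $\sigmaInt := q_P$ and the automaton formula $U$ identified above: since $\toClause{q_P}{t} = P\,t$ and $\fromClause{U} = \vv{\sigmaInt}$, the two conditions are equivalent.

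I do not expect any serious obstacle here: the main content of the lemma is packaged inside Proposition~\ref{cor:type-clause}, and the rest is bookkeeping about the correspondences between $\Gamma$ and $V_\Gamma$ and between $\calG$ and $D_\calG$. The only mildly subtle point is keeping track of which $q_P$ is selected; this is determined on the $V_\Gamma$ side by the head of the chosen automaton clause, and on the $\Gamma$ side by the tail type of $\tau$, and these agree by construction.
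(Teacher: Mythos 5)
Your proposal is correct and follows essentially the same route as the paper: it unfolds the definition of $\calG$-coconsistency componentwise and replaces each part by its counterpart via the type--clause isomorphism, the construction of $D_\calG$, and Proposition~\ref{cor:type-clause}\ref{enum:typeClause_without_goal_typing} with $\sigmaInt = q_P$ (so that $\toClause{q_P}{t} = P\,t$), which is exactly the paper's argument. The extra bookkeeping you supply about $U$, $\fromClause{U}$, and $V_{\Gamma \backslash \set{\hasType{f}{\tau}}}$ is consistent with the paper's intended identifications.
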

\begin{proof}
  Take the definition of $\calG$-coconsistency for non-empty $\Gamma$, and replace parts of the statement by equivalent ones, relying on the isomorphism between types and clauses, the construction of $D_\calG$, and Proposition~\ref{cor:type-clause}, resp.:
  \begin{itemize}
    \item $\hasType{f}{\tau} \in \Gamma$ if, and only if, $(\forall \yy.\, U \Rightarrow P\,(f\,\yy)) \in V_\Gamma$
    \item $\exists (f\,\yy = t) \in \calG$ if, and only if, $\exists (\forall \yy.\, P\,t \Rightarrow P\,(f\,\yy)) \in D_\calG$
    \item $\Gamma \backslash \set{ \hasType{f}{\tau}}, \hasType{\yy}{\fromClause{U}} \vdash \hasType{t}{q_P}$ if, and only if, $\exists U'.\, V_{\Gamma \backslash \set{ \hasType{f}{\tau} }}, \yy \vdash P\,t \rew^* U' \land U \leqa U' $
  \end{itemize}
\end{proof}

\begin{lemma}
  \label{lem:aux_red_types}
  \begin{enumerate}
    \item $\typeAppr{D_\calG} = \Gamma_{\apprx(D_\calG)}$ is $\calG$-coconsistent  \label{lem:types_typeAppr_cocons}
    \item If $\upclos{V}, \yy \vdash G \rew^* U$, then there exists $U'$ such that $V, \yy \vdash G \rew^* U'$ and $U \leqa U'$.  \label{lem:types_upwards_closed_inclusion}
    \item If $V_1 \leqa V_2$, then $V_2 \subseteq \upclos{V_1}$. \label{lem:types_automaton_order_to_inclusion}
    \item If $\Gamma$ is $\calG$-coconsistent, then $\apprx(D_\calG) \leqa V_\Gamma$. \label{lem:types_coconsistent_vs_automaton}
    \item If $\Gamma \vdash \hasType{t}{\tau}$ and $\Gamma$ is $\calG$-coconsistent, then there exists $\tau' \leq \tau$ such that $\Gamma_{\apprx(D_\calG)} \vdash \hasType{t}{\tau'}$. \label{lem:P-coconsistent_underapprox}
  \end{enumerate}
\end{lemma}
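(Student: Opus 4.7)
The plan is to prove the five parts in an order that exploits their dependencies: parts~(iii) and~(ii) are monotonicity/inclusion lemmas feeding into parts~(iv) and~(v), while part~(i) can be handled independently via the characterisation in Lemma~\ref{lem:P-coconsistency}.

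Part~(iii) is immediate unpacking: if $V_1 \leqa V_2$, then by definition each clause $T_2 \in V_2$ has a clause $T_1 \in V_1$ with $T_1 \leqa T_2$, so $T_2$ lies in the upward closure $\upclos{V_1}$. Part~(ii) then follows by induction on the rewrite derivation $\upclos{V}, \yy \vdash G \rew^* U$. The only interesting cases are \ruleName{Step} and \ruleName{Assm}, whose side condition is an automaton clause $T \in \upclos{V}$; by definition of $\upclos{V}$, there is a $T' \in V$ with $T' \leqa T$, so invoking Lemma~\ref{lem:monotonicity_of_rewrite} (monotonicity of rewrite) against that smaller side condition produces a rewrite using $V$ whose output dominates $U$ under $\leqa$. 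The other rewrite rules are handled by a direct inductive step, preserving the upper-bound relationship.

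For part~(i), I will use Lemma~\ref{lem:P-coconsistency}'s characterisation of coconsistency. By induction on the stages of the typing algorithm of Definition~\ref{def:typing_algorithm}, each newly added $\hasType{f}{\tau} \in \typeAppr{D_\calG}$ arises from some $(\forall \yy.\, G \Rightarrow P\,(f\,\yy)) \in D_\calG$ and a rewrite $V_\Gamma, \yy \vdash G \rew^* U$ using $\Gamma \subseteq \typeAppr{D_\calG}$, which is coconsistent by IH. Since $D_\calG$ was built so that $(\forall \yy.\, P\,t \Rightarrow P\,(f\,\yy)) \in D_\calG$ iff $(f\,\yy=t) \in \calG$, and since $\fromClause{\forall \yy.\, U \Rightarrow P\,(f\,\yy)} = \tau$ by construction, the conditions of Lemma~\ref{lem:P-coconsistency} are met, yielding coconsistency of $\typeAppr{D_\calG} \cup \{\hasType{f}{\tau}\}$.

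Part~(iv) is the main obstacle and is proved by induction on the construction of $\apprx(D_\calG)$. Each newly added automaton clause $\forall \yy.\, U \Rightarrow P\,(f\,\yy)$ comes from a definite clause $\forall \yy.\, P\,t \Rightarrow P\,(f\,\yy) \in D_\calG$ (and hence $(f\,\yy=t) \in \calG$) together with a rewrite $V, \yy \vdash P\,t \rew^* U$ using some $V \subseteq \apprx(D_\calG)$. By the induction hypothesis $V \leqa V_\Gamma$, so by part~(iii) $V \subseteq \upclos{V_\Gamma}$, and by part~(ii) there exists $U'$ with $V_\Gamma, \yy \vdash P\,t \rew^* U'$ and $U \leqa U'$. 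Lemma~\ref{lem:P-coconsistency} applied to the coconsistent $\Gamma$ then supplies a clause in $V_\Gamma$ whose type $\fromClause{\forall \yy.\, U'' \Rightarrow P\,(f\,\yy)}$ sits above our newly derived one, completing the inductive step. The delicate point is closing the loop: we must argue that the rewrite witnessed for part~(ii) provides exactly the dominating clause promised by coconsistency, which requires chaining $\leqa$ through Proposition~\ref{cor:type-clause}.

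Finally, part~(v) follows by combining the above. Given $\Gamma \vdash \hasType{t}{\tau}$, Proposition~\ref{cor:type-clause}\ref{enum:typeClause_true} yields $V_\Gamma \vdash \toClause{\tau}{t} \rew^* \truetm$. Part~(iv) gives $\apprx(D_\calG) \leqa V_\Gamma$, hence by parts~(iii) and~(ii) there exists $U'$ with $\apprx(D_\calG) \vdash \toClause{\tau}{t} \rew^* U'$ and $\truetm \leqa U'$. Unpacking the resulting automaton formula into intersection-type form via Proposition~\ref{cor:type-clause}\ref{enum:typeClause_without_goal_typing} and using the subtyping correspondence in Lemma~\ref{lem:order_iso}\ref{enum:clause_to_type_order} produces the desired $\tau' \leq \tau$ with $\Gamma_{\apprx(D_\calG)} \vdash \hasType{t}{\tau'}$.
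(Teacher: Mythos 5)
Parts (ii) and (iii) essentially reproduce the paper's argument (replace each \ruleName{Step}/\ruleName{Assm} side condition drawn from $\upclos{V}$ by a $\leqa$-smaller clause of $V$ and invoke Lemma~\ref{lem:monotonicity_of_rewrite}; unpack $\leqa$ clause-wise via Lemma~\ref{lem:subtyping}), and part (i) is the paper's stage-wise induction through Lemma~\ref{lem:P-coconsistency}. Part (v) deviates from the paper, which simply combines (iv) with an environment-subsumption property of the intersection type system cited from Ramsay's thesis rather than routing back through the rewrite correspondence; your detour is not obviously unsalvageable, but it is under-specified: for closed $t$ the relation $\truetm \leqa U'$ delivered by part (ii) is essentially vacuous, so it is not clear how you extract the required $\tau' \leq \tau$ from the resulting automaton formula.

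The genuine gap is part (iv). By Definition~\ref{def:HOMSL_intersection_type_correspondence} and Lemma~\ref{lem:subtyping}, $\apprx(D_\calG) \leqa V_\Gamma$ means that \emph{every} clause of $V_\Gamma$ — i.e.\ every binding $\hasType{f}{\tau} \in \Gamma$ — is dominated from below by some clause of $\apprx(D_\calG)$ concerning $f$. Your induction runs in the wrong direction: you induct on the construction of $\apprx(D_\calG)$ and try to show that each newly derived clause sits \emph{below some clause of} $V_\Gamma$. That is a different assertion, it is false in general (take $\Gamma = \emptyset$, which is coconsistent while $\apprx(D_\calG)$ is typically non-empty), and even where true it would not yield $\apprx(D_\calG) \leqa V_\Gamma$. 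The step in which Lemma~\ref{lem:P-coconsistency} allegedly "supplies a clause in $V_\Gamma$ whose type sits above our newly derived one" is unjustified: that lemma is an existential characterisation of coconsistency — \emph{some} binding of $\Gamma$ can be peeled off with a justifying rewrite over the remainder — and it does not provide, for an arbitrary symbol $f$ and arbitrary derived clause, a dominating binding in $\Gamma$; indeed $\Gamma$ need not mention $f$ at all. The correct argument (the paper's) inducts on the size of the coconsistent $\Gamma$: peel a binding $\hasType{f}{\tau}$ using Lemma~\ref{lem:P-coconsistency}, apply the induction hypothesis to $\Gamma \setminus \{\hasType{f}{\tau}\}$ to get $\apprx(D_\calG) \leqa V_{\Gamma \setminus \{\hasType{f}{\tau}\}}$, transfer the witnessing rewrite to $\apprx(D_\calG)$ via your parts (iii) and (ii), and then invoke the closure rule of Definition~\ref{def:canonical-solved-form} to conclude that $\apprx(D_\calG)$ contains a clause $\forall \yy.\, U'' \Rightarrow P\,(f\,\yy)$ with $U \leqa U''$, i.e.\ a clause $\leqa \toClause{\tau}{f}$ — exactly the per-binding obligation that $\apprx(D_\calG) \leqa V_\Gamma$ demands.
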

\begin{proof}
  We prove each part separately.
  \begin{enumerate}
  
    \item 
  
      Type environment $\typeAppr{D_\calG} = \Gamma_{\apprx(D_\calG)}$ is constructed inductively in parallel to an automaton approximation, which means there exists a linear order of environments $\Gamma_0,\Gamma_1,\dots, \Gamma_n$ where $\Gamma_0 = \emptyset$ and $\Gamma_n = \Gamma_{\apprx(D_\calG)}$ and $\Gamma_i = \Gamma_{i-1} \uplus \set{\hasType{f_i}{\fromClause{\forall \yy.\, U_i \Rightarrow P_i\,(f_i\,\yy)}}}$, for all $1 \leq i \leq n$, such that there exists $(\forall \yy.\, G_i \Rightarrow P_i\,(f_i\,\yy)) \in D_\calG$ and $V_{\Gamma_i} \vdash G_i \rew^* U_i$.
  
      By a straightforward induction with Lemma~\ref{lem:P-coconsistency}, each $\Gamma_i$ is $\calG$-coconsistent. 
      Therefore, so is $\typeAppr{D_\calG} = \Gamma_{\apprx(D_\calG)}$.
      
%
  
    \item
  
      Suppose $\upclos{V}, \yy \vdash G \rew^* U$.
      Clearly, $V \leqa \upclos{V}$.
      Starting from the end of the rewrite, we inductively replace each application of \ruleName{Step} with a $T' \in \upclos{V}$ by \ruleName{Step} with a smaller $T \in V$, which exists by Subtyping Lemma~\ref{lem:subtyping}.  
      (We do not need to consider \ruleName{Refl}, because $P\,x \in \upclos{V}$ if and only if $P\,x \in V$.)
  
      Suppose $\upclos{V}, \yy \vdash G \rew G' \rew^n U$ where $\upclos{V}, \yy \vdash G \rew G'$ is the only step to use an automaton clause from $\upclos{V} \backslash V$.
      Let $T' = (\forall \xx.\, U_0' \Rightarrow P\,(f\,\xx)) \in \upclos{V}$ be the side condition, so $G = P\,(f\,\ss)$ and $G' = U_0'[\ss/\xx]$.
      Let $T = (\forall \xx.\, U_0 \Rightarrow P\,(f\,\xx)) \in V$ such that $T \leqa T'$.
      This means $U_0' \leqa U_0$.
      
       By monotonicity of rewrite (Lemma~\ref{lem:monotonicity_of_rewrite}), there exists $U'$ such that $U \leqa U'$ and $V, \yy \vdash G \rew U_0[\ss/\xx] \rew^n U'$.
       
    \item 
    
      Suppose $V_1 \leqa V_2$ over variables $\yy$.
      By definition, for all $y \in \yy$, $\fromClause{\res{{V_1}}{y}} \leq \fromClause{\res{{V_2}}{y}}$.
      By Subtyping Lemma~\ref{lem:subtyping}, for every $\tau_2 \in \fromClause{\res{{V_2}}{y}}$, there exists $\tau_1 \in \fromClause{\res{{V_1}}{y}}$ such that $\tau_1 \leq \tau_2$, which is $\toClause{\tau_1}{y} \leqa \toClause{\tau_2}{y}$ in the clause setting.

      Clearly, $\toClause{\tau_2}{y} \in \upclos{\toClause{\tau_1}{y}}$.
      It follows from $\toClause{\tau_1}{y} \in  \fromClause{\res{{V_1}}{y}}$ and $\toClause{\tau_2}{y} \in \fromClause{\res{{V_2}}{y}}$ that $V_2 \subseteq \upclos{V_1}$, as required.
   
    \item
  
      We use induction on the size of $\Gamma$.  
      If $\Gamma$ is empty, then $V_\Gamma = \emptyset \subseteq \apprx(D_\calG)$, so suppose $\Gamma$ is $\calG$-coconsistent and non-empty.

      As per Lemma~\ref{lem:P-coconsistency}, let $(\forall \yy.\, U \Rightarrow P\,(f\,\yy)) \in V_\Gamma$ and $(\forall \yy.\, G \Rightarrow P\,(f\,\yy)) \in D_\calG$ such that:
      \[
        \Gamma \backslash \set{ \hasType{f}{\tau} } \text{ is } \calG \text{-coconsistent and }
        \exists U'.\, V_{\Gamma \backslash \set{ \hasType{f}{\tau} }}, \yy \vdash 
G \rew^* U' \land U \leqa U' 
      \] 
      where $\tau = \fromClause{\forall \yy.\, U \Rightarrow P\,(f\,\yy)}$.
  
      By the IH, $\apprx(D_\calG) \leqa V_{\Gamma \backslash \set{ \hasType{f}{\tau} }}$.
      Now, \ref{lem:types_automaton_order_to_inclusion} gives us $V_{\Gamma \backslash \set{ \hasType{f}{\tau} }} \subseteq \upclos{\apprx(D_\calG)}$.
      Thus, rewrite $V_{\Gamma \backslash \set{ \hasType{f}{\tau} }}, \yy \vdash G \rew^* U'$ gives us $\upclos{\apprx(D_\calG)}, \yy \vdash G \rew^* U'$.  
      By~\ref{lem:types_upwards_closed_inclusion}, $\apprx(D_\calG), \yy \vdash G \rew^* U''$ for some $U''$ such that $U' \leqa U''$.
      Transitivity yields $U \leqa U''$.
  
      Since $V_\Gamma = V_{\Gamma \backslash \set{ \hasType{f}{\tau} }} \land \toClause{\tau}{f}$, it remains to show that $\apprx(D_\calG) \leqa \toClause{\tau}{f}$, which is $\apprx(D_\calG) \leqa \forall \yy.\, U \Rightarrow P\,(f\,\yy)$.  
      Now, $(\forall \yy.\, G \Rightarrow P\,(f\,\yy)) \in D_\calG$ and $\apprx(D_\calG), \yy \vdash G \rew^* U''$ satisfy the side conditions of the rewrite algorithm for including $(\forall \yy.\, U'' \Rightarrow P\,(f\,\yy))$ in $\apprx(D_\calG)$.
  
      Finally, $U \leqa U''$ implies $(\forall \yy.\, U'' \Rightarrow P\,(f\,\yy)) \leqa (\forall \yy.\, U \Rightarrow P\,(f\,\yy))$, so $(\forall \yy.\, U'' \Rightarrow P\,(f\,\yy)) \in\apprx(D_\calG)$ gives us $\apprx(D_\calG) \leqa (\forall \yy.\, U \Rightarrow P\,(f\,\yy))$, which concludes the proof.
  
    \item
  
      Suppose $\Gamma \vdash \hasType{t}{\tau}$ and $\Gamma$ is $\calG$-coconsistent.
      By~\ref{lem:types_coconsistent_vs_automaton}, $\apprx(D_\calG) \leqa V_\Gamma$, which means $\Gamma_{\apprx(D_\calG)} \leq \Gamma_{V_\Gamma} = \Gamma$.  
      By Lemma 3.2.3 from \citet{RamsayThesis}, there exists $\tau'$ such that $\Gamma_{\apprx(D_\calG)} \vdash \hasType{t}{\tau'}$ and $\tau' \leq \tau$.
      
  \end{enumerate}
  
\end{proof}

\untypeabilityToProvability*
\begin{proof}
  We prove the following: 
  \[
    \exists \calG \text{-coconsistent } \Gamma.\, \Gamma \vdash \hasType{t}{\tau}
    \text{ if, and only if, }
    \apprx(D_\calG) \vdash \toClause{\tau}{t} \rew^* \truetm
  \]

  For direction $\Leftarrow$, suppose $\apprx(D_\calG) \vdash \toClause{\tau}{t} \rew^* \truetm$.
  Since $\typeAppr{D_\calG} = \Gamma_{\apprx(D_\calG)}$, Proposition~\ref{cor:type-clause} gives us $\typeAppr{D_\calG} \vdash \hasType{t}{\tau}$.
  We know from~\ref{lem:types_typeAppr_cocons} that $\typeAppr{D_\calG}$ is $\calG$-coconsistent, which proves the claim.
  
  For direction $\Rightarrow$, suppose there exists a $\calG$-coconsistent $\Gamma$ such that $\Gamma \vdash \hasType{t}{\tau}$.
  By \ref{lem:P-coconsistent_underapprox}, there exists $\tau'$ such that $\Gamma_{\apprx(D_\calG)} \vdash \hasType{t}{\tau'}$ and $\tau' \leq \tau$.
  Proposition~\ref{cor:type-clause} gives us $\apprx(D_\calG) \vdash \toClause{\tau'}{t} \rew^* \truetm$.
  Finally, Lemma~\ref{lem:monotonicity_of_rewrite} gives us $\apprx(D_\calG) \vdash \toClause{\tau}{t} \rew^* \truetm$.
\end{proof}

\section{Supporting Material for Section~\ref{sec:application}}\label{appx:application}

Figure~\ref{fig:socket-clauses} lists the definite clauses that capture the behaviour of sockets in our automated verifier for socket-manipulating Haskell programs.
These program-independent clauses are included in the \MSL($\omega$) definite clause that captures the socket usage in the input Haskell program.

\begin{figure}[ht]

  \begin{minipage}{0.49\textwidth}
    \begin{align*}
      \textsf{Untracked}\ (\mathtt{socket}\ k)                 & \Leftarrow \textsf{Ready}\ (k\ \underline{s})     \\
      \textsf{Untracked}\ (\mathtt{socket}\ k)                 & \Leftarrow \textsf{Untracked}\ (k\ \underline{u}) \\
      \textsf{Untracked}\ (\mathtt{bind}\ k\ \underline{u})    & \Leftarrow \textsf{Untracked}\ k                  \\
      \textsf{Untracked}\ (\mathtt{connect}\ k\ \underline{u}) & \Leftarrow \textsf{Untracked}\ k                  \\
      \textsf{Untracked}\ (\mathtt{listen}\ k\ \underline{u})  & \Leftarrow \textsf{Untracked}\ k                  \\
      \textsf{Untracked}\ (\mathtt{accept}\ k\ \underline{u})  & \Leftarrow \textsf{Open}\ (k\ \underline{s})      \\
      \textsf{Untracked}\ (\mathtt{accept}\ k\ \underline{u})  & \Leftarrow \textsf{Untracked}\ (k\ \underline{u}) \\
      \textsf{Untracked}\ (\mathtt{close}\ k\ \underline{u})   & \Leftarrow \textsf{Untracked}\ k                  \\
      \textsf{Untracked}\ (\mathtt{send}\ k\ \underline{u})    & \Leftarrow \textsf{Untracked}\ k                  \\
      \textsf{Untracked}\ (\mathtt{receive}\ k\ \underline{u}) & \Leftarrow \textsf{Untracked}\ k                  \\
      \\
      \textsf{Ready}\ (\mathtt{bind}\ k\ \underline{s})        & \Leftarrow \textsf{Bound}\ (k)                    \\
      \textsf{Ready}\ (\mathtt{connect}\ k\ \underline{s})     & \Leftarrow \textsf{Open}\ (k)                     \\
      \textsf{Ready}\ (\mathtt{listen}\ k\ \underline{s})      & \Leftarrow                                        \\
      \textsf{Ready}\ (\mathtt{accept}\ k\ \underline{s})      & \Leftarrow                                        \\
      \textsf{Ready}\ (\mathtt{close}\ k\ \underline{s})       & \Leftarrow                                        \\
      \textsf{Ready}\ (\mathtt{send}\ k\ \underline{s})        & \Leftarrow                                        \\
      \textsf{Ready}\ (\mathtt{receive}\ k\ \underline{s})     & \Leftarrow                                        \\
      \\
      \textsf{Bound}\ (\mathtt{bind}\ k\ \underline{s})        & \Leftarrow                                        \\
      \textsf{Bound}\ (\mathtt{connect}\ k\ \underline{s})     & \Leftarrow                                        \\
      \textsf{Bound}\ (\mathtt{listen}\ k\ \underline{s})      & \Leftarrow   \textsf{Listening}\ (k)              \\
      \textsf{Bound}\ (\mathtt{accept}\ k\ \underline{s})      & \Leftarrow                                        \\
      \textsf{Bound}\ (\mathtt{close}\ k\ \underline{s})       & \Leftarrow                                        \\
      \textsf{Bound}\ (\mathtt{send}\ k\ \underline{s})        & \Leftarrow                                        \\
      \textsf{Bound}\ (\mathtt{receive}\ k\ \underline{s})     & \Leftarrow                                        \\
      \\
      \textsf{Listening}\ (\mathtt{bind}\ k\ \underline{s})    & \Leftarrow                                        \\
      \textsf{Listening}\ (\mathtt{connect}\ k\ \underline{s}) & \Leftarrow                                        \\
      \textsf{Listening}\ (\mathtt{listen}\ k\ \underline{s})  & \Leftarrow                                        \\
      \textsf{Listening}\ (\mathtt{accept}\ k\ \underline{s})  & \Leftarrow   \textsf{Listening}\ (k\ \underline{u})  \\
      \textsf{Listening}\ (\mathtt{close}\ k\ \underline{s})   & \Leftarrow                                        \\
      \textsf{Listening}\ (\mathtt{send}\ k\ \underline{s})    & \Leftarrow                                        \\
      \textsf{Listening}\ (\mathtt{receive}\ k\ \underline{s}) & \Leftarrow                                        \\
    \end{align*}
  \end{minipage}
  \begin{minipage}{0.49\textwidth}
    \begin{align*}
      \textsf{Open}\ (\mathtt{bind}\ k\ \underline{s})                        & \Leftarrow                                    \\
      \textsf{Open}\ (\mathtt{connect}\ k\ \underline{s})                     & \Leftarrow                                    \\
      \textsf{Open}\ (\mathtt{listen}\ k\ \underline{s})                      & \Leftarrow                                    \\
      \textsf{Open}\ (\mathtt{accept}\ k\ \underline{s})                      & \Leftarrow                                    \\
      \textsf{Open}\ (\mathtt{close}\ k\ \underline{s})                       & \Leftarrow    \textsf{Close}\ (k)             \\
      \textsf{Open}\ (\mathtt{send}\ k\ \underline{s})                        & \Leftarrow    \textsf{Open}\ (k)              \\
      \textsf{Open}\ (\mathtt{receive}\ k\ \underline{s})                     & \Leftarrow                 \textsf{Open}\ (k) \\
      \\
      \textsf{Close}\ (\mathtt{bind}\ k\ \underline{s})                       & \Leftarrow                                    \\
      \textsf{Close}\ (\mathtt{connect}\ k\ \underline{s})                    & \Leftarrow                                    \\
      \textsf{Close}\ (\mathtt{listen}\ k\ \underline{s})                     & \Leftarrow                                    \\
      \textsf{Close}\ (\mathtt{accept}\ k\ \underline{s})                     & \Leftarrow                                    \\
      \textsf{Close}\ (\mathtt{close}\ k\ \underline{s})                      & \Leftarrow                                    \\
      \textsf{Close}\ (\mathtt{send}\ k\ \underline{s})                       & \Leftarrow                                    \\
      \textsf{Close}\ (\mathtt{receive}\ k\ \underline{s})                    & \Leftarrow                                    \\
      \\
      \forall\ q \in Q \setminus \{ \textsf{Untracked} \}. \hspace{40pt} &                                               \\
      q\ (\mathtt{socket}\ k)                                                 & \Leftarrow q\ (k\ \underline{u})              \\
      q\ (\mathtt{bind}\ k\ \underline{u})                                    & \Leftarrow q\ (k)                             \\
      q\ (\mathtt{connect}\ k\ \underline{u})                                 & \Leftarrow q\ (k)                             \\
      q\ (\mathtt{listen}\ k\ \underline{u})                                  & \Leftarrow q\ (k)                             \\
      q\ (\mathtt{accept}\ k\ \underline{u})                                  & \Leftarrow q\ (k)                             \\
      q\ (\mathtt{close}\ k\ \underline{u})                                   & \Leftarrow q\ (k)                             \\
      q\ (\mathtt{send}\ k\ \underline{u})                                    & \Leftarrow q\ (k)                             \\
      q\ (\mathtt{receive}\ k\ \underline{u})                                 & \Leftarrow q\ (k)                             \\
    \end{align*}
  \end{minipage}
  \caption{Clauses defining the behaviour of socket programs}\label{fig:socket-clauses}
\end{figure}

\end{document}